\newclass{\TWOEXP}{2\!-\!EXPTIME}
\newclass{\THREEEXP}{3\!-\!EXPTIME}
\newclass{\EXPTIME}{EXPTIME}
\tikzstyle{block}=[rectangle,draw, thin, inner sep=3pt, text centered,fill=orange!20!yellow!20] 
\tikzstyle{net}=[draw,cloud,fill=yellow!20,aspect=3,inner sep=1pt]
\tikzstyle{dev}=[draw,circle,fill=yellow!20,aspect=2,inner sep=1pt,minimum size=.6cm]
\tikzstyle{pre}=[<-,shorten <=1pt,>=stealth']
\tikzstyle{post}=[->,shorten >=1pt,>=stealth']
\tikzstyle{bi}=[<->,shorten >=1pt,shorten <=1pt, >=stealth']
\tikzstyle{every initial by arrow}=[initial text={},initial distance=1em,post]
\tikzstyle{every state}=[minimum size=0.6cm,fill=cyan!20!yellow!20]
\tikzstyle{transition}= [post,shorten >=1pt,node distance=2cm, inner sep=2pt,bend angle=20]
\theoremstyle{plain}
  \newtheorem{thm}{Theorem}
  \newtheorem{con}[theorem]{Construction}
\Crefname{thm}{Thm.}{Thms.}
\Crefname{con}{Constr.}{Constrs.}
\Crefname{proposition}{Prop.}{Props.}
\Crefname{lemma}{Lem.}{Lemmas}
\Crefname{definition}{Def.}{Defs.}
\Crefname{corollary}{Cor.}{Cors.}
\Crefname{figure}{Fig.}{Figs.}
\Crefname{section}{Sect.}{Sections}
\newcommand{\one}{\mathds{1}}
\newcommand{\real}{\mathds{R}}
\newcommand{\nat}{\mathds{N}}
\newcommand{\integer}{\mathds{Z}}
\newcommand{\aut}{\mathcal{A}}
\newcommand{\ent}{\mathcal{H}}
\newcommand{\capa}{\mathcal{C}}
\newcommand{\bandh}{\mathcal{BH}}
\newcommand{\bandc}{\mathcal{BC}}
\newcommand{\trans}[1]{\stackrel{#1}{\to}}
\newcommand{\src}{\mathop{src}\nolimits}
\newcommand{\dst}{\mathop{dst}\nolimits}
\newcommand{\lbl}{\mathop{lbl}\nolimits}
\newcommand{\guard}{\mathfrak{g}}
\newcommand{\reset}{\mathfrak{r}}
\newcommand{\card}[1]{\#\!#1}
\newcommand{\instant}{\mathbf{instant}}
\newcommand{\fast}{\mathbf{fast}}
\newcommand{\slow}{\mathbf{slow}}
\newcommand{\none}{\mathsf{0}}
\newcommand{\narrow}{\mathbf{narrow}}
\newcommand{\wide}{\mathbf{wide}}
\newcommand{\0}{\mathbf{0}}
\newcommand{\1}{\mathbf{1}}
\newcommand{\dr}{\overrightarrow{d}}
\newcommand{\Path}{\mathop{path}}
\newcommand{\Paths}{\mathop{Paths}}
\newcommand{\Word}{\mathop{word}}
\newcommand{\Let}{\mathop{letters}}
\newcommand{\Runs}{\mathop{Runs}}
\newcommand{\Still}{\mathop{Still}}
\newcommand{\ssum}{\mathop{sum}}
\newcommand{\proj}{\mathop{proj}}
\newcommand{\timing}{\mathop{timing}}
\newcommand{\mon}{\mathbf{M}}
\newfunc{\myexp}{exp}
\newlang{\myreach}{Reach}
\newlang{\Obese}{Obese}
\newlang{\Meager}{Meager}
\newlang{\Normal}{Normal}
\newcommand{\aaa}{node[blue]{$\bullet$}}
\newcommand{\bbb}{node[red]{$\circ$}}
\newcommand{\Ver}{V}
\newcommand{\Reach}{\mathop{Reach}}
\newcommand{\SCC}{\mathop{SCC}}
\newcommand{\Conv}{\mathop{ConvHull}}
\newcommand{\Low}{\mathop{low}}
\newcommand{\Hi}{\mathop{high}}
\newcommand{\DBM}{\textsc{DBM}}
\newcommand{\DTA}{\textsc{DTA}}
\newcommand{\RTA}{\textsc{RsTA}}
\newcommand{\stA}{\textsc{smA}}
\newcommand{\nsoA}{\textsc{nsoA}}
\newcommand{\vvvert}{|\!|\!|}
\newcommand{\ck}{c_{e\varepsilon \varkappa}}
\newcommand{\xk}{x_{e\varepsilon \varkappa}}
\newcommand{\tuple}[1]{\left\langle #1 \right\rangle}
\newcommand{\app}{ (see Appendix) }
\title{Bandwidth of Timed Automata:  3 Classes
}
\keywords{timed automata, information theory, bandwidth, entropy, orbit graphs, factorization forests}
\author{Eugene Asarin}{Université Paris Cité, CNRS, IRIF, Paris, France}{asarin@irif.fr}{https://orcid.org/0000-0001-7983-2202}{}
\author{Aldric Degorre}{Université Paris Cité, CNRS, IRIF, Paris, France}{adegorre@irif.fr}{https://orcid.org/0000-0003-2712-4954}{}
\author{C\u at\u alin Dima}{LACL, Université Paris-Est Créteil, France} {dima@upec.fr}{https://orcid.org/0000-0001-5981-4533}{}
\author{Bernardo Jacobo Inclán}{Université Paris Cité, CNRS, IRIF, Paris, France}{jacoboinclan@irif.fr}{https://orcid.org/0009-0009-5323-7945}{}
\authorrunning{Asarin, Degorre, Dima, and Jacobo Inclán}
\begin{document}
\maketitle

\begin{abstract}
Timed languages contain sequences of discrete events (``letters'') separated by real-valued delays, they can be recognized by timed automata, and represent behaviors of various real-time systems.
The notion of bandwidth of a timed language defined in \cite{formats2022} characterizes the amount of information per time unit, encoded in words of the language observed with some precision $\varepsilon$.

In this paper, we identify three classes of timed automata according to the asymptotics of the bandwidth  of their languages with respect to this  precision $\varepsilon$: automata are either meager, with an $O(1)$ bandwidth,  normal, with a $\Theta\left(\log\frac{1}{\varepsilon}\right)$ bandwidth, or obese, with $\Theta\left(\frac{1}{\varepsilon}\right)$ bandwidth. We define two structural criteria and prove that they partition 
timed automata into these 3 classes of bandwidth, implying that there are no intermediate asymptotic classes. The classification problem of a 
 timed automaton is  \PSPACE-complete.

Both criteria are  formulated using morphisms from paths of the timed automaton to some finite monoids extending Puri's orbit graphs; the proofs are based on Simon's factorization  forest theorem.

\end{abstract}

\section{Introduction}
   
The study of the growth rate  of formal languages  was initiated in the 1950s in the pioneering \cite{ChomskyMiller}, where the growth of the number of words of length $n$ with respect to $n$ in regular languages was analyzed. Even earlier, visionary   Shannon \cite{shannon48} related word counts to the  capacity of a discrete noiseless channel. In the early '60s, Kolmogorov \cite{3approaches} conceptualized the counting (``combinatorial'') approach to the quantity of information, as compared to the probabilistic and the algorithmic ones. According to this approach, the quantity of information conveyed by an element of a finite set $S$ is just $\log_2 \#S$ bits, and in most 
cases, asymptotics of this amount w.r.t.~some size parameter is studied. Nowadays, counting-based asymptotic analysis of  formal languages  constitutes a background of the theory of codes\cite{perrin},   and of practical protocols \cite[Chapter 6]{imminkBook} implemented in every hard disk drive and DVD \cite{efmplus}. In a nutshell, one can encode a source language  with a growth rate $A$ (that is with $\sim 2^{An}$ words of size $n$) into a channel with growth rate $B$ only if  $A\leq B$. 
The growth rate is also closely related  to the entropy rate of subshifts in symbolic dynamics  \cite{Marcus}.

In the long run, we are porting this approach to timed languages and automata, with both theoretical and practical ambitions. The intrinsic difficulty here is the continuous infinity of such languages and the infinite-state nature of timed automata. For growth rate (or information quantity) per event, we have mostly solved the problem in  \cite{entroJourn} and \cite{timedCoding}.

 However, in the above, time was seen as data carried by the language, which contrasts with many natural applications, where time describes the actual execution time of the system under study. For this reason, in \cite{formats2022} we formalized the notion of bandwidth, the quantity of information per time unit, given some observation precision, and proved its actual relevance to bounded delay channel coding. Unfortunately, how this bandwidth can be computed for any timed automaton is not yet known. Thus, as a preliminary step, in this paper, we propose a rough qualitative classification of automata into three classes according to the asymptotic of the bandwidth w.r.t.~precision.
 We believe that this qualitative result
 \begin{itemize}
\item is of interest by itself and provides a relevant classification of qualitative behaviors of timed automata;
\item  is in line with other classification results for discrete languages;
\item involves various and interesting techniques and tools;
\item  paves the way to the more precise computation of bandwidth.  
 \end{itemize}

As a comparison, for  multiple classes of languages, the growth rate provides a relevant classification of languages into big (exponential) and small (polynomial) ones, giving important behavioral insights;
a gap between polynomial and exponential growth in regular languages is immediate from \cite{ChomskyMiller}, for context-free languages it was established  in \cite{growth1}. An efficient algorithm for distinguishing between the polynomial  and exponential growth of regular and context-free languages is presented  in \cite{growth2}. Finally, for indexed languages, there is no such gap and an intermediate growth is possible \cite{grigorchuk}. In the same vein but for a slightly different aspect, in \cite{ambiguity}, nondeterministic finite automata are classified into three classes: finitely, polynomially, and exponentially ambiguous; and the classes are characterized by the presence of specific patterns, as in our study.

For timed regular languages, the growth rate  of their volume with respect to the number of events was explored in \cite{entroJourn}. 
In particular, a dichotomy between thin and thick languages was established and related to the Zeno phenomena. In a nutshell, in a thick language (with exponential growth) most timed words  are non-Zeno, and have nice properties (they satisfy  a pumping lemma, admit discretization, etc.),  while in  a thin language, all timed words long enough are, in some sense, Zeno and involve very short delays. 

In this paper, as already said,  we 
look at the growth rate of timed languages \textbf{per time unit}. 
More precisely, instead of volumes,  we study the size of the discretizations of the language, up to precision $\varepsilon$, using $\varepsilon$-nets and $\varepsilon$-separated sets \cite{kolmoEpsilon}. 
The logarithms of these quantities are called, resp., $\varepsilon$-entropy and $\varepsilon$-capacity, and the limits of these quantities divided by the time bound of the language 
are called, resp., entropic and capacitive bandwidth. In \cite{formats2022} we justify the notion of bandwidth proving (up to some details) that a timed source $S$ can be encoded to a timed channel $C$ with bounded delay, iff the bandwidth of $S$ does not exceed the bandwidth of $C$.

In this article, we partition timed  regular languages (and timed automata) into three classes according to the asymptotic behavior of the bandwidth,
as $\varepsilon$ becomes small. We call them \emph{meager}, 
with an $O(1)$ (smallest) bandwidth,  \emph{normal}, with a $\Theta\left(\log\frac{1}{\varepsilon}\right)$ bandwidth, or \emph{obese}, with $\Theta\left(\frac{1}{\varepsilon}\right)$ (highest) bandwidth. 
Intuitively, in an obese language, information can be encoded into time delays with a very high frequency; in a normal language, information can be encoded 
every few time units; finally, a meager language is extremely constrained and information can be mostly encoded into discrete choices.
This trichotomy can be deduced from two structural criteria on timed automata, 
which exactly characterize the asymptotic classes. Using these structural criteria we also prove that deciding to which class belongs a timed automaton is \PSPACE-complete.

Our results build on several ingredients. The bandwidth of a timed language has been defined in \cite{formats2022}, itself based on notions of $\varepsilon$-entropy and $\varepsilon$-capacity of functional spaces as studied in \cite{kolmoEpsilon} and on a pseudo-distance on timed words from \cite{distance}.  
Structural information on timed automata is formulated in terms of orbit graphs from \cite{puri}, which are discrete objects summarizing reachability relations between states. We had to enrich these graphs with some qualitative information in the form of elements from two finite semirings which represent (1) a rough estimate of the delay between symbols in a timed word and (2) how much choice one may have when taking an edge between two
vertices. Structural criteria of meagerness and obesity are formulated in terms of those labels. More precisely, we identify non-meagerness and obesity patterns, 
and prove that a timed regular language is non-meager resp.~obese, whenever it is recognized by a timed automaton with a cycle labeled with an orbit graph containing a corresponding pattern. 

 The lower bounds take thus the form
``whenever a cycle with a pattern is present, the bandwidth is high'', and we prove them by providing a direct construction of a big bunch of $\varepsilon$-separated runs along this cycle.  
The proofs of upper bounds  ``whenever a cycle with a pattern is absent, the bandwidth is low'' are much more involved and rely on Simon's factorization forest theorem, a powerful result relating Ramsey style combinatorics and formal languages theory (see the original article \cite{simon} and a modern presentation in \cite{bojanczyk}). We pinpoint particular properties of Simon factorization 
whenever one of the non-meagerness or obesity patterns is absent, and build a small $\varepsilon$-net  guided by  this factorization.

The paper proceeds as follows: in \cref{sec:prelim}, we recall some notions and tools, and state the problem of classification of timed automata according to their bandwidth. In \cref{sec:obese,sec:meager}, we solve the problem for deterministic timed automata by establishing  two structural criteria --- for very low and very high bandwidth. In \cref{sec:classif}, we come up with a classification theorem (and establish the  complexity of classification), extend it to the non-deterministic case, and compare our new classes with those from \cite{entroJourn}. We conclude with some perspectives. For sake of readability, many technical details and proofs are relegated to  the Appendix.

\section{Background and Problem Statement}\label{sec:prelim}
We start this section by recalling our approach \cite{formats2022} to bandwidth (information by time unit) in timed languages and its background: classical $\varepsilon$-entropy and capacity \cite{kolmoEpsilon} and our pseudo-distance on timed words \cite{distance}. Next, we recall timed automata \cite{AD} and related constructions, including
variants of  the region-split form \cite{entroJourn}
and orbit graphs from \cite{puri}. 
 Finally, we present the central notion of the paper: the three classes of timed automata.

\subsection{Measuring information in  timed languages} \label{sec:entropy}
Given $\Sigma$, a finite alphabet of discrete events, a \emph{timed word} over $\Sigma$ is an element from $\left(\Sigma\times\real_+\right)^*$ of the form $w = (a_1,t_1)\dots (a_n,t_n)$, 
with $0 \leq t_1 \leq t_2 \cdots \leq t_n$. Number $t_i$ should be interpreted as the \emph{date} at which the event $a_i$ happens. We always put $t_0=0$.
For any timed word $w = (a_1,t_1)\dots (a_n,t_n)$, we denote its \emph{discrete length} as $|w|\triangleq n$ and its  \emph{duration} as $\tau(w)\triangleq t_n$. 
The \emph{timing} of $w$ is the sequence $t_1\dots t_n$, and its \emph{untiming} is the sequence $a_1\dots a_n$. 
A \emph{timed language} over $\Sigma$ is a set of timed words over $\Sigma$.

For determining the quantity of information contained in a timed language, we adopt the approach introduced in \cite{kolmoEpsilon} (for compact functional spaces). 
Intuitively, when only words that are very close to one another can be observed, then one may not be able to distinguish many different words, thus little information may be conveyed. On the contrary, if there are many words that are set sufficiently far from each other, the fact of choosing among them conveys a large amount of information.

A second point of view is how many words are necessary to approximate the language up to some precision.
Both points of view depend on the precision with which we observe, as formalized by the following definition, taken from \cite{kolmoEpsilon} 
and adapted to pseudo-metric spaces\footnote{In pseudo-metric spaces one may have $d(x,y)=0$ for $x\neq y$, but all other axioms of the distance hold. Moreover we also allow infinite distances.}:

\begin{definition}
Let $(U,d)$ be a pseudo-metric space and $A \subseteq U$, then:
\begin{itemize}
    \item $M \subseteq A$ is an $\varepsilon$\emph{-separated subset} of $A$ if $\forall x\neq y \in M, d(x,y) > \varepsilon$;
    \item $N \subseteq U$ is an $\varepsilon$\emph{-net} of $A$  if $\forall y \in A, \exists s \in N \text{ s.t. } d(y,s) \leq \varepsilon$.
\end{itemize}
\end{definition}
\begin{figure}[t]
\begin{tikzpicture}[scale=0.4]
 \draw[fill=black] (-10.5,2) -- (10.5,2);
 \draw[fill=black] (-10.5,2.4) -- (-10.5,1.6);
 \draw[fill=black] (10.5,2.4) -- (10.5,1.6);

 \foreach \a in {-13,-11,...,13}
 \foreach[evaluate={\z=int(16*\a*\a+169*\b*\b);}] \b in {2}{
 \ifnum \z<2600
 {
 \draw[black,fill=black](\a,\b) circle(0.08);
 }
 \fi
 }

\draw[blue] (-9,2.35) -- (-9,1.65);
\draw[blue] (-7,2.35) -- (-7,1.65);
\draw[blue,dotted] (-9,2.25) -- node[above]{\large$>\varepsilon$}(-8,2.25) --(-7,2.2);

\draw[red] (-4.2,2.35) -- (-4.2,1.65);
\draw[red] (-5.0,2.35) -- (-5.0, 1.65);
\draw[dotted,red] (-4.2,2.25) -- node[above]{\large$\leq\varepsilon$} (-5.0, 2.25);

\draw[black] (14.2,1.35) -- (14.2,0.65);
\draw[black] (15.8,1.35) -- (15.8, 0.65);
\draw (14.2,1) -- node[above]{$\large\varepsilon$}(15.6,1) --(15.8,1);
 \end{tikzpicture}	
 \caption{The set of black dots is $\varepsilon$-separated and an $\varepsilon$-net in the 
 segment}
 \label{fig:net}
\end{figure}
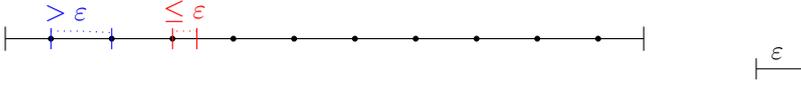

An $\varepsilon$-separated subset of $A$ contains elements of $A$ distant from each other by at least $\varepsilon$, hence distinguishable with precision $\varepsilon$. 
An $\varepsilon$-net of $A$ allows approximating every element of $A$ with precision $\varepsilon$, see \cref{fig:net}. 
They give rise to two measures of the information content:

\begin{definition} 
For $(U,d)$ a pseudo-metric space and $A\subseteq U$, the $\varepsilon$\emph{-capacity}  $\capa_\varepsilon(A)$  and $\varepsilon$\emph{-entropy}  $\ent_\varepsilon(A)$ are defined as follows\footnote{all logarithms in this article are base 2}:
\[
  \capa_\varepsilon(A) \triangleq  \log\max\{\card{M}\mid  M \text{ $\varepsilon$-separated set of } A\};\hfill
  \ent_\varepsilon(A) \triangleq  \log\min\{\card{N} \mid  N  \text{ $\varepsilon$-net of $A$}\}.
\]
\end{definition}
The following inequality \cite{kolmoEpsilon} is known to hold for any $A$ subset of $U$:
\begin{equation}\label{prop:inequality}
\capa_{2\varepsilon}(A) \leq \ent_\varepsilon(A) \leq \capa_\varepsilon(A).
\end{equation}
%
Two basic cases provide an intuition on the sense of $\capa$ and $\ent$ as information measures:
\begin{itemize}
    \item For $\Sigma$  a finite alphabet (with the discrete metrics), $\capa_\varepsilon(\Sigma)=\ent_\varepsilon(\Sigma)=\log \#\Sigma$; in words, a letter in $\Sigma$ brings  $\log \#\Sigma$ bits of information. 
    \item Consider now a segment $[0,1]$, then $\capa_\varepsilon[0,1]\approx \ent_{\varepsilon/2}[0,1]\approx \log 1/\varepsilon$. In words, a real number observed with precision $\varepsilon$ contains $\log 1/\varepsilon$ bits of information, intuitively corresponding to $\log 1/\varepsilon$ digits after the binary point.
\end{itemize}

In the developments thereafter, $U$   will typically be $U_\Sigma$, the universal timed language on the working alphabet $\Sigma$, while different regular timed languages on $\Sigma$ will play the role of $A$. The pseudo-metric will be the one we describe next.


To give a meaningful interpretation of capacity and entropy, an appropriate metric modeling the ability of the observer is required,
that is, the ability to distinguish between two words. To this purpose, we imagine an observer that reads the discrete letters of the word exactly (they can determine whether or not a letter has occurred) but with some imprecision w.r.t.~time, so they cannot determine when two letters are very close to one another, which one came before the other, and not even how many times a letter was repeated within a short interval. 
Hence, we use the following pseudo-distance (similar to the Hausdorff distance between sets):
\begin{definition}[\!\cite{distance}] The \emph{pseudo-distance} $d$ between two timed words 
$w = (a_1,t_1)\dots(a_n,t_n)$ and $v = (b_1,s_1)\dots(b_m,s_m)$ is defined by (with the convention $\min \emptyset = \infty$)
\[
 \dr(w,v)\triangleq \max_{i\in\{1..n\}}\min_{j\in\{1..m\}} \{ |t_i-s_j|:a_i=b_j \};\qquad
 d(w,v) \triangleq  \max( \dr(w,v) , \dr(v,w) ).
\]
\end{definition}
\begin{figure}[t]
\begin{center}
{
\usetikzlibrary {arrows.meta,positioning} 
\begin{tikzpicture}
\draw (0,.8) -- (.7,.8) \aaa  --(1.8,.8) \bbb  -- (3,.8) \aaa-- (4,.8) \bbb-- (4.1,.8) \aaa --
(4.5,.8)node[anchor=west]{$u=(a,.7),(b,1.8), (a,3), (b,4), (a,4.1)$};

\draw (0,0) -- (.6,0) \aaa   -- (1,0) \aaa--(1.7,0) \bbb  -- (3,0) \aaa-- (4.1,0) \aaa-- (4.2,0)\bbb --
(4.5,0)node[anchor=west]{$v=(a,.6),(a,1), (b,1.7), (a,3), (a,4.1),(b,4.2)$};

\draw [dotted, arrows={
->[width=1mm,length=1.2mm,sep=1.8mm]}]
(.6,0) edge (.7,.8)
(.7,.8) edge (.6,0)
(1,0) edge (.7,.8)
(1.7,0) edge (1.8,.8)
(1.8,.8) edge (1.7,0)
(3,0) edge (3,.8)
(4,.8) edge (4.2,0)
(3,.8) edge (3,0)
(4.2,0) edge (4,.8)
(4.1,0) edge (4.1,.8)
(4.1,.8) edge (4.1,0)
;
\end{tikzpicture}
}
\end{center}
\caption{Pseudo-distance between two timed words (dotted lines with arrows represent directed matches between letters).  	$\dr(u,v)=0.2;\  \dr(v,u)=0.3$, thus $ d(u,v)=0.3$.}\label{fig:dist}
\end{figure}
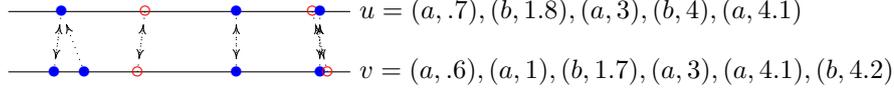
The definition is illustrated on \cref{fig:dist}.
We remark that $d(w,v)=0$ whenever $w$ and $v$ only differ by order and quantity of simultaneous letters, e.g.~for $w=(a,1)(b,1)$ and $v=(b,1)(b,1)(a,1)$. For this reason, $d$ is only a pseudo-distance.

There are two (related) reasons for using this pseudo-distance rather than more straightforward ``uniform'' distances such as defined in \cite{tubes}:
(1) it provides non-trivial values even when the two words $w$ and $v$ have a different number of events; (2) the set of timed words with duration bounded by some $T$ is compact, 
which ensures that finite $\varepsilon$-nets always exist.

We now recall the notion of bandwidth, as introduced in \cite{formats2022}, a characterization of the ability of a language to produce or convey information \emph{per time unit} in the long run. 
Bandwidths come in two versions, depending on whether it is about producing or transmitting information.
Bandwidths are asymptotics, with respect to time, of the growth rate of $\varepsilon$-capacity or $\varepsilon$-entropy, which finally yield quantities expressed in bits per time unit.
We use here the notion of \emph{timed slice} of a language $L$ at some $T\in \real_+$, defined as $L_T \triangleq  \{ w: w \in L, \tau(w) \leq T \}$.

\begin{definition}[\!\cite{formats2022}]The $\varepsilon$\emph{-entropic} and $\varepsilon$\emph{-capacitive bandwidths} of a language  $L$ are 
\[
\bandh_\varepsilon(L) \triangleq \limsup_{T\rightarrow\infty}\frac{\ent_\varepsilon(L_T)}{T}; \qquad
\bandc_\varepsilon(L) \triangleq  \limsup_{T\rightarrow\infty}\frac{\capa_\varepsilon(L_T)}{T}. 
\]
\end{definition}
The following relationship between the two bandwidths is immediate from \eqref{prop:inequality}:
 \begin{equation}\label{eq:inequality}
 \bandc_{2\varepsilon}(L) \leq \bandh_\varepsilon(L) \leq\bandc_\varepsilon(L).  
 \end{equation}

\subsection{Timed Automata}
\emph{Regular} timed languages are timed languages recognized by timed automata introduced in \cite{AD}. In most of this paper, as in \cite{puri,entroJourn} we deal with deterministic timed automata  (\DTA).
For a set of variables $\Xi$, $G_{\Xi}$ denotes the set of constraints expressible as conjunctions of inequalities of the form $\xi \sim b$ with $\xi \in \Xi$, $\sim \in \{<,\leq, >, \geq\}$ and $b$ an integer $\geq 0$.    
\begin{definition}
A  \emph{timed automaton} is a tuple 
$(Q, X, \Sigma, \Delta, S, I, F)$, where
\begin{itemize}
    \item $Q$ is the finite set of discrete locations;
    \item $X$ is the finite set of clocks;
    \item $\Sigma$ is a finite alphabet;
    \item $S,I,F: Q \rightarrow G_{X}$  define respectively the starting, initial, and final clock values for each location;
    \item $\Delta \subseteq Q \times Q \times \Sigma \times G_{X} \times 2^X$ is the transition relation, whose elements are called edges.
\end{itemize}
A timed automaton is \emph{deterministic}  if $\{ (q,x) | x\models I(q)\}$ is a singleton and for any two edges 
$(q,q_1,a,\guard_1,\reset_1)$ and $(q,q_2,a,\guard_2,\reset_2)$ with $q_1 \neq q_2$, the constraint $\guard_1 \wedge \guard_2$ is   non-satisfiable. 
\end{definition}
The role of $S,I,F$ is the following: $S(q)$ denotes the ``starting constraint'' which must be satisfied by clock values when entering location $q$. 
Accepting runs should start in configuration $(q,x)$ with $x$ satisfying $I(q)$ and end by a transition to $(q',x')$ with $x'$ satisfying $F(q')$.

Throughout the paper, we use as running examples the 10 automata described on \cref{fig:5examples}.
\begin{figure}[t] 
\begin{tikzpicture}
\node at (0,6) {$\aut_1:$};
\node[state, initial, accepting] at (1,6) (q1) {$q$};
\draw (q1) edge[loop below] node{$a,b$} (q1);
\node at (3,6) {$\aut_2:$};
\node[state, initial, accepting] at (4,6) (q1) {$q$};
\draw (q1) edge[loop below] node{$a,b,\,x<1$} (q1);
\draw (q1) edge[loop above] node[right]{$c,x\in (5,6),\{x\}$} (q1);
\node at (6.5,6) {$\aut_3:$};
\node[state, initial, accepting] at (7.5,6) (q1) {$q$};
\draw (q1) edge[loop below] node{$a,b,\,x<5$} (q1);
\node at (9,6) {$\aut_4:$};
\node[state, initial] at (10,6) (q1) {$q$};
\node[state, accepting] at (12,6) (q2) {$p$};
\draw (q1) edge[->,above, bend left] node{$a,b,\,x\in(3,4)$} (q2)
(q2) edge[->,below, bend left] node{$\delta_2:b, y\in(5,6), \{x,y\}$}(q1);
%
\node at (0,3.5) {$\aut_5:$};
\node[state, initial, accepting] at (1,3.5) (q1) {$q$};
\node[state, accepting] at (3,3.5) (q2) {$p$};
\draw (q1) edge[->,above, bend left] node{$a,b,\,x=3$} (q2)
(q2) edge[->,below, bend left] node{$b,\, y=5, \{x,y\}$} (q1);
\node at (4.5,3.5) {$\aut_6:$};
\node[state, initial, accepting] at (5.5,3.5) (q1) {$q$};
\node[state, accepting] at (7.5,3.5) (q2) {$p$};
\draw (q1) edge[->, above, bend left] node{$\delta_1:a, x<1, \{x\}$} (q2)
(q2) edge[->, below, bend left] node{$\delta_2:b, y\in(1,2), \{y\}$} (q1);
\node at (9,3.5) {$\aut_7:$};
\node[state, initial, accepting] at (10,3.5) (q1) {$q$};
\node[state, accepting] at (12,3.5) (q2) {$p$};
\draw (q1) edge[->, above, bend left] node{$a, x<1, \{x\}$} (q2)
(q2) edge[->, below, bend left] node{$b, y<1, \{y\}$} (q1);
%
\node at (0,1) {$\aut_8:$};
\node[state,initial] at (1,1) (q) {$q$};
\node[state,accepting] at (3,1) (p) {$p$};
\draw (q) edge[->, above right, bend left] node{$a, x<1, \{x\}$} (p)
(p) edge[->, below, bend left] node{$b, y\in(1,2), \{y\}$} (q)
(q) edge[loop above] node{$c,x<1,y<1$} (q);
\node at (4.5,1) {$\aut_9:$};
\node[state,initial] at (5.5,1) (q) {$q$};
\node[state,accepting] at (7.5,1) (p) {$p$};
\draw (q) edge[->, above right, bend left] node{$a, x<1, \{x\}$} (p)
(p) edge[->, below, bend left] node{$b, y\in(1,2), \{y\}$} (q)
(q) edge[loop above] node{$c,\, x,y<1, \{y\}$} (q);
\node at (9,1) {$\aut_{10}:$};
\node[state,initial] at (10,1) (q) {$q$};
\node[state,accepting] at (12,1) (p) {$p$};
\node[state,accepting] at (11,0) (r) {$r$};
\draw (q) edge[->, above,inner sep=4mm] node{$a, x<1, \{x\}$} (p)
(p) edge[->, below, very near start,inner sep=4mm, sloped] node{\hspace{10mm}$b, y\in(1,2), \{y\}$} (r)
(r) edge[->, below left] node{$c, y<1$} (q);
\end{tikzpicture}

\vspace{-15mm}
\caption{10 running examples of timed automata, $I(q)=\{\0\}$, $F(\cdot)=\mathbf{true}$ for marked locations.} \label{fig:5examples}
\end{figure}
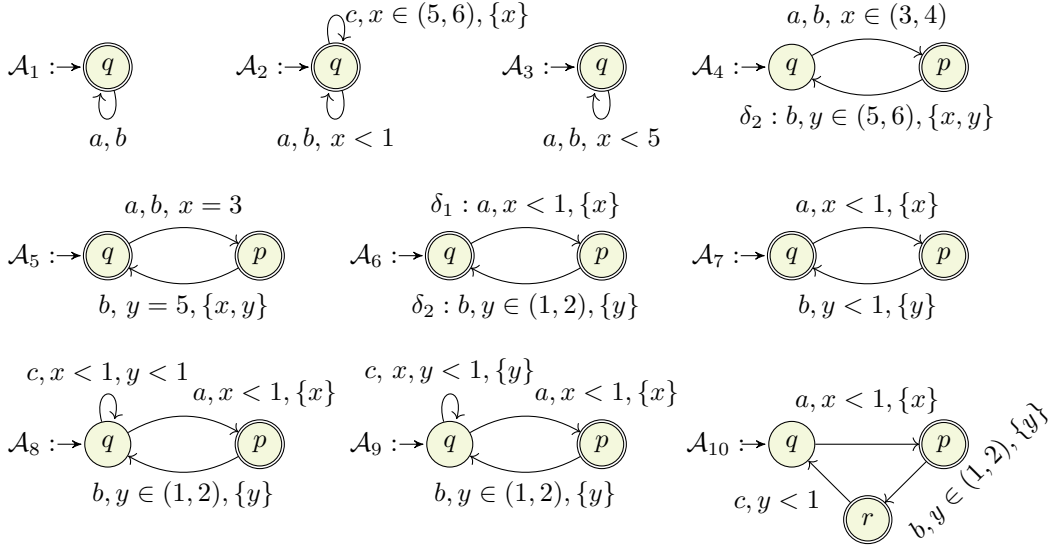

\subsubsection{Semantic details and notations}
We define two operations that can be applied to any clock vector $x \in [0,\infty)^X$: first, given $t\geq 0$, the \emph{timed successor} of $x$ by $ t$, denoted $x+t\in  [0,\infty)^X$, is the clock vector such that for all clocks $c \in X$, $(x+t)_c = x_c+t$; second, given a set of clocks  $\reset\subset X$, the \emph{$\reset$-reset of $x$} is the clock vector $x[\reset]\in [0,\infty)^X$ defined as $x[\reset]_c=0$ whenever $c\in\reset$ and $x[\reset]_c=x_c$ otherwise.

 The following objects will be associated with a timed automaton, and used in the sequel (for all types of sequences, $\circ$ denotes the concatenation and $\epsilon$ the empty sequence
; the concatenation is lifted in the usual way to sets of sequences, seen as languages):
\begin{itemize}
\item For any $\delta\in \Delta$ we define the projections $\src_\delta$ (source), $\dst_\delta$ (destination), $\lbl_\delta$ (label), $\guard_\delta$ (guard) and $\reset_\delta$ (set of reset clocks) such that $\delta=(\src_\delta, \dst_\delta, \lbl_\delta, \guard_\delta, \reset_\delta)$.
\item \emph{Edges} are elements of $\Delta$, \emph{edge sequences} are elements of $\Delta^*$, \emph{paths} are edge sequences $\delta_1\dots\delta_n$ with $\dst_{\delta_i} =\src_{\delta_{i+1}}$ for all $i$. 
\item A \emph{state} is  a pair of a discrete location and of a clock vector, i.e.~$(q, x)\in Q\times [0,\infty)^X$.
\item A \emph{run} is a sequence of the form: $\rho = (q_0,x_0) \trans{(\delta_1,t_1)} (q_1,x_1)\dots \trans{(\delta_n, t_n)}(q_n,x_n)$,
such that for each $i$ it holds that (by convention $t_0=0$)
\begin{equation}\label{eq:run-constraints}
x_i\models S(q_i); \src_{\delta_i}=q_{i-1}; \ 
\dst_{\delta_i}=q_{i};\  x_{i-1}+t_i-t_{i-1}\models\guard_{\delta_i}; \ x_i=\reset_{\delta_i}(x_{i-1}+t_i-t_{i-1}).
\end{equation}
\item The following projections from runs to locations, paths, and timed words are defined: $\src(\rho)\triangleq(q_0,x_0)$, $\dst(\rho)\triangleq(q_n, x_n)$, 
$\Path(\rho)\triangleq\delta_1\dots \delta_n$, $\Word(\rho)\triangleq (\lbl_{\delta_1},t_1)\dots (\lbl_{\delta_n},t_n)$.

 \item If $\rho_1$ and $\rho_2$ are runs, $\rho_1\circ\rho_2$ is defined provided: either $\rho_1=\epsilon$, $\rho_2=\epsilon$ or $\dst(\rho_1)=\src(\rho_2)$.

\item To any edge sequence $\pi$ we associate its set of runs $\Runs(\pi)\triangleq \left\{\rho: \Path(\rho)=\pi\right\}$. This implies $\Runs(\pi)\neq \emptyset$ only when $\pi$ is a path. Also $\Runs(\pi_1\circ\pi_2)=\Runs(\pi_1)\circ\Runs(\pi_2)$.

\item Given a path $\pi$ and two clock vectors $x,x'$ we define $\Runs(\pi,x,x')$ as the subset of $\Runs(\pi)$ starting with clock values $x$ and ending with clock value $x'$. We denote the timed language $\left\{\Word(\rho)| \rho\in\Runs(\pi,x,x')\right\}$ by $L_\pi(x,x')$.
\item We also define the language $L_{\bar{\pi}}(x,x')$, where all the guards are replaced by their closures.\footnote{Note that closing guards in an automaton may create new paths but doing so for a single path is ``safe''.}

\item A timed automaton $\aut$ defines a timed language $L(\aut)$: a timed word $w$ belongs to $L(\aut)$ iff it is in $L_\pi(x,x')$ for some clock vectors $x$ and $x'$ and some path  $\pi$ going from some location $q$ to some location $q'$  with $x \models I(q)$ and $x' \models F(q')$ (so we require also the first state to be initial and the last one to be accepting).
\end{itemize}

For a small example consider a 3-edge path $\delta_1\delta_2\delta_1$ in $\aut_6$. One of its runs is 
\[
(q,0,0) \trans{(\delta_1,0.8)} (p,0,0.8) \trans{(\delta_2, 1.5)}(q,0.7,0)\trans{(\delta_1,1.7)} (p,0,0.2),
\]
 and the corresponding timed word is $(a,0.8), (b,1.5), (c,1.7)$.

\subsection{Region Abstraction and Region-Splitting}
First, let us denote by $M$ the maximal  constant  appearing in the constraints used in the automaton.  Also, for $\xi \in \real$, we denote $\{\xi\}$  its fractional part.
\begin{definition}[\!\cite{AD}]
For a clock vector $x$, we define $x_\infty=\{c\in X|x_c>M\}$ and $x_\nat=\{c\in X| x_c \in \nat\}$.
Two clock vectors $x,y\in\real_+^X$ are \emph{region-equivalent} whenever
\begin{itemize}
    \item $x_\infty=y_\infty$ and $x_\nat\setminus x_\infty = y_\nat\setminus y_\infty$;
    \item  $\forall c\in X\setminus x_\infty, \lfloor x_c \rfloor = \lfloor y_c \rfloor$;
    \item and $\forall c_1, c_2\in X\setminus x_\infty,$ $\{x_{c_1}\}\leq \{x_{c_2}\}$ iff $\{y_{c_1}\} \leq \{y_{c_2}\}$.
\end{itemize}
A \emph{region} is an equivalence class of this relation.
\end{definition}
A region $R$ is \emph{$M$-bounded } whenever $R \subseteq [0,M]^X$.
 Any  $M$-bounded region $R$ is  a simplex of some dimension $d\leq \#X$. We denote by $\Ver(R)$ the set of all  $d+1$ vertices of $R$.

A folklore result says that each timed automaton can be transformed into a normal form, which can be considered as a (timed) variant of the region graph defined in \cite{AD}. 
The following definition adapts this to our framework while ensuring that starting constraints are bounded:   
\begin{definition}
A region-split $M$-bounded TA (\RTA) is a \DTA\ $(Q,X,\Sigma,\Delta,S,I,F)$, such that, for any location $q\in Q$: 
\begin{itemize}
\item the starting constraint $S(q)$ defines a non-empty $M$-bounded region;
\item all states in $\{q\}\times S(q)$ are reachable from the initial state and co-reachable to a final state;
\item either $I(q)=S(q)$ is a singleton\footnote{by definition of DTA this is possible for a unique location $q$} or $I(q)=\emptyset$;
\item for any edge $(q,q',a,\guard,\reset)\in\Delta$,  it holds that  $\left(\{S(q)+t\mid t\in \real_+\}\cap \guard\right)[\reset]=S(q')$.
\end{itemize}
\end{definition}

\begin{restatable}{proposition}{propregionsplit}\label{prop:DTA2RTA}
 For any \DTA, there exists an \RTA\  accepting the same language, with the same number of clocks, and an exponentially larger number of locations. 
\end{restatable}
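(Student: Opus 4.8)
The plan is to transform a given \DTA\ into an \RTA\ in two conceptual phases, mirroring the classical region-graph construction of \cite{AD} but bookkeeping the extra data ($S$, $I$, $F$) that our normal form requires.

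\medskip

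\noindent\textbf{Step 1: product with the region automaton.} First I would take the product of the input \DTA\ $\aut=(Q,X,\Sigma,\Delta,S,I,F)$ with its region automaton. Locations of the new automaton are pairs $(q,R)$ where $R$ is a region over $X$; the new starting constraint of $(q,R)$ is (the constraint defining) $R$ itself. For each edge $(q,q',a,\guard,\reset)\in\Delta$ and each region $R$ compatible with $S(q)$, I add edges $((q,R),(q',R''),a,\guard',\reset)$ where $R'$ ranges over the time-successor regions of $R$ that satisfy $\guard$, $R''=R'[\reset]$, and $\guard'$ is the guard defining $R'$ refined to be exactly that region. This is the standard construction; determinism is preserved because distinct target regions give mutually exclusive guards, and within a single original edge the guards are refined consistently. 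The number of regions is exponential in the size of $\aut$ (bounded by $\#X!\cdot (2M+2)^{\#X}$), so the location count grows exponentially while $\#X$ is unchanged. The last \RTA\ condition, $\left(\{S(q)+t\mid t\in\real_+\}\cap\guard\right)[\reset]=S(q')$, holds by construction since each new edge carries exactly the guard of one time-successor region.

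\medskip

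\noindent\textbf{Step 2: enforcing $M$-boundedness, reachability/co-reachability, and the $I$/$S$ conditions.} The regions produced above may be unbounded (some clock exceeds $M$); to force the starting constraints to be $M$-bounded regions I would clamp — intersect every region with $[0,M]^X$ and argue that, because no guard distinguishes values above $M$, this does not change the accepted language (one may replace an unbounded region by its bounded ``shadow'' on the relevant clocks; formally this is where one invokes that constraints only test against integers $\le M$). Then I prune all locations $(q,R)$ that are either not reachable from the initial state or not co-reachable to a final location; this is a purely graph-theoretic trimming that can only shrink the automaton and does not affect the language. For the initial condition: the original \DTA\ has a unique state $(q_0,x_0)$ satisfying $I$; I set $I((q_0,R_0))=S((q_0,R_0))=\{x_0\}$ for the region $R_0$ of $x_0$ (a singleton region) and $I=\emptyset$ elsewhere. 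Finally $F$ is inherited: $F((q,R))$ is $F(q)$ restricted to $R$, which is either all of $R$ or empty after a further region refinement (split $R$ along the boundary of $F(q)$ if needed — still only a polynomial-in-$\#X$ blow-up per region, absorbed into the exponential bound).

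\medskip

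\noindent\textbf{Main obstacle.} The routine part is the region product; the delicate part is showing that all four \RTA\ conditions can be made to hold \emph{simultaneously} without breaking determinism or changing the language — in particular, reconciling "$S(q)$ is a single region" with "$F(q)$ is respected" and with the edge condition $\left(\{S(q)+t\}\cap\guard\right)[\reset]=S(q')$. The clamping-to-$[0,M]^X$ argument (Step 2) is the one place where a genuine semantic argument is needed rather than bookkeeping: one must check that identifying clock values above $M$ is sound because guards and final constraints cannot separate them, and that after clamping the edge-successor condition still produces an exact region equality. I expect the cleanest route is to first build the unbounded version, prove language equivalence there by a straightforward bisimulation with $\aut$, and only then clamp, prune, and refine, checking at each of these three monotone operations that language equivalence and the \RTA\ axioms are preserved. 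The exponential bound on locations and the invariance of $\#X$ then follow by inspection of the construction.
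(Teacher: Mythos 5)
The region-product part of your Step~1 essentially matches what the paper does in its second phase (the splitting from \cite{entroJourn}), but your Step~2 is where the argument breaks. The paper reverses your order: it \emph{first} transforms $\aut$ into an automaton with $M$-bounded starting constraints, and only then does the region split. The bounding is not achieved by intersecting regions with $[0,M]^X$; instead (\cref{lem:bounded-starting}), the state space is extended to $Q\times 2^X$, where the second component $Y$ records which clocks are currently ``large''. The edges are modified so that whenever a clock crosses $M$ it is \emph{reset} and added to $Y$, and the starting constraint of $(q,Y)$ forces clocks in $Y$ to be $0$ and the others to be $\leq M$. Since no guard of $\aut$ can distinguish values above $M$, this change is language-preserving, and it keeps the \emph{actual} clock valuations bounded.

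Your ``clamping'' does not do this. If an unbounded region $R$ contains the constraint $x_c>M$, then $R\cap[0,M]^X$ is empty, so the proposed intersection destroys the location rather than bounding it; and if instead you read ``shadow'' as replacing $x_c>M$ by $x_c=M$, the starting constraint $S(q)$ changes on paper but the real clock value along a run is still $>M$, so the run-constraint $x_i\models S(q_i)$ from \eqref{eq:run-constraints} now fails and the language changes. The missing idea is precisely that bounding requires a \emph{semantic} modification of the automaton (resetting large clocks plus discrete bookkeeping of which ones were reset), not a syntactic narrowing of $S$. Your chosen order also causes a secondary problem: after you have already region-split, resetting a clock sends states to a different region, so the region annotation $R$ in your location $(q,R)$ would need to be re-derived; the paper sidesteps this entirely by bounding before splitting. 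You flagged Step~2 as the delicate point, which is correct, but the gap is not just a verification to be done --- the operation you propose there is the wrong one.
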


\subsection{Orbits Monoid}
Now we define the monoid, based on  Puri's orbit graphs \cite{puri}, and summarizing reachability along paths of an \RTA.
We fix some numbering of vertices of regions: for a location $q$, such that the simplex of the  region $S(q)$ has dimension $d_q$, we have       $V(S(q)) = \{v_1^q,\dots,v_{d_q+1}^q\}$.

\begin{definition}\label{def:mon:puri}The finite monoid $\mon$ of p-orbits contains $\0,\1$ and, for any two locations $q$ and $q'$  all triples $\tuple{q,A,q'}$ with $A$ a Boolean $(d_q+1)\times (d_{q'}+1)$-matrix.  All p-orbits with zero matrices are identified with $\0$.
Composition rules for $\0$ and $\1$ are standard; and
\[
\tuple{p,A,p'}\tuple{q,B,q'} =
\begin{cases}
\tuple{p,AB,q'} &\text{ if } p'=q;\\
\0 &\text{otherwise} 
\end{cases}
\]
(matrices are multiplied using Boolean operations\footnote{that is $(AB)_{ij}=\bigvee_k(A_{ik}\wedge B_{kj})$}).
\end{definition}
It is convenient to visualize a p-orbit as a bipartite graph (or an oriented graph whenever $q=q'$), from  vertices of $S(q)$ to those of $S(q')$, with edges corresponding to 1s in the matrix.

Following \cite{puri}\footnote{in fact, \cite{puri} only considers cyclic paths in the region automaton} we associate a p-orbit to each edge sequence.

\begin{definition}[The  p-orbit of a path summarizes reachability] Given an \RTA, we define a function $\gamma$ from edge sequences to $\mon$:
\begin{itemize}
    \item $\gamma(\epsilon)=\1$ and whenever sequence $\pi$ is not a path, $\gamma(\pi)=\0$;
    \item for $\pi$ a non-empty path from $q$ to $q'$ let $\gamma(\pi)=\tuple{q,A,q'}$, with $A_{ij}=1$ iff 
     $L_{\bar{\pi}}(v_i^q,v_j^{q'})\neq\emptyset$.
\end{itemize}
\end{definition}
In graph terms, there is an edge in the orbit graph of $\pi$ from a vertex $v$ to $v'$ iff $(q',v')$ is reachable from $(q,v)$ along (the closure of) $\pi$.
\begin{proposition}[probably \cite{entroJourn}]
$\gamma$ is a monoid morphism.
\end{proposition}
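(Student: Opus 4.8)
The plan is to verify the two defining properties of a monoid morphism. That $\gamma(\epsilon)=\1$ is immediate from the definition, so the work is in $\gamma(\pi_1\circ\pi_2)=\gamma(\pi_1)\gamma(\pi_2)$ for all edge sequences $\pi_1,\pi_2$. I would first dispatch the degenerate cases against the composition rules of $\mon$. If $\pi_1=\epsilon$ or $\pi_2=\epsilon$, use that $\1$ is neutral. If $\pi_1$ (resp.\ $\pi_2$) is not a path, then neither is $\pi_1\circ\pi_2$, so both sides equal $\0$. If $\pi_1,\pi_2$ are nonempty paths but the last location of $\pi_1$ differs from the first location of $\pi_2$, then $\pi_1\circ\pi_2$ is not a path, so $\gamma(\pi_1\circ\pi_2)=\0$, while the product of the two p-orbits is also $\0$ by the second clause of the composition rule. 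This leaves the principal case: $\pi_1$ a nonempty path from $q$ to $q''$ and $\pi_2$ a nonempty path from $q''$ to $q'$, so $\gamma(\pi_1)=\tuple{q,A,q''}$, $\gamma(\pi_2)=\tuple{q'',B,q'}$ and, writing $\pi=\pi_1\circ\pi_2$, $\gamma(\pi)=\tuple{q,C,q'}$. Here it remains to prove $C=AB$, i.e.\ for all $i,j$ that $L_{\bar\pi}(v_i^q,v_j^{q'})\neq\emptyset$ iff there is $k$ with $L_{\bar\pi_1}(v_i^q,v_k^{q''})\neq\emptyset$ and $L_{\bar\pi_2}(v_k^{q''},v_j^{q'})\neq\emptyset$.

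The inclusion ``$(AB)_{ij}=1\Rightarrow C_{ij}=1$'' is routine: pick runs $\rho_1$ and $\rho_2$ witnessing the two nonemptiness statements. Then $\dst(\rho_1)=(q'',v_k^{q''})=\src(\rho_2)$, and $v_k^{q''}$ satisfies $S(q'')$ because it is the endpoint clock vector of the valid run $\rho_1$, so by \eqref{eq:run-constraints} the concatenation $\rho_1\circ\rho_2$ is a valid run whose path is $\bar\pi$ (using $\Runs(\bar\pi_1\circ\bar\pi_2)=\Runs(\bar\pi_1)\circ\Runs(\bar\pi_2)$ and that closing guards commutes with concatenation of paths). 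Hence $\Word(\rho_1\circ\rho_2)\in L_{\bar\pi}(v_i^q,v_j^{q'})$.

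The interesting inclusion is the converse. Given a run $\rho$ along $\bar\pi$ from $v_i^q$ to $v_j^{q'}$, split it at the boundary between $\pi_1$ and $\pi_2$ into $\rho=\rho_1\circ\rho_2$ with $\dst(\rho_1)=\src(\rho_2)=(q'',y)$; by \eqref{eq:run-constraints}, $y$ lies in the closure of the $M$-bounded region $S(q'')$, which is a simplex, so $y=\sum_k\lambda_k v_k^{q''}$ with $\lambda_k\geq 0$, $\sum_k\lambda_k=1$, the set $\{k:\lambda_k>0\}$ being the carrier face of $y$. The goal is to produce a single $k$ with $\lambda_k>0$ such that $A_{ik}=1$ and $B_{kj}=1$. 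For this I would invoke the structural fact, essentially due to Puri, that for any path $\varpi$ from $p$ to $p'$ the time-abstract reachability relation restricted to the closures of the bounded regions $S(p),S(p')$ equals $\Conv\{(v_a^p,v_b^{p'}):D_{ab}=1\}$, where $\tuple{p,D,p'}=\gamma(\varpi)$. Granting this, $(v_i^q,y)\in\Conv\{(v_i^q,v_k^{q''}):A_{ik}=1\}$; projecting to the first coordinate and using that the vertices $v_\cdot^q$ are affinely independent forces all the weight onto pairs with first component $v_i^q$, so $y=\sum_k\alpha_k v_k^{q''}$ with $\alpha_k>0\Rightarrow A_{ik}=1$. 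Symmetrically $(y,v_j^{q'})\in\Conv\{(v_k^{q''},v_j^{q'}):B_{kj}=1\}$ yields $y=\sum_k\beta_k v_k^{q''}$ with $\beta_k>0\Rightarrow B_{kj}=1$. By uniqueness of barycentric coordinates in a simplex, $\alpha_k=\beta_k=\lambda_k$ for all $k$; hence any $k$ with $\lambda_k>0$ satisfies $A_{ik}=B_{kj}=1$, so $(AB)_{ij}=1$, as required.

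The main obstacle is therefore the structural fact quoted above. I would either cite it from \cite{puri,entroJourn}, or prove it by induction on the length of $\varpi$. The base case (a single edge $(p,p',a,\guard,\reset)$) follows from the \RTA\ defining condition $(\{S(p)+t\mid t\in\real_+\}\cap\guard)[\reset]=S(p')$ and elementary convexity: the single-edge reachability relation is the affine image of the polytope of pairs $(x,t)$ with $x\in\overline{S(p)}$, $t\geq 0$, $x+t\in\overline{\guard}$ (translation by $t$ along the all-equal direction), hence itself a polytope, whose vertices one checks to be vertex-pairs of the two simplices. For the inductive step, composition of relations is a projection of an intersection, so it preserves convexity; preserving the ``vertices are vertex-pairs'' shape under composition is a finite transportation/flow-decomposition argument on $\overline{S(p)},\overline{S(p'')},\overline{S(p')}$, again using affine independence of the region vertices to glue the two barycentric decompositions of the intermediate point. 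That bookkeeping is the only genuinely delicate point of the whole proof.
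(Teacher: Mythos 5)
Your argument is correct and follows the standard route that the paper itself uses for the analogous results $\gamma_f$ and $\gamma_d$ (Props.~\ref{prop:forbit-morphism} and \ref{prop:dorbit-morphism}): the paper defers $\gamma$'s morphism property to \cite{entroJourn}, but both it and you reduce the hard direction to Puri's reachability characterization (\cref{lem:puri}), whose stochastic-matrix form $\lambda(x')=\lambda(x)P$ with $P\leq A$ is equivalent to your convex-hull statement $\Reach_{\bar\varpi}=\Conv\{(v_a,v_b):D_{ab}=1\}$ (set $c_{ab}=\lambda(x)_aP_{ab}$). Your decomposition of the intermediate vector $y$ with the two barycentric expansions glued by uniqueness is the same matching step the paper carries out via $\sum_w\lambda(y)_w(P_2)_{wv}>0$ in the $\none$ case of \cref{prop:dorbit-morphism}, so citing \cref{lem:puri} (rather than re-deriving the ``structural fact'' by induction) would make the proof both complete and non-circular.
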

Thanks to this property, we can easily compute $\gamma(\pi)$: for one edge using the definition, for a longer path  by composition: $\gamma(\delta_1\dots\delta_k)=\gamma(\delta_1)\dots\gamma(\delta_k)$.

It follows from the theory developed in \cite{puri}, that  p-orbits are an efficient abstraction for reachability: for each path $\pi$, its p-orbit $\gamma(\pi)$ fully describes the reachability relation 
between a clock vector $x$ in the starting region of $\pi$ and a clock vector $x'$ in its ending region, through a run along $\pi$.
Intuitively, the reason is that   any run from $(q,x)$ to $(q',x')$ can be expressed as a convex combination of runs of surrounding vertices of $S(q)$ to vertices of $S(q')$, summarized in the orbit $\gamma(\pi)$.

Consider the running example $\aut_6$, its \RTA\ has its main cycle\footnote{there are two more transient locations, not considered here} between $q$ and $p$ with $S(q)=(0,1)\times \{0\}, S(p)=\{0\}\times(0,1) $. 
The p-orbits of its edges are the two graphs on the left of \cref{fig:puri-twin}, and their matrix form is:
\[
\gamma(q\to p)=\gamma_1=\tuple{ q, \begin{pmatrix}
1&1\\
1&0
\end{pmatrix},p};
 \ 
 \gamma(p\to q)=\gamma_2=\tuple{p,
\begin{pmatrix}
0&1\\
1&1
\end{pmatrix},q}.
\]
The orbits of the two cycles, i.e.~the two graphs on the right of \cref{fig:puri-twin}, are the following:
\[
\gamma(q\to p\to q)= \gamma_1\cdot\gamma_2=
\tuple{q,
\begin{pmatrix}
1&1\\
0&1
\end{pmatrix},
q};\ 
\gamma(p\to q\to p)=  \gamma_2\cdot\gamma_1=
\tuple{ p,
\begin{pmatrix}
1&0\\
1&1
\end{pmatrix},
p}. 
\]

\begin{figure}
\begin{tikzpicture}
\node at (0,0.7) {\large$\aut_6$};
\node at (2,1) {$q\to p$};
  \node[circle, inner sep=0.5mm,fill=blue] (a) at (0,0) {};
  \node[circle, inner sep=0.5mm,fill=blue]  (b) at (1,0) {};
  \node[circle,inner sep=0.5mm,fill=blue] (c) at (3,0) {};
  \node[circle,fill=blue,inner sep=0.5mm] (d) at (3,1) {}; 
  \draw (a)node[above]{0}
  (b)node[above]{1}
  (c)node[right]{0}
  (d)node[right]{1};
 
  \path[-,ultra thick,blue] 
  (a) edge (b)
  (c) edge (d);
 \path[-{Latex},  thin,black,every loop/.append style=-{Latex}] 
 (a) edge[bend right]  (c)
 (a) edge (d)
 (b) edge (c);
%
%
\node at (6,1) {$p\to q$};
  \node[circle, inner sep=0.5mm,fill=blue] (a) at (4,0) {};
  \node[circle, inner sep=0.5mm,fill=blue]  (b) at (5,0) {};
  \node[circle,inner sep=0.5mm,fill=blue] (c) at (7,0) {};
  \node[circle,inner sep=0.5mm,fill=blue] (d) at (7,1) {};  
  \path[-,ultra thick,blue] 
  (a) edge (b)
  (c) edge (d);
 \path[-{Latex},  thin,black,every loop/.append style=-{Latex}] 
 (c) edge (b)
 (d) edge (b)
  (d) edge (a);
%
%
\node at (8.5,1) {$q\to p\to  q$};
  \node[circle, inner sep=0.5mm,fill=blue] (a) at (8,0) {};
  \node[circle, inner sep=0.5mm,fill=blue]  (b) at (9,0) {};
  \path[-,ultra thick,blue] 
  (a) edge (b);
 \path[-{Latex},  thin,black,every loop/.append style=-{Latex}] 
 (a) edge[bend left] (b)
 (a) edge[loop left] (a)
 (b) edge[loop right] (b)
 ;
\node at (12,1) {$p\to q\to  p$};
  \node[circle,inner sep=0.5mm,fill=blue] (c) at (11,0) {};
  \node[circle,inner sep=0.5mm,fill=blue] (d) at (11,1) {};  
  \path[-,ultra thick,blue] 
  (c) edge (d); 
 
 \path[-{Latex},  thin,black,every loop/.append style=-{Latex}] 
 (d) edge[bend left] (c)
 (d) edge[loop left] (d)
 (c) edge[loop left] (c)
 ;
\end{tikzpicture}
    \begin{tikzpicture}
		\draw (0,0) -- (.9,0) \aaa   -- (1.2,0) \bbb--(1.8,0) \aaa  -- (2.4,0)\bbb -- (2.7,0)\aaa-- (3.5,0) \bbb-- (3.6,0)\aaa-- (4.52,0) \bbb-- (4.59,0)\aaa-- (5.53,0) \bbb-- (5.58,0)\aaa-- (6.535,0) \bbb-- (6.575,0)\aaa
  ;
\end{tikzpicture} 
\caption{P-orbits for $\aut_6$\label{fig:puri-twin} and a typical word in $L(\aut_6)$: $(a,.9)(b,1.2)(a,1.8)(b,2.4)(a,2.7)(b,3.5)(a,3.6)(b,4.52)(a,4.59)(b,5.53)(a,5.58)(b,6.535)(a,6.575)
$\label{fig:thin-twin-run}}
\end{figure}
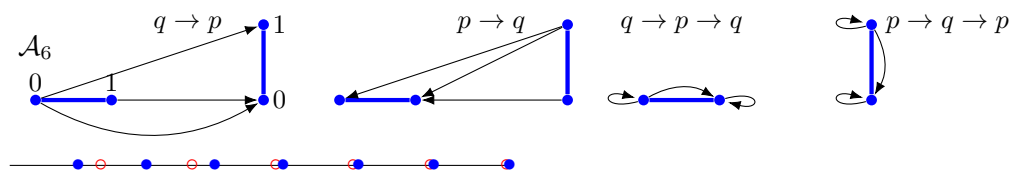


\subsection{Central notion of the paper: the three classes}
%
\begin{definition} A timed language $L$ is 
\begin{itemize}
    \item \emph{meager}
    whenever $\bandh_\varepsilon(L) = O(1)$;
    \item \emph{normal} whenever $\bandh_\varepsilon(L) =  \Theta(\log \frac{1}{\varepsilon})$;
    \item \emph{obese} whenever $\bandh_\varepsilon(L) =  \Theta(\frac{1}{\varepsilon})$,
\end{itemize}
as $\varepsilon\to 0$.
\end{definition}
It follows immediately from \eqref{eq:inequality} that a similar and  equivalent characterization  can be done in terms of capacity by replacing $\bandh$  by $\bandc$.
We will now give some intuition concerning the three classes, using running examples. 
\begin{description}
\item[Automata with obese language:]  information is  produced with a high frequency.  Thus, in $\aut_1$ every $\varepsilon$ seconds one can produce an $a$, a $b$, both  or nothing, which corresponds to two bits of information, and yields a bandwidth of $2/\varepsilon$ bit/sec. In $\aut_2$ such a high-frequency encoding is possible during $1$ second, after that one should wait 4 seconds and take the $c$ loop to reset  $x$. In total, the bandwidth will be $\frac{2}{5\varepsilon}$.
\item[Automata with normal language:] during one unit of time they typically make $O(1)$ discrete choices, and $O(1)$ real-valued choices. The former convey $O(1)$ bit/sec, the latter $O(\log1/\varepsilon)$. Thus $\aut_4$ makes a cycle in 5 to 6 seconds, with two real-valued choices (duration of stay in $p$ and in $q$), and the bandwidth is $\frac25 \log1/\varepsilon$.
\item[Automata with meager language:] In the long run such automata should not allow real-valued choices. $\aut_5$ has only discrete choices since its transitions happen  at discrete dates only. $\aut_3$ can produce a huge amount of information ($2/\varepsilon$ bits every second), but only during the first five seconds of its life. Its bandwidth is $\lim_{T\to \infty} 2/\varepsilon T=0$. The automaton $\aut_6$ is much less evident. It requires $a$ and $b$ to interleave, with $a$ being $<1$ second apart and $b$ on the contrary $>1$ second apart, as shown on \cref{fig:thin-twin-run}. Such interleaving is possible for any duration $T$, but real-valued intervals between symbols  become more and more constrained. We prove later that this language is indeed meager.
\end{description}

In the rest of the paper, we propose a structural and decidable characterization of \DTA\  establishing membership 
of their languages into one of these 3 classes, also proving that no other possibility exists.

\section{Distinguishing between Meager and Other Automata}\label{sec:meager}

For the next two sections, our plan is to characterize the bandwidth of the timed language
generated by a cycle looping over a region  and to extend this analysis to the whole language.
As bandwidth is about choice per unit of time, we enrich the monoid of p-orbits with one of the following two types of information: 
(1) the amount of choice one may have when taking an edge between two vertices, or (2) a rough estimate of the interval of 
durations of timed words generated by the cycle, starting in one vertex and ending in another vertex.  
The first type of information is useful for discriminating meager from non-meager languages, 
while the second type will distinguish obese from non-obese languages. We start with the former, slightly simpler to formulate. 

\subsection{Orbits with Abstracted Freedom}

The structural criterion for meagerness is based on an enrichment of the monoid of p-orbits in which matrix entries 
summarize the dimension of the reachability polytope between vertices.
In more detail, matrix entries belong to $\Lambda_f\triangleq \{\none,\narrow,\wide\}$, having a semiring structure\footnote{also known as super-Boolean semiring $\mathds{SB}$, \cite{rhodes}},
 defined in \cref{fig:freedom-monoid-plus}.
The intuition 
is the following. Elements of $\Lambda_f$ summarize reachability polytopes between region vertices through some path: no reachability for $\none$ (empty polytope), reachability without continuous choice for $\narrow$ (singleton polytope), presence of at least one continuous choice for $\wide$ (polytope of dimension $\geq 1$). The  addition of $\Lambda_f$ represents the convex union of two compatible sets of runs, while the multiplication corresponds to the concatenation of sets of runs, in particular, two distinct singletons logically add up to a one-dimensional polytope, while the dimension of two concatenated sets of runs is the sum of the dimensions. 
Hence the monoid describes the dimension of the reachability polytope of a path: its degree of \emph{freedom}.

\begin{table}[ht]
\begin{center}
    \begin{tabular}{|c|c|c|c|}
         \hline $+$&$\none$&$\narrow$&$\wide$\\
         \hline
         $\none$&$\none$&$\narrow$&$\wide$\\
         $\narrow$&$\narrow$&$\wide$&$\wide$\\
         $\wide$&$\wide$&$\wide$&$\wide$\\
         \hline
    \end{tabular}
    \hfill
        \begin{tabular}{|c|c|c|c|}
         \hline $\times$&$\none$&$\narrow$&$\wide$\\
         \hline
         $\none$&$\none$&$\none$&$\none$\\
         $\narrow$&$\none$&$\narrow$&$\wide$\\
         $\wide$&$\none$&$\wide$&$\wide$\\
         \hline
    \end{tabular}
 \end{center}   
\caption{Addition and multiplication of the semiring $\Lambda_f$\label{fig:freedom-monoid-plus}}
\vspace*{-20pt}
\end{table}

\begin{proposition}
    $\left(\Lambda_f,+,\times, \none, \narrow\right)$ is a commutative semiring.
\end{proposition}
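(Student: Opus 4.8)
The plan is to verify the commutative semiring axioms for $\left(\Lambda_f,+,\times,\none,\narrow\right)$ directly from the multiplication and addition tables in \cref{fig:freedom-monoid-plus}. Since $\Lambda_f$ has only three elements, every axiom is in principle a finite check, but a brute-force case analysis over all triples is unenlightening; instead I would exploit the fact that $\Lambda_f$ is a quotient (or a sub-semiring image) of a more familiar structure and lift the axioms from there.

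\medskip

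First I would observe that $+$ is exactly the join of the total order $\none < \narrow < \wide$: indeed the addition table returns the maximum of its two arguments. Hence $(\Lambda_f,+,\none)$ is a commutative, idempotent monoid with identity $\none$ — commutativity, associativity and the neutrality of $\none$ are immediate properties of $\max$ on a linear order. Next I would note that $\times$ is best understood via the map $\dim\colon\Lambda_f\to\{-\infty,0,1,\ge 1\}$ sending $\none\mapsto -\infty$, $\narrow\mapsto 0$, $\wide\mapsto$ ``a dimension $\ge 1$'', with multiplication mirroring ``truncated addition of dimensions'' ($d_1+d_2$, capped so that anything $\ge 1$ collapses to $\wide$, and $-\infty$ absorbing): this makes the commutativity of $\times$, the neutrality of $\narrow$ (adding dimension $0$), and the absorption of $\none$ transparent. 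Associativity of $\times$ follows because capped addition of three dimension-values is independent of the order of capping. These two paragraphs dispatch the two commutative-monoid parts of the semiring axioms.

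\medskip

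The remaining axioms are the two distributivity laws and the fact that $\none$ annihilates under $\times$; the latter is visible in the first row/column of the multiplication table. For distributivity I would argue $a\times(b+c) = (a\times b)+(a\times c)$ as follows: if $a=\none$ both sides are $\none$; otherwise $a\in\{\narrow,\wide\}$ acts on the right multiplicand by ``adding a dimension'' ($0$ if $a=\narrow$, $\ge 1$ if $a=\wide$), so $x\mapsto a\times x$ is order-preserving on $(\Lambda_f,\le)$, and an order-preserving map on a linear order commutes with $\max$, i.e.\ $a\times\max(b,c)=\max(a\times b,a\times c)$. Right distributivity follows by commutativity of $\times$. This monotonicity observation is the one place where a little care is needed — one must check that $x\mapsto \wide\times x$ really is monotone, i.e.\ $\wide\times\none=\none\le\wide=\wide\times\narrow\le\wide=\wide\times\wide$, which is immediate from the table. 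I do not expect any genuine obstacle here; the only ``hard'' part is resisting the temptation to write out the full multiplication-table case analysis and instead presenting the structural reason (join-semilattice plus monotone multiplication), which also foreshadows the semiring's role in later dimension-counting arguments. Finally I would remark, as the excerpt already does in a footnote, that this semiring is the super-Boolean semiring $\mathds{SB}$ of \cite{rhodes}, so the statement can alternatively be cited outright.
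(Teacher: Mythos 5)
Your structural argument rests on a misreading of the addition table: you assert that $+$ is the join of the total order $\none<\narrow<\wide$, i.e.\ that it returns the maximum of its arguments, and hence that $(\Lambda_f,+,\none)$ is an \emph{idempotent} commutative monoid. But the table gives $\narrow+\narrow=\wide$, not $\narrow$ — as the surrounding text explains, the convex union of two distinct singletons is a one-dimensional polytope. So $+$ is not $\max$, is not idempotent, and is not a semilattice join. Consequently your distributivity argument — that $x\mapsto a\times x$ is monotone and a monotone map on a chain commutes with $\max$ — proves distributivity over $\max$ but says nothing about distributivity over the actual $+$. Concretely, on $a=b=c=\narrow$ your argument would predict $a\times(b+c)=\narrow$, whereas the true value is $\narrow\times\wide=\wide$; the equation $a\times(b+c)=(a\times b)+(a\times c)$ still happens to hold (both sides equal $\wide$), but not for the reason you gave.

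A correct short argument is available: $\none$ is additively neutral and multiplicatively absorbing by the border rows/columns, $\narrow$ is multiplicatively neutral, and the remaining structure is a two-case dichotomy — for $a,b\neq\none$, $a+b=\wide$ always, and $a\times b=\wide$ unless $a=b=\narrow$. From this description associativity of both operations and distributivity follow by a very small case split (essentially only the $a=\wide$ case needs checking, since $a=\narrow$ is the identity and $a=\none$ annihilates). The paper itself offers no proof at all; it simply notes in a footnote that $\Lambda_f$ is the super-Boolean semiring $\mathds{SB}$ of \cite{rhodes}, which is exactly the citation route you mention at the end — that remark is fine, but the ``max-plus-monotonicity'' derivation preceding it is not.
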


Thus the definition of the monoid $\mon_f$ of f-orbits is identical with \cref{def:mon:puri}, with the only difference, that  matrix entries belong to the semi-ring $\Lambda_f$. Graphically, we represent all non-$\none$ elements of $\Lambda_f$ by labeled (or colored) edges, and $\none$ by an absence of edge, see \cref{fig:forbits:twins}.

\begin{definition}[The  f-orbit of a path abstracts its freedom] \label{def:f-orbit}Given an \RTA, we define a function $\gamma_f$ from edge sequences to $\mon_f$.
\begin{itemize}
    \item $\gamma_f(\epsilon)=\1$. Whenever sequence $\pi$ is not a path, $\gamma_f(\pi)=\0$.
    \item For $\pi$ a non-empty path from $q$ to $q'$, let $\gamma_f(\pi)=\tuple{q,B,q'}$, with the following matrix entries: 
    for vertices $v=v_i^q\in V(S(q))$ and $v'=v_j^{q'}\in V(S(q'))$
    \begin{itemize}
      \item $B_{ij}=\0$ iff $L_{\bar{\pi}}(v,v')=\emptyset$;
    \item $B_{ij}=\narrow$ iff $L_{\bar{\pi}}(v,v')$ is a singleton;
        \item $B_{ij}=\wide$ otherwise.
    \end{itemize}
\end{itemize}
\end{definition}
We remark that in the last case $L_{\bar{\pi}}(v,v')$ has a dimension $\geq 1$.
\begin{restatable}{proposition}{propForbit}\label{prop:forbit-morphism}
$\gamma_f$ is a monoid morphism.
\end{restatable}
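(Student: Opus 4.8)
The plan is to mimic the proof that $\gamma$ is a morphism (the earlier Proposition attributed to \cite{entroJourn}), upgrading the Boolean bookkeeping to the semiring $\Lambda_f$. First I would check the trivial cases that involve $\0$ and $\1$: $\gamma_f(\epsilon)=\1$ holds by definition, and for any edge sequences $\pi_1,\pi_2$ if either is not a path, or if $\dst$ of the last edge of $\pi_1$ disagrees with $\src$ of the first edge of $\pi_2$, then $\pi_1\circ\pi_2$ is not a path, so both sides equal $\0$ (using that $\0$ is absorbing in $\mon_f$). This reduces the statement to: for two composable non-empty paths $\pi_1$ from $q$ to $r$ and $\pi_2$ from $r$ to $q'$, we have $\gamma_f(\pi_1\pi_2)=\gamma_f(\pi_1)\cdot\gamma_f(\pi_2)$, i.e. the matrix of $\gamma_f(\pi_1\pi_2)$ is the $\Lambda_f$-product of the matrices of $\gamma_f(\pi_1)$ and $\gamma_f(\pi_2)$. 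It suffices to do this for $\pi_2$ a single edge and then induct, but actually the two-path version is no harder, so I would prove it directly.

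Fix vertices $v=v_i^q$ of $S(q)$ and $v'=v_j^{q'}$ of $S(q')$; let $w_1,\dots,w_m$ enumerate $V(S(r))$. The matrix entry $(\gamma_f(\pi_1)\cdot\gamma_f(\pi_2))_{ij}$ is $\sum_k \gamma_f(\pi_1)_{ik}\times \gamma_f(\pi_2)_{kj}$ in $\Lambda_f$, and I must show this equals the $\Lambda_f$-value assigned to the set $L_{\bar\pi_1\bar\pi_2}(v,v')$ by Definition \ref{def:f-orbit}. The key structural fact, coming from Puri's theory as recalled in the excerpt, is that $L_{\bar\pi}(x,x')$ for a clock vector $x\in S(q)$, $x'\in S(q')$ is obtained as a convex combination of the vertex-to-vertex run sets, and concatenation of runs along $\pi_1$ then $\pi_2$ passes through some state of $\{r\}\times S(r)$, which itself is a convex combination of the $w_k$. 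Concretely I would argue two inclusions/dimension facts:

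\begin{itemize}
\item \emph{(Concatenation = multiplication of dimensions.)} If $\rho_1\in L_{\bar\pi_1}(v,w_k)$ and $\rho_2\in L_{\bar\pi_2}(w_k,v')$ then $\rho_1\circ\rho_2\in L_{\bar\pi_1\bar\pi_2}(v,v')$; conversely every run in $L_{\bar\pi_1\bar\pi_2}(v,v')$ decomposes through some intermediate clock vector in $S(r)$, which is a convex combination of the $w_k$, hence (by the affine structure of runs over a fixed path, again Puri) $L_{\bar\pi_1\bar\pi_2}(v,v')=\Conv\bigcup_k L_{\bar\pi_1}(v,w_k)\circ L_{\bar\pi_2}(w_k,v')$. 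Since the timing of a concatenated run is the concatenation of timings, $L_{\bar\pi_1}(v,w_k)\circ L_{\bar\pi_2}(w_k,v')$ is nonempty iff both factors are, and it is a singleton iff both factors are singletons, and otherwise has dimension $\geq 1$. This matches exactly the multiplication table of $\Lambda_f$: $\none\times x=\none$, $\narrow\times\narrow=\narrow$, and anything involving $\wide$ on a nonempty product gives $\wide$.
\item \emph{(Convex union = addition of dimensions.)} Taking the convex hull of the union over $k$ of these sets: the result is empty iff all summands are empty ($\none$); it is a singleton iff exactly one summand is a nonempty singleton and the rest are empty, or all nonempty summands coincide as the same point — in the $\Lambda_f$ abstraction we only track ``$=$ singleton'' vs ``$\geq 1$-dimensional'', and convex-combining two distinct points, or one point with a positive-dimensional set, or two positive-dimensional sets, yields dimension $\geq 1$ ($\wide$); this is exactly the addition table, $\none+x=x$, $\narrow+\narrow=\wide$, $\wide+x=\wide$.
\end{itemize}

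Combining these, $(\gamma_f(\pi_1\pi_2))_{ij}$ equals $\sum_k \gamma_f(\pi_1)_{ik}\times\gamma_f(\pi_2)_{kj}$ as elements of $\Lambda_f$, which is the required matrix-product identity, and since the source/destination locations clearly compose correctly ($q\to r$ then $r\to q'$ gives $q\to q'$), we get $\gamma_f(\pi_1\pi_2)=\gamma_f(\pi_1)\cdot\gamma_f(\pi_2)$. The main obstacle, and the step deserving the most care, is the claim that $L_{\bar\pi_1\bar\pi_2}(v,v')$ is the convex hull of the vertex-indexed concatenations — i.e. that an arbitrary run along $\bar\pi_1\bar\pi_2$ can be realized as a convex combination of runs that pass through \emph{vertices} of $S(r)$, not merely through some point of $S(r)$. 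This is precisely where we invoke the affine/polytope structure of the set of runs along a fixed closed path established in \cite{puri} (and already used to justify that $\gamma$ is a morphism); one must also note that the ``singleton vs.\ positive-dimensional'' distinction is invariant under these convex operations, which is the content of the $\Lambda_f$ semiring axioms verified in the preceding Proposition. Everything else is routine case analysis on the tables of \cref{fig:freedom-monoid-plus}.
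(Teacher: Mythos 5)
Your overall strategy differs from the paper's: you propose a global decomposition
\[
L_{\bar\pi_1\bar\pi_2}(v,v')=\mathop{ConvHull}\bigcup_k L_{\bar\pi_1}(v,w_k)\circ L_{\bar\pi_2}(w_k,v'),
\]
then argue that the $\Lambda_f$ abstraction commutes with $\circ$ and $\mathop{ConvHull}$. The paper instead fixes a matrix entry, looks directly at the three possible values of the $\Lambda_f$-product $\bigl(\gamma_f(\pi_1)\gamma_f(\pi_2)\bigr)[u,v]$, and checks each case against $\gamma_f(\pi_1\pi_2)[u,v]$. For the $\none$ case it reuses the p-orbit/d-orbit morphism argument; for the $\narrow$ case it uses Lemma~\ref{lem:region-reach} to argue that the set of intermediate clock vectors reachable from $u$ and co-reachable to $v$ is the closure of a region that must collapse to the single vertex $w$, so all runs factor through $w$ uniquely; for the $\wide$ case it distinguishes the multiplication-$\wide$ and addition-$\wide$ subcases and exhibits two distinct timed words.

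The real gap in your proposal is in the addition step. You write, correctly, that the convex hull of the union could in principle remain a singleton if two nonempty vertex-indexed summands happen to coincide as the same point; but that outcome would contradict the semiring rule $\narrow+\narrow=\wide$ that you are trying to match. You gesture at this, then move on without closing it. To make the argument sound you would need to prove that summands indexed by distinct intermediate vertices $w_{k}\neq w_{k'}$ are always distinct timed words. This does hold — the intermediate clock vector reached after $\pi_1$ is determined by the fixed start vertex $v$ together with the timing of the prefix, so different intermediate vertices force different prefix timings — but it must be said, and the paper does say it explicitly (``two different orbit paths, with different timings, hence $L_{\bar\pi}(u,v)$ contains at least two distinct words''). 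A secondary concern is that the convex-hull identity itself is a nontrivial structural claim you invoke as ``from Puri's theory'' without proof; the paper deliberately avoids needing it, replacing the $\narrow$ analysis with the region intersection argument via Lemma~\ref{lem:region-reach}, which is both more elementary and more obviously sound.
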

Again, this property allows  compositional computation of $\gamma_f(\pi)$.\footnote{{\cref{def:f-orbit} may not seem effective but, for a single edge $\delta$, $\gamma_f(\delta)$ can be computed in polynomial time in $\#X$, by solving the constraints of $\delta$, while $\gamma_f(\cdot)$ for longer paths can be computed using the multiplication.}}

\begin{definition}
    A cycle of an \RTA\ is \emph{structurally meager} if its f-orbit has no  $\wide$ self-loop (in matrix terms,  a $\wide$ on the main diagonal). An \RTA\ is \emph{structurally meager} when all its cycles are structurally meager.
\end{definition}


Consider the running example  $\aut_6$. The top row of  \cref{fig:forbits:twins}  (which is a colorful version of \cref{fig:puri-twin}) illustrates f-orbits $\gamma_f(q\to p)=\gamma'_1$ and $\gamma_f( p\to q)=\gamma'_2$, in matrix form:
 \[
 \tuple{ q,
 \begin{pmatrix}
\narrow&\narrow\\
\narrow&0
\end{pmatrix},
p};
\text{ and }
\tuple{p,
\begin{pmatrix}
0&\narrow\\
\narrow&\narrow
\end{pmatrix},
q}
.
 \]
and the f-orbits of the cycles $q \to p\to q$ and $p \to q\to p$, i.e.~the products $\gamma'_1\gamma'_2$ and $\gamma'_2\gamma'_1$: 
 \[
 \tuple{q,
 \begin{pmatrix}
\narrow&\wide\\
0&\narrow
\end{pmatrix},
q}; 
\text{ and }
\tuple{p,
\begin{pmatrix}
\narrow&0\\
\wide&\narrow
\end{pmatrix},
p}.
 \]
Similarly, the bottom row of \cref{fig:forbits:twins} presents f-orbits of $\aut_7$, and one can observe that   
  $\aut_6$ is structurally meager (no $\wide$ self-loops), while $\aut_7$ is not.

\begin{figure}
\begin{tikzpicture}
\node at (0,0.7) {\large$\aut_6$};
\node at (2,1) {$q\to p$};
  \node[circle, inner sep=0.5mm,fill=blue] (a) at (0,0) {};
  \node[circle, inner sep=0.5mm,fill=blue]  (b) at (1,0) {};
  \node[circle,inner sep=0.5mm,fill=blue] (c) at (3,0) {};
  \node[circle,fill=blue,inner sep=0.5mm] (d) at (3,1) {}; 
  \draw (a)node[above]{0}
  (b)node[above]{1}
  (c)node[right]{0}
  (d)node[right]{1};
 
  \path[-,ultra thick,blue] 
  (a) edge (b)
  (c) edge (d);
 \path[-{Latex},  thin,black!50!green] 
 (a) edge[bend right]  (c)
 (a) edge (d)
 (b) edge (c);
%
%
\node at (6,1) {$p\to q$};
  \node[circle, inner sep=0.5mm,fill=blue] (a) at (4,0) {};
  \node[circle, inner sep=0.5mm,fill=blue]  (b) at (5,0) {};
  \node[circle,inner sep=0.5mm,fill=blue] (c) at (7,0) {};
  \node[circle,inner sep=0.5mm,fill=blue] (d) at (7,1) {};  
  \path[-,ultra thick,blue] 
  (a) edge (b)
  (c) edge (d);
 \path[-{Latex}, thin,black!50!green] 
 (c) edge (b)
 (d) edge (b)
  (d) edge (a);
%
%
\node at (8.5,1) {$q\to p\to  q$};
  \node[circle, inner sep=0.5mm,fill=blue] (a) at (8,0) {};
  \node[circle, inner sep=0.5mm,fill=blue]  (b) at (9,0) {};
  \path[-,ultra thick,blue] 
  (a) edge (b);
 \path[-{Latex},  thin,black!50!green,every loop/.append style=-{Latex}] 
 (a) edge[bend left,very thick,red] (b)
 (a) edge[loop left] (a)
 (b) edge[loop right] (b)
 ;
\node at (12,1) {$p\to q\to  p$};
  \node[circle,inner sep=0.5mm,fill=blue] (c) at (11,0) {};
  \node[circle,inner sep=0.5mm,fill=blue] (d) at (11,1) {};  
  \path[-,ultra thick,blue] 
  (c) edge (d); 
 
 \path[-{Latex},  thin,black!50!green,every loop/.append style=-{Latex}] 
 (d) edge[bend left,very thick,red] (c)
 (d) edge[loop left] (d)
 (c) edge[loop left] (c)
 ;
\end{tikzpicture}\\
\begin{tikzpicture}
\node at (0,0.7) {\large$\aut_7$};
\node at (2,1) {$q\to p$};
  \node[circle, inner sep=0.5mm,fill=blue] (a) at (0,0) {};
  \node[circle, inner sep=0.5mm,fill=blue]  (b) at (1,0) {};
  \node[circle,inner sep=0.5mm,fill=blue] (c) at (3,0) {};
  \node[circle,fill=blue,inner sep=0.5mm] (d) at (3,1) {};  
  \path[-,ultra thick,blue] 
  (a) edge (b)
  (c) edge (d);
 \path[-{Latex},  thin,black!50!green,every loop/.append style=-{Latex}] 
 (a) edge[bend right]  (c)
 (a) edge (d)
 (b) edge (c);
  \draw (a)node[above]{0}
  (b)node[above]{1}
  (c)node[right]{0}
  (d)node[right]{1};
%
%
\node at (6,1) {$p\to q$};
  \node[circle, inner sep=0.5mm,fill=blue] (a) at (4,0) {};
  \node[circle, inner sep=0.5mm,fill=blue]  (b) at (5,0) {};
  \node[circle,inner sep=0.5mm,fill=blue] (c) at (7,0) {};
  \node[circle,inner sep=0.5mm,fill=blue] (d) at (7,1) {};  
  \path[-,ultra thick,blue] 
  (a) edge (b)
  (c) edge (d);
 \path[-{Latex},  thin,black!50!green,every loop/.append style=-{Latex}] 
 (c) edge[bend left] (a)
 (c) edge (b)
 (d) edge (a);
%
%
\node at (8.5,1) {$q\to p\to  q$};
  \node[circle, inner sep=0.5mm,fill=blue] (a) at (8,0) {};
  \node[circle, inner sep=0.5mm,fill=blue]  (b) at (9,0) {};
  \path[-,ultra thick,blue] 
  (a) edge (b);
 \path[-{Latex},  thin,black!50!green,every loop/.append style=-{Latex}] 
 (a) edge[bend left] (b)
 (b) edge[bend left] (a)
 (a) edge[loop left,red,very thick] (a)
 (b) edge[loop right] (b)
 ; 
 \node at (12,1) {$p\to q\to  p$};
  \node[circle,inner sep=0.5mm,fill=blue] (c) at (11,0) {};
  \node[circle,inner sep=0.5mm,fill=blue] (d) at (11,1) {};  
  \path[-,ultra thick,blue] 
  (c) edge (d); 
 
 \path[-{Latex},  thin,black!50!green,every loop/.append style=-{Latex}] 
 (d) edge[bend left] (c)
  (c) edge[bend left] (d)
 (d) edge[loop left] (d)
 (c) edge[loop left,very thick,red] (c)
 ;
\end{tikzpicture}
\caption{F-orbits for $\aut_6$ and $\aut_7$ ; green thin arrows are $\narrow$, red thick ones $\wide$.} \label{fig:forbits:twins}
\end{figure}
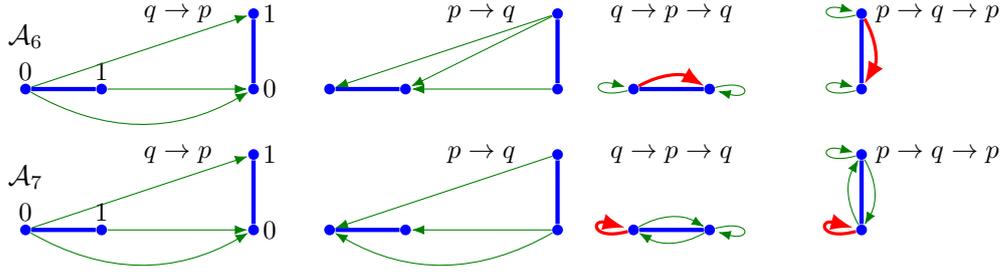

The decision procedure for structural meagerness uses the following ``pumping lemma'':

\begin{restatable}[adapted from {\cite[Lem.~8.7.1]{ocan}}]{lemma}{lemsimonminimallength}\label{lem:simon-minimal-length}
Given a finite alphabet $\Gamma$, a finite monoid $\mon $, and a  morphism $\beta:\Gamma^*\longrightarrow \mon$, for any $e\in \mon $ having  $\beta^{-1}(e)\neq \emptyset$, there exists a word  $w\in\beta^{-1}(e)$ of a length $\leq \#\mon$.
\end{restatable}

\begin{thm}\label{thm:complexity:meager}
    Structural meagerness of an \RTA\ of description size $n$ with $m$ clocks is decidable in $\SPACE(\log^2(n)\poly(m))$. 
\end{thm}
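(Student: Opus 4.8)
The plan is to reduce structural meagerness to a reachability question in an exponential-size graph and then solve that reachability in logarithmic space in the size of that graph, so polylogarithmic in $n$ (and polynomial in $m$), using Savitch-style nondeterministic search. First I would recall that an \RTA\ of description size $n$ with $m$ clocks has at most exponentially many locations in $n$, and each region $S(q)$ is a simplex with at most $m+1$ vertices, so an f-orbit $\tuple{q,B,q'}$ is described by a pair of locations plus a $(m+1)\times(m+1)$ matrix over the three-element semiring $\Lambda_f$; hence each single f-orbit has a description of size $\poly(m)\cdot\log(\text{number of locations}) = \poly(m)\cdot\poly(n)$ — in fact $O(\poly(m)\cdot n)$ bits suffice. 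The key structural fact, from the morphism property (\cref{prop:forbit-morphism}), is that an \RTA\ fails to be structurally meager iff there is some cyclic path $\pi$ (from a location $q$ back to $q$) whose f-orbit $\gamma_f(\pi)=\tuple{q,B,q}$ has $B_{ii}=\wide$ for some diagonal index $i$; and since $\mon_f$ is a finite monoid, by \cref{lem:simon-minimal-length} such a witnessing cyclic path, if it exists, has length at most $\#\mon_f$, which is exponential in $n\cdot\poly(m)$ but that only bounds the search depth, not the space.

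Concretely, I would set up the following nondeterministic procedure. Guess a location $q$ and a diagonal index $i\le m+1$. Maintain a ``current f-orbit'' $g\in\mon_f$, initialised to $\gamma_f(\delta)$ for a guessed first edge $\delta$ out of $q$; this costs $O(\poly(m)\cdot n)$ space to store. Repeatedly guess a next edge $\delta'$ (compatible, i.e. $\src_{\delta'}$ equals the current destination location, which is read off from $g$), compute $\gamma_f(\delta')$ from the guards and resets of $\delta'$ — doable in $\poly(m)$ time and space by the footnote after \cref{def:f-orbit} — and update $g \leftarrow g\cdot\gamma_f(\delta')$ using the semiring matrix product, again in $\poly(m)$ space. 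Accept as soon as $g=\tuple{q,B,q}$ with $B_{ii}=\wide$. By \cref{lem:simon-minimal-length} the accepting run, if any, needs at most $\#\mon_f$ steps, so we may also keep a step counter of $O(\log\#\mon_f)=O(\poly(m)\cdot n)$ bits and reject once it is exceeded; this makes the procedure halt. The working space at any moment is: one current f-orbit, one freshly-computed edge f-orbit, the counter, plus the fixed guesses $q,i$ — all $O(\poly(m)\cdot n)$, hence $\NSPACE(\poly(m)\cdot n)$. Structural meagerness is the complement, so by Immerman–Szelepcsényi it is also in $\NSPACE$ of the same bound, hence in $\DSPACE$ of the square by Savitch, i.e. $\SPACE(\poly(m)\cdot n^2)$.

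To sharpen this to the claimed $\SPACE(\log^2(n)\poly(m))$ bound one must avoid storing the exponential-size input \RTA\ explicitly: the \RTA\ is the output of \cref{prop:DTA2RTA} applied to the given \DTA\ of size $n$, and its locations, starting regions, and edges are computable on demand in space $\poly(\log(\text{RTA size}))=\poly(\log n)$ from the original \DTA. So I would phrase the algorithm as operating over an implicitly presented exponential graph $G$ whose vertices are f-orbits $\tuple{q,A,q'}$ and whose (directed) edges are single-edge f-orbit compositions; $G$ has $2^{\poly(m)\cdot n}$ vertices but each vertex name and each edge check costs only $\poly(m)\cdot\poly(\log n)$ space. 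The question ``is there a cycle from some $\tuple{q,\cdot,q}$ reaching a $\wide$-diagonal $\tuple{q,B,q}$'' is then a reachability query in $G$, solvable by Savitch in $\SPACE(\log^2|G|)=\SPACE((\poly(m)\cdot n)^2)$ — wait, that is still $n^2$, so to obtain $\log^2 n$ one instead observes that the relevant graph whose reachability we need is the region/orbit graph over the \emph{exponential} location set, and Savitch on a graph with $N=2^{\poly(\log n)\cdot\poly(m)}$ vertices costs $\log^2 N=\poly(\log n)\cdot\poly(m)$; the point is that the \RTA\ from \cref{prop:DTA2RTA} has only $2^{\poly(\log n)}$ locations (exponential in the \emph{representation}, polynomial-exponential, i.e. $2^{O(\log n\cdot k)}$), so $\log N=\poly(\log n)\poly(m)$ and $\log^2 N=\log^2(n)\poly(m)$ after absorbing polynomial factors into $\poly(m)$ and noting the clock count $m\le n$.

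\emph{Main obstacle.} The delicate point I expect to spend the most care on is the bookkeeping that keeps every stored object of size $\poly(\log n)\poly(m)$ rather than $\poly(n)$: one must never materialise the \RTA, must generate its locations and edges on the fly in logspace from the \DTA\ via \cref{prop:DTA2RTA}, must confirm that a single-edge f-orbit $\gamma_f(\delta)$ is computable in $\poly(m)$ space directly from $\delta$'s guard and reset (solving the linear constraints over the simplex vertices), and must justify the step-counter bound by invoking \cref{lem:simon-minimal-length} for the monoid $\mon_f$ whose size is $2^{\poly(m)\cdot n}$ — so the counter has $\poly(m)\cdot n$ bits, which threatens the $\log^2 n$ bound unless one is careful that ``$n$'' here is the \DTA\ size and the bound is really stated in terms of the \DTA; reconciling the statement's ``description size $n$'' with the exponential blow-up of \cref{prop:DTA2RTA} is the crux, and I would resolve it by running the whole search on the implicit exponential graph and appealing to Savitch once, rather than carrying an explicit counter.
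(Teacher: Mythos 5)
Your algorithmic idea is essentially right and essentially the paper's: nondeterministically guess edges, maintain the running f-orbit in $\mon_f$, bound the search depth by $\#\mon_f$ via \cref{lem:simon-minimal-length}, detect a $\wide$ diagonal on a cyclic orbit, and then apply Savitch's theorem (the paper implements this as a divide-and-conquer \texttt{isAPathOrbit} in the style of \cite[Sect.~7.3]{papa}, which is just Savitch by hand). But the proof as written does not establish the claimed bound, because you misread what $n$ quantifies over.

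The theorem takes as input an \RTA\ \emph{already of description size $n$}, so it has at most $n$ locations. Consequently a single f-orbit $\tuple{q,B,q'}$ costs $O(\log n)+\poly(m)$ bits (a pair of location indices plus a $(m{+}1)\times(m{+}1)$ matrix over the 3-element $\Lambda_f$), and $\log\#\mon_f = O(\log n)+\poly(m)$ since $\#\mon_f \le n^2\cdot 3^{(m+1)^2}$. Your own edge-by-edge nondeterministic procedure then runs in $\NSPACE(\log n + \poly(m))$, and Savitch squares this into $\SPACE((\log n+\poly(m))^2)=\SPACE(\log^2(n)\poly(m))$, done. Instead, you read $n$ as the size of a \DTA\ and the \RTA\ as the (implicit) output of \cref{prop:DTA2RTA}, which leads you to assert that one f-orbit needs $\poly(m)\cdot n$ bits and that the counter has $\poly(m)\cdot n$ bits; this gives only $\SPACE(\poly(m)\,n^2)$, far from the claim. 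You correctly flag this tension as ``the crux,'' but your attempted repair relies on the false sub-claim that the region-split \RTA\ has $2^{\poly(\log n)}$ locations: \cref{prop:DTA2RTA} produces exponentially more locations, i.e.\ $2^{\poly(n)}$ in the worst case (the maximal constant $M$, written in binary, can itself be exponential), so $\log N = \poly(n)$, and the implicit-graph Savitch you appeal to again yields $\poly(n)$ space, not $\log^2 n$. The \DTA-to-\RTA\ translation belongs in \cref{thm:classif}, which invokes the present theorem on the exponentially larger \RTA\ to obtain overall \PSPACE; it has no place inside this theorem's proof. Once you drop the misreading, paragraphs one and two of your proposal are correct and match the paper's argument.
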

Here  $\poly$ stands for  a polynomial. We will also use $\myexp$ for an exponential function below.
\begin{proof}[Proof sketch]
The  \RTA\ has at most $n$ locations, each orbit can be represented as a couple of locations and an (at most) $(m+1)\times(m+1)$-matrix with elements in $\Lambda_f$ (where $\# \Lambda_f=3$). 
Thus
$
\# \mon_f \leq n^2\cdot 3^{(m+1)^2}=\poly(n)\myexp(m).
$
An automaton is structurally meager whenever  it contains no cyclic path $\pi$ with orbit $\gamma_f(\pi)$ having a $\wide$ entry on the matrix diagonal. By \cref{lem:simon-minimal-length}, only cycles up to length $\#\mon_f$ need to be checked. 
We use now the reachability method   \cite[Sect.~7.3]{papa} to detect such a cycle  in a  space-efficient way, see \cref{algo}. 
An auxiliary function \lstinline{isAPathOrbit(e, h)}, true iff there is a path $\pi\in \gamma_f^{-1}(e)$ of length $\leq 2^h$ in  the \RTA, 
is computed by a divide-and-conquer algorithm. 
\begin{lstlisting}[mathescape=true,language=Java,float,caption=Deciding structural meagerness,label=algo]
Boolean function isAPathOrbit(e, h)
    if h = 0 return (e=$\1$ || $\exists \delta  : \gamma_f(\delta)=$e)
    for all orbits e1, e2 
        if e1$\cdot$e2 = e && isAPathOrbit(e1, h-1) && isAPathOrbit(e2, h-1)
            return true
    return false
 Boolean function isStrucMeager()
    for all cyclic orbits e
        if (containsWideOnDiag(e) && isAPathOrbit(e, $\log \#\mon_f$)) 
            return false
    return true
\end{lstlisting}
The algorithm requires a call stack of depth $\log\#\mon_f$, the size of each stack frame is $O(\log n)+\poly(m)$, thus the 
 computation takes  space   $\left(O(\log n)+\poly(m)\right)\cdot \log\#\mon_f=\log^2(n)\poly(m)$.
\end{proof}

\subsection[Structural Meagerness <=> Meagerness]{Structural Meagerness $\Leftrightarrow$
Meagerness}
The upper bound can be formulated as follows.

\begin{restatable}{proposition}{propmeagerupper}\label{prop:meager:upper}
The timed language of a structurally meager $\RTA$ is meager. 
\end{restatable}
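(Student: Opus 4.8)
The plan is to exhibit, for every small $\varepsilon$, a bound $\ent_\varepsilon(L_T)\le c\,T+f(\varepsilon)$ with the leading constant $c$ \emph{independent of $\varepsilon$}; since $\bandh_\varepsilon(L)=\limsup_{T\to\infty}\ent_\varepsilon(L_T)/T$ this yields $\bandh_\varepsilon(L)\le c=O(1)$ uniformly in $\varepsilon$, i.e.\ $L$ is meager (one could argue symmetrically with $\capa$ and $\varepsilon$-separated sets via \eqref{eq:inequality}). I would split the information in a run along a path $\pi$ into its \emph{discrete} part — the label sequence, which is entirely determined by $\pi$ — and its \emph{timing} part, and bound the two separately. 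For the discrete part, first observe that a structurally meager \RTA\ has no Zeno cycle: a cyclic path traversable in arbitrarily small positive time would, after a bounded number of iterations, become a cyclic path returning to a fixed vertex $v\in V(S(q))$ through a continuous one-parameter family of runs, hence a $\wide$ on the diagonal of its f-orbit (using \cref{prop:forbit-morphism}), contradicting structural meagerness. Consequently (by a standard pumping argument) every run of duration $\le T$ uses $O(T)$ edges, so there are $2^{O(T)}$ possible underlying paths; since $I$ fixes the source configuration, $L_T$ is covered by $2^{O(T)}$ sets of the form $L_{\bar\pi}(x,x')$ (the target $x'$ ranging over a final region). It therefore suffices to prove that \emph{there is a function $f$ of $\varepsilon$ and of $\aut$ only with $\ent_\varepsilon(L_{\bar\pi}(x,x'))\le f(\varepsilon)+O(\log|\pi|)$ for every path $\pi$ of a structurally meager \RTA\ and all endpoints} $(\star)$; for then $\ent_\varepsilon(L_T)\le\log\bigl(2^{O(T)}\cdot 2^{f(\varepsilon)+O(\log T)}\bigr)=O(T)+f(\varepsilon)$.

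To prove $(\star)$ I would invoke Simon's factorization forest theorem \cite{simon} for the morphism $\gamma_f\colon\Delta^{*}\to\mon_f$: every path admits a factorization forest of height $H=O(\#\mon_f)$, a \emph{constant}, whose internal nodes are either binary ($\pi=\pi_1\pi_2$) or idempotent ($\pi=\pi_1\cdots\pi_k$ with $\gamma_f(\pi_i)=\gamma_f(\pi)=e=e^2$). By induction on the height $h\le H$ of the subtree at a node I would establish a uniform bound $\ent_\varepsilon(L_{\bar{\pi'}}(x,x'))\le b_h(\varepsilon)+O(\log|\pi'|)$ over all subpaths $\pi'$ of that height and all endpoints, with $b_h$ depending only on $h,\varepsilon,\aut$; since $H$ is constant this is $(\star)$. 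The base case (a single edge) is immediate: fixing $x$ determines the delay up to the one free endpoint, so $\ent_\varepsilon=O(\log\frac{1}{\varepsilon})$. At a binary node one writes $L_{\bar\pi}(x,x')=\bigcup_{y\in S(q_1)}L_{\bar{\pi_1}}(x,y)\cdot L_{\bar{\pi_2}}(y,x')$ for the cut location $q_1$; netting the intermediate configuration $y$ over the simplex $S(q_1)\subseteq[0,M]^X$ costs $O(\#X\log\frac{1}{\varepsilon})$ bits, and combining with the nets of the two factors (induction hypothesis, at a slightly finer precision) yields an $\varepsilon$-net of $L_{\bar\pi}(x,x')$ — here one uses that $d$ is sub-additive under concatenation in the relevant direction, so the precision losses compounded over the $\le H$ levels stay within a fixed factor. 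This gives a recurrence $b_h(\varepsilon)\le O(\log\frac{1}{\varepsilon})+2\,b_{h-1}(\varepsilon)$, hence $b_H(\varepsilon)=O(\log\frac{1}{\varepsilon})$.

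The hard part will be the idempotent node $\pi=\pi_1\cdots\pi_k$ at a location $q$, where $k$ is unbounded and the naive recurrence $b_h\le O(\log\frac{1}{\varepsilon})+k\,b_{h-1}$ is useless. This is exactly where structural meagerness bites: the idempotent $e=\tuple{q,B,q}$ satisfies $B_{ii}\in\{\0,\narrow\}$ for every vertex $v_i\in V(S(q))$ — \emph{returning to a given vertex through one copy of $\pi$ is rigid}. The plan is to prove a ``stabilisation lemma'': the set-valued map on $S(q)$ induced by one copy of $\pi$ — which by Puri's theory \cite{puri} is the convex, piecewise-linear map encoded by $B$ — forces every run of $\pi^{k}$ from $x$ to $x'$ into exponentially shrinking neighbourhoods of a rigid reference trajectory, so that beyond a number of iterations $c=c(\aut)$ no further timing is observable at precision $\varepsilon$ except through a bounded number of real parameters and the integer $k$. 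Concretely $L_{\bar{\pi^{k}}}(x,x')$ should admit an $\varepsilon$-net of size $2^{O(b_{h-1}(\varepsilon))+O(\log\frac{1}{\varepsilon})}\cdot k^{O(1)}$: net the first $c$ and the last $c$ copies recursively, code the $O(1)$ residual real parameters and the ``settling index'' to precision $\varepsilon$, and observe that the middle block is then a word rigidly determined by this data and $k$. One recovers a recurrence of the same shape, $b_h(\varepsilon)\le O(\log\frac{1}{\varepsilon})+O(1)\cdot b_{h-1}(\varepsilon)$ plus the additive $O(\log|\pi|)$; moreover after this reduction the effective forest has branching $O(1)$ and height $H$, hence $O(1)$ idempotent nodes, and the $O(\log k)$ contributions sum to $O(\log|\pi|)$ as required.

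Putting the pieces together, $(\star)$ holds with $f(\varepsilon)=b_H(\varepsilon)=O(\log\frac{1}{\varepsilon})$, and the union bound over the $2^{O(T)}$ classes $(\pi,x,x')$ gives $\ent_\varepsilon(L_T)\le O(T)+O(\log\frac{1}{\varepsilon})$ with the $O(T)$ constant independent of $\varepsilon$; thus $\bandh_\varepsilon(L)=O(1)$ and $L$ is meager. I expect the stabilisation lemma to be the single non-routine ingredient: turning ``no $\wide$ on the diagonal'' into a genuine collapse of the clock-vector dynamics of an iterated cycle onto a rigid trajectory, and extracting from it a timing-entropy bound that does not grow with the number of iterations. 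Everything else — absence of Zeno cycles, the count of paths, the binary-node recurrence, and the bookkeeping of the error budget across the constantly many levels of the forest — should be comparatively mechanical.
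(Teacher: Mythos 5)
Your route — Simon's factorization forest, a constant-height induction, with special handling of idempotent nodes — is in the same spirit as the paper's actual argument (which also relies on Simon's theorem, though applied to the duration monoid $\mon_d$ rather than $\mon_f$), but two of your load-bearing claims are false as stated. First, the claim that a structurally meager \RTA\ ``has no Zeno cycle'' and hence that a run of duration $\le T$ uses $O(T)$ edges: your argument only excludes cycles traversable in arbitrarily small \emph{positive} time, but a structurally meager \RTA\ can have cycles \emph{all} of whose runs have duration exactly $0$. A self-loop on a location with singleton starting region $S(q)=\{\0\}$, guard $x=0$ and reset $\{x\}$ yields a one-vertex orbit with a $\narrow$ self-loop — so the automaton is structurally meager — yet the loop can be iterated any number of times at duration $0$. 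Hence runs of bounded duration can have unboundedly many edges, and $L_T$ is \emph{not} covered by $2^{O(T)}$ path-classes. The paper sidesteps this by counting only non-instant transitions ($\vvvert w\vvvert\le aT+a$) and collapsing the instant ones through the pseudo-distance via a ``signature'' abstraction; your proof needs a comparable mechanism before the union bound is even available.

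Second, and more fundamentally, the ``stabilisation lemma'' at idempotent nodes — that iterating a structurally meager cycle forces every realization into exponentially shrinking neighbourhoods of a rigid reference trajectory so that beyond $c(\aut)$ iterations only $O(1)$ real parameters remain — is not true. Take the cycle $q\to p\to q$ of $\aut_6$: starting from $(x_0,0)$ with $0<x_0<1$, the $x$-coordinate of the returning state can be chosen anywhere in $(x_0,1)$, so $\pi^{k}$ from $x_0$ to $x_k$ supports an $\Omega(k)$-parameter family of realizations, with no collapse after finitely many iterations depending only on the automaton. What \emph{is} true is monotonicity: the Lyapunov functions of \cref{con:lyapu} decrease along the orbit, so only $f_1(\varepsilon)\sim(1/\varepsilon)^{\dim R}$ cells are visited, and within each cell the words lie within $O(\varepsilon)$ of the unique singleton run (this is \cref{lem:narrow-variations}). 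The ``settling data'' is therefore a vector of $\Theta(1/\varepsilon)$ cell-crossing indices, not a constant number of residual parameters, and the recurrence at an idempotent node carries a multiplicative $f_1(\varepsilon)$ rather than $O(1)$. The conclusion $\bandh_\varepsilon=O(1)$ survives this correction because the extra factor is $\varepsilon$-dependent but $T$-independent, but the justification you give does not close the idempotent-node case.
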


The proof \app\ is done in several steps, two of them using Simon's theorem.  Let $\aut$ be structurally meager. First, we  explore the languages $L((q,x),(q,x))$, of words, looping from a state to itself, and observe that, for any structurally meager cycle $\pi$, $L_\pi(x,x)$ is a singleton. Intuitively, words of $L((q,x),(q,x))$ do only discrete  choices, and thus convey only  $O(1)$ bit/second of information. 

Next, we make a finite partition of all the regions into small ($\varepsilon$-sized) cells and define a set $\Still_\varepsilon$ of words accepted along a cycle from such a cell to the same cell. Such words are close to singletons and thus convey $O(1)$ bit/second of information when observed with precision $\varepsilon$.
On the other hand,  visits  to different cells by the same cycle (and even by different cycles with the same orbit) always respect a certain partial order $\succeq$ over cells.  

Last, we use Puri's reachability characterization \cite{puri} and our technique of linear  Lyapunov functions \cite{entroJourn}, together with Simon's theorem to factorize any word in $L_T(\aut)$ in factors that are either one letter or belong to $\Still^*_\varepsilon$. The total number of factors does not depend on $T$, but only on $\varepsilon$.
This factorization makes it possible to build a small $\varepsilon$-net and conclude.

On the other hand, we need a lower bound.
\begin{proposition}\label{prop:meager:lower}
If an \RTA\ is not  structurally meager then its bandwidth is $\Omega(\log(1/\varepsilon))$.
\end{proposition}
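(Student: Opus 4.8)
The plan is to exhibit, in a non-structurally-meager \RTA, a large family of pairwise $\varepsilon$-separated runs over a long time horizon, yielding a capacity (hence entropy) of order $\log(1/\varepsilon)$ per time unit. By hypothesis there is a cyclic path $\pi$ from some location $q$ back to $q$ whose f-orbit $\gamma_f(\pi)=\tuple{q,B,q}$ has a $\wide$ entry on the diagonal, say $B_{kk}=\wide$. By \cref{lem:simon-minimal-length} applied to the morphism $\gamma_f$ we may assume $|\pi|\le \#\mon_f$, so the duration $\tau_\pi$ of (closed) words along $\pi$ is bounded by a constant depending only on the automaton (not on $\varepsilon$): one full traversal of $\pi$ costs $O(1)$ time.

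First I would unpack what $B_{kk}=\wide$ gives us. By \cref{def:f-orbit} it means $L_{\bar\pi}(v_k^q,v_k^q)$ — the set of timed words along $\bar\pi$ from the region vertex $v_k^q$ back to itself — is not a singleton, hence (by the remark after the definition) has dimension $\ge 1$. Thus there is a one-parameter family of runs along $\pi$ from $v_k^q$ to $v_k^q$: concretely, two runs $\rho_0,\rho_1\in\Runs(\bar\pi, v_k^q, v_k^q)$ with $\Word(\rho_0)\ne\Word(\rho_1)$, and by convexity of the set of runs along a fixed path (Puri's characterization, which we may invoke as stated in the excerpt), the whole segment $\{\rho_\lambda : \lambda\in[0,1]\}$ of convex combinations is realizable, all starting and ending at $v_k^q$. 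Since $\Word(\rho_0)\ne\Word(\rho_1)$ and they have the same untiming and discrete length (same path), their timings differ; let $\mu=d(\Word(\rho_0),\Word(\rho_1))>0$ be a constant. Linearity in $\lambda$ of the dates along $\rho_\lambda$ then gives $d(\Word(\rho_\lambda),\Word(\rho_{\lambda'}))=\mu\,|\lambda-\lambda'|$ up to the pseudo-distance's max-min structure — more carefully, I would pick one letter-position $i$ in $\pi$ whose date strictly varies with $\lambda$ and whose letter is not repeated nearby, so that the $i$-th coordinate alone witnesses the distance; such a position exists because the words genuinely differ in timing. Then choosing $\lambda\in\{0,\tfrac{1}{N},\tfrac{2}{N},\dots,1\}$ with $N=\lceil 1/(\mu\varepsilon)\rceil$ yields $N+1=\Theta(1/\varepsilon)$ runs, pairwise at pseudo-distance $>\varepsilon$, all looping from $(q,v_k^q)$ to itself in time exactly $\tau_\pi=O(1)$.

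Next I would promote this to a bandwidth bound. The issue is that $(q,v_k^q)$ may not itself be an initial-to-final configuration, and the pairwise distances between concatenated runs must not collapse. Since the \RTA\ is region-split, $(q,v_k^q)$ is reachable from the initial state and co-reachable to a final state along some fixed paths $\pi_{\mathrm{in}},\pi_{\mathrm{out}}$ of bounded length; prepend $\rho_{\mathrm{in}}$ and append $\rho_{\mathrm{out}}$ to every run. For the horizon $T$, concatenate $R=\lfloor (T-\tau_{\mathrm{in}}-\tau_{\mathrm{out}})/\tau_\pi\rfloor=\Theta(T)$ copies of the loop; in the $j$-th copy independently choose one of the $N+1$ values $\lambda_j$. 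Two runs that differ in at least one block $j$ differ, on that block's distinguishing letter-position, by $>\varepsilon$ in date; because that letter occurs in a time-window of width $O(1)$ around $j\tau_\pi$ and (if needed, by padding $\pi$ so that the chosen letter is isolated in time within its block, or by choosing as distinguishing letter one whose symbol appears only once per block) no other occurrence of that symbol lies within $\varepsilon$ of it, the global pseudo-distance between the two timed words is still $>\varepsilon$. Hence we get $(N+1)^{R}=(\Theta(1/\varepsilon))^{\Theta(T)}$ pairwise $\varepsilon$-separated words in $L_T(\aut)$, giving $\capa_\varepsilon(L_T(\aut))\ge R\log(N+1)=\Omega\!\left(T\log(1/\varepsilon)\right)$, so $\bandh_\varepsilon(L)\ge \bandc_{2\varepsilon}(L)=\Omega(\log(1/\varepsilon))$ via \eqref{eq:inequality} (replacing $\varepsilon$ by $2\varepsilon$ changes only constants).

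The main obstacle is the last step: controlling the pseudo-distance of \emph{concatenated} runs. The pseudo-distance is a max over letter-positions of a min over matching positions with the same symbol, so an unlucky repetition of a symbol in a neighboring block could let a far-away occurrence ``match'' the distinguishing letter and kill the separation. I would handle this cleanly by one preparatory normalization of the cycle $\pi$: either observe that the distinguishing letter can be taken to occur in a sub-interval of its block of length $<\mu$ and that the blocks are $\tau_\pi>\mu$ apart (so matches stay inside the block when $\varepsilon<\mu$), or, if that fails, pre-pump $\pi$ to $\pi^m$ for a suitable constant $m$ so that within each macro-block there is a letter-position whose symbol is unique in that macro-block, which forces the optimal matching to be block-local. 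Everything else is routine bookkeeping of constants independent of $\varepsilon$ and $T$.
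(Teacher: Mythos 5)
Your plan follows the same overall strategy as the paper: locate a cycle $\pi$ whose f-orbit has a $\wide$ diagonal entry, build $\Theta(1/\varepsilon)$ pairwise $\varepsilon$-separated runs along one iteration via convex combinations, concatenate $\Theta(T)$ copies, and deduce $\Omega(T\log(1/\varepsilon))$ capacity. However, there is a genuine gap at the step where you assert that the loops all take ``time exactly $\tau_\pi = O(1)$''. The two endpoint runs $\rho_0$ and $\rho_1$ need not have equal duration, in which case $\rho_\lambda$ has a $\lambda$-dependent duration, and the blocks of two different concatenations drift apart in time as $j$ grows. Once the drift exceeds a block's length, nothing prevents the distinguishing letter of block $j$ in one word from being matched against an occurrence of the same symbol in a faraway block of the other word, which can collapse the pseudo-distance. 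Your proposed fixes (isolating the distinguishing letter in a short sub-interval of its block, or pre-pumping so that some symbol is unique within each macro-block) do not address this drift: uniqueness inside a block is irrelevant once the blocks of the two words no longer align in time.

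The paper resolves this with two ingredients that are absent from your proposal. First, a case analysis on $\gamma_d(\pi)[v,v]$: $\instant$ is impossible (incompatible with $\wide$), $\fast$ already makes the automaton obese (a fortiori non-meager), and the remaining case is $\slow$, which gives the crucial lower bound of $1$ on every block's duration. Second, a normalization to equal durations: if $\rho_1,\rho_2$ have unequal durations, the paper passes from $\pi$ to $\pi^2$ and replaces the endpoints by $\rho_1\circ\rho_2$ and $\rho_2\circ\rho_1$, which now have the same total duration. With both facts in hand --- equal block durations, and each block lasting $\geq 1 > \tau/n$ --- the optimal matching is forced to be block-local, and the $\tau/n$ separation of the concatenated words follows. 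Without the $\slow$ bound even the equal-duration trick is not enough, since arbitrarily short blocks would let the optimal match jump across many blocks. Both steps are needed to make your concatenation argument rigorous.
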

The proof \app\ idea is to take a cycle that is not structurally meager, and show that it alone provides a bandwidth of $\log(1/\varepsilon)$. 
The main theorem of this section is now immediate from \cref{prop:meager:upper,prop:meager:lower}:
\begin{thm}\label{thm:meager} The language of an \RTA\ is meager iff the \RTA\ is structurally meager.
\end{thm}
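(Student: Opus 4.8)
The plan is to obtain \cref{thm:meager} as an immediate combination of the two bounds just stated, \cref{prop:meager:upper} and \cref{prop:meager:lower}; the proof itself is essentially one line, and all the substance lives in those two propositions, whose shape I comment on below.

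For the ``if'' direction --- a structurally meager \RTA\ has a meager language --- there is nothing to do beyond invoking \cref{prop:meager:upper}. For the ``only if'' direction I would argue by contraposition: assume the \RTA\ is \emph{not} structurally meager; then \cref{prop:meager:lower} gives $\bandh_\varepsilon(L) = \Omega(\log(1/\varepsilon))$, which is unbounded as $\varepsilon\to 0$ and in particular is not $O(1)$, so $L$ is not meager by definition. Contraposing yields ``$L$ meager $\Rightarrow$ the \RTA\ is structurally meager'', and together with the first direction this is the claimed equivalence. The only subtlety worth a sentence is that ``meager'' is defined purely as $\bandh_\varepsilon = O(1)$, so the argument lands regardless of whether the non-meager language happens to be normal or obese; and since \eqref{eq:inequality} lets one swap $\bandh$ for $\bandc$ freely, it does not matter in which bandwidth \cref{prop:meager:lower} is phrased.

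The genuine difficulty is entirely inside the two propositions, not in their combination. In \cref{prop:meager:lower} the work is the explicit construction, around a cycle carrying a $\wide$ self-loop, of a family of pairwise $\varepsilon$-separated runs of size $\myexp(\Omega(\log(1/\varepsilon)))$ per unit of duration --- i.e.\ showing that one degree of continuous freedom per cycle already forces a $\log(1/\varepsilon)$ rate. In \cref{prop:meager:upper} the hard part is the upper bound: one must turn the \emph{absence} of any $\wide$ self-loop anywhere into a \emph{global} bound on bandwidth. This is where \cref{lem:simon-minimal-length} and Simon's factorization forest theorem enter: an arbitrary accepted word of duration $T$ is factored into a number of blocks bounded independently of $T$, each block being a single letter or an element of $\Still^*_\varepsilon$ (hence $\varepsilon$-close to a singleton by the observation that structurally meager cycles have singleton $L_\pi(x,x)$), and a small $\varepsilon$-net is assembled from the finitely many block types, using Puri's reachability characterization \cite{puri} and the linear Lyapunov function technique of \cite{entroJourn} to control the cell ordering $\succeq$. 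I expect this upper-bound step to be the real obstacle; \cref{thm:meager} itself is then a formality.
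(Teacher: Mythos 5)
Your proof is correct and is exactly the paper's own argument: the theorem is stated as an immediate consequence of \cref{prop:meager:upper} (structurally meager $\Rightarrow$ meager) and \cref{prop:meager:lower} (not structurally meager $\Rightarrow$ bandwidth $\Omega(\log(1/\varepsilon))$, hence not meager), combined exactly as you describe. Your extra remarks about where the real work lies (the $\varepsilon$-separated construction for the lower bound, and Simon factorization plus $\Still_\varepsilon$ for the upper bound) also match the paper's proof outlines for those two propositions.
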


\section{Distinguishing between Obese and Other Automata}\label{sec:obese}

 \subsection{Orbits with Abstracted Duration}
The structural criterion for obesity is based on another enrichment of 
p-orbits, in which the  entries of the matrix $A$  belong to a finite semiring $\Lambda_d \triangleq  \{\instant,\fast,\slow,
\none\}$, with operations detailed
in \cref{fig:foreign-monoid-times}.
Intuitively, the elements of $\Lambda_d$ represent 4 abstracted classes of durations 
for sets of runs having the same  path, same source, and same destination vertices; similarly to $\Lambda_f$, the  addition of $\Lambda_d$ represents the convex union of two compatible sets of runs, while the multiplication corresponds to the concatenation of sets of runs.

\begin{table}
\begin{center}
    \begin{tabular}{|c|c|c|c|c|c|}
         \hline $+$&$\instant$&$\fast$&$\slow$
         &$\none$\\
         \hline
         $\instant$&$\instant$&$\fast$&$\fast$
         &$\instant$\\
         $\fast$&$\fast$&$\fast$&$\fast$
         &$\fast$\\
         $\slow$&$\fast$&$\fast$&$\slow$
         &$\slow$\\
         $\none$&$\instant$&$\fast$&$\slow$
         &$\none$\\
         \hline
     \end{tabular}\hfill
     \begin{tabular}{|c|c|c|c|c|c|}
         \hline $\times$&$\instant$&$\fast$&$\slow$
         &$\none$\\
         \hline
         $\instant$&$\instant$&$\fast$&$\slow$
         &$\none$\\
         $\fast$&$\fast$&$\fast$&$\slow$
         &$\none$\\
         $\slow$&$\slow$&$\slow$&$\slow$
         &$\none$\\
         $\none$&$\none$&$\none$&$\none$
         &$\none$\\
         \hline
    \end{tabular}
    \end{center}
\caption{Addition and multiplication of the semiring $\Lambda_d$\label{fig:foreign-monoid-times}}
\vspace*{-20pt}
\end{table}

\begin{proposition}
$ \left(\Lambda_d,+,\times,\none,\instant\right)$ is a commutative semiring.
\end{proposition}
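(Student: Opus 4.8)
The plan is to verify the four families of semiring axioms, exploiting a compact reading of the two operations displayed in \cref{fig:foreign-monoid-times}. First I would observe that $+$ is exactly the join operation of the four-element ``diamond'' lattice in which $\none$ is the bottom, $\fast$ the top, and $\instant,\slow$ are two incomparable middle elements: concretely, $\none$ is neutral for $+$, we have $a+b=a$ when $a=b$, and $a+b=\fast$ whenever $a\neq b$ are both different from $\none$. Dually, $\times$ equals $\none$ as soon as one operand is $\none$, and otherwise it is the maximum for the linear order $\instant<\fast<\slow$. These two descriptions are obtained by a direct inspection of the two tables.

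From them, most axioms are immediate. Both tables are symmetric, so $+$ and $\times$ are commutative. Lattice joins are associative, hence $+$ is associative, with $\none$ (the bottom) as neutral element. Since $\max$ is associative and adjoining an absorbing element to an associative operation preserves associativity, $\times$ is associative; its neutral element is $\instant$ (the least element of the chain $\instant<\fast<\slow$, hence neutral for $\max$), and $\none$ is absorbing by construction, which in particular yields $\none\times a=a\times\none=\none$, the annihilation property demanded of the additive identity.

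The only axiom requiring genuine work is distributivity, $a\times(b+c)=(a\times b)+(a\times c)$, which I would settle by cases on $a$. The cases $a=\none$ (both sides equal $\none$) and $a=\instant$ (both sides equal $b+c$) are trivial. For $a\in\{\fast,\slow\}$, observe that an operand of $b+c$ equal to $\none$ may be dropped: it is neutral for $+$ on the left and contributes $\none\times a=\none$ on the right, so it suffices to treat $b,c\in\{\instant,\fast,\slow\}$. If $a=\slow$, then $\slow\times x=\slow$ for every $x\neq\none$, so the left side is $\slow$ (because $b+c\neq\none$) and the right side is $\slow+\slow=\slow$. If $a=\fast$, there remain the six unordered pairs drawn from $\{\instant,\fast,\slow\}$, each disposed of in one line (for instance $\fast\times(\instant+\slow)=\fast\times\fast=\fast=\fast+\slow=(\fast\times\instant)+(\fast\times\slow)$). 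The main obstacle is thus nothing more than this finite case check, which is routine; the conceptual content is the identification of $+$ with a lattice join and of $\times$ with a maximum equipped with an absorbing zero.
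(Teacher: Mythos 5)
Your proof is correct. The paper states this proposition without proof, treating it as a routine finite verification (the same is done for the analogous $\Lambda_f$ proposition); so there is no paper argument to compare against. Your structural reading of the tables is a nice way to organize that verification: $+$ is indeed the join of the diamond lattice with bottom $\none$ and top $\fast$, and $\times$ is $\max$ on the chain $\instant < \fast < \slow$ extended by an absorbing $\none$. From these identifications the commutativity and associativity axioms, the neutrality of $\none$ for $+$ and of $\instant$ for $\times$, and the annihilation of $\none$ all fall out cleanly, and your case analysis for distributivity ($a = \none$, $a = \instant$, $a = \slow$, and the six pairs for $a = \fast$ after dropping $\none$ operands) is complete and accurate.
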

Similarly to $\mon_f$ in the meager case, we define $\mon_d$, the monoid of d-orbits, where the only difference is that matrix entries belong to $\Lambda_d$ instead of $\Lambda_f$. As in previous cases, we represent elements of $\mon_d$ as labeled graphs.

\begin{definition}[The  d-orbit of a path abstracts its speed] Given an \RTA, we define a function $\gamma_d$ from edge sequences to $\mon_d$.
\begin{itemize}
    \item $\gamma_d(\epsilon)=\1$. Whenever sequence $\pi$ is not a path, $\gamma_d(\pi)=\0$.
    \item For $\pi$ a non-empty path from $q$ to $q'$ let $\gamma_d(\pi)=\tuple{q,B,q'}$, with the following matrix entries: for vertices $v=v_i^q \in V(S(q))$ and $v'=v_j^{q'} \in V(S(q'))$
    \begin{itemize}
    \item $B_{ij}=\0$ iff $L_{\bar{\pi}}(v,v')=\emptyset$;
    \item $B_{ij}=\instant$ iff $L_{\bar{\pi}}(v,v')$ is a singleton $\{w\}$, with $\tau(w)=0$;
        \item $B_{ij}=\slow$ iff $L_{\bar{\pi}}(v,v')\neq\emptyset$ and $\tau(L_{\bar{\pi}}(v,v'))\subseteq [1,\infty)$;
        \item $B_{ij}=\fast$  otherwise.
    \end{itemize}
\end{itemize}
\end{definition}
It is not difficult to see that in the last case $\tau(L_{\bar{\pi}}(v,v'))\supseteq [0,1]$.

\begin{restatable}{proposition}{propRorbit}\label{prop:dorbit-morphism}
$\gamma_d$ is a monoid morphism.
\end{restatable}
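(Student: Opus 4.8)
The plan is to mirror the structure of the proof of Proposition~\ref{prop:forbit-morphism} for $\gamma_f$, which is already available, and which in turn follows the pattern of the original statement that $\gamma$ is a morphism. Concretely I must check three things: that $\gamma_d(\epsilon)=\1$ (true by definition), that $\gamma_d$ respects the $\0$ absorbing element (if $\pi_1\pi_2$ is not a path then one of $\pi_1,\pi_2$ is not a path, or the destination of $\pi_1$ differs from the source of $\pi_2$, and in either case the product of the $d$-orbits is $\0$ by the composition rule of $\mon_d$), and — the only substantial point — that for two composable non-empty paths $\pi_1\colon q\to q'$ and $\pi_2\colon q'\to q''$ we have $\gamma_d(\pi_1\pi_2)=\gamma_d(\pi_1)\cdot\gamma_d(\pi_2)$, i.e.\ that the matrix of $\gamma_d(\pi_1\pi_2)$ is the $\Lambda_d$-matrix product of the matrices of $\gamma_d(\pi_1)$ and $\gamma_d(\pi_2)$.

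For the matrix-product identity, fix vertices $v=v_i^{q}$ of $S(q)$ and $v''=v_k^{q''}$ of $S(q'')$; let $v_1',\dots$ enumerate the vertices of $S(q')$. I would first recall Puri's reachability characterization, which underlies the definition of an \RTA: a run along $\pi_1\pi_2$ from a point $x$ in $S(q)$ to a point $x''$ in $S(q'')$ factors through some point $x'$ in $S(q')$, and — crucially — because $S(q')$ is a simplex and the reachability sets are polytopes obtained by convex combinations of the vertex-to-vertex runs, the set $L_{\bar{\pi_1\pi_2}}(v,v'')$ is exactly $\bigcup_{j} L_{\bar{\pi_1}}(v,v_j')\circ L_{\bar{\pi_2}}(v_j',v'')$ up to the pseudo-equivalences involved (this is precisely the fact invoked to prove that $\gamma$ and $\gamma_f$ are morphisms, so I may cite it). Granting this set identity, the durations behave additively under concatenation: $\tau(w_1\circ w_2)=\tau(w_1)+\tau(w_2)$ (with the convention $t_0=0$ carried across). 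So I must verify that the four-valued abstraction $\Lambda_d$ of the duration set is compatible with (i) taking the union of sets of runs (matching the $+$ of $\Lambda_d$) and (ii) concatenation of sets of runs (matching the $\times$ of $\Lambda_d$).

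The verification of (i) and (ii) is a finite case check against the tables in Figure~\ref{fig:foreign-monoid-times}. For multiplication: $\instant$ corresponds to the singleton $\{w\}$ with $\tau(w)=0$, and concatenating such a set with another set leaves the duration set unchanged, hence $\instant$ is the multiplicative unit — consistent with the table. A set whose durations all lie in $[1,\infty)$ ($\slow$) concatenated with any non-empty set still has all durations $\geq 1$, so $\slow\times(\text{anything non-}\0)=\slow$; and $\fast$ (whose duration set, as noted in the excerpt, contains $[0,1]$) composed with $\fast$ still contains $[0,1]$ and is not a zero-duration singleton, giving $\fast$; composing $\fast$ with $\slow$ shifts everything up by at least $1$, giving $\slow$ — again matching the table; $\none$ is absorbing because an empty factor makes the concatenation empty. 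For addition: the convex (here, just set) union of two singletons with distinct duration $0$ cannot occur among vertex pairs in a way that violates the table, and more generally a union that includes a set reaching $[0,1]$ stays $\fast$; a union of two $\slow$'s stays $\slow$; a union of $\slow$ with $\instant$ gives a two-element duration set containing $0$ and some value $\geq 1$, whose abstraction is $\fast$, matching the $\slow+\instant=\fast$ entry; $\none$ is the additive unit. The one subtlety I expect to be the main obstacle is making the set identity $L_{\bar{\pi_1\pi_2}}(v,v'')=\bigcup_j L_{\bar{\pi_1}}(v,v_j')\circ L_{\bar{\pi_2}}(v_j',v'')$ fully rigorous at the level of actual timed words rather than just reachable vertex pairs — i.e.\ controlling that every run along the concatenated path is obtainable as a convex combination routed through a single vertex of the intermediate simplex, and conversely. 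Since exactly this fact is what makes $\gamma$ and $\gamma_f$ morphisms and is supplied by Puri's theory, I would state it once as a lemma (or cite the $\gamma_f$ proof verbatim) and then the rest of the argument for $\gamma_d$ is the purely mechanical table check above.
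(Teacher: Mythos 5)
Your overall plan parallels the paper's: handle the trivial cases, then for two composable non-empty paths $\pi_1\colon q\to q'$, $\pi_2\colon q'\to q''$ show that the $\Lambda_d$-matrix product agrees with $\gamma_d(\pi_1\pi_2)$ by reasoning about durations of runs through the intermediate region. The paper's own proof does exactly this via a four-way case analysis on the value of each product entry, arguing about runs through the intermediate \emph{vertices} and tacitly invoking convexity (``and therefore all durations in between'') in the $\fast$ case. So the route is the same; the issue is in how you formalize the link between the product and the language.

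The gap is the set identity you rely on,
\[ L_{\overline{\pi_1\pi_2}}(v,v'')=\bigcup_j L_{\overline{\pi_1}}(v,v_j')\circ L_{\overline{\pi_2}}(v_j',v''), \]
which is false, and the patch you propose (cite the $\gamma_f$ proof, or Puri's theory, ``verbatim'') does not supply it. A run from $v$ to $v''$ along $\pi_1\pi_2$ passes through an arbitrary point $x'\in\overline{S(q')}$, typically an \emph{interior} point of the simplex; its timing is a proper convex combination of vertex-to-vertex timings and does not lie in any $L_{\overline{\pi_1}}(v,v_j')\circ L_{\overline{\pi_2}}(v_j',v'')$. Concretely, in $\aut_6$ with $\pi_1=q\to p$, $\pi_2=p\to q$, $v=(0,0)$, $v''=(1,0)$, the left side contains $(a,0.5)(b,1.5)$ while the right side contains only $(a,0)(b,1)$ and $(a,1)(b,2)$. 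Puri's \cref{lem:puri} characterizes vertex-to-vertex \emph{reachability} via stochastic matrices, not equality of timed languages, and the paper's proof of \cref{prop:forbit-morphism} never asserts your identity either. What you actually need — and what the paper's proof quietly uses — is the weaker convexity fact that $\tau\bigl(L_{\overline{\pi_1\pi_2}}(v,v'')\bigr)$ is an interval whose extremes are attained at intermediate vertices (the valid intermediate clock vectors form a polytope and durations are affine in them). This is also why the paper says the $\Lambda_d$ addition models \emph{convex} union, not set union: your parenthetical ``(here, just set)'' hides the seam, and your $\slow+\instant$ check yields a disconnected $\{0\}\cup[1,\infty)$, which is $\fast$ only by the ``otherwise'' clause and not by the $\tau\supseteq[0,1]$ property you quote. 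The table verification must be restated in terms of convex hulls of duration sets to close the argument.
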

Again, this property allows a compositional  computation of $\gamma_d(\pi)$.

\subsection{The Obesity Pattern}
\begin{figure}[t] 
\vspace*{-10pt}
\begin{center}
\begin{center}
\begin{tikzpicture}[node distance={20mm}, thick, main/.style = {draw, circle}] 
\node(1) {$u$}; 
\draw[-{Latex}] (1) to[out=135,in=225,looseness=4] node[midway, above =7pt, left =8pt, pos=1] {$\fast$} (1);
\end{tikzpicture}
\hspace{3cm}
\begin{tikzpicture}[node distance={20mm}, thick, main/.style = {draw, circle}] 
\node(1) {$u$}; 
\node(2)[right of=1] {$v$};
\draw[-{Latex}] (1) to[out=135,in=225,looseness=4] node[midway, above =7pt, left =8pt, pos=1] {$\instant$} (1);
\draw[-{Latex}] (1) -- node[midway, above =7pt, left =7pt, pos=1] {$\slow$} (2);
\draw[-{Latex}] (2) to[out=315,in=45,looseness=4] node[midway, below =5pt, right =9pt, pos=1] {$\instant$} (2);
\draw[-{Latex},dashed] (2) to[out=225,in=315,looseness=1] node[midway, below =16pt, right =6pt, pos=1] {} (1);
\end{tikzpicture}
\end{center}
\caption{Obesity patterns. Left: Type I. Right:  Type II.}
\label{fig:var:obesity}
\end{center}
\end{figure}
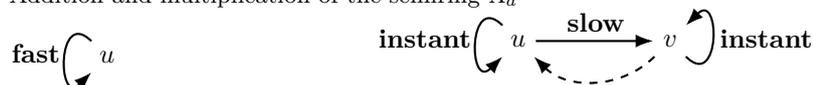
\begin{definition}
A  cycle in an \RTA\ is \emph{structurally obese} iff its d-orbit has the obesity pattern of one of the two types presented on  \cref{fig:var:obesity}, in the sense that either (a) $\gamma_d(\pi)$ has a $\fast$ diagonal element, or (b), for some indices $u$ and $v$, it has two $\instant$ diagonal elements at $(u,u)$ and $(v,v)$, the element at position $(u,v)$ is $\slow$ and some other realizable d-orbit on the same region has a non-$\none$ element at position $(v,u)$.

An \RTA\ is \emph{structurally obese if} it has a structurally obese cycle.
\end{definition}
Some comments are in order.  The above definitions capture the intuition that obese behaviors are ``fast'' behaviors (cycles of unbounded frequency) that can be repeated during an unbounded total duration, and this can happen in two ways only: 
\begin{description}
\item [Type I structural obesity:] there exists a fast cycle that can be repeated with unbounded frequency for an unbounded accumulated duration. 
\item [Type II structural obesity:] there exists a ``Zeno'' cycle $\pi$ (with a d-orbit having a $\slow$ edge from a vertex $u$ to another vertex $v$, and $\instant$ self-loops around both vertices) intersecting with another ``resetting'' cycle $\pi'$ (with a p-orbit having an edge from $v$ to $u$).
The cycle $\pi$ can be repeated with unbounded frequency for a total duration of $\leq 1$ time unit, until $\pi'$ is executed and resets this total, allowing, infinitely often, to spend another time unit in $\pi$.
\end{description}

Let us illustrate this definition on examples from \cref{fig:5examples}. 
Obesity patterns for $\aut_1$ and $\aut_2$ can be easily seen. Consider now $\aut_8$. The d-orbit of the self-loop above its state $q'$ is drawn in \cref{fig:rs-and-orbit-graphs} (right).
The type II obesity pattern occurs because, by taking the path $q'\rightarrow p \rightarrow q \rightarrow q'$, one may 
return from vertex $(1,1)$ to $(0,0)$ of the region $S(q')$.
This orbit has an edge from $(1,1)$ to $(0,0)$, symbolized by the dashed line edge in \cref{fig:rs-and-orbit-graphs} (right).
Finally, the reader can check that $\aut_9$ is not structurally obese. 

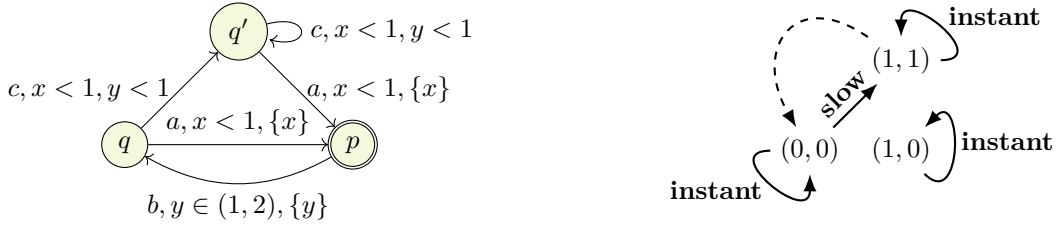
\begin{figure}[t]
\vspace*{-20pt}
\begin{center}    
\begin{tikzpicture}
\node[state] at (0,1) (q) {$q$};
\node[state,accepting] at (3,1) (p) {$p$};
\node[state] at (1.5,2.5) (q') {$q'$};
\draw (q) edge[->, above right] node[above]{$a, x<1, \{x\}$} (p)
(p) edge[->, below, bend left] node{$b, y\in(1,2), \{y\}$} (q)
(q') edge[loop right] node[right]{$c,x<1,y<1$} (q')
(q) edge[->, above right] node[left]{$c, x<1, y<1$} (q')
(q') edge[->, below right] node[right]{$a, x<1, \{x\}$} (p)
;
\end{tikzpicture}
\hfill
\begin{tikzpicture}[node distance={12mm}, thick, main/.style = {draw, circle}] 
\node(1) {$(0,0)$}; 
\node(2)[right of=1] {$(1,0)$};
\node(3)[above of =2] {$(1,1)$};
\draw[-{Latex}] (1) to[out=180,in=270,looseness=3] node[midway, below =6pt, left =14pt, pos=1] {$\instant$} (1);
\draw[-{Latex}] (1) -- node[midway, above,sloped] {$\slow$} (3);
\draw[-{Latex}] (2) to[out=315,in=45,looseness=3] node[midway, below =5pt, right =10pt, pos=1] {$\instant$} (2);
\draw[-{Latex}] (3) to[out=0,in=90,looseness=3] node[midway, above =8pt, right =14pt, pos=1] {$\instant$} (3);
\draw[-{Latex},dashed] (3) to[out=145,in=120,looseness=2](1);
\end{tikzpicture}
\end{center}
\vspace{-10pt}
\caption{Type II obesity for $\aut_8$. Left: its region-split form (transient locations omitted). Right:  d-orbit for the self-loop on $q'$ in the simplex $S(q')$; each tuple represents the value for clocks $x$ and $y$ respectively. The ``resetting'' cycle is $q'\rightarrow p \rightarrow q \rightarrow q'$.}
\label{fig:rs-and-orbit-graphs}
\end{figure}

Similarly to  \cref{thm:complexity:meager}, structural obesity is in  \PSPACE.

\begin{restatable}{thm}{thmcomplexityobese}\label{thm:complexity:obese}
    Structural obesity of an \RTA\ of description size $n$ with $m$ clocks is decidable in $\SPACE(\log^2 (n)\poly(m))$. 
\end{restatable}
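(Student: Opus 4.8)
The plan is to mirror the proof of \cref{thm:complexity:meager} almost verbatim, replacing the freedom monoid $\mon_f$ by the duration monoid $\mon_d$ and the test ``has a $\wide$ diagonal entry'' by the two-case test that defines structural obesity. First I would observe that $\mon_d$ has the same bound as $\mon_f$: an orbit is a pair of locations together with an (at most) $(m+1)\times(m+1)$-matrix over $\Lambda_d$ with $\#\Lambda_d = 4$, so $\#\mon_d \leq n^2\cdot 4^{(m+1)^2} = \poly(n)\myexp(m)$. A d-orbit for a single edge $\delta$ is computable in time polynomial in $\#X$ by inspecting the guard and reset of $\delta$, and longer d-orbits are obtained by the monoid product thanks to \cref{prop:dorbit-morphism}.

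Next I would reduce the decision problem to finitely many bounded-length searches. For Type~I the condition is: there is a cyclic path $\pi$ with $\gamma_d(\pi)$ having a $\fast$ diagonal entry. For Type~II the condition is: there is a cyclic path $\pi$ with $\gamma_d(\pi)$ having $\instant$ at $(u,u)$ and $(v,v)$ and $\slow$ at $(u,v)$, \emph{and} there is a second cyclic path $\pi'$ on the same region with $\gamma(\pi')$ (equivalently the p-orbit underlying $\gamma_d(\pi')$) having a non-$\none$ entry at $(v,u)$. Applying \cref{lem:simon-minimal-length} to the morphism $\gamma_d$ (for the first cycle) and to $\gamma$ or $\gamma_d$ (for the second), it suffices to look for cyclic paths of length $\leq\#\mon_d$ realizing each required orbit; realizability of an orbit $e$ by a cyclic path of length $\leq 2^h$ is exactly what the auxiliary predicate \lstinline{isAPathOrbit(e,h)} of \cref{algo} computes, so I can reuse it unchanged. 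The only modification to \lstinline{isStrucObese()} is to iterate over cyclic d-orbits $e$, and: if $e$ has a $\fast$ diagonal entry and \lstinline{isAPathOrbit(e,$\log\#\mon_d$)} then return \textbf{true}; otherwise, if $e$ exhibits the Type~II pattern at some $(u,v)$, then additionally loop over all cyclic d-orbits $e'$ on the same region with a non-$\none$ entry at $(v,u)$ and, if both \lstinline{isAPathOrbit(e,$\log\#\mon_d$)} and \lstinline{isAPathOrbit(e',$\log\#\mon_d$)} hold, return \textbf{true}.

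For the space analysis I would repeat the bookkeeping of \cref{thm:complexity:meager}: the recursion \lstinline{isAPathOrbit} has call-stack depth $\log\#\mon_d = O(\log n) + \poly(m)$, each stack frame storing a constant number of d-orbits, i.e.\ $O(\log n) + \poly(m)$ bits; the outer loops add only a constant number of further d-orbits in working memory. Hence the whole computation runs in space $\bigl(O(\log n) + \poly(m)\bigr)\cdot\log\#\mon_d = \log^2(n)\poly(m)$, as claimed. The one point that needs a sentence of care — and the only place this proof is genuinely different from the meager case — is the quantifier ``some other realizable d-orbit on the same region'' in the Type~II definition: I must check that restricting this existential to cyclic d-orbits (rather than arbitrary realizable ones) is without loss of generality, which holds because any d-orbit realized along a path from $q$ back to $q$ is by definition a cyclic d-orbit on $S(q)$, and conversely \cref{lem:simon-minimal-length} gives a short witness path; so again only lengths $\leq\#\mon_d$ matter. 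I expect this quantifier handling to be the main (mild) obstacle; everything else is a direct transcription of the meagerness algorithm and its complexity bound.
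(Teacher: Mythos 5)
Your proposal matches the paper's proof essentially line for line: same bound $\#\mon_d\leq n^2\cdot 4^{(m+1)^2}$, same reuse of \lstinline{isAPathOrbit} with the divide-and-conquer reachability trick, same two-pattern search for Type~I and Type~II, and the same space accounting giving $\log^2(n)\poly(m)$. The extra remark you add — that the existential ``some other realizable d-orbit on the same region'' may be restricted to cyclic d-orbits realized by paths of length $\leq\#\mon_d$ — is a small correctness point the paper's algorithm relies on but does not spell out, and your justification of it is right.
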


\subsection[Structural Obesity <=> Obesity]{Structural Obesity $\Leftrightarrow$ Obesity}
Let us state first the upper bound for non structurally obese automata.
\begin{proposition}\label{prop:obese:upper}
If an \RTA\ $\aut$ is not structurally obese then its  bandwidth  is $O(\log(1/\varepsilon))$.
\end{proposition}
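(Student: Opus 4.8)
The plan is to show that the absence of any structurally obese cycle forces, on every cycle of the \RTA, the kind of geometric rigidity that keeps the number of distinguishable timed words polynomial (in $1/\varepsilon$) per time unit rather than exponential. Concretely, I would first reduce the claim to a statement about single cycles over a region: using the region-split structure, any long run decomposes along cyclic paths, and I want to argue that a cycle $\pi$ whose d-orbit avoids both obesity patterns generates, between fixed boundary vertices, a set of timed words $L_{\bar\pi}(v,v')$ whose ``temporal spread'' is controlled. The two patterns are exactly the two ways a cycle could be iterated at unbounded frequency over unbounded total duration: a \emph{$\fast$} self-loop (Type I) lets one pack arbitrarily many events into a bounded time window which then repeats forever; an $\instant$–$\slow$–$\instant$ triangle closed by a resetting cycle (Type II) is the Zeno-with-reset scenario of $\aut_8$. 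Ruling both out means: every cycle that can be repeated unboundedly many times within a bounded accumulated duration must have $\instant$ on its relevant diagonal entries, i.e.\ contributes no real-valued choice at all over that stretch; and every cycle with a $\slow$ component genuinely consumes $\geq 1$ time unit each time it is traversed along that component.

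The second step is the combinatorial core: factorize an arbitrary word $w\in L_T(\aut)$ using Simon's factorization forest theorem applied to the morphism $\gamma_d\times\gamma_f$ (or just $\gamma_d$ together with $\gamma$) from paths to the product monoid, getting a factorization tree of height bounded by a constant $h$ depending only on $\#\mon_d$. At each idempotent node, the corresponding sub-path is a cycle over some region; by the absence of the obesity patterns, such an idempotent cycle either (i) has an $\instant$ self-loop on every boundary vertex it uses — in which case its iterations add essentially a singleton of behavior regardless of how many times it is pumped — or (ii) every iteration costs at least one time unit, so it can be iterated at most $T$ times total across the whole run. Summing the $\fast$/bounded contributions: the number of ``genuinely fast'' blocks is $O(T)$ but each such block, lacking a $\fast$ diagonal, has bounded duration and within it the number of events is bounded, contributing $O(1)$ bits each for the discrete choices and $O(\log 1/\varepsilon)$ for the finitely many real-valued choices. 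Hence $\ent_\varepsilon(L_T) = O(T\log(1/\varepsilon))$, i.e.\ $\bandh_\varepsilon(\aut)=O(\log(1/\varepsilon))$.

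To turn the factorization into an actual $\varepsilon$-net, I would, exactly as sketched for the meager case (Prop.\ \ref{prop:meager:upper}), partition each region into $\varepsilon$-sized cells, and observe that along a cycle whose d-orbit has no $\fast$ diagonal the reachability relation between cells respects a fixed partial order, so an iterated idempotent cycle ``drifts'' monotonically through finitely many cells; combined with Puri's characterization of reachability by convex combinations of vertex runs \cite{puri} and the linear Lyapunov function technique of \cite{entroJourn}, this bounds the number of cell-to-cell profiles one must enumerate. One then reconstructs a representative run per profile, and the total count of representatives is $\myexp(O(T))\cdot(1/\varepsilon)^{O(T)}$, whose logarithm is $O(T\log(1/\varepsilon))$ as required.

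The main obstacle I anticipate is Type II: it is \emph{not} purely local to one cycle — it couples a Zeno cycle $\pi$ with a separate resetting cycle $\pi'$ sharing a vertex — so the factorization-forest argument must track, across different idempotent classes, whether a vertex that is the source of a $\slow$ edge can ever be re-entered after a reset. Concretely, the delicate point is showing that if no such resetting companion exists, then a $\slow$ edge in an idempotent cycle really does accumulate unbounded duration under iteration (so iterations are capped by $T$), rather than the clock configuration cycling back in a way that re-enables fast behavior; this is where one needs the precise semiring laws of $\Lambda_d$ (in particular $\slow\cdot\slow=\slow$ and $\instant+\slow=\fast$) together with the convex-combination structure of runs along a region cycle to rule out hidden re-entry. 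Handling the boundary between the $\instant$ part and the $\slow$ part of a single idempotent orbit — making sure the "no real choice" conclusion survives the interaction — is the technically heaviest piece.
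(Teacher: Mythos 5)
Your overall skeleton — reduce to idempotent cycles, apply Simon's factorization forest, partition regions into $\varepsilon$-cells, use Puri's reachability and Lyapunov functions, count profiles — is the right one and matches the paper's strategy. But there is a genuine gap in the trichotomy you propose for idempotent cycles without obesity patterns. You claim such a cycle either (i) has $\instant$ self-loops on all relevant vertices and so ``contributes essentially a singleton of behavior regardless of how many times it is pumped'', or (ii) costs $\geq 1$ time unit per iteration, capping iterations at $T$. This misses the crucial intermediate class the paper calls $E_F$: idempotent d-orbits that are \emph{structurally Zeno} (all self-loops $\instant$, at least one $\slow$ edge). Such an orbit admits realizations of \emph{every} duration in $[0,1)$, by convex combination of the $\instant$ loop at the initial vertex $v_0$ and the $\slow$ edge to $v_1$; these are not singletons, and their iterations are not capped by $T$. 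Without a further argument, nothing prevents this cycle from being pumped $\Theta(1/\varepsilon)$ times per unit time over an unbounded total duration, which is exactly the obese regime you are trying to exclude.

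The missing idea is the \emph{Zeno property}: in a non-structurally-obese automaton, the \emph{accumulated} duration of all iterations (minus one per consecutive block) of a structurally Zeno orbit along any run is bounded by an absolute constant ($\leq 1$), not by $T$. Establishing this is the technically heavy part of the paper's proof: one builds a Lyapunov-type ranking function $\eta$ from the set $V_1$ of vertices reachable from $v_1$ in some realizable p-orbit (the absence of Type II obesity means $v_0\notin V_1$), shows that $\eta$ never decreases along a run, and then, via a concrete computation with the stochastic transition matrices allowed by the Zeno shape lemma, shows that the time elapsed in each iteration is bounded by the increase in $\eta$ across two consecutive iterations. That the increment of $\eta$ over the whole run is at most $1$ gives the $O(1)$ bound. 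This absolute bound is then what ensures the $F$-labeled factors in the Simon factorization occupy total time $\leq d$ (a $T$-independent constant), so their contribution to the $\varepsilon$-net vanishes when divided by $T$. A secondary slip: you assert the number of events within a fast block is bounded; it is not — it can be arbitrarily large — but the pseudo-metric collapses them because the block occupies a sub-unit time window, hence the use of a finite $\varepsilon$-net $\mathcal{N}_{U\varepsilon}$ for \emph{all} timed words of duration $\leq 1$. Your final count $\myexp(O(T))\cdot(1/\varepsilon)^{O(T)}$ gives the right asymptotics, but the argument you give does not establish it without the Zeno property in hand.
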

The proof \app\ is quite technical. At its first stage, we carefully explore Zeno cycles, which can convey a lot of information, but only for a short lapse of time. In the second stage we  use Simon's theorem on factorization forests \cite{simon}, and factorize words  of duration $T$ in the language, the number of factors is $\sim T$, some of them are Zeno (their total duration is only $O(1)$), others contain only one event or have duration $0$. Finally, we make a profit from the factorization and build an $\varepsilon$-net for $L_T(A)$ of the required size.

The lower bound is formulated as follows.

\begin{restatable}{proposition}{corstrobeseisobese}\label{prop:strobese}
The language of a structurally obese \RTA\ is obese.
\end{restatable}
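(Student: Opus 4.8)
The plan is to show that if an \RTA\ $\aut$ contains a structurally obese cycle, then a suitable family of runs through that cycle witnesses bandwidth $\Omega(1/\varepsilon)$, which together with \cref{prop:obese:upper} forces the language to be obese. I would handle the two pattern types separately, since the combinatorics of the $\varepsilon$-separated sets differ, but in both cases the goal is the same: exhibit, for each large $T$, a set of runs of duration $\leq T$ whose timed words are pairwise at pseudo-distance $>\varepsilon$, of cardinality $2^{\Omega(T/\varepsilon)}$.

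\textbf{Type I.} Suppose $\pi$ is a cycle whose d-orbit has a $\fast$ diagonal entry at some vertex $u=v_i^q$. By definition of $\gamma_d$, the set $L_{\bar\pi}(v_i^q,v_i^q)$ is non-empty with $\tau(L_{\bar\pi}(v_i^q,v_i^q))\supseteq[0,1]$; in particular it contains words of arbitrarily small positive duration, and (by a standard simplex/convex-combination argument, as used for Puri's characterization) one obtains a one-parameter family of runs $\rho_s$ looping from $(q,v_i^q)$ to itself with duration exactly $s$ for every $s$ in some interval $[0,\theta]$, $\theta>0$. Concatenating $\lfloor T/\theta \rfloor$ such loops and choosing each individual duration freely within a sub-window of width $\Theta(\theta)$ discretized at granularity $\varepsilon$ gives roughly $(\theta/\varepsilon)$ choices per loop and $\Theta(T/\theta)$ loops, hence $\geq (\Omega(1/\varepsilon))^{\Omega(T)}$ runs. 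The delicate point is separation: two such runs that differ in the $k$-th loop duration by more than $\varepsilon$ must be at pseudo-distance $>\varepsilon$. Because every loop carries at least one event (the cycle is a path in the \RTA, so $\lbl$ is defined on each edge) whose date is the accumulated duration, a discrepancy of $>\varepsilon$ in one loop's duration shifts the date of the corresponding event by $>\varepsilon$ relative to every nearby event of the same letter; one must argue this shift survives the $\min_j$ in $\dr$, which is where care with the alphabet and the geometry of the cycle is needed — if necessary one can thin the family so that differing loops are far enough apart (distance $\gg M$ in time) that no spurious match can occur, losing only a constant factor in the exponent.

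\textbf{Type II.} Here the d-orbit of a Zeno cycle $\pi$ has $\instant$ self-loops at $u$ and $v$ and a $\slow$ edge $u\to v$, and some realizable p-orbit on the same region provides an edge $v\to u$, realized by a ``resetting'' path $\pi'$. The $\instant$ self-loops at $u$ give, for arbitrarily large $N$, runs that perform $N$ events in total duration $0$ while staying at $(q,u)$ — more precisely, repeating $\pi$ from $(q,u)$ with ever-smaller durations packs many events into a window of length $<1$, and within that window the individual inter-event delays are real-valued choices discretizable at granularity $\varepsilon$, yielding $2^{\Omega(1/\varepsilon)}$ distinguishable behaviours per time unit; then the resetting cycle $\pi'$ returns to $(q,u)$ so the construction can be iterated $\Theta(T)$ times. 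Summing over the $\Theta(T)$ windows gives the required $2^{\Omega(T/\varepsilon)}$ count. Again the main work is to turn the abstract orbit information into concrete runs (via the realizability of orbits and convex combinations of vertex runs, \cref{prop:dorbit-morphism} and Puri's theory) and to verify $\varepsilon$-separation of the resulting timed words; the $\slow$ edge is needed precisely to guarantee that after filling one window the clock configuration is genuinely back to where the resetting cycle applies.

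\textbf{Conclusion.} In either case we get $\capa_\varepsilon(L_T(\aut)) = \Omega(T/\varepsilon)$, so $\bandc_\varepsilon(L(\aut)) = \Omega(1/\varepsilon)$, hence $\bandh_\varepsilon(L(\aut))=\Omega(1/\varepsilon)$ by \eqref{eq:inequality}. The matching upper bound $\bandh_\varepsilon(L(\aut)) = O(1/\varepsilon)$ holds trivially since every region has bounded size, so each of the $O(T)$ events ranges over an $O(1)$-length interval and contributes $O(\log 1/\varepsilon)\cdot$(number of events in a window) $=O(1/\varepsilon)$ bits; thus the language is obese. I expect the $\varepsilon$-separation verification — ensuring the $\min_j$ in the definition of $\dr$ cannot ``rescue'' two runs that differ by $>\varepsilon$ in some loop — to be the main obstacle, and the cleanest fix is to space the ``active'' loops far enough apart in time that matches are forced to be local, at the cost of only a constant factor.
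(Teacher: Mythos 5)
Your overall strategy (exhibit a large $\varepsilon$-separated set of runs along a structurally obese cycle) is the right one, and your final glue (lower bound on $\capa_\varepsilon(L_T)$ plus the trivial upper bound against the universal language) is fine. But the concrete construction you describe does not give the required count, and your proposed fix for separation makes the count worse.

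In Type~I you take $\lfloor T/\theta\rfloor$ loops, each of duration in a fixed window of width $\Theta(\theta)$, and discretize each loop's duration at granularity $\varepsilon$. With $\theta$ fixed this yields $\Theta(T/\theta)$ real-valued choices, i.e.\ $\log\bigl((\theta/\varepsilon)^{T/\theta}\bigr)=\Theta\bigl(T\log(1/\varepsilon)\bigr)$ bits, \emph{not} $\Omega(T/\varepsilon)$. That is the \emph{normal} regime, not the obese one; the step from ``$(\Omega(1/\varepsilon))^{\Omega(T)}$ runs'' to ``$\capa_\varepsilon(L_T)=\Omega(T/\varepsilon)$'' is a non sequitur. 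To reach obese bandwidth you must make $\Theta(1/\varepsilon)$ choices \emph{per time unit}, each of constant size, which forces the loops themselves to have duration $\Theta(\varepsilon)$ (and hence $\Theta(1)$ rather than $\Theta(1/\varepsilon)$ options each). Your Type~II sketch has the same flaw (``real-valued choices discretizable at granularity $\varepsilon$'' in a window of bounded length again gives $\Theta(\log(1/\varepsilon))$ bits per window). Worse, the fallback you suggest for the separation problem --- spacing the ``active'' loops far apart in time --- reduces the number of choices per time unit and can only drive the bandwidth down towards $O(\log(1/\varepsilon))$.

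The paper's proof avoids both issues at once by encoding a single \emph{bit} per micro-slot of duration $3\varepsilon|\pi|$, distinguishing the two options not by the numeric value of a delay but by a robust combinatorial feature: option $\rho_0$ is one slow traversal of $\pi$ containing an event-free gap of length $\geq 3\varepsilon$, while option $\rho_1$ is $3|\pi|$ fast traversals each of duration $\varepsilon$, so it has an event in every interval of length $\varepsilon$. Two runs differing at slot $k$ then necessarily have some event of the fast run that cannot be matched within $\varepsilon$ in the slow run, so the $\min_j$ in $\dr$ cannot rescue them. This gives $\lfloor T/(3\varepsilon|\pi|)\rfloor$ independent bits, i.e.\ $\capa_\varepsilon(L_T)=\Omega(T/\varepsilon)$, and the Type~II case iterates the same binary encoding inside windows of length $<1$, interleaved with runs of the resetting cycle $\pi'$ to recharge the clocks. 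If you want to salvage your write-up, replace the real-valued per-loop choice with this gap-vs.-no-gap binary choice at $\varepsilon$-scale; the convex-combination machinery you invoked (Puri/\cref{lem:puri}) is indeed what licenses the existence of the $\varepsilon$- and $3\varepsilon|\pi|$-duration realizations.
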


The principle of the proof \app\ is simple: we identify a cycle with an obesity pattern, 
and show  that it may be used to transmit information with high frequency (every $\varepsilon$ sec) which yields the required large bandwidth. 
When the clocks become too high, to continue looping through the cycle, a ``resetting'' path is taken.

The main result of this section is now immediate.
\begin{thm}\label{thm:obese}
The language of an \RTA\ is obese iff the automaton is structurally obese.
\end{thm}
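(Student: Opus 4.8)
\textbf{Plan of proof for \cref{thm:obese}.} The statement is an immediate consequence of two results already quoted: the upper bound \cref{prop:obese:upper} and the lower bound \cref{prop:strobese}. The plan is to assemble them into the trichotomy-respecting equivalence, taking care that ``obese'' is $\Theta(1/\varepsilon)$, not merely $\Omega(1/\varepsilon)$, so both directions need a matching companion inequality.

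For the ``if'' direction, suppose the \RTA\ is structurally obese. By \cref{prop:strobese} its language $L$ is obese, i.e.\ $\bandh_\varepsilon(L)=\Theta(1/\varepsilon)$, and there is nothing more to prove; this is exactly the claimed conclusion. The content of this direction is entirely in \cref{prop:strobese}, whose proof (sketched in the excerpt) isolates a cycle carrying an obesity pattern of Type I or Type II and exhibits, for each small $\varepsilon$, a bunch of roughly $2^{\Omega(T/\varepsilon)}$ pairwise $\varepsilon$-separated runs of duration $\leq T$, using the ``resetting'' path to keep clocks bounded; together with the trivial universal upper bound $\bandh_\varepsilon(U_\Sigma)=O(1/\varepsilon)$ (only $O(1/\varepsilon)$ events can fit per time unit while remaining $\varepsilon$-distinguishable, each carrying $O(1)$ bits), this pins the bandwidth at $\Theta(1/\varepsilon)$.

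For the ``only if'' direction, suppose the \RTA\ is \emph{not} structurally obese. By \cref{prop:obese:upper} its bandwidth is $O(\log(1/\varepsilon))$, which already rules out the obese class ($\Theta(1/\varepsilon)$ grows strictly faster than $\log(1/\varepsilon)$ as $\varepsilon\to 0$). Hence an obese \RTA\ must be structurally obese, which is the contrapositive of what we want. Combining the two directions gives the equivalence ``$L(\aut)$ obese $\iff$ $\aut$ structurally obese''.

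\textbf{Main obstacle.} The genuine difficulty is entirely inside \cref{prop:obese:upper}, already flagged as ``quite technical'': one must show that the \emph{absence} of both obesity patterns forces a low bandwidth. The plan there is to use Simon's factorization forest theorem on the morphism $\gamma_d$ (and $\gamma_f$) to cut any word of duration $T$ into $\sim T$ blocks; blocks arising from Zeno cycles are controlled because, without a $\fast$ self-loop (no Type I) and without the $\slow$-plus-resetting configuration (no Type II), the accumulated duration spent in fast subcycles is $O(1)$, so such a block can be approximated by an $\varepsilon$-net of size $2^{O(\log(1/\varepsilon))}$ rather than $2^{O(1/\varepsilon)}$; the remaining blocks contribute a single event or zero duration each, hence $O(1)$ bits. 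Assembling per-block nets yields an $\varepsilon$-net for $L_T(\aut)$ of size $2^{O(T\log(1/\varepsilon))}$, giving $\bandh_\varepsilon=O(\log(1/\varepsilon))$. But since that proposition is assumed, the proof of \cref{thm:obese} itself is the short gluing argument above.
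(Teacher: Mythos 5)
Your proposal is correct and is essentially the same gluing argument the paper intends: the paper states the theorem as ``immediate'' from \cref{prop:obese:upper} and \cref{prop:strobese}, and your two-directional assembly (structural obesity $\Rightarrow$ obese via \cref{prop:strobese}; not structurally obese $\Rightarrow$ bandwidth $O(\log(1/\varepsilon))$ via \cref{prop:obese:upper}, hence not $\Theta(1/\varepsilon)$) is exactly that reasoning. Your extra remark about the universal upper bound $O(1/\varepsilon)$ is harmless but unnecessary here, since \cref{prop:strobese} already asserts obesity (i.e.\ $\Theta(1/\varepsilon)$), not merely a lower bound.
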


\section{Gathering Stones: Classification}\label{sec:classif}

\subsection{Classifying \DTA\ According to Bandwidth Asymptotics}

\begin{restatable}{thm}{thmclassif}\label{thm:classif}
 A timed  language  accepted by a \DTA\  is either meager, normal, or obese. Given a \DTA, the three problems ``is it meager?'', ``is it normal?'', ``is it obese?''  are \PSPACE-complete.
\end{restatable}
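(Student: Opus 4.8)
The plan is to assemble \cref{thm:classif} from the four structural characterizations already proved (\cref{thm:meager}, \cref{thm:obese}) together with the complexity estimates (\cref{thm:complexity:meager}, \cref{thm:complexity:obese}). First I would establish the trichotomy \emph{as a statement about the bandwidth function}. Combine \cref{prop:meager:lower} with \cref{prop:obese:upper}: if an \RTA\ is not structurally meager, its bandwidth is $\Omega(\log\frac1\varepsilon)$; if moreover it is not structurally obese, its bandwidth is $O(\log\frac1\varepsilon)$. Hence an \RTA\ that is neither structurally meager nor structurally obese has bandwidth $\Theta(\log\frac1\varepsilon)$, i.e.\ is normal. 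Together with \cref{thm:meager} (structurally meager $\Leftrightarrow$ meager, giving bandwidth $O(1)$) and \cref{thm:obese} (structurally obese $\Leftrightarrow$ obese, giving bandwidth $\Theta(\frac1\varepsilon)$), this partitions all \RTA\ into exactly the three classes, since the three asymptotic rates $O(1)$, $\Theta(\log\frac1\varepsilon)$, $\Theta(\frac1\varepsilon)$ are pairwise incompatible. One small point to check: no \RTA\ can be simultaneously structurally meager and structurally obese --- this is immediate, since a structurally obese cycle has a $\fast$ (type I) or $\slow$/$\instant$ (type II) configuration that forces, via the relation between $\mon_d$ and $\mon_f$, a $\wide$ diagonal entry on some reachable cycle, contradicting structural meagerness; alternatively it follows a posteriori from the incompatibility of the bandwidth asymptotics. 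To transfer the classification from \RTA\ to \DTA, invoke \cref{prop:DTA2RTA}: every \DTA\ has a language-equivalent \RTA, and since bandwidth is a property of the language alone, the class of a \DTA\ is that of its \RTA.

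Next I would handle decidability and the \PSPACE\ upper bound. Given a \DTA\ of size $n$, \cref{prop:DTA2RTA} produces an \RTA\ of size $\myexp(n)$ with the same number $m$ of clocks. Running the structural-meagerness test of \cref{thm:complexity:meager} on this \RTA\ costs $\SPACE(\log^2(\myexp(n))\poly(m)) = \SPACE(\poly(n)\poly(m)) = \PSPACE$ in the size of the original \DTA; likewise for structural obesity via \cref{thm:complexity:obese}. By the trichotomy, ``is it meager?'' is decided by the meagerness test, ``is it obese?'' by the obesity test, and ``is it normal?'' by running both and checking that neither returns positive --- all in \PSPACE. (One should note that one must not materialise the exponential-size \RTA\ explicitly; instead the space-efficient algorithms of \cref{algo} and its obese analogue should be run with on-the-fly access to locations and edges of the \RTA, each such query being computable in $\poly(n)$ space from the \DTA; this is a routine but necessary observation so that the overall procedure stays in polynomial space.)

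For \PSPACE-hardness it suffices to reduce from reachability/emptiness in timed automata, which is \PSPACE-complete \cite{AD}. I would reduce the (non)emptiness problem for a \DTA\ to, say, the meagerness problem: given a \DTA\ $\aut$, build $\aut'$ whose language is empty when $L(\aut)=\emptyset$ and otherwise contains behaviours realising an obese (or at least non-meager) pattern --- for instance, prefix an accepting run of $\aut$ by a gadget, or rather attach after every accepting configuration of $\aut$ a self-loop on a fresh location that emits a letter with guard $x<1$ resetting $x$ (the obese pattern of $\aut_1$-style). Then $L(\aut')$ is obese iff $L(\aut)\neq\emptyset$ and empty (hence meager) otherwise; emptiness being \PSPACE-complete, each of the three membership problems is \PSPACE-hard. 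Care is needed to keep $\aut'$ deterministic and of polynomial size, and to make sure the gadget does not accidentally change the class in the empty case; using the classical region-automaton emptiness reduction as a black box and appending a deterministic obese tail is the cleanest route.

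The main obstacle I anticipate is not conceptual but bookkeeping: verifying that the structural tests can genuinely be executed in polynomial space \emph{without} constructing the exponential \RTA, i.e.\ that every primitive used inside \cref{algo} (enumerating orbits, checking $\gamma_f(\delta)=e$ for an \RTA-edge $\delta$, composing orbits, testing diagonal $\wide$ entries, and the analogous d-orbit operations including the ``some other realizable d-orbit'' quantifier in the type II pattern) is implementable with logarithmic-in-the-\RTA\ i.e.\ polynomial-in-the-\DTA\ working memory. The type II obesity condition, which refers existentially to another realizable d-orbit on the same region, deserves particular attention: one must show this extra quantifier is itself decidable in the available space, e.g.\ by a nested reachability computation of the kind already used in \cref{thm:complexity:obese}. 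The rest --- the trichotomy itself and the hardness reduction --- follows cleanly from the results established in the preceding sections.
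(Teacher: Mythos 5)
Your proof takes essentially the same route as the paper: establish the trichotomy by combining \cref{thm:meager,thm:obese} with \cref{prop:meager:lower,prop:obese:upper}, obtain \PSPACE\ membership from \cref{thm:complexity:meager,thm:complexity:obese} applied to the exponentially larger \RTA\ (the $\log^2$ term absorbs the exponential blow-up), and prove hardness by reduction from reachability with an obese tail gadget. Your observation that the \RTA\ must be accessed on-the-fly rather than materialised, and that every primitive inside \cref{algo} including the existential ``some other realizable d-orbit'' quantifier in the Type~II pattern must itself fit in the space budget, is a genuine and correct point of care that the paper's proof glosses over.

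There is one real gap, in the hardness argument. Your single gadget $\aut'$ (attach an obese self-loop after the accepting configurations) yields a language that is obese when $L(\aut)\neq\emptyset$ and empty, hence meager, otherwise. This correctly reduces reachability to \Obese\ and non-reachability to \Meager\ (using $\PSPACE = \co\PSPACE$). But it never produces a normal language, so it says nothing about \PSPACE-hardness of ``is it normal?''. The paper needs, and supplies, a second reduction: it prepends to the $\mathcal B$-style gadget a fresh initial location with a guarded self-loop (the automaton $\mathcal C$ in the paper's appendix) so that, in the non-reachable case, the language is not empty but exactly the normal language of that loop. Then $\mathcal C$ is obese iff $q$ is reachable and normal otherwise, giving $(\neg\myreach)\preceq\Normal$. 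Without some such gadget whose ``negative'' branch lands in the normal class, the hardness of \Normal\ is not established. The rest of your argument is correct and matches the paper.
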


\begin{proof}
First, we put the \DTA\ into \RTA\ form, with as many clocks but exponentially more states, as ensured by \cref{prop:DTA2RTA}. 
If   the \RTA\ is structurally meager (resp.~structurally obese), then it is meager (obese) by  \cref{thm:meager,thm:obese}.  
If it is neither structurally meager nor structurally obese, then by   \cref{prop:obese:upper,prop:meager:lower} its bandwidth is both $O(\log(1/\varepsilon))$ and $\Omega(\log(1/\varepsilon))$, hence it is normal.
\PSPACE\ membership follows from
\cref{thm:complexity:meager,thm:complexity:obese}.
\PSPACE-hardness can be proved\app\ by reduction of reachability \cite{AD}.
\end{proof}

\begin{restatable}{corollary}{corextend}
The same classification and complexity results also hold for  nondeterministic timed automata without $\epsilon$-transitions.
\end{restatable}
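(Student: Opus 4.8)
The plan is to reduce the nondeterministic case to the machinery already developed for \RTA{}s, at the price of dropping the determinism clause from the region-split normal form. First I would prove a nondeterministic analogue of \cref{prop:DTA2RTA}: any timed automaton (which, by our definition, already carries no $\epsilon$-transitions) can be put, with the same clocks and an exponential blow-up of locations, into a \emph{nondeterministic region-split} $M$-bounded form --- a timed automaton satisfying every clause of the definition of \RTA\ except that finitely many initial singleton locations are allowed and no determinism is required. This is the standard region construction of \cite{AD}, followed by the usual refinement so that guards align with regions and a trimming step keeping only states that are reachable and co-reachable; none of these steps uses determinism. This is also where the ``without $\epsilon$-transitions'' restriction in the statement matters: it makes letters be in bijection with transitions along a run, so that the region abstraction and the pseudo-distance interact exactly as in the deterministic case, whereas $\epsilon$-transitions would force the orbit monoids to track silent moves and would alter the patterns.

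Second, I would observe that the orbit monoids $\mon$, $\mon_f$, $\mon_d$, the morphisms $\gamma$, $\gamma_f$, $\gamma_d$, and the notions of structurally meager/obese cycle and automaton are all defined purely on edge sequences of a region-split automaton; neither these definitions, nor the proofs that $\gamma,\gamma_f,\gamma_d$ are morphisms, nor \cref{lem:simon-minimal-length}, mentions determinism. Hence the decision procedures behind \cref{thm:complexity:meager,thm:complexity:obese} --- which only enumerate orbits and test reachability in the transition graph --- apply verbatim to the nondeterministic region-split form, giving \PSPACE\ membership of ``meager?'', ``obese?'', and therefore ``normal?''. \PSPACE-hardness is inherited from \cref{thm:classif}, every \DTA\ being in particular a nondeterministic automaton without $\epsilon$-transitions.

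Third, for the semantic equivalences I would check that both directions survive. The lower bounds \cref{prop:meager:lower,prop:strobese} only use the presence of a cycle carrying a pattern, together with Puri's reachability characterisation, to exhibit a large family of pairwise $\varepsilon$-separated runs along that cycle; extra runs never hurt, so those proofs go through unchanged. The upper bounds \cref{prop:meager:upper,prop:obese:upper} need a little care: there one factorises, via Simon's theorem, the behaviours of duration $\leq T$ and builds a small $\varepsilon$-net. In the nondeterministic setting I would run that argument on \emph{runs} instead of words: each $w\in L_T(\aut)$ is witnessed by at least one accepting run $\rho$ of the nondeterministic region-split form, one factorises the edge sequence of $\rho$ through the relevant morphism exactly as before, and the resulting $\varepsilon$-net of runs projects through $\Word(\cdot)$ to an $\varepsilon$-net of $L_T(\aut)$ of the same cardinality; since the size estimates in those proofs depend only on the monoid and on $\varepsilon$, not on how many runs witness a word, they are preserved. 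The trichotomy of \cref{thm:classif} then follows exactly as for \DTA{}s: structurally meager $\Rightarrow$ meager, structurally obese $\Rightarrow$ obese, and otherwise the bandwidth is simultaneously $O(\log 1/\varepsilon)$ and $\Omega(\log 1/\varepsilon)$, i.e.\ normal.

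I expect the upper-bound step to be the real obstacle: one must re-state \cref{prop:meager:upper,prop:obese:upper} at the level of runs of a nondeterministic region-split automaton --- in particular, checking that the partial order on $\varepsilon$-cells and the $\Still_\varepsilon$ factors of the meager proof, and the treatment of Zeno cycles in the obese proof, are insensitive to which accepting run is chosen, and that the factorisation of a run through $\gamma_f$ (resp.\ $\gamma_d$) still yields the bounded number of factors used in the $\varepsilon$-net construction. Everything else is routine bookkeeping.
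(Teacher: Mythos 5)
Your plan is coherent and would probably work, but it takes a genuinely different and substantially more laborious route than the paper's.

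The paper's proof is a short reduction to the deterministic case: relabel every transition of the nondeterministic automaton $\mathcal B$ with a fresh letter to obtain a \DTA\ $\aut$ over a larger alphabet $\Gamma$, so that $L(\mathcal B)=\mu(L(\aut))$ for a letter-to-letter projection $\mu:\Gamma\to\Sigma$. Since $\mu$ maps any $\varepsilon$-net for $L_T(\aut)$ to an $\varepsilon$-net for $L_T(\mathcal B)$, one gets $\bandh(L(\mathcal B))\leq\bandh(L(\aut))$ for free, which settles the meager case. For the obese and normal lower bounds, one observes that the $\varepsilon$-separated sets built in \cref{prop:strobese,prop:meager:lower} consist of words that all follow a single cycle, so they all carry the same $\Gamma$-label sequence; $\mu$ is therefore injective on such a set, and (because the separation argument there only uses unmatched events in time gaps, not labels) the image remains $\varepsilon$-separated. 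Nothing about the region-split construction, the orbit monoids, or Simon's theorem is revisited. \PSPACE-hardness is inherited because a \DTA\ is a special case.

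By contrast, you propose to extend \cref{prop:DTA2RTA} to a nondeterministic region-split form and then re-verify that every lemma behind \cref{prop:meager:upper,prop:obese:upper,prop:meager:lower,prop:strobese} survives without determinism. You correctly identify that the monoid definitions and morphism proofs are path-level and hence determinism-agnostic, and you correctly flag the upper bounds as the delicate part, since each word may now be witnessed by several runs so the net construction must be re-stated at the run level and projected through $\Word(\cdot)$. This is a valid programme, and in fact nothing in \cref{lem:meager-singleton,lem:xzzy,lem:freq,lem:sig,lem:factorize} appears to invoke determinism, so I believe it would go through. But it replaces a five-line reduction with a page-by-page audit of the technical appendix, and you would still need to handle the multiple-initial-location issue you mention. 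The relabeling trick buys you all of that for free: it is the observation that bandwidth (unlike, say, volume per event) is essentially label-insensitive for the lower bounds and monotone under letter projection for the upper bounds.

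One small caveat worth making explicit if you do pursue your route: when projecting a run-level $\varepsilon$-net through $\Word(\cdot)$, the size can only shrink, and a word needs only one covering run, so the upper-bound cardinality estimates are indeed preserved; but you should state this, since it is exactly the place where nondeterminism could in principle bite.
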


\subsection{Comparing with Thin-Thick Classification}\label{sec:comparing}
Previously in \cite{entroJourn}, we established a dichotomy between thin (whose language for $n$ events is subexponentially vanishing as $n$ goes to $\infty$) and thick automata (whose language has volumes growing exponentially) and proposed a structural characterization according to orbits: thick automata  are those which contain a reachable \emph{forgetful} cycle, i.e.~a cycle with a p-orbit having a complete graph.

It is natural to compare the approach from \cite{entroJourn} with the new one. 
First, their goals differ: the former assesses how much a time language grows at every discrete event, and the latter handles growth per unit of time. 
Second, in \cite{entroJourn}, only bounded guards were considered, because paths through such guards would have infinite volumes. Third, for a similar reason, paths with punctual transitions were ignored, because of their empty volumes, often entailing a thin verdict for trivial reasons.
In the new approach, paths with punctual guards can contribute to meager, normal, and even obese, behaviors, while the presence of unbounded guards in a path is not a determining factor with respect to bandwidth class (efficient coding strategies tend to favor as short delays as possible). Nonetheless, in the bounded and non-punctual case, it is possible to examine the same automaton under both approaches and obtain the following result:

\begin{restatable}{thm}{thmthinvsmeager}
    An \RTA\ with only bounded non-punctual guards cannot be both thick and meager. Also, it cannot be thin and obese of Type I. All other combinations are possible.
\end{restatable}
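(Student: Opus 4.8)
The plan is to prove the two impossibility statements by connecting the relevant structural criteria — ``forgetful cycle'' (thick) from \cite{entroJourn}, ``structurally meager'' (this paper), and ``Type I structural obesity'' — at the level of orbit graphs, and then to exhibit automata realizing the remaining combinations.

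First I would handle ``not both thick and meager''. Suppose an \RTA\ with only bounded non-punctual guards is thick; by the characterization of \cite{entroJourn} it contains a reachable forgetful cycle $\pi$, i.e.\ one whose p-orbit $\gamma(\pi)$ is a complete bipartite graph on the vertices of the region $S(q)$ looped by $\pi$. I would pick an idempotent power $\pi^k$ (so $\gamma(\pi^k)$ is idempotent and still complete) and examine the f-orbit $\gamma_f(\pi^k)$: its support is the same complete graph, so in particular the diagonal entries $(v,v)$ are non-$\none$, i.e.\ $\narrow$ or $\wide$. The key step is to show at least one diagonal entry must be $\wide$. Here the non-punctuality hypothesis enters: since every guard along $\pi$ is non-punctual and bounded, the set of runs $L_{\bar\pi}(v,v)$ along the closed cycle cannot be a single timed word for all vertices $v$ simultaneously — intuitively a complete orbit means every vertex can reach every vertex, the region has dimension $\geq 1$ (it has $\geq 2$ vertices, else ``complete'' would be vacuous and the cycle could not be forgetful in a nontrivial way), and a non-punctual guard leaves genuine continuous slack that propagates to at least one $L_{\bar\pi}(v,v)$ having dimension $\geq 1$. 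One clean way: by completeness, concatenating the path from $v_1$ to $v_2$ with the one from $v_2$ to $v_1$ gives runs in $L_{\bar\pi^2}(v_1,v_1)$; if $v_1\neq v_2$, the two endpoints $v_1,v_2$ of the region are distinct, and the semiring multiplication $\narrow\cdot\narrow=\narrow$ is not what matters — rather the \emph{addition} $\narrow+\narrow=\wide$ does, because in a complete idempotent orbit the set $L_{\bar\pi^{2k}}(v_1,v_1)$ is the convex union of runs passing through $v_1$ and runs passing through $v_2$ at the midpoint, which are distinct singletons, hence a set of dimension $\geq 1$. This yields a $\wide$ self-loop in $\gamma_f(\pi^{2k})$, contradicting structural meagerness, so by \cref{thm:meager} the language is not meager.

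Next, ``not both thin and obese of Type~I''. If the \RTA\ has a Type~I structurally obese cycle $\pi$, its d-orbit $\gamma_d(\pi)$ has a $\fast$ diagonal entry at some vertex $u$. Passing to an idempotent power $\pi^k$, the d-orbit remains $\fast$ at $(u,u)$, meaning $L_{\bar{\pi^k}}(u,u)$ contains runs of arbitrarily small positive duration down to $0$ (recall $\tau(L_{\bar\pi}(v,v'))\supseteq[0,1]$ in the $\fast$ case). I would then argue this forces a \emph{non-vanishing volume} growth per event, i.e.\ the complementary (thick) side of the dichotomy. The point is that a $\fast$ self-loop at $u$, with bounded non-punctual guards, gives for each repetition a genuine interval of allowed delays of length bounded below by a fixed constant after rescaling — more precisely, along $\pi^k$ one can choose the delay freely in an interval whose length does not shrink with the number of repetitions, so the volume of words of $n$ events grows like $c^n$ for some $c>1$. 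Thus the automaton is thick, hence not thin. (One must be a little careful: a $\fast$ diagonal in $\gamma_d$ only says durations reach arbitrarily small values, not that a single long run has large volume; the fix is to use the idempotent and the fact that $\fast\cdot\fast=\fast$ together with the semiring addition $\fast+\fast=\fast$ to propagate a fixed-width delay interval through every iteration.)

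Finally, to show ``all other combinations are possible'' I would just exhibit small examples and cite the running-example analyses already in the paper: among bounded non-punctual \RTA s, $\aut_1$ is thick and obese (Type~I); $\aut_6$ is thin and meager (shown meager in the paper, and thin because its delays are forced into shrinking intervals); a normal thick example and a normal thin example can be built from $\aut_4$-like gadgets (thick normal) and from a forgetful-free, structurally-meager-free cycle with a single continuous choice of bounded duration per iteration that does not vanish combinatorially (thin normal); obese-of-Type-II can be realized thin (an $\aut_8$-style Zeno-plus-reset gadget with all guards bounded non-punctual, if such a variant exists — otherwise Type~II obese is necessarily thick and that case is subsumed). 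I would tabulate the $3\times 2$ grid of (meager/normal/obese) $\times$ (thin/thick), strike out the two forbidden cells just proved, and supply a witness automaton for each of the remaining cells.

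\medskip

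The main obstacle I anticipate is the first implication: turning ``complete p-orbit'' plus ``non-punctual bounded guards'' into ``$\wide$ self-loop in the f-orbit''. The subtlety is that completeness of $\gamma(\pi)$ is a statement about \emph{closures} of guards, and a priori every individual $L_{\bar\pi}(v,v)$ could still be a singleton while reachability between distinct vertices is witnessed only by non-trivial runs. The resolution should go through the convex-combination / Lyapunov-function machinery of \cite{entroJourn} and Puri's reachability characterization \cite{puri}: a complete idempotent orbit forces the reachability polytope $\{(x,x') : x'\in\Reach_\pi(x)\}$ over the region to be full-dimensional, and intersecting with a diagonal-type constraint $x=x'$ still leaves positive dimension \emph{unless} the cycle is effectively punctual — which is excluded by hypothesis. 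Making this dimension count precise, and in particular ruling out degenerate region shapes, is where the real work lies.
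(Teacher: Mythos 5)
Your first impossibility (thick $\Rightarrow$ not meager) takes essentially the paper's route: a forgetful cycle $\pi$ gives a complete p-orbit; if there is a single vertex, non-punctuality yields a $\wide$ loop; if there are $\geq 2$ vertices, the diagonal entry of $\gamma_f(\pi^2)$ is a sum over all intermediate vertices of products that are each $\geq\narrow$, and with $\geq 2$ non-$\none$ summands the semiring gives $\wide$ (since $\narrow+\narrow=\wide$). The worry you raise at the end — that each individual $L_{\bar\pi}(v,w)$ might be a singleton while reachability is witnessed only indirectly — is already dissolved by the morphism property of $\gamma_f$: a sum over $\geq 2$ non-$\none$ paths in the semiring being $\wide$ is exactly equivalent to $L_{\bar{\pi^2}}(v,v)$ being non-singleton. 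You don't need to work with diagonals of reachability polytopes or rule out degenerate regions; the Boolean/ $\Lambda_f$ bookkeeping does it for you.

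For the second impossibility (thin $\Rightarrow$ not obese of Type~I) you diverge substantially, and this is where the proposal has a genuine gap. You try to go directly from a $\fast$ self-loop to exponential volume growth via a ``non-vanishing delay interval per iteration'' argument, and you yourself flag that $\fast$ only says the duration interval of the cycle contains $[0,1]$, not that each individual delay admits a fixed-width choice. The proposed ``fix'' via $\fast\cdot\fast=\fast$ and $\fast+\fast=\fast$ does not establish anything about per-event volume: the $\Lambda_d$ semiring tracks aggregate duration, not volume or degrees of freedom. The paper instead routes through the monoid: a $\fast$ self-loop is a self-loop that is both zero-timed and positive-time, and \cref{lem:instant+slow=forgetful} then shows the p-orbit is complete — i.e.\ the cycle is forgetful — so the automaton is thick by the characterization of \cite{entroJourn}, with no volume estimate needed. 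That lemma is the missing ingredient in your proposal.

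For the examples you give the right strategy but are noticeably unsure (``if such a variant exists — otherwise ... subsumed''); the paper simply cites $\aut_6$ (thin, meager), $\aut_{10}$ (thin, normal), $\aut_8$ (thin, obese type~II), $\aut_4$ (thick, normal), $\aut_1$ (thick, obese type~I), $\aut_2$ (thick, obese type~II). Note in particular that thin and obese of Type~II is realized by $\aut_8$, so there is no need to hedge.
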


\begin{proof}[Proof sketch]
    Consider a thick automaton, with its forgetful cycle $\pi$. If its orbit has a single vertex, since $\pi$ has a non-punctual transition it can be realized by several runs, hence the only edge of $\gamma_f(\pi)$ is $\wide$. 
    Otherwise, $\gamma(\pi)$ has a complete SCC of size at least 2, and then all its vertices have $\wide$ self-loops in $\gamma_f(\pi^2)$.
    
    Now consider an automaton with a structurally obese cycle $\pi$ of Type I. By definition, $\gamma_d(\pi)$ has a $\fast$ self-loop. 
    It can be shown that such a cycle must have a complete p-orbit, hence  the automaton cannot be thin. 

    For all the other combinations we just exhibit as examples the automata from \cref{fig:5examples}:
    \begin{itemize}
        \item 
        $\aut_6$ is thin and meager; 
        \item 
        $\aut_{10}$ is thin and normal;
        \item 
        $\aut_8$ is thin and obese (type II);
        \item 
        $\aut_4$ is thick and normal;
        \item 
        $\aut_1$ is thick and obese (type I).
        \item 
        $\aut_2$ is thick and obese (type II). \qedhere
    \end{itemize}
\end{proof}

\section{Conclusion}\label{sec:conclusion}
    We defined three classes of bandwidth and two structural criteria to classify automata into these three classes.
We believe these classes will serve in the theory of timed automata since they correspond to very different kinds of behavior. We believe that techniques specific to the class of automaton would be useful for the random generation of its runs, and improve the performance and scalability of tools such as  \textsc{Wordgen} \cite{wordgen}. Our results have been formulated for timed words (time event sequences), and we think that porting them to timed signals would bring interesting insights.  We believe that  our results can be rephrased in terms of Kolmogorov complexity of $\varepsilon$-approximations of timed words.

In the future, we would like to be able to do more than classify automata: we would also like to compute the constant $\alpha$ that multiplies the main term in the bandwidth formulas. This step will be essential for applications to timed information coding, in the spirit of \cite{timedCoding}.  We hope that many tools introduced in this article will serve to reach this ambitious aim.


\bibliographystyle{plainurl}
\bibliography{entro}


\appendix
\section{Preliminary details}

\subsection{Difference bound matrices}\label{sec:DBM}

\emph{Difference bound matrices} --- \DBM \cite{DBM} are a compact way to represent and manipulate systems of difference constraints, they are extensively used in the study of timed automata. We will need only one specific property of a restricted class of \DBM s.

\begin{definition}\label{def:DBM}
    An $n$-\DBM\ is a matrix $A=(a_{ij})_{i,j= 0..n}$ with all $a_{ij}\in\real\cup\{\infty\}$. We say that a vector $x\in\real^n$ satisfies a DBM $A$ whenever for all $i,j= 0..n$ it holds that $x_j-x_i\leq a_{ij}$, where by convention $x_0=0$. We denote the set of all vectors satisfying a DBM $A$ by $[A]$ and call it a \emph{closed zone}.
\end{definition}

Timings of runs in a timed automaton can be described by \DBM{}s, as stated here:
\begin{proposition}\label{prop:timingDBM}
Given a timed automaton with closed guards, a path $\pi$ of length $n$, and clock vectors $x,y$, the set of timings of  words $L_\pi(x,y)$ can be described by an $n$-\DBM\ $A(x,y)=(a_{ij})$ of the following parametric form: 
\begin{itemize}
\item its entries ``in the middle'', i.e.~$a_{ij}$ for $i,j= 1..n-1 $, take values in $\integer\cup \{\infty\}$;
\item entries ``on the border'', that is $a_{0k}, a_{k0}, a_{nk}, a_{kn}$ with $ k\in 1..n-1$ can take values of the form $\min(x_k+b,d)$,
$\min(-x_k+b,d)$, $\min(y_k+b,d)$,
$\min(-y_k+b,d)$ respectively,  with $b,d\in \integer\cup \{\infty\}$;
\item finally $a_{0n}=-a_{n0}=y_k-x_k$ for some $k$ or $a_{0n}=a_{n0}=\infty$.
\end{itemize}
\end{proposition}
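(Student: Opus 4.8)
The plan is to write the set of timings of $L_\pi(x,y)$ explicitly as a system of difference constraints on the dates $t_1,\dots,t_n$ --- essentially the zone of the path with its two endpoints frozen to $x$ and $y$ --- and then to read off, constraint by constraint, that the matrix collecting these constraints already has the announced shape. Fix $\pi=\delta_1\cdots\delta_n$. A run along $\pi$ from clock vector $x$ to clock vector $y$ is completely determined by its timing $0=t_0\le t_1\le\cdots\le t_n$, so it is enough to characterise the feasible timings. For a clock $c$ and a position $i$, let $\ell(c,i)$ denote the largest $j<i$ with $c\in\reset_{\delta_j}$, and $\ell(c,i)=0$ if there is none; then, since $t_0=0$, the value of $c$ seen by the guard of $\delta_i$ is $t_i-t_{\ell(c,i)}$ when $\ell(c,i)\ge 1$ and $x_c+t_i$ when $\ell(c,i)=0$, and symmetrically the value of $c$ at the end of the run is $t_n-t_{\ell(c,n+1)}$ or $x_c+t_n$.

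With this bookkeeping, $(t_1,\dots,t_n)$ is the timing of a word of $L_\pi(x,y)$ exactly when: (i)~$t_{i-1}-t_i\le 0$ for $1\le i\le n$; (ii)~every atomic conjunct $c\sim b$ of $\guard_{\delta_i}$, and of the starting constraint $S(\src_{\delta_i})$ evaluated on the post-reset vector, holds after substituting for $c$ its value above; and (iii)~for every clock $c$, its end-of-run value equals $y_c$. Because the guards are closed, every $\sim$ is non-strict, so each of these, after moving the endpoint parameters to the right-hand side, is a genuine \DBM\ inequality $t_j-t_i\le r$. Classifying the right-hand sides: monotonicity, and any constraint read on a clock already reset along $\pi$, give an integer $r$ relating two indices in $1..n$ (and when both indices are interior, a ``middle'' entry); a constraint read on a never-reset clock gives a one-sided bound $t_i\le -x_c+b$ or $-t_i\le x_c-b$, hence an entry $a_{0i}$ or $a_{i0}$; the endpoint equation~(iii) for a clock last reset at step $k$ reads $t_n-t_k=y_c$, hence entries $a_{kn},a_{nk}$ with right-hand sides $\pm y_c$; and the endpoint equation for a never-reset clock reads $t_n=y_c-x_c$, hence $a_{0n}\le y_c-x_c$ and $a_{n0}\le x_c-y_c$. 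Taking $A(x,y)=(a_{ij})$ with each $a_{ij}$ the minimum of all right-hand sides of constraints of the form $t_j-t_i\le\cdot$ produced above ($\infty$ if there are none), the closed zone $[A(x,y)]$ is by construction the set of timings of $L_\pi(x,y)$, and the classification above is exactly the stated parametric form: the middle entries are integers or $\infty$; $a_{0k}$ and $a_{k0}$ are minima of terms $\pm x_c+b$ together with integers (reading the subscript of $x_k$ loosely as ``a coordinate of $x$''); $a_{nk}$ and $a_{kn}$ are minima of terms $\pm y_c+b$ together with integers; and the corner $a_{0n},a_{n0}$ is controlled only by the never-reset clocks, yielding $a_{0n}=-a_{n0}=y_c-x_c$ for any such $c$ when $L_\pi(x,y)\neq\emptyset$ (the several such equations being mutually consistent precisely because the zone is non-empty) and $\infty$ when every clock is reset along $\pi$.

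The delicate point --- and the one I expect to be the main obstacle --- is that one must present this \emph{raw} DBM rather than its canonical (all-pairs shortest path) closure: closing the matrix can route a shortest path through vertex $0$ and an interior vertex, adding an $x$-carrying bound to an integer bound and thereby pushing a middle entry outside the integer family, and likewise a path through both $0$ and $n$ would blend an $x$-term with a $y$-term. To handle this I would observe that in the constraint graph the only edges carrying a coordinate of $x$ are incident to vertex $0$, the only ones carrying a coordinate of $y$ are incident to vertex $n$, and the single edge joining $0$ and $n$ is exactly the corner case; consequently it is safe to normalise only among the interior vertices $t_1,\dots,t_{n-1}$, which preserves the shape, and no further closure is needed to obtain a DBM of the required form. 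The remaining work --- tracking $\ell(c,\cdot)$ along $\pi$, separating upper from lower bounds, and checking consistency of the never-reset-clock equations --- is routine. An alternative route is an induction on $|\pi|$ via composition of the single-edge DBMs, but keeping the endpoint-parameter shape stable under that composition seems no easier than the direct argument.
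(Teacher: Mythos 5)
Your proposal is correct and follows essentially the same approach as the paper: translate each constraint arising along the run (guards, starting constraints, and the end-of-run clock equations) into a raw DBM inequality, using the position of the last reset to decide whether the entry is ``middle'' (integer, relating two interior dates) or ``border'' (carrying a coordinate of $x$ or $y$), with the corner entry coming only from never-reset clocks. Your additional remark that one must \emph{not} canonicalise the DBM---and that it is safe not to, since the proposition only asks for \emph{some} DBM of the stated shape---is a useful observation that the paper leaves implicit; the paper's proof in the appendix is simply the same translation, stated tersely, without that caveat.
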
 
\begin{proof}
    We translate all the conditions on timings in the run as in \eqref{eq:run-constraints} into a \DBM.  We remark that the initial clock  vector $x$ is used to compute clock values, and we check that final clock values coincide with the coordinates of $y$. The translation can be  succinctly described as follows (variables $i$ ranges over $1..n-1$ and $j$ over $1..n$):
\begin{itemize}
\item We initialize all the entries of the \DBM\ at $\infty$;
\item we treat all the constraints in the guards along the path as follows: 
\begin{itemize}
   \item whenever a clock $c_k$, reset at $t_i$, is tested at $t_j$ with the guard $b\leq c_k\leq d$, this can be rewritten as $b\leq t_j-t_i\leq d$, and translated to  the \DBM\   $a_{ij}:=\min(a_{ij},d)$ and $a_{ji}:=\min(a_{ji},-b)$;
\item whenever a non-reset clock $c_k$ (initially equal  to  $x_k$) is tested at $t_j$ with the guard $b\leq c_k\leq d$, this can be rewritten as $b\leq x_k+t_j\leq d$, and translated to the \DBM\   $a_{0j}:=\min(a_{0j},d-x_k)$ and $a_{j0}:=\min(a_{j0},x_k-b)$;
\end{itemize}
\item and  we make sure that the final values of clocks equal $y$ as follows:
\begin{itemize} 
\item if a clock $c_k$ is reset the last time at $t_i$, then the final value of the clock should be attained at $t_n$, i.e.~$t_n-t_i=y_k$, which translates into
$a_{in}:=\min(a_{in},y_k)$ and $a_{ni}:=\min(a_{ni},-y_k)$. 
\item finally, if a clock  $c_k$ is never reset along $\pi$, then $y_k=x_k+t_n$, which yields $a_{0n}=-a_{n0}=y_k-x_k$. \qedhere
\end{itemize}
\end{itemize}    
\end{proof}
A \emph{path} in  a \DBM\ is just a sequence of indices $\sigma=i_1,i_2,\dots,i_k$ all belonging to $0..n$, it is a \emph{cycle} whenever $i_k=i_1$. The sum of a path is defined as follows: $\ssum(\sigma)=\sum_{j=1}^{k-1}a_{i_j,i_{j+1}}$.

We will use the well-known result:
\begin{proposition}\label{prop:proj-DBM}
$[A]=\emptyset$, iff there exists a cycle $\sigma$ with $\ssum(\sigma)<0$. For nonempty $[A]$, its projection on $x_i$ can be characterized as follows: $\proj_i[A]=[L_i,U_i]$ where $U_i=\min\{\ssum(\sigma) \mid \sigma \text{ a path from } 0\text{ to }i\}$ 
and $L_i=\max\{-\ssum(\sigma) \mid \sigma \text{ a path from } i\text{ to }0\}$. The $\min$ and $\max$ can also be taken w.r.t.~simple paths.
\end{proposition}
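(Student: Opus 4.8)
The plan is to recognise Proposition~\ref{prop:proj-DBM} as the classical shortest-path characterisation of systems of difference constraints, and to establish its three assertions in the following order: the non-emptiness criterion, the inclusion $\proj_i[A]\subseteq[L_i,U_i]$, and the reverse inclusion.

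The two easy parts come from telescoping. If $\sigma=i_1\dots i_k$ is a cycle ($i_k=i_1$) with $\ssum(\sigma)<0$ --- which forces every edge along $\sigma$ to be finite --- then for any $x\in[A]$ summing the defining inequalities $x_{i_{j+1}}-x_{i_j}\le a_{i_j i_{j+1}}$ over $j=1,\dots,k-1$ makes the left-hand side telescope to $x_{i_k}-x_{i_1}=0$, whence $0\le\ssum(\sigma)<0$, a contradiction; so $[A]=\emptyset$. The same telescoping applied to a path from $0$ to $i$ (recall $x_0=0$) gives $x_i\le\ssum(\sigma)$, and applied to a path from $i$ to $0$ gives $x_i\ge-\ssum(\sigma)$; taking the best such bounds yields $L_i\le x_i\le U_i$ for every $x\in[A]$, i.e.\ $\proj_i[A]\subseteq[L_i,U_i]$. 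Moreover, since there are no negative cycles, deleting a cycle from a non-simple $0\to i$ path does not increase its sum, so the $\min$ (resp.\ $\max$) defining $U_i$ (resp.\ $L_i$) is attained on a simple path; this also shows $U_i=+\infty$ exactly when $0$ cannot reach $i$ and $L_i=-\infty$ exactly when $i$ cannot reach $0$.

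Next I would prove that the absence of a negative cycle implies $[A]\neq\emptyset$. Here I add an auxiliary vertex $s$ with zero-weight edges $s\to k$ for every $k\in\{0,\dots,n\}$. No cycle can use $s$ (there is no edge into $s$), so the enlarged graph still has no negative cycle, and the shortest-path distances $\delta_k$ from $s$ are finite (each $k$ is reachable from $s$ by a direct edge, and the distances are bounded below for lack of negative cycles). By optimality of shortest paths, $\delta_j\le\delta_i+a_{ij}$ for every original edge $i\to j$; setting $x_k:=\delta_k-\delta_0$ gives $x_0=0$ and $x_j-x_i=\delta_j-\delta_i\le a_{ij}$, so $x\in[A]$. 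This source trick is what cleanly handles DBMs whose constraint graph is disconnected or has $\infty$ entries.

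Finally, for the reverse inclusion, given $c\in[L_i,U_i]$ I form the DBM $A'$ obtained from $A$ by intersecting with the equality $x_i=c$, i.e.\ replacing $a_{0i}$ by $\min(a_{0i},c)$ and $a_{i0}$ by $\min(a_{i0},-c)$. Every simple cycle of $A'$ either avoids the two modified edges (hence is an old, non-negative cycle), or --- since the modified edges are incident only to $0$ and $i$ --- splits at $0$ and $i$ into one or both new edges plus old sub-paths ($i\to 0$, $0\to i$, or the $2$-cycle $0\to i\to 0$). Using $\ssum(\text{old }0\to i\text{ path})\ge U_i\ge c$, $\ssum(\text{old }i\to 0\text{ path})\ge -L_i\ge -c$, together with $a_{0i}\ge U_i\ge c$ and $a_{i0}\ge -L_i\ge -c$ (each direct edge is itself a path), a short case distinction on which of the two $\min$'s is attained at the new value shows the cycle sum is $\ge 0$. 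Hence $A'$ has no negative cycle, so $[A']\neq\emptyset$ by the previous paragraph; as every point of $[A']\subseteq[A]$ has $i$-th coordinate equal to $c$, we get $c\in\proj_i[A]$. The step I expect to be the main obstacle is exactly this last case analysis: keeping track of which $\min$'s in $A'$ collapse to the new value and checking that $\infty$ entries (e.g.\ when $U_i=\infty$) spoil no case --- but the hypotheses $c\le U_i$ and $c\ge L_i$ are precisely what closes every subcase.
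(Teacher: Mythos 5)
Your proof is correct, but note that the paper does not actually prove Proposition~\ref{prop:proj-DBM}: it is introduced as ``the well-known result,'' followed only by a one-line remark identifying $U_i$ and $L_i$ with the off-diagonal entries of the canonical form of $A$. So there is no paper proof to compare against; what you have supplied is the standard constraint-graph argument that the authors implicitly appeal to, and it is a complete one. In particular, you correctly handle the two directions of the emptiness criterion (telescoping for necessity, the auxiliary-source Bellman--Ford construction for sufficiency), the easy inclusion $\proj_i[A]\subseteq[L_i,U_i]$, the reduction to simple paths via deletion of nonnegative cycles, and --- the only part that requires real care --- the reverse inclusion $[L_i,U_i]\subseteq\proj_i[A]$, where you tighten $a_{0i}$ and $a_{i0}$ to encode $x_i=c$ and show by the $2\times2$ case split on which of the two edges took the new value that every simple cycle of the modified \DBM\ is still nonnegative (using $\ssum\ge U_i\ge c$ for $0\to i$ paths and $\ssum\ge -L_i\ge -c$ for $i\to 0$ paths). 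The observation that a simple cycle through a modified edge necessarily visits both $0$ and $i$ exactly once and therefore splits into a $0\to i$ segment and an $i\to 0$ segment, each either the direct (possibly modified) edge or an old path, is exactly what makes the case analysis finite and closes the argument, including the $\pm\infty$ degeneracies you flag.
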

We remark for knowledgeable  readers that $U_i=a'_{0i}$ and $L_i=-a'_{i0}$ where $A'=(a'_{ij})$ is the canonical form of \DBM\ $A$.

Later on, we will apply the following corollary on Lipshitz continuity.
\begin{corollary}\label{cor:Lipshitz} Suppose  that  two $n$-\DBM{}s $A$ and $A'$ with nonempty $[A]$ and $[A']$ are:
\begin{description}
    \item[equal in the middle:] $a_{ij}=a'_{ij}$ for all $i,j= 1..n-1$;
    \item[$\varepsilon$-close on the border:]  $|a_{0i}-a'_{0i}|<\varepsilon$ for all $i= 0..n-1$; similarly for $a_{i0}, a_{in},a_{ni}$.
\end{description}
and their projections on $x_i$ are $\proj_i[A]=[L_i,U_i]$ and $\proj_i[A']=[L'_i,U'_i]$. Then these projections are close to each other: 
$
|L'_i-L_i|, |U'_i-U_i|<3\varepsilon.
$
\end{corollary}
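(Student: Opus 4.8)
\textbf{Plan for the proof of \cref{cor:Lipshitz}.}

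The plan is to argue directly from the path characterization of projections given by \cref{prop:proj-DBM}. Recall that $U_i=\min\{\ssum(\sigma)\mid \sigma \text{ a path from }0\text{ to }i\}$ and $U'_i=\min\{\ssum'(\sigma)\mid \sigma \text{ a path from }0\text{ to }i\}$, where $\ssum$ and $\ssum'$ are taken with respect to $A$ and $A'$ respectively, and where by the last sentence of \cref{prop:proj-DBM} the minima may be restricted to simple paths. The key observation is that a simple path from $0$ to $i$ passes through the ``border'' vertices $0,n$ only a bounded number of times: it starts at $0$, it may pass through $n$ at most once (since it is simple), and it ends at $i\in\{1,\dots,n-1\}$, so it contains at most two edges incident to a border vertex, namely the first edge (leaving $0$) and at most one edge incident to $n$, that is at most three border entries counting the possibility that $n$ is both entered and left. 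All the remaining edges of the path lie ``in the middle'', where $A$ and $A'$ agree. Hence for any simple path $\sigma$ from $0$ to $i$ we have $|\ssum'(\sigma)-\ssum(\sigma)|<3\varepsilon$, the factor $3$ coming from the at most three border edges each of which changes by less than $\varepsilon$.

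From this pointwise bound the result follows by a standard ``min of close functions is close'' argument. Let $\sigma^*$ be a simple path from $0$ to $i$ achieving $U_i=\ssum(\sigma^*)$. Then $U'_i\leq \ssum'(\sigma^*)<\ssum(\sigma^*)+3\varepsilon=U_i+3\varepsilon$. Symmetrically, taking $\sigma'^*$ a simple path achieving $U'_i=\ssum'(\sigma'^*)$ gives $U_i\leq \ssum(\sigma'^*)<U'_i+3\varepsilon$. Combining, $|U'_i-U_i|<3\varepsilon$. The argument for the lower endpoints is identical, using $L_i=\max\{-\ssum(\sigma)\mid \sigma \text{ a simple path from }i\text{ to }0\}$ and the fact that a simple path from $i$ to $0$ again has at most three border edges; the sign change does not affect the bound, so $|L'_i-L_i|<3\varepsilon$. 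One small point to dispatch is that both $[A]$ and $[A']$ are assumed nonempty, so by \cref{prop:proj-DBM} these projections are genuine nonempty intervals and the minima/maxima in question are attained and finite (any path with an $\infty$ edge is never optimal since a finite competitor exists).

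The main obstacle — really the only place where care is needed — is the combinatorial claim that a simple path between a middle vertex and a border vertex uses at most three edges that touch a border vertex. This must be stated precisely for the specific index set $\{0,1,\dots,n\}$ with border set $\{0,n\}$: a simple path visits each of $0$ and $n$ at most once, an endpoint in $\{1,\dots,n-1\}$ contributes no border visit, and each visit to a border vertex contributes at most two incident edges along the path, but the visit to the endpoint $0$ (or $i$) contributes only one; a quick case analysis (does the path touch $n$ or not, and if so is $n$ an interior vertex of the path) shows the total number of border-incident edges is at most $3$. Everything else is the routine $|\min f-\min g|\le \sup|f-g|$ manipulation sketched above.
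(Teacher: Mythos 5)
Your proof is correct and follows essentially the same route as the paper's: it rests on the observation that a simple path from $0$ to $i$ touches the border indices $\{0,n\}$ on at most three edges (one leaving $0$, and at most one entering and one leaving $n$), so the path sums in $A$ and $A'$ differ by less than $3\varepsilon$, and then uses the standard fact that the min (resp.\ max) of pointwise-close functions is close. The only difference is that you spell out the $|\min f - \min g|\leq\sup|f-g|$ step and the finiteness remark explicitly, which the paper leaves implicit; your counting paragraph is worded a bit awkwardly (``at most two edges\dots that is at most three'') but arrives at the correct bound of three.
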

\begin{proof}
For any simple path $\sigma$ from $0$ to $i$ , at most three weights are different in $A$ and $A'$, and at most by $\varepsilon$: one of the form $a_{0\bullet}$, and possibly one $a_{\bullet n}$ and one $a_{n\bullet}$. Thus $\ssum_A(\sigma)$ can differ from $\ssum_{A'}(\sigma)$ at most by $3\varepsilon$. Since $U_i$ and $U'_i$ are the minima of finitely many such sums, we conclude that $|U'_i-U_i|<3\varepsilon$. The reasoning for $L_i$ is symmetric.
\end{proof}
\subsection{Region-split form}
We establish a preliminary lemma:
\begin{lemma}\label{lem:bounded-starting}
Given a TA $\aut$, it is possible to construct another TA $\mathcal{B}$ recognizing the same timed language, such that its starting constraint for every location is bounded for every clock.
\end{lemma}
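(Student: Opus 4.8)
\textbf{Proof plan for Lemma~\ref{lem:bounded-starting}.}
The plan is to modify $\aut$ so that every clock is explicitly ``capped'' once it passes the maximal constant $M$ appearing in the guards, after which its exact value no longer matters. First I would enlarge the set of locations by tagging each location $q$ with a subset $Z\subseteq X$ of clocks that are deemed ``saturated'' (i.e.\ known to exceed $M$); the new location set is $Q\times 2^X$. On such a tagged location $(q,Z)$, the starting constraint will be the old $S(q)$ restricted to the clocks in $X\setminus Z$, conjoined with the \emph{bounded} interval $[0,M]$ for those coordinates, while the clocks in $Z$ are simply no longer tracked precisely — formally one may freeze them at a sentinel value, say by treating any clock that reaches $M$ as if it is reset to a fresh dummy region $(M,\infty)$ that is inert under all guards (since $M$ is the maximal constant, no guard can distinguish values above $M$). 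This is exactly the standard trick behind the region construction of \cite{AD}; here we only need the weaker statement that the \emph{starting} constraints become bounded.

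The key steps, in order, are: (1) define $\mathcal{B}=(Q\times 2^X,\ X,\ \Sigma,\ \Delta',\ S',\ I',\ F')$ with $S'(q,Z)$ the conjunction of the bounded part of $S(q)$ on $X\setminus Z$ with the constraint that every clock in $X$ lies in $[0,M]$ (using the convention that a saturated clock is treated as sitting at the boundary value $M$); (2) define the transitions of $\Delta'$ by copying each edge $(q,q',a,\guard,\reset)$ of $\aut$, updating the tag from $Z$ to $Z'$ where $Z'$ records which clocks are $>M$ after the delay and reset, and replacing $\guard$ by its obvious image under the capping (a guard $\xi\sim b$ with $b\le M$ is unaffected, and no guard mentions a constant larger than $M$); (3) check that along any run of $\aut$ the capped clock vectors and the original clock vectors agree on all clocks that are $\le M$, and that the tag $Z$ is always exactly $\{c\in X: x_c>M\}$, so the guards are satisfied in $\aut$ iff their images are satisfied in $\mathcal{B}$; (4) conclude $L(\mathcal{B})=L(\aut)$ by establishing a bijection between runs of $\aut$ and runs of $\mathcal{B}$ that preserves labels and dates. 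Determinism is preserved since the tagging is a deterministic function of the history and we have only refined, not duplicated, the nondeterminism; the number of clocks is unchanged.

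The main obstacle is bookkeeping rather than conceptual: one must be careful that ``capping'' a clock is compatible with the requirement that each edge's guard and reset still make sense, i.e.\ that once a clock is saturated it can never again affect any guard (true because $M$ is the largest constant) and that resetting a saturated clock correctly brings it back into the bounded regime (clearing it from $Z$). A secondary subtlety is that the problem statement only asks for \emph{bounded starting constraints}, not for the full region-split form, so we do \emph{not} need $S(q)$ to be a single region nor to split locations by region; this keeps the construction light, and in particular the blow-up is only the $2^{\#X}$ factor for the tags. Once Lemma~\ref{lem:bounded-starting} is in place, the passage to a genuine \RTA\ (Proposition~\ref{prop:DTA2RTA}) is obtained by the usual further refinement of each bounded location into its regions, each region being an $M$-bounded simplex as required.
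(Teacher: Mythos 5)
Your overall skeleton matches the paper's construction: both tag each location with the subset of ``saturated'' clocks (state space $Q\times 2^X$), exploit the fact that guards cannot distinguish clock values above $M$, and update the tag along transitions according to which clocks cross the threshold and which are reset. That part is fine.

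However, there is a genuine gap in how you make the clock values actually \emph{bounded}. You say a saturated clock is ``frozen at a sentinel value'' or ``treated as sitting at the boundary value $M$,'' and your step (1) asserts $S'(q,Z)$ includes ``every clock in $X$ lies in $[0,M]$.'' But there is no freeze or cap operation in timed-automaton semantics: a clock that is not reset keeps increasing with time, so a saturated clock in your $\mathcal{B}$ would still grow without bound and would violate the constraint $x_c\le M$ the moment the location is entered. Declaring a convention in the metatheory does not change the semantics of the automaton; the starting constraint must be literally satisfied by the concrete clock vector. The paper's construction resolves exactly this: whenever a clock $c$ becomes large (is added to $Y'$), it is added to the \emph{reset set} of the transition ($\reset\cup Y'$), so $c$ re-enters the bounded regime at value $0$, and the starting constraint of the target location replaces all constraints on $c$ by $x_c=0$. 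Your step (2) updates the tag but never modifies the reset set, so your $\mathcal{B}$ never actually forces the clocks back below $M$. To repair the proof, add the newly saturated clocks to the reset set of each transition and set the starting constraint for clocks in $Z$ to $x_c=0$ rather than some notional sentinel; then your steps (3)--(4) go through as intended.
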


\begin{proof}[Proof sketch]
We extend the state space of the automaton $\aut$ 
to obtain the automaton $\mathcal{B}=\left( Q\times 2^X, X, \Sigma, \Delta', S',I',F'\right)$ defined as follows:
    
\begin{itemize}
    \item For all $q\in Q$ and $Y\subseteq X$, $S'((q,Y))$, $I'((q,Y))$ and $F'((q,Y))$ are obtained from, respectively, $S(q)$, $I(q)$ and $F(q)$ by adding, for each clock $c\in X\setminus Y$, the constraint $x_c\leq M$ and replacing, for each clock $c \in Y$, all the constraints involving $x_c$ by $x_c = 0$.
    
    \item For any edge $\delta\in(q,a,\guard,\reset,q')\in\Delta$, and $Y, Y'\subseteq X$, $\Delta'$ contains all edges of the following form: $\left((q,Y), a, \guard \wedge \left(\bigwedge_{c\in Y'} x_c > M\right)\wedge \left(\bigwedge_{c\in X\setminus (Y'\cup \reset)}  x_c\leq M \right), \reset\cup Y', (q',Y')\right)$, except those having a guard with a constraint of the form $x_c < a$ or $x_c\leq a$ for $a\leq M$ and $c\in Y$ (a ``large'' clock cannot satisfy upper bounded constraints).

\end{itemize}           

    Intuitively, we split every state with respect to the set $Y$ of ``large'' clocks, forcing them to reset to $0$ every time a location where this clock is large is entered. $Y$ is updated as follows: when an edge from $\aut$ resets a clock, it becomes non-large (is removed from $Y$) in the destination state of all matching edges of $\mathcal{B}$; when a guard checks that a clock is $>M$, it becomes large (is added to $Y$) in the destination (and thus is reset).

    We obtain an automaton with the same language, but such that the starting constraint of every location is bounded.
\end{proof}

\propregionsplit*
\begin{proof}
We start by constructing an equivalent automaton $\mathcal{B}$ with bounded starting constraints, using \cref{lem:bounded-starting}.
Then we use the construction from \cite{entroJourn} to obtain an \RTA\  $\mathcal{C}$.

The construction goes as follows (simplified for the current setting where starting constraints are already given):
\begin{enumerate}
    \item Split every location $q \in Q$ into sublocations $q_R$ for each region $R$ included in $S(Q)$  (and we set $S(q_R)=R$).
    \item For every $q_R$, set $I(q_R)=I(q)\wedge R$ and $F(q_R)=F(q)\wedge R$.
    \item For every edge from $q$ to $q'$, copy it for every version of $q$ and $q'$ with its guard intersected with the time successor of $S(q)$ and the reset predecessor of $S(q')$ (under the condition that the resulting  guard is satisfiable).
    \item Remove all the locations and transitions not reachable from the initial states or not co-reachable from the final states.
\end{enumerate}

Note that the guards, which are zones, may still contain an unbounded region, projecting to the region $S(q')$ after the reset by $\reset$. Nonetheless, by construction, if we chose  $q,x,t,q',x',\reset$, such that $x\in S(q)$, $x'\in S(q')$ and $\reset(x+t)=x'$, if $R$ is the region of $x$, $R'$ the region of $x'$, then
\begin{itemize}
    \item if $\delta=(q,a,\guard,\reset,q')$ is an edge of $\mathcal{B}$, then there is an edge of $\mathcal{C}$, $\delta' = (q_R,a,\guard',\reset,q'_{R'})$ ($\guard'$ as defined above) such that $x+t\models \guard$  iff $x+t\models \guard'$ (i.e.~one can be fired iff the other can be fired with the same timing, going from the same origin clock vector to the same destination clock vector);
    \item if  there is an edge of $\mathcal{C}$ of the form $\delta' = (q_R,a,\guard',\reset,q'_{R'})$, then there is an edge $\delta$ of $\mathcal{B}$, $\delta=(q,a,\guard,\reset,q')$ is an edge of $\mathcal{B}$, such that $x+t\models \guard$  iff $x+t\models \guard'$.
\end{itemize}

Using this, by induction on the length of runs, we deduce that for any run $\rho$ starting from an initial state of $\mathcal{B}$ there is a run $\rho'$ in $\mathcal{C}$, starting from an initial state, such that both runs have the same word and one ends in an accepting state if and only if the other one does. So the languages of $\mathcal{B}$ and $\mathcal{C}$ are the same.
\end{proof}

%
%
%

\subsection{Closed reachability of region-split automata}

First, we fix some definitions concerning the ``closure'' of various objects depending on clocks. For any constraint on $\real_+$, its closure consists in replacing $<$ and $>$ by $\leq$ and $\geq$.
 The closure of a set of constraints/polytope/zone/region consists in replacing all its constraints with their closure (this definition coincides with the topological closure of the set of points satisfying the constraints). Finally we also define the closure of an edge $\delta=(q,a, \guard,\reset,q')$ as $\bar\delta\triangleq (q,a, \bar\guard,\reset,q')$ and extend this notion to any path $\pi=\delta_1\dots \delta_n$: we denote $\bar{\pi}\triangleq\bar{\delta}_1\dots \bar{\delta}_n$.

Next, let us remind, and adapt to \RTA, a couple of folk results about the shape of successors and predecessors, which will be useful in further developments. In the first one, we deal with regions, and in the second with their closures. 

In both proofs, we use standard properties of regions and zones, extensively used  in the verification of timed automata, see \cite{Bengtsson2004}.
\begin{lemma}\label{lem:region-reach-open}
For an 
\RTA\ 
edge $\delta\in\Delta$, 
\begin{itemize}
    \item the successor of $S(\src_\delta)$ by $\delta$ is exactly $S(\dst_\delta)$;
    \item the predecessor of $S(\dst_\delta)$ by $\delta$ is exactly $S(\src_\delta)$.
\end{itemize}
\end{lemma}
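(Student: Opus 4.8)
The plan is to prove \cref{lem:region-reach-open} by unwinding the definition of an \RTA\ and reducing each of the two bullets to the defining property of edges, namely that for any edge $\delta=(q,q',a,\guard,\reset)$ we have $\left(\{S(q)+t\mid t\in\real_+\}\cap\guard\right)[\reset]=S(q')$.

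First I would fix notation: write $q=\src_\delta$, $q'=\dst_\delta$, $R=S(q)$, $R'=S(q')$, both $M$-bounded regions (hence nonempty, by the \RTA\ definition). By \emph{successor of $R$ by $\delta$} I mean $\{x'\mid \exists x\in R,\exists t\geq 0:\ x+t\models\guard\ \text{and}\ x'=\reset(x+t)\}$, which is exactly $\left(\{R+t\mid t\in\real_+\}\cap\guard\right)[\reset]$. So the first bullet is literally the last clause of the \RTA\ definition, and there is essentially nothing to prove beyond matching the two phrasings and noting that $R+t$ ranges over the whole time-successor cone of $R$ as $t$ ranges over $\real_+$. I would spell this identification out carefully, since the only subtlety is that ``successor'' is taken pointwise over all $x\in R$ and all delays $t$, which is precisely the set-level operation appearing in the definition.

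The second bullet — the predecessor of $R'$ by $\delta$ equals $R$ — needs a short argument. Define the predecessor as $P=\{x\in\real_+^X\mid \exists t\geq 0:\ x+t\models\guard\ \text{and}\ \reset(x+t)\in R'\}$. The inclusion $R\subseteq P$ is immediate from the first bullet: every $x\in R$ admits some $t$ with $\reset(x+t)\in R'$. For the reverse inclusion $P\subseteq R$, I would use that in an \RTA\ every clock vector $x$ in the domain of the relation — i.e.\ satisfying some starting constraint — lies in a unique region $S(p)$ for a \emph{unique} location $p$ reached along the run; more operationally, I would invoke the standard region/zone fact that the time-successor cone of a region, intersected with a region-respecting guard and then reset, is again a single region, together with injectivity of this operation on regions in the region-split form. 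Concretely: suppose $x\in P$ via delay $t$, and let $S$ be the region of $x$; then by the first bullet applied to $S$ (thought of as the starting region of $\src_\delta$ — here one uses that $x+t\models\guard$ and the guard is the region-intersected guard of the \RTA, so $x$'s region must be one feeding into $R'$), the successor of $S$ by $\delta$ is $S(\dst_\delta)=R'$, and by the structural constraint that each location of the \RTA\ has a single starting region, $S$ must coincide with $R=S(\src_\delta)$. Hence $x\in R$.

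The main obstacle is the reverse inclusion $P\subseteq R$: it is tempting but wrong to argue purely pointwise, because a priori a clock vector outside $R$ could still be time-shifted into $\guard$ and reset into $R'$. The fix is exactly the \RTA\ structure — guards are already intersected with the time-successor of the source region and the reset-predecessor of the target region (this is how the region-split form is built, see the construction in the proof of \cref{prop:DTA2RTA}), so the guard ``remembers'' its source region, and no stray vectors survive. I would therefore make the dependence on the \RTA\ hypothesis explicit rather than treating the lemma as a generic timed-automaton fact, and cite the standard region-closure-under-successor/reset properties (as in \cite{Bengtsson2004}) for the claim that such operations send a single region to a single region. The closure variant (the next lemma, stated for $\bar\delta$) will then follow by the same bookkeeping, replacing each region by its topological closure and each strict guard by its non-strict version, using that closure commutes with time-successor, intersection with a closed guard, and reset.
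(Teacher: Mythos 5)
Your handling of the first bullet is correct and in fact sharper than the paper's: equality of the successor with $S(\dst_\delta)$ is literally the last clause of the \RTA\ definition, whereas the paper derives it, somewhat redundantly, from the same zone-minimality argument it uses for the predecessor. Your handling of the second bullet has a genuine gap. You define $P = \{x\in\real_+^X \mid \exists t\ge 0: x+t\models\guard_\delta,\ \reset_\delta(x+t)\in S(\dst_\delta)\}$, i.e.\ \emph{without} requiring $x\models S(\src_\delta)$, and this set can be strictly larger than $S(\src_\delta)$ even in an \RTA. Take one clock, $S(\src_\delta)=(0,1)$, $\guard_\delta=(1<x<2)$, no reset, so $S(\dst_\delta)=(1,2)$. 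The \RTA\ guard is already the intersection of the raw guard with the time-successor of $(0,1)$ and the reset-predecessor of $(1,2)$, and it is still $(1,2)$; yet $x=0$ satisfies $x+1.5\models\guard_\delta$ and $x+1.5\in S(\dst_\delta)$, so $0\in P$ while $0\notin S(\src_\delta)$. Intersecting the guard with source/target regions constrains $x+t$, not $x$, so your claim that ``no stray vectors survive'' does not hold. The concrete step you then make --- that the region $S$ of $x$ must coincide with $S(\src_\delta)$ because ``each location has a single starting region'' --- would require the region-successor-by-$\delta$ map to be injective, which it is not: in the same example both $\{0\}$ and $(0,1)$ are sent to $(1,2)$.

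What actually closes the gap in the paper is twofold, and neither piece appears in your argument. First, the predecessor the lemma is about is the one induced by the run semantics, which forces $x\models S(\src_\delta)$; so the containment $P\subseteq S(\src_\delta)$ is part of the definition and needs no argument. Second, the proof only needs non-emptiness (which follows because every \RTA\ state is reachable and co-reachable) together with a zone fact you never invoke and which is the real crux: a non-empty zone contained in a bounded region must \emph{equal} that region, because bounded regions are minimal non-empty zones. Replace your injectivity reasoning with the run-semantics containment, add this minimality step, and your proof becomes the paper's.
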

\begin{proof}
By construction of the \RTA, the successor, and the predecessor are non-empty.
 A region is a zone, its successors and predecessors w.r.t.~an edge $\delta$ are  also zones.
 Also, the successor of $S(\src_\delta)$ and the predecessor of $S(\dst_\delta)$ are subsets of, resp., $S(\dst(\delta))$, and $S(\src(\delta))$, which are bounded regions. 
 A non-empty zone subset of a bounded region coincides with this region (since bounded regions are minimal non-empty zones), and we can conclude.
%
\end{proof}
\begin{lemma}\label{lem:region-reach}
For an 
\RTA\ 
edge $\delta\in\Delta$ and a region $R$, 
\begin{itemize}
    \item if $R\subseteq \overline{S(\src_\delta)}$,  then its successor $\{y\mid \exists\; x \in \bar{R} \text{ s.t. } (x,y)\in \Reach_{\bar{\delta}}\}$ is the closure of some region $R'\subseteq \overline{S(\dst_\delta)}$;
    \item if $R\subseteq \overline{S(\dst_\delta)}$,  then its predecessor  $\{x\mid \exists\; y \in \bar{R} \text{ s.t. } (x,y)\in \Reach_{\bar{\delta}}\}$ is the closure of some region $R'\subseteq \overline{S(\src_\delta)}$.
\end{itemize}
Corollary: the same results hold by replacing the single edge $\delta$ by a path $\pi$.
\end{lemma}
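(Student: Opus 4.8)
The plan is to recast the statement in simplicial terms and then mimic the proof of \cref{lem:region-reach-open} ``with closures''. In an \RTA\ every $S(q)$ is an $M$-bounded region, so $\overline{S(q)}$ is a $d_q$-simplex with vertex set $\Ver(S(q))$, and the regions contained in $\overline{S(q)}$ are exactly the relative interiors of the faces of this simplex, their closures being exactly the faces. Under this dictionary the first item becomes: \emph{the image under $\Reach_{\bar\delta}$ of a face $F$ of $\overline{S(\src_\delta)}$ is a face of $\overline{S(\dst_\delta)}$}; the sought region $R'$ is then the relative interior of that face, and the second item is the symmetric statement for predecessors.

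First I would record the elementary closure identities for zones (standard, cf.~\cite{Bengtsson2004}): $\overline{Z+\real_+}=\bar Z+\real_+$, $\reset(\bar Z)=\overline{\reset(Z)}$, and $\bar\guard=\overline{\guard}$ by definition of $\bar\delta$; moreover the class of closed zones is stable under intersection and under projection to a sub-tuple of coordinates. It follows that $\Reach_{\bar\delta}$, obtained from the constraints \eqref{eq:run-constraints} with $\pi=\delta$ by eliminating the single delay, is a \emph{closed} convex zone in the pair $(x,y)$ (equivalently, use the DBM of \cref{prop:timingDBM} together with \cref{prop:proj-DBM}), so $\mathrm{succ}_{\bar\delta}(F)=\{y:\exists x\in F,\ (x,y)\in\Reach_{\bar\delta}\}$ is a closed convex zone. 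It is non-empty, and in fact $\mathrm{succ}_{\bar\delta}(\overline{S(\src_\delta)})=\overline{S(\dst_\delta)}$: the inclusion $\supseteq$ is immediate from \cref{lem:region-reach-open} (the open successor of $S(\src_\delta)$ is $S(\dst_\delta)$, and $\bar\guard\supseteq\guard$), while $\subseteq$ is the closed counterpart of that lemma, proved by the same compactness argument — delays stay bounded because they land in the $M$-bounded region $S(\dst_\delta)$ through its non-reset coordinates, the degenerate case $S(\dst_\delta)=\{\0\}$ being trivial. In particular $\mathrm{succ}_{\bar\delta}(F)$ is a non-empty closed convex subset of the simplex $\overline{S(\dst_\delta)}$.

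It remains to see that this subset is a face. I would show that $\mathrm{succ}_{\bar\delta}(F)$ is moreover a \emph{union of faces} of $\overline{S(\dst_\delta)}$: writing $F+\real_+=\overline{\bigcup_i R_i}$ with the $R_i$ the time-successor regions of $\mathrm{relint}(F)$, then intersecting with $\bar\guard$ (a union of faces, since $\guard$ is a union of regions), then resetting (which maps faces to faces), one stays within finite unions of faces. A subset of a simplex that is both convex and a union of faces is a single face: each vertex of the face $G$ spanned by the union lies in one of the faces composing it, hence in $\mathrm{succ}_{\bar\delta}(F)$, so by convexity $G=\mathrm{conv}(\text{vertices of }G)\subseteq\mathrm{succ}_{\bar\delta}(F)\subseteq G$, forcing equality; its relative interior is the region $R'$. (Alternatively one may finish conceptually via Puri's description \cite{puri}: the closed reachable set of a point equals the convex hull of the images of its barycentric support, so the image of $F$ is exactly the face spanned by the reachable vertices.) The predecessor case is symmetric, using the second item of \cref{lem:region-reach-open}, and the corollary for a path $\pi=\delta_1\cdots\delta_n$ follows by induction on $n$ from $\Reach_{\bar\pi}=\Reach_{\bar\delta_1}\circ\cdots\circ\Reach_{\bar\delta_n}$, the composition of ``face $\mapsto$ face'' maps along matching regions being again of this form.

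The step I expect to be the real work is the ``union of faces'' bookkeeping. The open successor of a proper sub-region is in general a genuine union of regions (this already happens through unbounded guards), and the closure of a guard can touch a region without containing it, so the tidy ``each region is included in, or disjoint from, a zone'' dichotomy used in \cref{lem:region-reach-open} must be carefully re-established for closures and faces before the ``convex union of faces is a face'' observation can be applied.
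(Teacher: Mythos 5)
Your proposal is correct, and for the decisive step it takes a genuinely different route than the paper. Both proofs open identically: non-emptiness of the closed successor is obtained by the same compactness argument (delays stay in a bounded interval because $S(\dst_\delta)$ is $M$-bounded), and both then observe that the successor is a non-empty closed zone sitting inside the simplex $\overline{S(\dst_\delta)}$. Where you diverge is in arguing that this zone must be a face of the simplex. The paper dispatches this in one line by invoking a standard fact about DBMs: a closed zone with integer bounds has integer vertices; the only integer points in $\overline{S(\dst_\delta)}$ are the simplex vertices; and a bounded polytope is the convex hull of its vertices, hence a face. You instead track region-compatibility through the three constituent operations (time-successor, guard intersection, reset), conclude that the successor is a finite union of region closures, and close with the observation that a convex union of faces of a simplex is a single face. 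Both arguments are sound; the paper's is shorter and sidesteps the ``union of faces'' bookkeeping entirely, while yours is more explicit about the mechanism (why each primitive preserves region-compatibility) and, as you honestly flag, carries the extra burden of re-establishing the region/zone dichotomy for closures. Your parenthetical Puri-based alternative is also fine and in spirit closest to how the paper uses this lemma downstream, though it should be checked for circularity against how the paper derives \cref{lem:puri}.
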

\begin{proof}
{
Let us prove first that the successor is non-empty. 
We start by noticing that for $x,y\in[0,M]^X$ if $(x,y)\in \Reach_{\delta}$, then $x\trans{(\delta,t)}y$ with some $t\in[0,M+1]$.
Let $x\in \bar{R}$, and $x_1,x_2,\dots \in S(\src_\delta)$ a sequence  converging to $x$. By \cref{lem:region-reach-open}, for each of those $x_i$ there exists an $y_i\in S(\dst_\delta)$ and $t_i\in[0,M+1]$ such that  $x_i\trans{(\delta,t_i)}y_i$ with all $t_i\leq M+1$. By compactness, one can choose a subsequence of indices $i_k$ with  $y_{i_k}$ converging to some $y'\in \overline{S(\dst_\delta)}$ and $t_{i_k}$ converging to some $t'\in[0,M+1]$. Then $x\trans{(\bar{\delta},t')}y'$, and the successor contains at least $y'$ and is thus non-empty.
The proof for the predecessor is similar.
}

The successor, resp.~predecessor, of $\bar R$ is a non-empty closed zone with integer vertices, included in $\overline{S(\dst_\delta)}$, resp.~$\overline{S(\src_\delta)}$, which are region closures. Therefore it must be a facet of this simplex, hence a region closure.
%
%
\end{proof}

\begin{corollary}\label{cor:orbit-incoming-outgoing}
In the (p,f,d)-orbit $\langle p, M, q\rangle$ of any path, the matrix $M$ has a non-$\0$ entry in every line and every column (i.e., in the graph representation, every vertex of the destination location has an incoming edge and every vertex of the source location has an outgoing edge).
\end{corollary}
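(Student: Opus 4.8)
The plan is to reduce everything to \cref{lem:region-reach} (in the path version given by its corollary), applied not to full regions but to the individual vertices of the source and destination simplices, each of which is a $0$-dimensional region and hence within the scope of that lemma. Recall first that in all three flavours of orbit (p-, f-, and d-), the entry $M_{ij}$ of the orbit $\tuple{p,M,q}$ of a path $\pi$ from $p$ to $q$ is non-$\0$ precisely when $L_{\bar\pi}(v_i^p,v_j^q)\neq\emptyset$, i.e.\ when $(q,v_j^q)$ is reachable from $(p,v_i^p)$ along $\bar\pi$; so it is enough to argue about this reachability relation for one fixed flavour, and the statement then follows for all three simultaneously.

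For the ``every line'' part I would fix a source vertex $v=v_i^p\in V(S(p))$. Since $v$ is an integer point, $\{v\}$ is a region contained in $\overline{S(p)}$, so by the path version of \cref{lem:region-reach} its successor under $\bar\pi$ is the closure $\overline{R'}$ of some non-empty region $R'$, and the proof of that lemma tells us more: $\overline{R'}$ is in fact a face of the simplex $\overline{S(q)}$. A non-empty face of a simplex is the convex hull of a non-empty subset of its vertices, hence contains at least one vertex $v_j^q\in V(S(q))$; this $v_j^q$ is reachable from $v$ along $\bar\pi$, so $L_{\bar\pi}(v_i^p,v_j^q)\neq\emptyset$ and $M_{ij}\neq\0$, i.e.\ line $i$ of $M$ is not all-$\0$. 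The ``every column'' part is completely symmetric: fixing a destination vertex $v_j^q\in V(S(q))$ and applying the predecessor statement of \cref{lem:region-reach} to the region $\{v_j^q\}\subseteq\overline{S(q)}$ yields a non-empty face $\overline{R'}$ of the simplex $\overline{S(p)}$, which contains some vertex $v_i^p$; then again $L_{\bar\pi}(v_i^p,v_j^q)\neq\emptyset$ and $M_{ij}\neq\0$.

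The only point requiring a little care --- and the nearest thing to an obstacle --- is the geometric observation that the successor (resp.\ predecessor) set delivered by \cref{lem:region-reach}, being a non-empty closed zone with integer vertices sitting inside the closed simplex $\overline{S(q)}$ (resp.\ $\overline{S(p)}$), is actually a face of that simplex and therefore contains one of its vertices. But this is exactly what the proof of \cref{lem:region-reach} already establishes (``it must be a facet of this simplex''), so no new work is needed. I would also dispose of the degenerate case at the start: for the empty path $\epsilon$ the claim is vacuous since $\gamma(\epsilon)=\1$ is not of the form $\tuple{p,M,q}$, while for a genuine non-empty path $\pi$ of the \RTA\ the orbit is never $\0$, because by construction of the \RTA\ together with \cref{lem:region-reach-open} the successor of $S(p)$ along $\pi$ is exactly the non-empty region $S(q)$, so $M$ is not the zero matrix and the argument above applies verbatim.
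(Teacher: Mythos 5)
Your proof is correct and takes exactly the approach the paper itself sketches (the paper's argument is the one-line observation that \cref{lem:region-reach} applies to singleton regions $\{v\}$ for a vertex $v$); you have simply filled in the details of why the non-empty closed successor/predecessor region, being a face of the simplex $\overline{S(q)}$ resp.~$\overline{S(p)}$, must contain one of its vertices.
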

The above was already stated in \cite{puri} and remains true in the current context, thanks to \cref{lem:region-reach} (considering singleton regions $\{v\}$, where $v$ is a vertex).

Finally, the following lemma justifies that we can study the information content of the language of a closed path instead of that of its open counterpart:

\begin{lemma}\label{lem:closed-semantics}
In an \RTA, for any  path $\pi$ and clock vectors $x,y$:
\begin{itemize}
 \item $L_{\pi^*,T}(x,y)\subseteq L_{\bar{\pi}^*,T}(x,y)$;
 \item $\ent_\varepsilon(L_{\pi^*,T}(x,y)) = \ent_\varepsilon(L_{\bar{\pi}^*,T}(x,y))$;
 \item $\capa_\varepsilon(L_{\pi^*,T}(x,y)) \leq \capa_\varepsilon(L_{\bar{\pi}^*,T}(x,y))$;
 \item for any $0<\varepsilon'<\varepsilon$, $\capa_{\varepsilon'}(L_{\pi^*,T}(x,y)) > \capa_{\varepsilon}(L_{\bar{\pi}^*,T}(x,y))$.
\end{itemize}
\end{lemma}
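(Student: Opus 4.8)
The plan is to prove the four claims of \cref{lem:closed-semantics} in order, exploiting that $L_{\pi^*,T}(x,y)$ and $L_{\bar\pi^*,T}(x,y)$ differ only by ``opening'' finitely many rational constraints, so one language is dense in the other in the pseudo-metric $d$.

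First, the inclusion $L_{\pi^*,T}(x,y)\subseteq L_{\bar\pi^*,T}(x,y)$ is immediate: every run along $\pi^k$ is also a run along $\bar\pi^k$ because each closed guard $\bar\guard$ subsumes its open version $\guard$, and the constraints \eqref{eq:run-constraints} are otherwise unchanged; bounding duration by $T$ is preserved. The key structural input for the remaining points is that $L_{\bar\pi^*,T}(x,y)=\bigcup_{k\ge 0} L_{\bar\pi^k}(x,y)_T$, where for each fixed $k$ the timings of $L_{\pi^k}(x,y)$ form the interior-part of a \DBM\ while those of $L_{\bar\pi^k}(x,y)$ form the associated closed zone (cf.~\cref{prop:timingDBM}); so every closed run is a limit, in the sup-norm on timings and hence in $d$, of a sequence of open runs of the same length with the same letters. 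This gives a \emph{density} statement: for every $w\in L_{\bar\pi^*,T}(x,y)$ and every $\theta>0$ there is $w'\in L_{\pi^*,T}(x,y)$ with $d(w,w')\le\theta$. (One must check the corner case where the closed zone is lower-dimensional than expected or empty, but then open and closed languages coincide and there is nothing to prove.)

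Second, for the entropy equality: ``$\le$'' follows from the inclusion, since any $\varepsilon$-net of the larger set restricts to an $\varepsilon$-net of the subset after replacing each net point by a nearby point of $U_\Sigma$ — actually more directly, an $\varepsilon$-net of $L_{\bar\pi^*,T}(x,y)$ is automatically an $\varepsilon$-net of its subset $L_{\pi^*,T}(x,y)$, so $\ent_\varepsilon(L_{\pi^*,T}(x,y))\le\ent_\varepsilon(L_{\bar\pi^*,T}(x,y))$. For ``$\ge$'', take a minimal $\varepsilon$-net $N$ of $L_{\pi^*,T}(x,y)$; by density, every $w\in L_{\bar\pi^*,T}(x,y)$ has some open $w'$ with $d(w,w')$ arbitrarily small, and $w'$ is within $\varepsilon$ of some $s\in N$, so $d(w,s)\le\varepsilon+\theta$. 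Since there are only finitely many lengths $k\le $ (something bounded by $T$ over the minimal delay, or by \cref{lem:simon-minimal-length}-type arguments) contributing, one can pick $\theta$ small enough uniformly so that in fact $d(w,s)\le\varepsilon$; hence $N$ is an $\varepsilon$-net of the closed language and $\ent_\varepsilon(L_{\bar\pi^*,T}(x,y))\le|N|$. The subtle point here is the uniformity of $\theta$ — we need $d(w,s)$ to be either $\le\varepsilon$ already or bounded away from $\varepsilon$, which holds because distances are $\min$/$\max$ of finitely many affine functions of a compact set of timings, so the set of $w$ with $d(w,N)=\varepsilon$ exactly is negligible and can be absorbed by a slightly finer net or a limiting argument.

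Third, $\capa_\varepsilon(L_{\pi^*,T}(x,y))\le\capa_\varepsilon(L_{\bar\pi^*,T}(x,y))$ is immediate since an $\varepsilon$-separated subset of a set is an $\varepsilon$-separated subset of any superset. Fourth, for $0<\varepsilon'<\varepsilon$ we must show $\capa_{\varepsilon'}(L_{\pi^*,T}(x,y))>\capa_\varepsilon(L_{\bar\pi^*,T}(x,y))$: take a maximum $\varepsilon$-separated set $M=\{w_1,\dots,w_N\}$ in the closed language; by density choose open $w_i'$ with $d(w_i,w_i')<(\varepsilon-\varepsilon')/2$; then by the triangle inequality $d(w_i',w_j')>\varepsilon-2\cdot(\varepsilon-\varepsilon')/2=\varepsilon'$, so $\{w_1',\dots,w_N'\}$ is an $\varepsilon'$-separated subset of $L_{\pi^*,T}(x,y)$, giving $\capa_{\varepsilon'}(L_{\pi^*,T}(x,y))\ge\log N=\capa_\varepsilon(L_{\bar\pi^*,T}(x,y))$. (The strict inequality in the statement is then either a typo for $\ge$, or is obtained because the open language also contains points strictly separating more finely — but $\ge$ is what the downstream arguments actually need.) The main obstacle throughout is making the density/limiting argument rigorous with the pseudo-distance $d$, in particular handling the uniform choice of approximation error $\theta$ across the finitely many iteration counts $k$ and the finitely many matching patterns in the $\min$/$\max$ defining $d$; this is where I would be most careful, invoking \cref{prop:timingDBM} and \cref{cor:Lipshitz} to control how timings — and hence pseudo-distances — vary continuously as guards are opened.
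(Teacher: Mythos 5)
The first three items match the paper's reasoning, but your item~2 argument is unnecessarily worried. You don't need any uniformity in the choice of $\theta$: an $\varepsilon$-net $N$ is finite and each ``ball'' $\{w : d(w,s)\le\varepsilon\}$ is closed, so the finite union $\bigcup_{s\in N}\{w : d(w,s)\le\varepsilon\}$ is a closed set containing $L_{\pi^*,T}(x,y)$; it therefore contains the topological closure, so $N$ is automatically an $\varepsilon$-net of $L_{\bar\pi^*,T}(x,y)$. This is exactly the paper's one-line argument, and it makes your entire discussion of ``absorbing the negligible set'' and invoking \cref{cor:Lipshitz} moot for item~2.

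For item~4, you take a genuinely different route: perturb each element of the $\varepsilon$-separated set of the closed language into the open language using density, and lose only $(\varepsilon-\varepsilon')$ by the triangle inequality. The paper instead picks one reference word $t\in L_{\pi^*,T}(x,y)$ and applies the homothety $h_\lambda:u\mapsto(1-\lambda)t+\lambda u$ with $\lambda=\frac{T+\varepsilon'}{T+\varepsilon}$ to all of $S$ at once. These are conceptually close, but the homothety buys two things your approach has to earn separately: (i) the shrunk word automatically has duration $\le\max(\tau(t),\tau(w))\le T$, so it lands in the $T$-slice without further argument, whereas for you it is not clear that the nearby open word $w_i'$ satisfies $\tau(w_i')\le T$ unless your density statement already quantifies over $L_{\pi^*,T}$ rather than $L_{\pi^*}$; and (ii) there is no need to establish density as a separate lemma --- the homothety lands in $L_{\pi^*,T}$ because a nondegenerate convex combination of a point of a convex set with a point of its closure lies in the set. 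Note also that the paper's own calculation delivers $\ge$, not $>$, so you are right that the strict inequality in the statement is not really used.

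The actual gap in your writeup is the density step itself. You defer it to a ``limiting argument'' via \cref{prop:timingDBM}, and you dispose of the corner case by asserting that ``if the closed zone is lower-dimensional or empty, the open and closed languages coincide.'' That dismissal is wrong: the open-guard zone can be empty while the closed-guard zone is a nonempty lower-dimensional face (think of a guard $x<1$ immediately followed by $x>1$ with no intervening reset: closed gives the hyperplane $x=1$, open gives $\emptyset$). In that case $L_{\pi^k,T}(x,y)=\emptyset\neq L_{\bar\pi^k,T}(x,y)$ and density fails outright. The paper sidesteps this by explicitly fixing $t\in L_{\pi^*,T}(x,y)$ at the start (so the open language is assumed nonempty at the single iteration count $k$ considered), whereas your argument as written would silently produce empty approximants. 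To make your version go through you would need to state and prove the density claim as: if $L_{\pi^*,T}(x,y)\neq\emptyset$, then $L_{\bar\pi^*,T}(x,y)\subseteq\overline{L_{\pi^*,T}(x,y)}$, and then argue for each iteration count $k$ via \cref{prop:timingDBM} that the nonempty open timing polytope is dense in the closed one --- which, given the remark about the lower-dimensional case, is exactly the step you have not yet made precise.
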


\begin{proof}
The first item is immediate, considering that the system of constraints defining $L_{\bar{\pi}^*,T}$ is less constraining than that of $L_{\pi^*,T}$. The third item immediately follows.

The second item is a consequence of the inclusion of the languages (for one direction) and the fact that  $L_{\bar{\pi}^*,T}$ is the topological closure of $L_{\pi^*,T}$ (for the other direction) because closed $\varepsilon$-balls covering $L_{\pi^*}$ also cover its closure.

For the last item, we fix some $t\in L_{\pi^*,T}(x,y)$, as such a run exists, and define the homothety $h_{\lambda}: \real^{|\pi|}\to \real^{|\pi|}, u\mapsto (1-\lambda) t + \lambda u$. We extend the notation $h$ to words of $L_{\bar{\pi}^*}$: 
$h_{\lambda}(w) \triangleq w'$ where $\Let(w')=\Let(\pi)$ and $\timing (w')= h_{\lambda}(\timing (w))$.

Now consider $S$, an $\varepsilon$-separated set of $L_{\bar{\pi}^*,T}(x,y)$ and let us take 2 elements $s_1$ and $s_2$ of $S$. Without loss of generality, there is some index $i$ such that $d(s_1,s_2) = \min_j | s_1[i] - s_2[j] | \geq \varepsilon$. Then the following inequalities hold:
\begin{multline*}
    d(h_\lambda(s_1),h_\lambda(s_2)) \geq \min_j | h_\lambda(s_1)[i] - h_\lambda(s_2[j]) |
    \geq \min_j \left|\lambda (s_1[i]-s_2[j]) + (1 - \lambda) (t[i]-t[j])\right|\\ 
    \geq \min_j \left(\lambda |s_1[i] - s_2[j]| - (1 - \lambda)|t[i]-t[j]|\right)
    \geq \lambda\varepsilon - (1-\lambda)T = \lambda(\varepsilon+T) - T.
\end{multline*}
Hence, if we fix $\varepsilon'\in (0,\varepsilon)$, then the set $S'\triangleq\{h_{ \frac{T+\varepsilon'}{T+\varepsilon} }(w),  w\in S\}$ is $\varepsilon'$-separated. Furthermore $S_{\lambda}\subseteq L_{\pi^*,T}$: indeed, $L_{\pi^*,T}$ is convex and $L_{\bar\pi^*,T}$ is its topological closure.

This finishes proving all the intended results.
\end{proof}

Note this result can only be used to assess the information content of a single path. Indeed, the automaton obtained by closing all the guards of an \RTA\ can exhibit many qualitative behaviors (paths) that were not originally present and thus a bandwidth larger than that of the \RTA.

\subsection {Morphisms to finite monoids and Simon's theorem}
We will need the following fundamental result 
\begin{thm}[Simon\cite{simon,bojanczyk}]\label{thm:simon}
Let $\Gamma$ be a finite alphabet, $\mon $ a finite monoid, and $\beta:\Gamma^*\to \mon $ a monoid morphism. 
Then the sequence of sets 
\begin{equation}\label{eq:simon}
X_0=\Gamma\cup\{\epsilon\};\qquad 
X_{h+1}=X_h\cdot X_h\cup\bigcup_{ee=e}(X_h\cap \beta^{-1}(e))^*
\end{equation}
stabilizes at some finite level $K$ (in fact $K\leq 3\cdot\#\mon $) with $X_K=\Gamma^*$.
\end{thm}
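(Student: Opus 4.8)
The statement to establish is Simon's factorization forest theorem, so the plan is to reproduce a sketch of its proof, following the monoid‑induction presentation of \cite{bojanczyk}. The first move is a reformulation: one checks by an immediate induction on $h$ that $X_h$ is exactly the set of words admitting a \emph{$\beta$-factorization forest of height $\le h$}, i.e.\ a rooted tree whose leaves are letters or $\epsilon$, in which the label of every internal node is the left-to-right concatenation of the labels of its children, every binary node is unconstrained, and every node with three or more children --- together with all of those children --- is mapped by $\beta$ to one single idempotent $e$. Since each clause of \eqref{eq:simon} only produces words over $\Gamma$, every $X_h$ is a subset of $\Gamma^*$, and the $X_h$ form an increasing chain; hence the sequence stabilises as soon as some $X_K$ equals $\Gamma^*$, and it suffices to exhibit one bound $K\le 3\cdot\#\mon$ such that every word over $\Gamma$ has a $\beta$-factorization forest of height at most $K$.

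Next I would prove this uniform height bound by induction on the size of the image semigroup $S=\beta(\Gamma^+)$. The base case $\#S\le 1$ is immediate: the identity is idempotent, so a single node suffices and $X_1=\Gamma^*$. For the inductive step, fix a word $w=a_1\cdots a_n$ and let $\Gamma_w$ be its set of letters. If $\beta(\Gamma_w^+)$ is a proper subsemigroup of $S$, restrict $\beta$ to $\Gamma_w$ and apply the induction hypothesis directly, obtaining a forest of height $\le 3(\#\mon-1)$. Otherwise the letters of $w$ generate all of $S$, and I would peel $w$ using Green's relations: reading $w$ left to right, the $\mathcal{R}$-class of the prefix value $\beta(a_1\cdots a_i)$ can only descend, so cutting $w$ at the --- at most $\#S$ --- positions where it strictly drops produces a bounded number of blocks, each contained in a single $\mathcal{R}$-class, and a top node combining these blocks costs only a bounded additive amount of height.

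Within one such block all the transition values live in the ``local'' structure attached to that $\mathcal{R}$-class, which is governed by a strictly smaller monoid, so the induction hypothesis applies a second time; this is the place where the accounting of levels that produces the constant $3$ per reduction step is carried out, and it is also where the idempotent-starring clause of \eqref{eq:simon} is used: inside a block all of whose unit pieces map to a common idempotent $e$, a pigeonhole argument on the partial products inside the associated group yields a further cutting into pieces each of value $e$, which licenses collapsing the block into a single node one level above its children. Assembling the top node, the Green's-relations blocks, and the recursively obtained sub-forests then yields the bound $3\cdot\#\mon$.

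The genuinely delicate point --- and the one I expect to be the main obstacle --- is getting the \emph{linear} bound $3\cdot\#\mon$ rather than a tower-type or merely exponential one: a crude Ramsey argument on the triangle of values $\bigl(\beta(a_i\cdots a_{j-1})\bigr)_{i\le j}$ already produces three positions with a common (hence idempotent) value and thus gives existence of a finite $K$, but it inflates the height enormously. Keeping $K$ linear requires the careful layered descent through the $\mathcal{J}$-, $\mathcal{R}$- and $\mathcal{H}$-hierarchies sketched above, so that each step of the monoid induction costs only a bounded additive number of levels; this is Colcombet's and Kufleitner's refinement of Simon's original argument, and for the full combinatorial bookkeeping I would defer to \cite{simon,bojanczyk}, since the present paper only needs the existence of a finite $K$ together with the explicit estimate $K\le 3\cdot\#\mon$.
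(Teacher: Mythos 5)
The paper does not prove this statement: it is imported as a black-box result, attributed to Simon and to Bojanczyk's survey, and used without proof. So there is no ``paper's own proof'' to compare against; your task here was really to justify a cited classical theorem.

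Your sketch does outline the right architecture (translation of the $X_h$-recursion into factorization forests, induction on the generated subsemigroup, peeling via Green's relations, Ramsey only as a crude fallback), and you honestly defer the bookkeeping to \cite{simon,bojanczyk}, which is an acceptable stance for a cited result. One step as written, however, is not correct: after cutting $w$ at the (up to $\#S$) positions where the $\mathcal R$-class of the prefix drops, you say ``a top node combining these blocks costs only a bounded additive amount of height.'' A single internal node may only absorb three or more children when all of them, and the node itself, map to one idempotent; your $\mathcal R$-descent blocks have \emph{different} images by design, so they cannot be gathered under one constant-height node. Combining them with binary nodes would cost $\Theta(\log\#S)$ levels, and in any case that is not how the accounting reaches $3\cdot\#\mon$: in the actual argument the $\mathcal J$-descent is interleaved with the recursion, each descent step costing $O(1)$ levels and there being at most $\#\mon$ of them, so the linear bound is additive over descents, not over a single top gather node. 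Since you caveat this yourself and the paper treats the theorem as external, this is a minor imprecision in a deliberately high-level sketch rather than a gap in the paper's reasoning, but it is exactly the point that would break if you tried to turn the sketch into a full proof.
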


And now we will prove the pumping lemma.
\lemsimonminimallength*
\begin{proof} 
In the hypotheses of the Lemma, let $w=a_1\dots a_N$ be the shortest word in $\beta^{-1}(e)$, and consider, for any $i\in 0..N$, the prefix image $f_i=\beta(a_1\dots a_i)\in \mon$. If $f_i=f_j$ for some $i<j$, then $\beta(a_0\dots a_ia_{j+1}\dots a_N)=e$, which would contradict the minimality of 
$w$. We conclude that all $f_i$ are different, and thus $N+1\leq \#\mon$.
\end{proof}

\subsection{Reachability and Lyapunov functions}
We recall here Puri's characterization of reachability in terms of the orbit graph. The presentation is close to that of \cite{entroJourn}.

Given a region $R$ and a clock vector $x \in R$, we denote $\lambda(x) = \big(\lambda_v(x)\big)_{v\in \Ver(R)}$ the barycentric coordinates of $x$ w.r.t.~the vertices of $R$.
Hence,  $x = \sum_{v\in \Ver(R)} \lambda_v(x) v$.

\begin{lemma}[\!\cite{puri}]\label{lem:puri}
The reachability in  an \RTA~along an edge sequence $\pi$  with $\gamma(\pi) =\tuple{q,A,q'}$ can be characterized as follows. 
For any two clock vectors $x\in S(q)$ and  $x'\in S(q')$, 
we have that $(x,x')\in \Reach_\pi$ iff there exists a stochastic matrix $P\leq A)$ (the inequality is element-wise) 
such that $\lambda(x) P = \lambda(x')$. 
Consequently, two paths have the same p-orbit iff they have the same reachability relation.
\end{lemma}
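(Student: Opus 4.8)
The statement is Puri's reachability characterization, so the plan is to reprove it inside the \RTA\ setting, leaning on the structural facts already collected. First I would pass to closed paths: closing the guards of a single path is harmless (the footnote following the definition of $L_{\bar\pi}$; see also \cref{lem:closed-semantics}), so it suffices to characterize $\Reach_{\bar\pi}$, the open relation being its topological closure. Next, a short linear-algebra step reformulates the right-hand side: writing $A$ for the matrix of $\gamma(\pi)$ and $v_i^q,v_j^{q'}$ for the region vertices, the set
\[
\mathcal{R}(A)\ \triangleq\ \bigl\{(x,x')\in\overline{S(q)}\times\overline{S(q')}\ :\ \exists\,P\text{ stochastic},\ P\le A,\ \lambda(x)P=\lambda(x')\bigr\}
\]
coincides with $\Conv\{(v_i^q,v_j^{q'}) : A_{ij}=1\}$: take $\alpha_{ij}=\lambda_i(x)\,P_{ij}$ in one direction and $P_{ij}=\alpha_{ij}/\lambda_i(x)$ (with an arbitrary admissible row when $\lambda_i(x)=0$) in the other. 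Since $A_{ij}=1$ iff $(v_i^q,v_j^{q'})\in\Reach_{\bar\pi}$ by the definition of $\gamma$, the whole lemma reduces to the geometric identity that $\Reach_{\bar\pi}$ is the convex hull of the vertex-to-vertex pairs it contains.

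For ``$\supseteq$'' it suffices that $\Reach_{\bar\pi}$ is a polytope, hence convex, so that it contains the convex hull of any subset of its points. Indeed, fixing $\pi=\delta_1\dots\delta_n$, a run with date profile $(t_1,\dots,t_n)$ from $y_0$ exists exactly when $(y_0,t_1,\dots,t_n)$ satisfies the finite affine system obtained from \eqref{eq:run-constraints} with closed guards — monotonicity of the dates, the closed guard inequalities, the reset equalities — and the endpoint $y_n$ is an affine function of that tuple; so the valid tuples $(y_0,t_1,\dots,t_n,y_n)$ form a polytope and $\Reach_{\bar\pi}$ is its projection onto $(y_0,y_n)$.

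The real work is ``$\subseteq$'': since $\Reach_{\bar\pi}$ is a polytope it equals the convex hull of its own vertices, so it is enough to show that every vertex $(\hat x,\hat x')$ of $\Reach_{\bar\pi}$ is a vertex-to-vertex pair. I would prove this by induction on $|\pi|$, as in \cite{puri}. For one edge $\delta=(q,a,\guard,\reset,q')$ the set of delays $t$ admissible from a point $x$ of the closed region $\overline{S(q)}$ is an interval $[t_-(x),t_+(x)]$; on a single region all comparisons between the integer-offset affine forms occurring in the guard and in $S(q')$ are constant, so $t_-$ and $t_+$ are \emph{affine} there, whence $\{(x,t):x\in\overline{S(q)},\ t_-(x)\le t\le t_+(x)\}$ is a polytope with vertices $(v_i^q,t_\pm(v_i^q))$, and $\Reach_{\bar\delta}$ is its image under the affine map $(x,t)\mapsto\bigl(x,(x+t)[\bar\reset]\bigr)$. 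As $v_i^q$ and the guard bounds are integers, $t_\pm(v_i^q)\in\integer$, so $(v_i^q+t_\pm(v_i^q))[\bar\reset]$ is an integer point of $\overline{S(q')}$, i.e.\ a vertex $v_j^{q'}$; hence $\Reach_{\bar\delta}$ is the convex hull of reachable vertex-to-vertex pairs (\cref{lem:region-reach-open,lem:region-reach} are the companion guarantee that this image stays inside $\overline{S(q')}$ and that faces map to faces). For $\pi=\pi'\delta$ one uses $\Reach_{\bar\pi}=\{(x,x') : \exists y,\ (x,y)\in\Reach_{\bar{\pi'}},\ (y,x')\in\Reach_{\bar\delta}\}$: the characterizations for $\pi'$ (induction hypothesis, matrix $A'$) and for $\delta$ (base case, matrix $A_\delta$) compose, since stochastic matrices compose and, given $P\le A'A_\delta$ realizing a pair $(x,x')$, admissible $Q_1\le A'$ and $Q_2\le A_\delta$ realizing the same pair can be built by routing the mass $P_{ik}$ through an arbitrary witness index $j_{ik}$ with $(A')_{i,j_{ik}}=(A_\delta)_{j_{ik},k}=1$; the matrix produced is the Boolean product $A'A_\delta$, which is the matrix of $\gamma(\pi'\delta)$ because $\gamma$ is a morphism. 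I expect the main obstacle to be the bookkeeping of the base case — verifying that the active-constraint pattern really is constant on a region and that reset images land on vertices — which is precisely the heart of Puri's argument.

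Finally, the ``consequently'': if $\gamma(\pi_1)=\gamma(\pi_2)$ then the two paths share source, target, and matrix $A$, so by the characterization $\Reach_{\pi_1}=\mathcal{R}(A)=\Reach_{\pi_2}$; conversely $\Reach_\pi$ determines $\gamma(\pi)$, since its domain and codomain are the whole regions $S(q)$ and $S(q')$ (every vertex has an incoming and an outgoing edge in the orbit by \cref{cor:orbit-incoming-outgoing}, so the domain is all of $S(q)$, not a proper subset), and $A_{ij}=1$ iff $(v_i^q,v_j^{q'})\in\overline{\Reach_\pi}$.
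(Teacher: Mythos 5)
Your argument is correct but takes a genuinely different route from the paper. The paper does not really prove this lemma at all: it cites Puri and then lists the adaptations needed (non-cyclic paths, the reverse direction, unbounded intermediate zones), pinning everything on the set-valued affine identity $R(\lambda x+(1-\lambda)y)=\lambda R(x)+(1-\lambda)R(y)$ and asking the reader to consult \cite{puri} for the rest. You instead give a self-contained reproof: you first observe that the stochastic-matrix condition $\exists P\le A,\ \lambda(x)P=\lambda(x')$ is exactly $(x,x')\in\Conv\{(v_i^q,v_j^{q'}):A_{ij}=1\}$ (via $\alpha_{ij}=\lambda_i(x)P_{ij}$), which cleanly splits the lemma into two inclusions; $\supseteq$ follows from $\Reach_{\bar\pi}$ being a polytope (projection of the affine constraint polyhedron, bounded because both end regions are $M$-bounded in an \RTA); $\subseteq$ follows by induction on $|\pi|$, where the one-edge base case analyzes the delay bounds $t_-,t_+$ directly (they are affine on a region since the active constraint is fixed there, and integer at vertices, so images of vertices are vertices) and the inductive step composes the characterizations through the Boolean matrix product, using that $\gamma$ is a morphism. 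Both approaches hinge at bottom on the affine nature of time-successor and reset; yours buys full self-containment and explicit control over the base case (precisely the ``heart of Puri's argument'' as you say) at the cost of more bookkeeping, while the paper's buys brevity by deferring to the literature. One small remark: your ``routing'' construction (splitting $P\le A'A_\delta$ into $Q_1\le A'$, $Q_2\le A_\delta$) is the $\supseteq$ direction of $\mathcal R(A')\circ\mathcal R(A_\delta)=\mathcal R(A'A_\delta)$, which you do not actually need inside the inductive proof of $\subseteq$ — there you only need that $Q_1Q_2\le A'A_\delta$ whenever $Q_1\le A'$ and $Q_2\le A_\delta$, which is immediate; the routing step is harmless but extraneous as written.
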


To be more precise about this citation:
\begin{itemize}

    \item The lemma was originally stated for cyclic paths only, but the fact $\pi$ is a cycle was not used in its proof.
    \item Also only one direction of the implication is stated (reachability $\implies$ existence of a matrix), but the final argument can be reversed: the existence of a matrix allows writing $y$ as a convex combination $\sum_i \lambda_i(x) s_i$ with $s_i=\sum_j P_{ij} w_{ij}$, where the $w_{ij}$ are vertices of $S(\dst(\pi))$ reachable from vertex $v_i$ of $S(\dst(\pi))$. So $y\in \sum_i\lambda_i(x) R(v_i)=R(\sum_i\lambda_i(x) v_i)=R(x)$, where $\forall x, R(x)\triangleq \{y|L_{\bar\pi}(x,y)\neq\emptyset\}$. 

    Note this reasoning (for both directions) uses the key identity $R(\lambda x + (1-\lambda) y) = \lambda R(x) + (1-\lambda)R(y)$, which is proved in another lemma.
   \item Last, the result from \cite{puri} was established for a setting where all regions are bounded.
Fortunately, the original proof can be interpreted in the case where the set of vectors reachable in every intermediary step is an arbitrary zone (in particular, an unbounded region).

Indeed, the identity $R(\lambda x + (1-\lambda) y) = \lambda R(x) + (1-\lambda)R(y)$ just comes from the fact that the time successor and reset operations are affine operations from and to convex sets. This is still the case in our context.



Nonetheless, it is still important that the starting and ending regions of the path are both bounded, for the final result to be stated in terms of vertex-to-vertex reachability and barycentric coordinates.
\end{itemize}

\begin{definition}[\!\cite{entroJourn}]\label{def:Lyap}
 Given a cycle  $\pi$ on a region $R$, that is with $R=S(\src(\pi))$, we call $\ell:\bar R \to [0;1]$ a \emph{Lyapunov function}  for $\pi$ whenever the following holds: if $y\in \bar R$ can be reached from $x\in \bar R$ through $\bar \pi$, then $\ell(x)\geq \ell(y)$.   
\end{definition}

We will slightly generalize the construction of  a Lyapunov function  for a cycle in timed automata from \cite{entroJourn}. 

\begin{con}\label{con:one-lyap}Let  $e$ be an orbit graph and $V=\Ver(R)$ its supporting set of vertices. Let $I\subsetneq V$ be an initial set, i.e.~such that all incoming   edges of $e$ to $I$ have origins in $I$.  Given $x\in \bar R$, let $\lambda_i$ be the barycentric coordinates of $x$ w.r.t.~$V$. We define  the function $\ell_I(x) = \sum_{i\in I}\lambda_i$.
\end{con}
\begin{lemma}
    The construction above always yields a Lyapunov function for any path $\pi$ such that $\gamma(\pi)=e$.
\end{lemma}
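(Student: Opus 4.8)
**Proof plan for: Construction~\ref{con:one-lyap} always yields a Lyapunov function for any path $\pi$ with $\gamma(\pi)=e$.**

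The plan is to verify the defining property of Def.~\ref{def:Lyap} directly, namely that reachability through $\bar\pi$ cannot increase $\ell_I$. So suppose $y\in\bar R$ is reachable from $x\in\bar R$ through $\bar\pi$, i.e.~$(x,y)\in\Reach_{\bar\pi}$. By Lemma~\ref{lem:puri} (applied with the p-orbit $\gamma(\pi)=e=\tuple{q,A,q}$, which is legitimate since $e$ is a cyclic orbit on $R$), there is a stochastic matrix $P\leq A$ (entrywise) with $\lambda(x)P=\lambda(y)$. Writing $\lambda=\lambda(x)$ and $\lambda'=\lambda(y)$, this means $\lambda'_j=\sum_i\lambda_i P_{ij}$ for every vertex $j\in V$.

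The key step is then a short computation exploiting the hypothesis that $I$ is an \emph{initial set}: every edge of $e$ entering a vertex of $I$ originates in $I$, i.e.~$A_{ij}=\0$ (equivalently $P_{ij}=0$, since $P\leq A$) whenever $i\notin I$ and $j\in I$. Therefore
\[
\ell_I(y)=\sum_{j\in I}\lambda'_j=\sum_{j\in I}\sum_{i\in V}\lambda_i P_{ij}
=\sum_{j\in I}\sum_{i\in I}\lambda_i P_{ij}
=\sum_{i\in I}\lambda_i\sum_{j\in I}P_{ij}
\leq\sum_{i\in I}\lambda_i\sum_{j\in V}P_{ij}
=\sum_{i\in I}\lambda_i=\ell_I(x),
\]
where the third equality drops the terms with $i\notin I$ (they vanish), and the inequality uses $P_{ij}\geq 0$ together with the fact that $P$ is stochastic so $\sum_{j\in V}P_{ij}=1$. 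This gives $\ell_I(x)\geq\ell_I(y)$, which is exactly the Lyapunov condition. Finally $\ell_I$ indeed maps $\bar R$ into $[0,1]$, since barycentric coordinates are non-negative and sum to $1$, so a partial sum lies in $[0,1]$.

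I do not expect a genuine obstacle here: the only points requiring care are (i) checking that Lemma~\ref{lem:puri} legitimately applies to a (possibly non-simple) cyclic path whose p-orbit is $e$ — which is exactly what the remarks following Lemma~\ref{lem:puri} in the excerpt guarantee (the lemma holds for arbitrary edge sequences, not just simple cycles, and the starting/ending region $R$ is bounded) — and (ii) correctly translating ``$I$ is an initial set'' into the vanishing of the relevant $P$-entries. Everything else is the one-line convexity estimate above.
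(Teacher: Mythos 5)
Your proof is correct and follows exactly the argument one would expect (the paper itself only cites the proof of Lemma~11 in \cite{entroJourn} as ``identical''): invoke \cref{lem:puri} to obtain a row-stochastic matrix $P\leq A$ with $\lambda(x)P=\lambda(y)$, observe that the initiality of $I$ kills the entries $P_{ij}$ with $i\notin I$, $j\in I$, and bound the inner sum $\sum_{j\in I}P_{ij}$ by the full row sum $1$. This is the same route as the paper's (implicit) proof.
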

The proof is identical to that of \cite[Lemma 11]{entroJourn}.

To have many independent Lyapunov functions, we proceed as follows:
\begin{con}\label{con:lyapu} Given an orbit graph $e$ and its supporting set $V$, we partition $V$ into SCCs, sort these SCCs in a topological order ($S_0,S_1, S_2,\dots, S_s$) and define initial sets $I_i=\bigcup_{j=0}^{i-1} S_j$, their family $\mathcal{I}= \left(I_i\right)_{i\in 1..s}$, and the corresponding family of Lyapunov functions $\Lambda=\{\ell_I \mid I\in\mathcal{I}\}$. We also write $\Lambda(x)$  for the vector $(\ell_I(x))_{I\in\mathcal{I}}$. In further developments, for such a Lyapunov function $\ell_I$, we call the initial set $I$ the \emph{support} of $\ell_I$.
\end{con}

We also remark that $V \setminus I_i$ are final sets of vertices.

\begin{lemma}\label{lem:independent-lyapunov}
Given any cycle $\pi$ such that $e=\gamma(\pi)$,  the construction above yields a family $\Lambda$ of $s=\#\SCC(e) - 1$  independent Lyapunov functions for $\pi$. 
\end{lemma}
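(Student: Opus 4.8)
The plan is to unpack the two assertions of the lemma --- that $\Lambda$ is a family of $s=\#\SCC(e)-1$ Lyapunov functions for $\pi$, and that this family is independent --- and dispatch them separately. Recall from \cref{con:lyapu} that $I_i=\bigcup_{j<i}S_j$ for $i\in 1..s$ and $\Lambda=\{\ell_{I_i}:i\in 1..s\}$. For the first assertion, the only point to check before invoking \cref{con:one-lyap} and the lemma accompanying it is that each $I_i$ is an \emph{initial} set of $e$, i.e.~that every edge of $e$ with head in $I_i$ has its tail in $I_i$. This follows because $S_0,\dots,S_s$ are listed in topological order: every edge of $e$ goes from some $S_j$ to some $S_k$ with $j\le k$, so an edge into $I_i$, whose head lies in some $S_k$ with $k<i$, has its tail in some $S_j$ with $j\le k<i$, hence in $I_i$. \cref{con:one-lyap} and its lemma then give that $\ell_{I_i}$ is a Lyapunov function for every $\pi$ with $\gamma(\pi)=e$, and since there are $s+1$ SCCs and the $I_i$ are strictly nested (every $S_j$ is nonempty), $\Lambda$ has exactly $s$ members.

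For independence I would argue purely about $e$ (the cycle $\pi$ plays no role here). Write $\lambda(x)=(\lambda_v(x))_{v\in V}$; then each $\ell_{I_i}$ is the linear functional $x\mapsto\sum_{v\in I_i}\lambda_v(x)$. Pass to the ``increments'' $D_0=\ell_{I_1}$ and $D_i=\ell_{I_{i+1}}-\ell_{I_i}$ for $1\le i\le s-1$, so that $D_i(x)=\sum_{v\in S_i}\lambda_v(x)$ for $0\le i\le s-1$. The $D_i$ are supported on the pairwise disjoint, nonempty coordinate blocks $S_0,\dots,S_{s-1}$, hence are linearly independent over $\real^V$; the change of basis from $(\ell_{I_i})_i$ to $(D_i)_i$ being unitriangular and invertible, the $\ell_{I_i}$ are linearly independent too. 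To obtain the form of independence needed downstream --- independence \emph{on the region $R$}, whose affine hull is the hyperplane $\{\sum_v\lambda_v=1\}$ --- I would further observe that the functional $x\mapsto\sum_{v\in V}\lambda_v(x)$ (identically $1$ on $R$) equals $D_0+\dots+D_{s-1}+E$ with $E(x)=\sum_{v\in S_s}\lambda_v(x)$ supported on the nonempty block $S_s$, disjoint from $S_0,\dots,S_{s-1}$; hence it is not in the span of the $D_i$, equivalently not in the span of the $\ell_{I_i}$, so the $\ell_{I_i}$ stay linearly independent after restriction to the affine hull of $R$. Equivalently, $x\mapsto\Lambda(x)$ maps $R$ affinely onto an $s$-dimensional polytope.

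This is essentially a bookkeeping lemma, so I do not expect a real obstacle; the two spots that need care are orienting the topological order so the $I_i$ are genuinely initial sets (which reduces to the elementary fact that every edge of a condensation DAG respects its topological order) and pinning down what ``independent'' should mean. I read it as linear independence of the functionals modulo the affine span of the barycentric simplex, which is both the natural reading and the one exploited later: along iterations of $\pi$ each $\ell_{I_i}$ is non-increasing and $[0,1]$-valued, and having $s$ such functions, independent in this sense, constrains $s$ independent directions of the reachability polytope of the cycle.
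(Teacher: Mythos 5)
Your proof is correct, but it proves independence by a different route than the paper. You reduce to a linear-algebraic observation: pass from $\ell_{I_1},\dots,\ell_{I_s}$ to the increments $D_i=\sum_{v\in S_i}\lambda_v$ (a unitriangular change of basis), note these live on pairwise disjoint nonempty coordinate blocks $S_0,\dots,S_{s-1}$ of $\real^V$, and then observe that the ``sum of all barycentric coordinates'' functional uses the extra block $S_s$, hence is not in their span --- which is exactly the linear independence of $\Lambda\cup\{1\}$ on the affine hull of $R$ that the paper declares to be the meaning of ``independent.'' The paper instead argues geometrically: for each $i$ it exhibits a segment from the barycenter of $S_i$ to the barycenter of $S_{i+1}$ along which every $\ell_{I_j}$ with $j\ne i$ is constant while $\ell_{I_i}$ sweeps the full range, so $\ell_{I_i}$ cannot be determined by (hence not an affine combination of) the others. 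Both close the gap; your block-diagonal argument is arguably cleaner and more clearly dispenses with the cycle $\pi$, while the paper's segment construction makes visible the concrete ``free direction'' within $R$ that each Lyapunov function controls, which is the geometric content exploited downstream. Your preliminary check that the $I_i$ are initial sets (so that \cref{con:one-lyap} applies) is a correct and worthwhile verification that the paper leaves implicit.
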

The independence means that for any subset of $\Lambda$, values of functions inside the subset do not determine the value of a function outside it. For a set of affine functions, this is equivalent to linear independence of their slopes (and to linear independence of the set $\Lambda \cup \{x\in \bar{R} \mapsto 1\}$).
\begin{proof}
Choose any $i\in 1..s$ and consider the affine function $v:[0;1]\to R$ defined as 
\[
v(\lambda)=\frac{1-\lambda}{\#S_i}\cdot\sum_{j\in S_i}v_j+\frac{\lambda}{\#S_{i+1}}\cdot\sum_{j\in S_{i+1}}v_j.
\]
 The segment $v([0;1])$ is exactly the one that links the barycenter of $S_i$ and that of $S_{i+1}$. For any $j > i$, this segment is entirely included in $\Conv(I_j)$, hence $\ell_{I_j}(v(\cdot))$ is the constant $1$; for $j<i$, the segment is included in $\Conv(\Ver(r)\setminus I_j)$, so $\ell_{I_j}(v(\cdot))$ is the constant $0$. But the image of $\ell_{I_i}(v(\cdot))$ is the full interval $(0;1)$ (indeed $v(0)\in I_i$, so $\ell(v(0))=1$ and $v(1)$ has null barycentric coordinates on $I_i$, so $\ell(v(1))=0$).
Hence $\ell_i$ cannot be obtained by any linear combination of the other Lyapunov functions (i.e.~$\left(\ell_j\right)_{j\neq i}$).
\end{proof}

\begin{corollary}\label{cor:mut-reach-dim}
 Any set of clock vectors mutually reachable through the cycle $\bar\pi$  on a region $R$ can be contained in an affine set of dimension $\dim R - \#\SCC(\gamma(\pi)) + 1$.
\end{corollary}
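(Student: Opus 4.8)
The plan is to combine the monotonicity of Lyapunov functions along $\bar\pi$ with the independence statement of \cref{lem:independent-lyapunov}. First I would observe that if $S\subseteq R$ is a set of clock vectors that are pairwise mutually reachable through $\bar\pi$, then every Lyapunov function for $\pi$ is \emph{constant} on $S$: for $x,y\in S$ both $x$ reaches $y$ and $y$ reaches $x$ through $\bar\pi$, so by \cref{def:Lyap} we get $\ell(x)\ge \ell(y)$ and $\ell(y)\ge \ell(x)$, whence $\ell(x)=\ell(y)$. (If $S=\emptyset$ the claim is vacuous; otherwise fix $x_0\in S$ and set $c_I\triangleq \ell_I(x_0)$ for each $I$.)

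Next I would instantiate \cref{con:lyapu} on the orbit $e=\gamma(\pi)$ and invoke \cref{lem:independent-lyapunov}: this produces a family $\Lambda=\{\ell_I\mid I\in\mathcal I\}$ of $s=\#\SCC(\gamma(\pi))-1$ Lyapunov functions for $\pi$ which are affinely independent, in the precise sense recalled right after that lemma, namely that $\Lambda\cup\{x\mapsto 1\}$ is linearly independent as a family of affine functions on the affine hull of $R$. By the previous paragraph, $S$ is contained in $\{x\in\operatorname{aff}(R)\mid \ell_I(x)=c_I\text{ for all }I\in\mathcal I\}$.

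I would then close with a dimension count. The affine hull $\operatorname{aff}(R)$ has dimension $\dim R$ (here $R$ is a bounded region, hence a simplex). Intersecting it with the $s$ affine hyperplanes $\{\ell_I=c_I\}$, whose associated affine functionals together with the constant $1$ are linearly independent, yields an affine subspace of $\operatorname{aff}(R)$ of codimension exactly $s$, i.e.\ of dimension $\dim R - s = \dim R - \#\SCC(\gamma(\pi)) + 1$. Since $S$ lies inside it, this is the required affine set.

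The only delicate point is the last step: one must be sure that affine independence of $\Lambda\cup\{1\}$ genuinely forces each constraint $\ell_I=c_I$ to drop the dimension by one — that none of them is redundant on $\operatorname{aff}(R)$, nor jointly inconsistent there (consistency being guaranteed by the fact that the point $x_0\in S$ satisfies all of them). This is exactly what the independence clause of \cref{lem:independent-lyapunov}, phrased as linear independence of the slopes together with the constant function, is built to deliver, so I do not expect a real obstacle; the remainder is routine linear algebra.
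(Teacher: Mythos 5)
Your proof is correct and follows essentially the same route as the paper: monotonicity of Lyapunov functions forces them to be constant on a mutually reachable set, and then \cref{lem:independent-lyapunov} supplies $\#\SCC(\gamma(\pi))-1$ affinely independent such functions whose level-set intersection within $\operatorname{aff}(R)$ has the claimed dimension. Your treatment is simply more explicit than the paper's terse ``hence the dimension,'' spelling out the codimension count and the consistency check via a point of $S$.
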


\begin{proof}
If $x$ and $y$ are mutually reachable, then they should satisfy $\ell(x)=\ell(y)$ for any Lyapunov function $\ell$. Thus any set of mutually reachable  clock vectors is contained in an affine set of the form $\{x\mid\forall \ell\in\Lambda\ \ell(x)=c_\ell\}$, where $\Lambda$ is a set of $\#\SCC(\gamma_\pi)-1$ independent affine functions, hence the dimension.
\end{proof}

\begin{lemma}\label{lem:lyapunov-derivatives}
  In the case when $\#\SCC(\gamma(\pi))=\dim R+1$ (i.e.~all SCCs are singletons), $||x-y||_\infty\leq K||\Lambda(x)-\Lambda(y)||_\infty$ (i.e.~$\Lambda$ is a dilatation), for some positive real number $K$ depending only on $\gamma(\pi)$. In this case, $\Lambda$ is a bijection between $\bar R$ and the simplex $0\leq\ell_1\leq\cdots\leq \ell_{\dim R}\leq 1$.
\end{lemma}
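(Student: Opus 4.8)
The plan is to notice that, once every SCC of $\gamma(\pi)$ is a singleton, the Lyapunov functions produced by \cref{con:lyapu} are simply the \emph{partial sums} of a permuted barycentric-coordinate vector, so that both assertions become elementary linear algebra on the simplex $\bar R$. Concretely, write $d=\dim R$; then $R$ has $d+1$ affinely independent vertices $v_0,\dots,v_d$, and the hypothesis $\#\SCC(\gamma(\pi))=d+1$ forces each SCC $S_i$ to be a single vertex. The topological order fixed in \cref{con:lyapu} thus amounts to a permutation $\sigma$ of $\{0,\dots,d\}$ with $S_i=\{v_{\sigma(i)}\}$ and $I_i=\{v_{\sigma(0)},\dots,v_{\sigma(i-1)}\}$, so that for $x\in\bar R$ with barycentric coordinates $\lambda(x)=(\lambda_0(x),\dots,\lambda_d(x))$ the $i$-th Lyapunov function is $\ell_i(x)=\ell_{I_i}(x)=\sum_{j=0}^{i-1}\lambda_{\sigma(j)}(x)$ for $i=1,\dots,d$. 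In particular $0\le \ell_1(x)\le\cdots\le\ell_d(x)\le 1$ automatically, and the coordinates are recovered affinely: $\lambda_{\sigma(0)}(x)=\ell_1(x)$, $\lambda_{\sigma(j)}(x)=\ell_{j+1}(x)-\ell_j(x)$ for $1\le j\le d-1$, and $\lambda_{\sigma(d)}(x)=1-\ell_d(x)$.

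For the dilatation bound I would then read off from these formulas that $x=\sum_{j}\lambda_j(x)\,v_j$ is an affine function of the vector $\Lambda(x)=(\ell_1(x),\dots,\ell_d(x))$, whose linear part is a fixed linear map $\real^d\to\real^X$ with matrix determined by the vertices $v_0,\dots,v_d$ (hence by $R=S(q)$, itself determined by $\gamma(\pi)=\tuple{q,A,q}$) and by $\sigma$ (determined by the graph of $A$) --- so it depends on $\gamma(\pi)$ only. Taking $K$ to be the operator norm of this linear map from $(\real^d,\|\cdot\|_\infty)$ to $(\real^X,\|\cdot\|_\infty)$, a finite positive number, gives $\|x-y\|_\infty=\bigl\|\sum_j(\lambda_j(x)-\lambda_j(y))v_j\bigr\|_\infty\le K\,\|\Lambda(x)-\Lambda(y)\|_\infty$, i.e.~$\Lambda$ is a dilatation.

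For the bijection claim I would compose two standard bijections: first, $x\mapsto\lambda(x)$ is an affine bijection from $\bar R$ onto the standard simplex $\{\mu\in\real^{d+1}:\mu\ge 0,\ \sum_j\mu_j=1\}$; second, $\mu\mapsto\bigl(\mu_{\sigma(0)},\ \mu_{\sigma(0)}+\mu_{\sigma(1)},\ \dots,\ \sum_{j<d}\mu_{\sigma(j)}\bigr)$ is a bijection from that simplex onto $\{(\ell_1,\dots,\ell_d):0\le\ell_1\le\cdots\le\ell_d\le 1\}$, with inverse exactly the difference map exhibited above. Composing yields that $\Lambda$ restricts to a bijection between $\bar R$ and the ordered simplex $0\le\ell_1\le\cdots\le\ell_d\le 1$. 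There is no genuine obstacle here; the only points that need a little care are (i) verifying that singleton SCCs really make the $I_i$ a strictly increasing chain of single vertices, so the $\ell_i$ are precisely the claimed partial sums, and (ii) tracing the dependencies to confirm that $K$ --- the operator norm of a combinatorially defined linear map --- is a function of $\gamma(\pi)$ alone, and in particular does not depend on which path $\pi$ realizes that orbit.
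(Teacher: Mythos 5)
Your proof is correct and follows essentially the same route as the paper's: identify $\Lambda$ as the composition of the barycentric-coordinate map with the partial-sum (cumulative) map, observe that the linear part is invertible, and take the operator norm of its inverse as $K$. The paper makes the linear part explicit as a matrix product $LPB$ (barycentric coordinates, reordering permutation, lower-triangular summation matrix) and bounds $K$ by $\|B^{-1}\|_\infty\|P^{-1}\|_\infty\|L^{-1}\|_\infty$, whereas you bound directly by the operator norm of the combined inverse --- the same object, just not factored; your bound is, if anything, tighter. On the bijection claim you are slightly more explicit than the paper (which only observes invertibility of $LPB$): you identify the image precisely by composing the bijection onto the standard simplex with the cumulative-sum bijection onto the ordered simplex. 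Your two flagged caveats --- that singleton SCCs force $I_1\subsetneq\cdots\subsetneq I_d$ to be a strictly increasing chain of single vertices, and that $K$ is a function of $\gamma(\pi)$ only --- are both handled correctly.
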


\begin{proof}

We use the notations of \cref{con:lyapu}.

For any clock vector $x$, $\Lambda(x)$ is an affine function of the form $x\mapsto LPB(x-v_s)$ where:
\begin{itemize}
    \item $v_s$ is the unique vertex of $S_s$.
    \item $B$ the square matrix such that $B(x-v_s)$ is the vector of barycentric coordinates of x, excluding the coordinate in $v_s$ (clocks and vertices both in canonical order: i.e.~clocks sorted by fractional part value and vertices sorted lexicographically).




   \item $P$ is the permutation matrix reordering rows such that output vector coordinates correspond to the topological order from \cref{con:lyapu} instead of the lexicographical order.

    \item $L$ is the matrix associating a vector of barycentric coordinates to its Lyapunov values, one per line. Hence $L$ is such that its $i$th row has a $1$ for each position of index v such that $v\in I_i$ and $0$ everywhere else.
    Because vertices are in topological order and each SCC has exactly one more vertex than the previous one, $L$ actually is the lower triangular matrix having only $1$'s in the diagonal and below.

\end{itemize}

Since all three matrices $L$, $P$ and $B$ are invertible, we can write $ (x-v_s) = (LPB)^{-1}\Lambda(x)$, hence $ (x-y) = (LPB)^{-1}(\Lambda(x)-\Lambda(y))$ and thus $||x-y||_\infty\leq ||B^{-1}||_\infty\cdot||P^{-1}||_\infty\cdot||L^{-1}||_\infty\cdot||\Lambda(x)-\Lambda(y)||_\infty$.
\end{proof}

\section{Proofs on obesity}
For technical reasons (in particular, since \cref{lem:singletons} depends on \cref{lem:instant+slow=forgetful}), we proceed first with proofs of results from \cref{sec:obese}, and next from \cref{sec:meager}.

In this section and the next one, for $\bullet\in\{d,f\}$, we use the notation $\gamma_\bullet(\pi)[u,v]$ to denote the label (in $\Lambda_\bullet$) associated to the edge from vertex $u$ to vertex $v$ in the $\bullet$-orbit of a path $\pi$ 
(if there is no such edge, then $\gamma_\bullet(\pi)[u,v]=\none$).


\subsection{Monoid properties}
\propRorbit*
\begin{proof}
We only need to prove, for any two edge sequences $\pi_1$, $\pi_2$, that $\gamma_d(\pi_1\circ\pi_2)=\gamma_d(\pi_1)\gamma_d(\pi_2)$.
Notice that this is trivially true if any of the two sequences is $\epsilon$, or if $\pi_1\circ\pi_2$ is not a path.
So assume both $\pi_1$ and $\pi_2$ are paths of length $\geq 1$ such that $\pi_1\circ\pi_2$ also is a path and let us call $R$ the destination region of $\pi_1$.

Let us choose any source vertex $u$ and any destination vertex $v$ for $\pi_1\circ\pi_2$. Then $\left(\gamma_d(\pi_1)\gamma_d(\pi_2)\right)[u,v]$ has to take one of the four values of $\Lambda_d$:

\begin{itemize}
    \item If it is $\instant$, it means there is at least one vertex $w$ of $R$ such that  $\gamma_d(\pi_1)[u,w]=\instant$ and $\gamma_d(\pi_2)[w,v]=\instant$. By definition, for a run from $u$ to $v$ through $w$, the prefix to $w$ has duration $0$ and also the suffix from $w$, hence the whole run lasts $0$. Moreover all vertices $w$ of $R$ are either like this or such that either $\gamma_d(\pi_1)[u,w] = \none$ or $\gamma_d(\pi_2)[w,v] = \none$, in which case there is no run from $u$ to $v$ through $w$. Hence all runs from $u$ to $v$ last $0$ second and thus $\gamma_d(\pi_1\circ \pi_2)[u,v]=\instant=\left(\gamma_d(\pi_1)\gamma_d(\pi_2)\right)[u,v]$.
    \item If it is $\slow$, it means there is at least one vertex $w$ of $R$ such that  \linebreak$\{\slow\}\subseteq\{\gamma_d(\pi_1)[u,w], \gamma_d(\pi_2)[w,v]\} \subseteq \{\instant, \fast, \slow \}$. In this case, there is a run through $w$ such that either its prefix or suffix lasts more than $1$, hence the full run lasts more than $1$. Moreover all vertices $w$ of $R$ are either like this or such that either $\gamma_d(\pi)[u,w] = \none$ or $\gamma_d(e)[w,v] = \none$, in which case there is no run from $u$ to $v$ through $w$. Hence all runs from $u$ to $v$ last more than $1$ second and thus $\gamma_d(\pi_1\circ \pi_2)[u,v]=\slow=\left(\gamma_d(\pi_1)\gamma_d(\pi_2)\right)[u,v]$.
    \item If it is $\fast$, it means either there is at least one vertex $w$ of $R$ such that $\{\fast\} \subseteq \{\gamma_d(\pi_1)[u,w], \gamma_d(\pi_2)[w,v]\} \subseteq \{\instant, \fast\}$ or there exists $w_1$ and $w_2$ such that \linebreak$\gamma_d(\pi_1)[u,w_1]=\instant$ and $\gamma_d(\pi_2)[w_1,v]=\instant$ and \linebreak$\{\slow\}\subseteq\{\gamma_d(\pi_1)[u,w_2], \gamma_d(\pi_2)[w_2,v]\} \subseteq \{\instant, \fast, \slow \}.$ In the first case, there are runs through $w$ with durations $0$ and $1$. In the second case, there is a run through $w_1$ with duration $0$ and a run through $w_2$ with duration $\geq 1$. Hence in all cases there are runs of length $0$ and $1$ (and therefore all durations in between), thus $\gamma_d(\pi_1\circ \pi_2)[u,v]=\fast=\left(\gamma_d(\pi_1)\gamma_d(\pi_2)\right)[u,v]$.
    \item If it is $\none$, it means that $\{w| \gamma_d(\pi_1)[u,w]\neq \none\}\cap \{w| \gamma_d(\pi_2)[w,v]\neq \none\} = \emptyset$. By \cref{lem:puri}, it implies there is no clock vector in $R$ 
    such that it can be reached from $u$ through $\pi_1$ and that $v$ can be reached from it through $\pi_2$, in other words there is no run from $u$ to $v$ through $\pi_1\circ\pi_2$. Hence, $\left(\gamma_d(\pi_1)\gamma_d(\pi_2)\right)[u,w]= \none = \left(\gamma_d(\pi_1)\gamma_d(\pi_2)\right)[u,v]$ \qedhere
\end{itemize}



\end{proof}

\begin{lemma}[Orbits of zero-time transitions]\label{lem:zero-time-edges}
Let $\delta\in \Delta$ be an \RTA\ edge that can be traversed in $0$ time units from the vertex $v_0$ having the lowest coordinates in its source region. Then for any source vertex $u$ there is a destination vertex $v$ such that $\gamma_d(\delta)[u,v]\in\{\instant, \fast\}$.
\end{lemma}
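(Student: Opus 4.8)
Let $\delta\in\Delta$ be an \RTA\ edge that can be traversed in $0$ time units starting from the vertex $v_0$ of lowest coordinates of its source region $R=S(\src_\delta)$. Then for any source vertex $u\in V(R)$ there is a destination vertex $v\in V(S(\dst_\delta))$ with $\gamma_d(\delta)[u,v]\in\{\instant,\fast\}$.

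The plan is to argue by convexity of the reachability relation along $\delta$, exploiting \cref{lem:puri} and the shape of time-successor/reset operations. First I would fix $u\in V(R)$. By \cref{cor:orbit-incoming-outgoing}, $u$ has at least one outgoing edge in $\gamma_d(\delta)$, i.e.\ there is some $v'$ with $\gamma_d(\delta)[u,v']\neq\none$; the content of the lemma is that among these one can be chosen with label $\instant$ or $\fast$, i.e.\ not $\slow$. Equivalently, I must produce a run along $\overline\delta$ starting at $u$ whose duration is $<1$ (in fact, one can aim for duration exactly $0$ starting from $u$, by transporting the $0$-time run available at $v_0$).

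The key step is the following transport argument. By hypothesis there is a run $v_0\trans{(\overline\delta,0)}y_0$ for some vertex-to-vertex-compatible $y_0\in\overline{S(\dst_\delta)}$; since time elapsed is $0$, $y_0=v_0[\reset_\delta]$, which (being a reset of a region vertex) is itself a vertex $w_0$ of $\overline{S(\dst_\delta)}$. Now take any source vertex $u$. The time-successor-then-reset map $F_t: x\mapsto (x+t)[\reset_\delta]$ is affine in $x$ for fixed $t$, and $\overline{R}$ is $1$-bounded, so $u$ and $v_0$ differ by a bounded vector; more to the point, I would use that $R$ is a region (a simplex) with $v_0$ its bottom vertex, hence every $x\in\overline R$ satisfies $0\le x_c-\,(v_0)_c\le 1$ for all clocks $c$, and in fact writing $x$ in barycentric coordinates over $V(R)$ expresses $x$ as a convex combination in which $v_0$ participates. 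Then, since the guard $\overline{\guard_\delta}$ is convex and closed and is satisfied by $v_0$ (witnessed by the $0$-time run, i.e.\ $v_0\models\overline{\guard_\delta}$ after elapsing $0$), and since the guard must also be satisfiable from $u$ by \emph{some} delay $t_u$ (because $u$ has an outgoing edge in the orbit), I can take the delay $0$ at $u$ whenever $u\models\overline{\guard_\delta}$ — which holds because $\overline{\guard_\delta}$ is upward-closed in nothing in particular, so this needs care; the safe route is: the set of $(x,t)$ with $x\in\overline R$, $x+t\models\overline{\guard_\delta}$ is convex, contains $(v_0,0)$ and contains $(u,t_u)$ for some $t_u\ge 0$, hence contains the whole segment, so in particular for $x=u$ there is a delay $t\in[0,t_u]$ with $u+t\models\overline{\guard_\delta}$ and we may then pick the run $u\trans{(\overline\delta,t)}(u+t)[\reset_\delta]$ with duration $t$. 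This alone does not yet give $t<1$. To get $t<1$ I use instead the midpoint $x_\lambda=(1-\lambda)v_0+\lambda u$: for small $\lambda$, the pair $(x_\lambda,0)$ is obtained as a convex combination of $(v_0,0)$ and $(u,0)$-ish data — more carefully, a convex combination of the valid pairs $(v_0,0)$ and $(u,t_u)$ gives $(x_\lambda,\lambda t_u)$, which is a valid pair with duration $\lambda t_u<1$ once $\lambda<1/t_u$; and $x_\lambda$ lies in the interior of $R$, hence its image under $F_{\lambda t_u}$ lies in the region $\dst$. Letting $\lambda\to 1$ within the region $R$ is not allowed (we'd leave the open region only at the vertex $u$), so I instead observe that $u$ itself is in $\overline R$ and the reachability label $\gamma_d(\delta)[u,\cdot]$ is defined via the \emph{closed} language $L_{\overline\delta}(u,\cdot)$; by \cref{lem:region-reach} the successor of the singleton $\{u\}$ under $\overline\delta$ is a closure of a region, and the duration-$0$ datum transports to its boundary. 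Concretely: the closure of the set $\{(t,x):x\in R,\ x+t\models\overline{\guard_\delta}\}$ contains $(0,u)$ because it contains $(0,v_0)$ and is convex and $u\in\overline R$; therefore $u+0\models\overline{\guard_\delta}$ in the closed sense, and $u\trans{(\overline\delta,0)}u[\reset_\delta]$ is a legitimate run in $L_{\overline\delta}$, with $u[\reset_\delta]$ a vertex $v$ of $\overline{S(\dst_\delta)}$ and $\gamma_d(\delta)[u,v]\neq\slow$, i.e.\ $\in\{\instant,\fast\}$ (it is $\none$ only if no run exists, which is excluded).

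The main obstacle I anticipate is making the convexity transport rigorous at the \emph{vertex} $u$, because $u$ lies on the boundary of the open region $R$ and the "duration $0$" witness is only given at the special vertex $v_0$. The clean fix, which I would adopt, is to work throughout with the closed objects $L_{\overline\delta}$ and the closed guard $\overline{\guard_\delta}$: the set of admissible $(x,t)$ over $\overline R$ is then a closed convex polytope containing $(v_0,0)$, and since every vertex $u$ of $R$ already has \emph{some} admissible $t$ (by \cref{cor:orbit-incoming-outgoing}), the convex combination of $(u,t_u)$ with $(v_0,0)$ approaching $u$ shows that $(u,0)$ is a limit of admissible pairs, hence admissible by closedness; the corresponding run has duration $0$, forcing $\gamma_d(\delta)[u,v]=\instant$ for $v=u[\reset_\delta]$ (and in particular not $\slow$). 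That even yields the slightly stronger conclusion $\gamma_d(\delta)[u,v]=\instant$, from which the stated $\{\instant,\fast\}$ follows a fortiori. I would phrase the final write-up around this closed-polytope observation, citing \cref{lem:puri} and \cref{lem:region-reach} for the fact that vertex-indexed reachability is faithful and that images of region closures under $\overline\delta$ are region closures.
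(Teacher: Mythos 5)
Your proposal hinges on a convexity argument that does not actually close the gap. The crucial step is the claim that the closed set
$G=\{(t,x):x\in \bar R,\ x+t\models\overline{\guard_\delta}\}$
contains $(0,u)$ because it contains $(0,v_0)$, is convex, and $u\in\bar R$. Convexity of $G$ gives only the \emph{segment} from $(0,v_0)$ to $(t_u,u)$, whose points are $(\lambda t_u,\ (1-\lambda)v_0+\lambda u)$ for $\lambda\in[0,1]$; the only point on this segment with second coordinate $u$ is $(t_u,u)$, not $(0,u)$. Your "limit as $\lambda\to1$" remark therefore approaches $(t_u,u)$, not $(0,u)$, and the midpoint argument only produces pairs $(\lambda t_u,x_\lambda)$ with $x_\lambda\neq u$. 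So the step "therefore $u\models\overline{\guard_\delta}$" is unsupported. In fact convexity alone is false as a general principle here: one can build a convex set $C$ in $\real^2$ containing $v_0=(0,0)$ and $u+t_u\cdot\one=(1,2)$ but not $u=(0,1)$, so without additional structure on the guard the conclusion fails.

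What the paper uses (and what you need) is not convexity but the \emph{rectangular} shape of timed-automaton guards: a guard is a conjunction of constraints of the form $x_c\sim b$, hence its satisfying set is a product of intervals (a box). Since $v_0$ has the lowest coordinates of $\bar R$ we get, coordinate-wise, $(v_0)_c\le u_c\le u_c+t_u$ for every clock $c$; the box $[\![\guard_\delta]\!]$ contains $v_0$ and $u+t_u$, hence (being a box and not merely convex) contains every point that is coordinate-wise between them — in particular $u$. This is a lattice/order argument, not a barycentric one. Once $u\models\guard_\delta$ the remaining step via \cref{lem:puri} is the same as yours. (A minor overclaim in your write-up: a duration-$0$ run from $u$ gives $\gamma_d(\delta)[u,v]\in\{\instant,\fast\}$, not necessarily $\instant$, since $L_{\bar\delta}(u,v)$ might also contain positive-duration runs.)
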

\begin{proof}
Let us just observe that $v_0\models \guard_\delta$. For any source vertex $u$, there is a delay $d\geq 0$ such that $u+d\models \guard_\delta$ (the automaton is region-split), so $\guard_\delta$ is satisfied by both $v_0$ and $u+d$. Due to the shape of the guards, it means it is also satisfied by all clock vectors that are greater or equal to that of $v_0$ and smaller or equal to that of $u+d$. This rectangle contains the coordinates of $u$, hence $u\models\guard_\delta$.

So $\delta$ can be traversed in zero time from $u$. By \cref{lem:puri}, there is at least one such run going to some vertex $v$ of the target region, such that $\gamma_d(\delta)[u,v]\in \{\instant, \fast\}$.
\end{proof}

\begin{lemma}[Self-loops on a cycle]\label{lem:d-orbit-self-loops}
    For a cyclic path $\pi$, if its d-orbit contains a vertex $u$ with a zero-timed ($\instant$ or $\fast$) self-loop, then
    \begin{itemize}
    \item if it contains a vertex $v$, different from $u$, with a $\slow$ self-loop, then the d-orbit also contains a $\slow$ edge from $u$ to $v$ and a zero-timed ($\instant$ or $\fast$) edge from $v$ to $u$;
    \item it cannot contain a vertex $v$, different from $u$, having a $\fast$ self-loop. 
    \end{itemize}
\end{lemma}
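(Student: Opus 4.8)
The plan is to argue entirely with the closed path $\bar\pi$ (recall that $\gamma_d$ is defined through the languages $L_{\bar\pi}(\cdot,\cdot)$) and to squeeze the configuration until it becomes completely rigid. Let $R$ be the region $\pi$ loops on and let $\reset_\pi\triangleq\bigcup_i\reset_{\delta_i}$ be the set of clocks reset somewhere along $\pi$. The first move is a reduction. If some clock $c\notin\reset_\pi$, then along every run over $\bar\pi$ from $x$ to $y$ one has $y_c=x_c+\tau$ with $\tau$ the total duration, so a self-loop forces $\tau=0$; hence no vertex of $R$ carries a $\slow$ or a $\fast$ self-loop and both items hold vacuously. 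So I may assume $\reset_\pi=X$ (and $\dim R\ge 1$, otherwise there is no $v\ne u$). Now exploit the zero-timed self-loop at $u$: there is a run from $u$ to $u$ over $\bar\pi$ of duration $0$; no time elapses, so it ends at $u$ with every clock reset to $0$, i.e.\ at $\0$; but it must end at $u$, whence $u=\0$. Consequently $\0$ is a vertex of the $M$-bounded region $R$; since every vertex of a region dominates coordinatewise the region's integer parts, all those integer parts vanish, so $R\subseteq[0,1]^X$, every vertex of $R$ has coordinates in $\{0,1\}$, and $u=\0$ is its least vertex. Finally, that $0$-duration run visits only the clock vector $\0$, so $\0$ satisfies every closed guard $\bar\guard_{\delta_i}$; as guards are conjunctions of $\xi\sim b$ with $b\ge 0$, every lower bound along $\pi$ must be trivial, i.e.\ each $\bar\guard_{\delta_i}$ is \emph{downward closed}.

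A one-line duration estimate now settles the second item and half of the first. Fix a vertex $v\ne u=\0$; it has some coordinate $v_c=1$, and $c$ is reset along $\pi$ (all clocks are), say last at step $j_c$. In any run over $\bar\pi$ ending at $v$, the final value of $c$ equals the time elapsed since step $j_c$, that is $t_n-t_{j_c}=v_c=1$, so $t_n\ge 1$: \emph{every} run over $\bar\pi$ ending at $v$ has duration $\ge 1$. Applied to runs from $v$ to $v$ this forbids a $0$-duration self-loop at $v$, hence any $\fast$ self-loop, which is the second item; applied to runs from $\0$ to $v$ it gives $\tau\big(L_{\bar\pi}(\0,v)\big)\subseteq[1,\infty)$.

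It remains to produce the two required runs, which is the crux — a \emph{re-basing} argument. I would establish: if $\rho$ is a run over $\bar\pi$ with delays $t_1\le\cdots\le t_n$ and $x'$ is a clock vector pointwise below the start of $\rho$, then running $\bar\pi$ from $x'$ with the \emph{same delays} is again a run, and it ends at the same clock vector as $\rho$. Indeed, an induction over steps shows the new clock vector stays pointwise below that of $\rho$ (time addition is monotone, and a reset zeroes the coordinate in both runs); downward-closedness of the guards then keeps every transition enabled; and since every clock is reset along $\pi$, the final value of each clock depends only on the delays and on $\pi$, not on the starting vector. Since $v$ has a self-loop there is a run $\rho$ from $v$ to $v$; re-basing it at $x'=\0\le v$ yields a run from $\0$ to $v$ over $\bar\pi$, so $L_{\bar\pi}(\0,v)\ne\emptyset$, and together with the estimate above $\gamma_d(\pi)[u,v]=\slow$. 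For the edge the other way, compare $\rho$ instead with the zero-delay traversal of $\bar\pi$ from $v$, whose clock vector at step $i$ is $v[\reset_{\delta_1\cdots\delta_{i-1}}]$: the same monotonicity makes it pointwise below $\rho$ at every step, hence valid by downward-closedness, and it ends at $v[\reset_\pi]=v[X]=\0=u$ — a $0$-duration run from $v$ to $u$, so $\gamma_d(\pi)[v,u]\in\{\instant,\fast\}$.

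The step I expect to need the most care is the re-basing argument: one must be precise that it is the \emph{delays} (not the dates) that are reused, that pointwise domination survives both a time step and a reset step, and — crucially — that ``every clock is reset along $\pi$'', already forced by the reduction, is exactly what makes the endpoint of a run over $\bar\pi$ independent of its starting vector. The remaining ingredients — the reduction via $\reset_\pi$, the identification $u=\0$, the downward-closedness of guards, and the $v_c=1$ duration computation — are short and use only the shape of region-split guards and the definition of $\gamma_d$.
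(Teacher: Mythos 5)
Your proof is correct and follows the same overall route as the paper's: use the positive-time self-loop at $v$ to deduce $\reset_\pi=X$, use the zero-time self-loop at $u$ to pin $u=\0$, and then re-base runs by pointwise domination of clock vectors to produce both required edges. Where you differ is in the two supporting steps. For guard preservation, you derive directly that every $\bar\guard_{\delta_i}$ along $\pi$ is downward-closed, from the fact that the zero-time run out of $u=\0$ forces $\0\models\bar\guard_{\delta_i}$ so all lower bounds are trivial; the paper delegates this to a separate lemma on zero-time transitions that exploits the box shape of guards. Your version is somewhat more self-contained. To rule out a $\fast$ self-loop at $v$ and to get the $\slow$ label on $u\to v$, you use the $v_c=1$ estimate showing that every run over $\bar\pi$ ending at $v\ne\0$ lasts at least $1$; the paper instead observes that all zero-time runs out of any vertex must land at $\0\ne v$, so the $v$-loop admits no zero-time realization. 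Both are sound and reach the same facts; your duration estimate is a bit more direct and settles the two conclusions in one stroke.
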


\begin{proof}
We assume $v\neq u$; $\gamma_d(\pi)[u,u]\in \{\instant, \fast\}$ and $\gamma_d(\pi)[v,v]\in \{\fast,\slow\}$. 

    Let $x=(x_1,\dots,x_c)$ and $y=(y_1,\dots,y_c)$ be the clock vectors for $u$ and $v$ respectively.
    As there is a positive-timed self-loop on $v$, it is possible to take the cycle $\pi$ in a non-immediate manner (with duration $d > 0$). For this to be possible, every  clock has to be reset during $\pi$ (otherwise there is some $i$ such that $y_i = y_i + d$, which is impossible).
    Since all  clocks are reset during the cycle, doing it with duration $0$ necessarily leads to the vertex of clock coordinates $(0,\dots,0)$, hence $x=(0,\dots,0)$.
     Moreover, the existence of this zero-time self-loop implies the possibility to traverse each transition without waiting. 
    This means, starting from $(0,\dots,0)$, each automaton edge can be traversed without waiting and thus staying in $(0,\dots,0)$. By \cref{lem:zero-time-edges}, this implies that for each automaton edge in $\pi$, for all source vertices, there is an $\instant$ or $\fast$ outgoing edge.
     So, if we concatenate the orbits, all vertices of the d-orbit of $\pi$ have a zero-time outgoing edge. Because the cycle resets all the clocks, these edges all necessarily go to $u$. This also implies that there cannot be a zero-time realization of the self-loop around $v$, hence this self-loop was actually $\slow$.

    For the edge from $u$ to $v$, consider the cycle started at $u$, with the same delays as the ones of the concrete path realizing the self-loop of $v$
     (it is possible to take such a cycle, as all clock values are smaller in $u$ than in $v$). The clock vector after such a run of the cycle is the same as $v$'s (every  clock has been reset, so its value in $v$ is the sum of delays after its last reset). So $v$ can be reached from $u$ and there is an edge from $u$ to $v$. This edge cannot be $\fast$ or $\instant$ because of the uniqueness of the zero-timed run from $u$.
\end{proof}

\begin{lemma}\label{lem:trivial-instant} 
In any idempotent d-orbit,
$\instant$ loops only exist in singleton SCCs.
\end{lemma}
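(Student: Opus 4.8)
Let $e$ be an idempotent d-orbit, say $e=\gamma_d(\pi)=\tuple{q,B,q}$ for a cyclic path $\pi$, where $B$ is a matrix over $\Lambda_d$ indexed by the vertices of $R=S(q)$; the conclusion depends only on $e$, so the choice of $\pi$ is irrelevant. Idempotency gives $B\cdot B=B$, hence $B^m=B$ for every $m\ge 1$. The plan is to assume, towards a contradiction, that some vertex $u$ with $B_{uu}=\instant$ lies in an SCC of $e$ containing another vertex $v\ne u$, and to reach a contradiction in two stages: first a purely algebraic one in the semiring $\Lambda_d$, then a semantic one on zero-duration runs of $\bar\pi$.

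For the algebraic stage, I would first turn strong connectivity into direct edges. If the orbit graph has a path from $u$ to $v$ of length $m$, then the $(u,v)$-entry of $B^m$ is a $+$-sum over length-$m$ paths containing the $\times$-product of the non-$\none$ labels along that particular path; since a product of non-$\none$ elements of $\Lambda_d$ is non-$\none$ and a $+$-sum with a non-$\none$ summand is non-$\none$ (see \cref{fig:foreign-monoid-times}), this entry --- which equals $B_{uv}$ because $B^m=B$ --- is non-$\none$, and likewise $B_{vu}\ne\none$. Then I would use the diagonal identity $\instant=B_{uu}=(B\cdot B)_{uu}=\sum_{w\in\Ver(R)}B_{uw}\times B_{wu}$ together with two facts read off the tables of $\Lambda_d$: a $+$-sum equals $\instant$ only if every summand lies in $\{\instant,\none\}$ (a $\fast$ or $\slow$ summand forces the sum into $\{\fast,\slow\}$), and a $\times$-product of two non-$\none$ elements equals $\instant$ only if both factors are $\instant$. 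Applied to the summand $w=v$ this forces $B_{uv}\times B_{vu}=\instant$, hence $B_{uv}=\instant$.

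The semantic stage carries the real content, and it cannot be avoided: purely algebraic reasoning is insufficient, since the $2\times2$ all-$\instant$ matrix is idempotent, so one has to use that $e$ is realized by an actual path. The crucial observation is that a run of the closed path $\bar\pi$ of total duration $0$ is completely determined by its starting clock vector --- by \eqref{eq:run-constraints} all intermediate delays are $0$, so each step only applies a reset --- and hence ends at $f_\pi(x_0)$, where $f_\pi$ sets to $0$ every clock reset somewhere along $\pi$ and leaves the others unchanged. Now, viewing the vertices $u,v$ of $R$ as clock vectors: $B_{uu}=\instant$ supplies a duration-$0$ run of $\bar\pi$ from $u$ to $u$, so $f_\pi(u)=u$; and $B_{uv}=\instant$ supplies a duration-$0$ run of $\bar\pi$ from $u$ to $v$, so $f_\pi(u)=v$. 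Therefore $u=v$, contradicting $u\ne v$. This contradiction establishes the lemma. The only delicate point is this last observation about zero-duration runs; the rest is routine bookkeeping in the finite semiring $\Lambda_d$.
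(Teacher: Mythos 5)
Your proof is correct, and it travels a route that is essentially dual to the paper's. Both arguments exploit idempotency on the triangle $u \to v \to u$ inside the SCC, and both ultimately rely on one semantic fact about zero-duration runs; they differ in which sub-case the algebra dispatches and which the semantics dispatches. You first apply the diagonal identity $B_{uu}=(B^2)_{uu}$, plus the fact that a $\slow$ or $\fast$ summand would push the $+$-sum out of $\instant$, to force $B_{uv}=\instant$; you then invoke the uniqueness of the zero-duration run of $\bar{\pi}$ from $u$ (it necessarily ends at the vector obtained by applying the resets, hence at $u$) to contradict $B_{uv}=\instant$. The paper instead asserts outright that the other outgoing edge from $u$ ``must be $\slow$'' and obtains the contradiction by composing the resulting $\slow$ circuit with the $\instant$ self-loop inside $o^{l}=o$. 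Your version has the merit of making the semantic ingredient explicit and locating it precisely: as you correctly note, the all-$\instant$ matrix is itself an idempotent of $\mon_d$, so the $\instant$ sub-case for $B_{uv}$ cannot be excluded by algebra in the abstract monoid alone. The paper's one-line ``must be $\slow$'' silently covers exactly that sub-case, and your second stage supplies the justification it leaves implicit, so your write-up is arguably the more complete of the two.
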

\begin{proof}
Let $o$ be an idempotent d-orbit. Assuming $o$
has a vertex $u$ with an $\instant$ loop $e$ in an SCC contains several vertices. Thus $u$ must have another outgoing edge, which must be $\slow$. But one can go back from the target of this edge, hence there is a $\slow$ circuit $c$ of length $l$ going through $u$, but $o=o^l$ contains both $c$ and $e^l$, hence the label of the loop through $u$ should have been $\slow+\instant^l=\fast$ and not $\instant$.
\end{proof}

Finally, let us state an intermediary result, an immediate corollary of \cref{lem:d-orbit-self-loops}, that we use in several proofs in this subsection.
\begin{lemma}\label{lem:instant+slow=forgetful}
A cycle having an idempotent p-orbit and a d-orbit with a zero-timed self-loop and a positive-time  self-loop actually has a complete p-orbit. If the d-orbit is also idempotent, then the cycle is structurally obese.
\end{lemma}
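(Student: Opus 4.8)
\textit{Proof proposal.}
The plan is to use the two self-loops to pin down the geometry of the region $R=S(\src\pi)=S(\dst\pi)$, then to rebase arbitrary runs of $\bar\pi$ to the least vertex, and finally to read off the obesity pattern. First, two elementary observations. Since some vertex carries a self-loop of positive duration, there is a run of $\bar\pi$ of positive duration looping on it; a clock not reset along $\pi$ would strictly increase its value along such a run, which is impossible, so $\pi$ resets every clock. Since some vertex $u$ carries a zero-timed self-loop, there is a duration-$0$ run of $\bar\pi$ from $u$ to $u$; as no time elapses, this run sits at $u$ throughout, so $u\models\bar\guard_{\delta_i}$ for every edge $\delta_i$ of $\pi$; and since all clocks are reset, a duration-$0$ run must end in $\0$, hence $u=\0$, which is therefore a vertex of $R$. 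Applying \cref{lem:zero-time-edges} edge by edge along $\pi$ (each $\delta_i$ can be traversed in zero time from $\0$, its least vertex) and chaining, one gets from every vertex $a$ of $R$ a duration-$0$ run of $\bar\pi$; since $\pi$ resets all clocks it ends in $\0$, so $\gamma(\pi)[a,\0]=1$ for every vertex $a$.

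The key step is rebasing: I claim $\gamma(\pi)[\0,b]=1$ for every vertex $b$. Pick $a,b$ with $\gamma(\pi)[a,b]=1$ together with a run of $\bar\pi$ from $a$ to $b$, and replay its delays $t_1,\dots,t_n$ starting from $\0$ instead of $a$. An easy induction (resets and timed successors are monotone) shows that the clock vector before each edge of the replayed run is componentwise at least $\0$ and at most that of the original run; since every closed guard $\bar\guard_{\delta_i}$ is a conjunction of one-clock constraints, hence a box and thus order-convex, and is satisfied by $\0$ and by the original configuration, it is satisfied by the replayed one. So the replay is a legal run of $\bar\pi$; as $\pi$ resets every clock its final configuration is determined by the delays alone and therefore equals $b$. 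This proves the claim, and since by \cref{cor:orbit-incoming-outgoing} every vertex has an incoming edge in $\gamma(\pi)$, indeed $\gamma(\pi)[\0,b]=1$ for all $b$. Finally, for arbitrary vertices $a,b$, concatenating a run of $\bar\pi$ from $a$ to $\0$ with one of $\bar\pi$ from $\0$ to $b$ yields a run of $\bar\pi\bar\pi$ from $a$ to $b$, so $\gamma(\pi^2)[a,b]=1$; since $\gamma$ is a morphism and $\gamma(\pi)$ is idempotent, $\gamma(\pi^2)=\gamma(\pi)$, whence $\gamma(\pi)$ is the complete graph.

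For the last assertion, assume in addition that $\gamma_d(\pi)$ is idempotent, and suppose it has no $\fast$ self-loop. Then the zero-timed self-loop is $\instant$, sitting on $u=\0$, and the positive-duration self-loop is $\slow$, sitting on some $w$; since $u$'s self-loop is $\instant\neq\slow$ we have $w\neq\0$, so \cref{lem:d-orbit-self-loops} supplies a $\slow$ edge $\0\to w$ and a zero-timed edge $w\to\0$. Expanding the $(\0,\0)$-entry of $\gamma_d(\pi)=\gamma_d(\pi)^2$, the two-step contribution through $w$ is $\slow\times(\instant\text{ or }\fast)=\slow$, so this entry is at least $\instant+\slow=\fast$, hence equal to $\fast$ — contradicting the hypothesis. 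Therefore $\gamma_d(\pi)$ carries a $\fast$ self-loop, i.e.\ the Type I obesity pattern, and the cycle is structurally obese. The delicate point throughout is the rebasing argument with closed guards: one must be careful that runs of $\bar\pi$ may legitimately pass through region closures, that closing a guard preserves its box shape (hence its order-convexity), and that ``$\pi$ resets every clock'' genuinely fixes the endpoint of the replayed run regardless of its starting point.
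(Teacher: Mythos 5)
Your proof is correct, but it reaches the first claim (complete p-orbit) by a different route than the paper. The paper argues abstractly: in an idempotent p-orbit all non-transient vertices carry self-loops, and \cref{lem:d-orbit-self-loops} (whose proof already identifies the zero-timed self-loop vertex as $\0$) forces every self-looping vertex into a single SCC, so the d-orbit is strongly connected and the idempotent p-orbit must be complete. You instead give a hands-on argument: you re-derive that all clocks are reset and that the zero-timed self-loop sits at $\0$, apply \cref{lem:zero-time-edges} edge-by-edge to obtain $\gamma(\pi)[a,\0]=1$ for all $a$, and then add a genuinely new ``rebasing'' step --- replaying the delay sequence of any run of $\bar\pi$ from $\0$, invoking monotonicity of timed successor and reset together with the fact that closed RTA guards are order-convex boxes satisfied by $\0$, and using that $\pi$ resets every clock to conclude the replay lands at the original endpoint --- to obtain $\gamma(\pi)[\0,b]=1$ for all $b$; idempotency of $\gamma(\pi)$ then finishes. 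Your route is more concrete and self-contained but redoes groundwork that \cref{lem:d-orbit-self-loops} already encapsulates; the paper's version is shorter by reusing the shared infrastructure. For the second claim the two arguments coincide in substance: since the d-orbit is idempotent, the self-loop at $\0$ must equal its value in $\gamma_d(\pi)^2$, which receives both an $\instant$ and a $\slow$ contribution and is therefore $\fast$, giving the Type~I pattern; you simply spell out as an explicit contradiction what the paper states in one sentence.
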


\begin{proof}
    In an idempotent p-orbit of a region cycle, all initial and terminal vertices have self-loops, so in the d-orbit of the same cycle, all these vertices have 
    loops. But the condition implies, according to \cref{lem:d-orbit-self-loops}, that all vertices with self-loops are in the same SCC. 
    Hence the d-orbit is strongly connected,
    thus the p-orbit graph is complete. Since the vertices with zero-time loops also belong to positive-time cycles, if the d-orbit is idempotent, it means the self-loop was both zero-time and positive-time, hence $\fast$, which is a realization of the obesity pattern.
\end{proof}

\subsection{Upper bound}
Given a region-split automaton that is not structurally obese -- or \nsoA\ for short -- 
we want to show  that its bandwidth is $O(\log(1/\varepsilon))$ (and hence it is not obese).  

At the \textbf{first stage} we explore cycles in  the automaton corresponding to idempotents of $\mon_d$, with a particular interest to Zeno behaviors (defined below). Let us split idempotent d-orbits of any automaton into four disjoint sets:
\begin{description}
\item[$E_0$] (null) its elements have no realization;
\item[$E_I$] (instant) its elements are realizable and all their realizations have duration $0$;
\item[$E_F$] (fast) its elements have at least one realization of duration in $(0;1)$;
\item[$E_S$] (slow) its elements are realizable and any  realization has duration $\geq 1$.
\end{description}

We expect and will show in the following, that elements of $E_I$ should not be able to create much bandwidth, because the distance  collapses simultaneous events together; that those of $E_S$ should not be able to do so either, because they are not fast enough; and, for obvious reasons, non-realizable orbits in $E_0$ do not contribute at all. The case of $E_F$ is more involved. Intuitively, its elements correspond to fast behaviors which, in the case of non-structurally obese automata, should not be able to run for an unbounded duration, as they would otherwise be a source of obese bandwidth. More specifically, we will prove that elements of $E_S$ satisfy the Zeno property:
\begin{definition}[Zeno property]
A d-orbit $e$ is Zeno when, for any run factorized as $\zeta_0\rho_1\eta_1\zeta_1\rho_2\eta_2\dots\rho_n\eta_n\zeta_n$ such that $\gamma_d(\rho_i)=\gamma_d(\eta_i)=e$ for all $i$, we have that $\sum_i \tau(\rho_i)\leq 1$.
\end{definition}
I.e.~the accumulated duration of all occurrences of cycles with orbit $e$, except the last one in each pack, is bounded by 1. The last iteration needs to be excluded for purely technical reasons.

To establish that elements of $E_f$ are Zeno, we formulate a sufficient condition (in terms of the monoid) --- structural Zenoness, show that in a \nsoA, elements of $E_F$ satisfy this property (\cref{lem:st-non-obese&fast=>st-zeno}), and finally that structural Zenoness implies  Zenoness (\cref{lem:st-non-obese&st-zeno=>zeno}).
\begin{definition}
A d-orbit is \emph{structurally Zeno} iff all its self-loops are $\instant$ and it has at least one $\slow$ edge.
\end{definition}
\begin{lemma}\label{lem:st-non-obese&fast=>st-zeno}
For an \nsoA, all elements $e$ in the set $E_F$ are structurally Zeno.
\end{lemma}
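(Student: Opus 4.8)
The plan is to verify the two defining conditions of structural Zenoness for $e$ --- every self-loop of $e$ is $\instant$, and $e$ carries at least one $\slow$ edge --- by a case analysis on the diagonal labels of the matrix $B$ of $e=\langle q,B,q\rangle$, leaning heavily on idempotency of $e$ (so that $\gamma_d(\pi^k)=e$ for any representative cyclic path $\pi$ with $\gamma_d(\pi)=e$, and a power can moreover be chosen with $\gamma(\pi)$ idempotent), on the semiring laws of $\Lambda_d$, on the geometric facts recorded earlier (a $\fast$ label forces $\tau(L_{\bar\pi}(v,v'))\supseteq[0,1]$; a cyclic run of positive duration forces every clock to be reset along the cycle), and on \cref{lem:zero-time-edges,lem:d-orbit-self-loops,lem:trivial-instant,lem:instant+slow=forgetful}.

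First I would fix an idempotent $e\in E_F$ and a representative $\pi$ with $\gamma_d(\pi)=e$, chosen (by passing to a power) so that $\gamma(\pi)$ is idempotent as well. By \cref{cor:orbit-incoming-outgoing} every vertex of $e$ has an outgoing edge, so following outgoing edges from any vertex produces a circuit, and idempotency of the Boolean shadow of $B$ then yields a self-loop; hence $e$ has at least one self-loop. Since the automaton is \nsoA, $B$ has no $\fast$ diagonal entry, so every self-loop is $\instant$ or $\slow$; and by \cref{lem:instant+slow=forgetful} it cannot carry both an $\instant$ self-loop and a $\slow$ self-loop (that would already make $\pi$ structurally obese). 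Thus all self-loops of $e$ are of one kind.

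The heart of the argument is to exclude the case ``all self-loops $\slow$''. There I would show this forces every realization of $e$ to have duration $\geq 1$, i.e.~$e\in E_S$, contradicting $e\in E_F$. The mechanism: for an arbitrary realization $\rho$ of $e$ from $x$ to $x'$ in the full region $S(q)$, Puri's characterization (\cref{lem:puri}) together with the affine parametrisation of the timing DBM (\cref{prop:timingDBM,cor:Lipshitz}) shows that every vertex-component of $\rho$ inherits the common timing of $\rho$; a support/degree argument (a generic $x$ has all barycentric coordinates positive, so the used sub-orbit has positive in- and out-degree, hence contains a circuit) forces that support to pass through a vertex carrying a $\slow$ self-loop, whose outgoing edges --- incompatible with an $\instant$ label by idempotency, since composing a $\slow$ self-loop with an $\instant$ edge changes $(B^2)$ --- each cost a full time unit, so $\tau(\rho)\geq 1$. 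Combined with the previous paragraph, all self-loops of $e$ are then $\instant$. \emph{This step is the one I expect to be the main obstacle:} converting the coarse label pattern ``all self-loops $\slow$'' into the sharp estimate ``every realization lasts $\geq 1$'' requires controlling realizations that travel through the interior of a region rather than along its vertices, which is exactly what the DBM and Lyapunov toolkit of \cref{sec:DBM} is there to handle.

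It then remains to exhibit a $\slow$ edge. All self-loops being $\instant$, \cref{lem:trivial-instant} makes every self-loop vertex a singleton SCC; and since $e\in E_F$ has a realization of positive duration which $\instant$ self-loops cannot produce, $B$ must have a non-$\instant$ entry, necessarily off the diagonal. I would argue this entry is not $\fast$: a $\fast$ entry $B_{ij}$ ($i\neq j$) gives, via $\tau(L_{\bar\pi}(v_i,v_j))\supseteq[0,1]$, both a zero-time and a unit-time traversal $v_i\to v_j$ along $\bar\pi$; the all-clocks-reset bookkeeping of \cref{lem:d-orbit-self-loops}, together with the zero-time-edge propagation of \cref{lem:zero-time-edges}, then pins the target at the all-zero vertex and produces both a zero-time and a positive-time self-loop there, hence a $\fast$ diagonal entry, so $\pi$ would be structurally obese --- contradicting \nsoA. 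Therefore the off-diagonal non-$\instant$ entry is $\slow$, and $e$ is structurally Zeno.
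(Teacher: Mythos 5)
Your proposal follows the same case analysis as the paper (no $\fast$ self-loop, no mixed $\instant$/$\slow$, rule out all-$\slow$, conclude all-$\instant$ plus a $\slow$ edge), but the decisive sub-step---excluding ``all self-loops $\slow$''---is handled by a genuinely different argument that leaves a real gap. The paper's route is combinatorial and avoids the analytic machinery altogether: since $e\in E_F$, some vertex-to-vertex realization has duration $<1$, hence $e$ has an $\instant$ or $\fast$ edge; idempotency of $e$ then lets one chain zero-time edges (from $(e^2)[u,w]$ zero-time one extracts an intermediate $v$ with both $e[u,v]$ and $e[v,w]$ zero-time), so there are arbitrarily long zero-time paths; one of them revisits a vertex $v$, giving a zero-time cycle of some length $k$, and $e=e^k$ forces a zero-time self-loop at $v$---contradiction. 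In contrast you try to prove $e\in E_S$ (every realization lasts $\geq 1$) via Puri/DBM/Lyapunov, and the key inference is that every outgoing edge of a $\slow$-self-loop vertex ``costs a full time unit''. Your idempotency calculation correctly rules out an $\instant$ outgoing edge there (since $\slow\cdot\instant+\instant=\fast$ would change $B$ under squaring), but it does not rule out a $\fast$ outgoing edge, and $\fast$ edges admit zero-time traversals. So the claimed bound $\tau(\rho)\geq1$ does not follow, and you have not derived the contradiction. You flag this as the expected obstacle but do not close it; the paper's zero-time chaining is precisely the missing idea that sidesteps it.

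A secondary concern: in your last paragraph you extend the ``all clocks reset'' bookkeeping of \cref{lem:d-orbit-self-loops} and the propagation of \cref{lem:zero-time-edges} (which are stated for self-loops, resp.~single automaton edges) to an arbitrary off-diagonal $\fast$ entry of a whole path orbit. That the zero-time run ending at the all-zero vertex can be restarted as a zero-time self-loop there is not immediate---\cref{lem:zero-time-edges} only propagates \emph{from} the all-low vertex, and you would first need $v_0\models\guard_{\delta_1}$, which does not obviously follow from $v_i\models\guard_{\delta_1}$ when the guard carries a positive lower bound. This step needs its own argument before the conclusion ``$\pi$ is structurally obese'' can be drawn.
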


\begin{proof}
First, observe that any edge with a $\fast$ self-loop would already be an instance of the obesity pattern (with $u=v$). Moreover, according to \cref{lem:instant+slow=forgetful}, if there were both an $\instant$ and a $\slow$ self-loop in an idempotent d-orbit, the orbit would also contain the obesity pattern.

Let us take any element $e\in E_F$.

The d-orbit $e$, being in $E_F$, must have both a zero time edge and a positive time edge. If they are the same edge, it is a $\fast$ edge, which cannot be a self-loop (as we just observed), so there must be 2 distinct vertices. If the zero time and positive time edges are different edges, they must either have different sources or different destinations (otherwise they are the same, $\fast$, edge), hence there are at least 2 vertices in all cases. Either both these vertices have a self-loop, or at least one does not (we call such a vertex  transient), but in this case, it is reachable and co-reachable from 2 different vertices having self-loops. In all cases, there are 2 vertices with self-loops, which cannot be $\fast$.

But it is also impossible for all self-loops to be $\slow$: since
$e$ is idempotent and contains a zero-time ($\instant$ or $\fast$) edge,  then it must contain an arbitrarily long (non-elementary) zero-time path of edges, which may only have zero-time factors. However, sufficiently long paths  must contain an (also zero-time) cycle, the starting vertex of which must have a zero-time self-loop (because of idempotency), which contradicts the hypothesis.

Thus necessarily, all loops in 
$e$ are $\instant$. Just notice in this case that all SCC of 
$e$ are singletons (\cref{lem:trivial-instant}). Additionally, we also know that non-self-loop outgoing edges of 
$e$
from vertices having a self-loop must be $\slow$.
\end{proof}

Before proving that structural Zenoness implies Zenoness, let us describe the precise shape of structurally Zeno orbits.

\begin{lemma}\label{lem:zeno-bounds}
    Let $e$ be a structurally Zeno d-orbit. Then, for all  clocks $c$
    there is a natural number $\Low_c$ such that, in all regions traversed by any realization $\pi$, $c$ has values staying in an interval $[\Low_c,\Hi_c)$ (where $\Hi_c=\Low_c+1$).

    For clocks $c$ reset in some realization of $e$ $\Low_c=0$ and $\Hi_c=1$.

\end{lemma}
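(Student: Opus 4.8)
The plan is to squeeze everything out of the single structural fact that \emph{every self-loop of $e$ is $\instant$}; the hypothesis that $e$ has a $\slow$ edge will only be used to know that $e$ is realized by at least one path. First I would note that $e$ has a self-loop vertex. Since $e$ is idempotent (as are all d-orbits to which this lemma is applied), the support of its matrix is an idempotent Boolean matrix, hence a transitive relation; and by \cref{cor:orbit-incoming-outgoing} every vertex has an outgoing edge, so starting from any vertex and following outgoing edges one closes a cycle, which transitivity collapses to a self-loop. Fix such a vertex $v\in \Ver(S(q))$. By hypothesis $\gamma_d(\pi)[v,v]=\instant$ for every path $\pi$ with $\gamma_d(\pi)=e$, so $L_{\bar\pi}(v,v)$ is a singleton of duration $0$; in particular there is a $0$-time run along $\bar\pi$ from $v$ back to $v$.

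Next I would analyse that $0$-time run. Writing it as $(q_0,y_0)\to\cdots\to(q_{|\pi|},y_{|\pi|})$ with $y_0=y_{|\pi|}=v$ and no time elapsing, each $y_i$ is obtained from $y_{i-1}$ by a reset only, so $y_{i,c}=0$ as soon as clock $c$ is reset by the $i$-th or an earlier edge and $y_{i,c}=v_c$ otherwise. Comparing $y_0$ with $y_{|\pi|}$ forces $v_c=0$ for every clock $c$ reset anywhere along $\pi$, hence $y_{i,c}=0$ at \emph{every} position $i$. Since $y_i$ lies in $\overline{S(q_i)}$, the $c$-projection of $S(q_i)$ — a singleton integer or an open interval of length $1$ between consecutive integers — has $0$ in its closure, so it is $\{0\}$ or $(0,1)$, in particular contained in $[0,1)$. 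This already settles the claim, with $\Low_c=0$, for every clock reset along that particular $\pi$, and in passing shows that the $c$-projection of the region $S(q)$ — the common start/end region of all realizations of $e$, since $e=\tuple{q,\cdot,q}$ — is contained in $[0,1)$.

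Finally I would glue the realizations together and handle the never-reset clocks. Fix a clock $c$. If $c$ is reset along some realization, put $\Low_c=0$; for a realization $\pi'$ that resets $c$ the previous argument applies verbatim, while if $\pi'$ does not reset $c$ then $c$ grows monotonically along any run of $\pi'$ from its initial to its final value, both lying in the $c$-projection of $S(q)\subseteq[0,1)$, so the run lasts less than $1$ and every intermediate value of $c$, hence the $c$-projection of every region traversed, stays in $[0,1)$ (the sub-case where $\pi'$ resets all clocks is immediate from the second paragraph). If $c$ is reset along no realization, put $\Low_c=n$ where the $c$-projection of $S(q)$ is $(n,n+1)$ or $\{n\}$, and the same monotonicity argument, now with start and end values in that projection, keeps $c$ and every traversed region's $c$-projection inside $[n,n+1)=[\Low_c,\Hi_c)$. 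The step I expect to need the most care is the interface between the \emph{abstract} picture (the $0$-time run through region \emph{vertices}, living in closed regions, which pins down the integer level of each clock) and the \emph{concrete} picture (arbitrary runs of a realization, whose recorded states live in open regions, from which the actual clock values are read off); once this is set up cleanly, what remains are routine facts about unit-width regions and about never-reset clocks being monotone, and I do not expect to need the finer "all SCCs are singletons" property of structurally Zeno orbits here.
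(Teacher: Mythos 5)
Your proof is correct, and it takes a genuinely different route from the paper's. The paper proceeds in two halves, the second of which is comparatively roundabout: first it shows that every never-reset clock stays inside one fixed unit interval by monotonicity plus cyclicity of $\pi$, and that this interval is open rather than a singleton (a singleton would force duration $0$, contradicting the $\slow$ edge); then, to handle reset clocks, it establishes separately that at least one clock is never reset (if all were, the zero-time runs would collapse every self-loop to $(0,\dots,0)$, so there would be a single vertex and no $\slow$ edge), deduces that every run lasts strictly less than one time unit, and only then concludes that reset clocks live in $[0,1)$. You short-circuit that second half: having isolated a self-loop vertex $v$ and the unique zero-duration run of $\bar{\pi}$ from $v$ to $v$, you read off $v_c=0$ for every clock reset along $\pi$, and the observation that the intermediate vectors of that run lie in $\overline{S(q_i)}$ with $c$-coordinate $0$ immediately forces each traversed region's $c$-projection into $\{0\}$ or $(0,1)$ --- no duration bound, no existence-of-a-non-reset-clock detour, and no need to exclude the singleton case. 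Gluing across realizations via the shared region $S(q)$ and treating never-reset clocks by the same monotonicity finishes it, exactly as you describe. The trade-off is the extra care at the closed/open interface you already flag, which is indeed routine once stated; and, as you surmise, the singleton-SCC property is not used in either proof.
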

\begin{proof}
    Any clock $c$ never reset in any realization $\pi$ must have increasing values along any run of $\pi^*$. So it is not possible for $\pi$ to visit regions with different intervals for $c$ (since different intervals are disjoint): otherwise, let $I$ and $J$ be two different visited intervals; wlog we assume all elements of $I$ are smaller than all elements of $J$; after visiting $J$ in one iteration of $\pi$, $I$ would be visited in the next iteration, with $c$ getting a strictly smaller value than one obtained before when visiting $J$. Since the common interval is defined by region constraints, the interval must be a singleton or an open interval of length 1 and natural bounds. Singleton is impossible (this would imply all runs have duration $0$, so there could not be any $\slow$ edge), so it is open of length 1  (indeed, at least the starting region is bounded, and all clocks in this region have a value either contained in a singleton or in an interval of length 1). 
    
    Since all paths realizing $e$ share at least a common first region, the interval is common to all regions traversed by all realizations. 
    
    So the statement is true for non-reset clocks.

    Observe that there has to be at least one non-reset   clock in any $\pi$ realizing $e$: indeed a zero-time run along $\bar\pi$ is possible from any non-transient vertex (all loops are $\instant$); but with all  clocks being reset, such runs all go to $(0,\dots,0)$, implying there is only one self-loop, which is only possible if there is only one vertex, which contradicts there must also be a $\slow$ edge.
    
    So, there is one  non-reset clock, and its value remains contained in an open interval of length $1$, common to all regions traversed by any realization of $e$. This implies the total duration of any run along any factor of an element of $\Paths(e)^*$ is strictly smaller than $1$, this is true in particular if the run starts from a region where a clock was just reset. So clocks reset during $\pi$ can have values in $[0,1)$ only.
\end{proof}

\begin{lemma}[Shape of structurally Zeno d-orbits]\label{lem:zeno-shape}
Let $e$ be a structurally Zeno d-orbit. We define:
\begin{itemize}
    \item  $X_{nr}$ as the set of  clocks not reset by any realization of $e$;
    \item and $X_r$ as those that are reset by some realization;
    \item $v_0$, the vertex of $e$ such that for any clock, $(v_0)_c = \Low_c$ (as defined in \cref{lem:zeno-bounds});
    \item $V_h$, the set of vertices $v$ such that for any clock $c$ in  $X_{nr}$, $v_c=\Hi_c$;
    \item $v_1$, the vertex of $V_h$ such that for any clock in $X_r$, $(v_1)_c = \Low_c$.
\end{itemize}
Then:
\begin{itemize}
    \item $v_0\in \Ver(e)$ and $v_1\in \Ver(e)$; 
    \item the vertices $v$ such that, for all clock $c\in X_r$ $v_c = 0$, have an $\instant$ self-loop and are the only ones with a self-loop;
    \item all other vertices $v$ are transient, belong to $V_h$ and only have one 
    outgoing edge; this edge is $\instant$ and goes to $v_1$;
    \item from $v_0$ to any vertex $v\in V_h$ there is a $\slow$ edge and all the $\slow$ edges are like this.
\end{itemize}
Furthermore, if the automaton is not structurally obese, then there is no 
edge from $v_1$ to $v_0$ in any realizable p-orbit.
\end{lemma}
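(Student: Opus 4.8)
The plan is to work entirely inside the single region $R:=S(q)$ carrying the (idempotent) structurally Zeno d-orbit $e=\tuple{q,B,q}$, to fix a cyclic path $\pi$ on $q$ with $\gamma_d(\pi)=e$ that resets every clock of $X_r$ (possible by idempotency, by concatenating realizations resetting each such clock), and to read off the four structural bullets from the interplay of the clock bounds of \cref{lem:zeno-bounds} with the geometry of the region simplex and with zero-time-run analysis (\cref{lem:zero-time-edges}, \cref{lem:trivial-instant}, \cref{lem:puri}); the closing clause then drops out as a verbatim instance of the Type~II obesity pattern of \cref{fig:var:obesity}. Recall that $e$ has only $\instant$ self-loops and at least one $\slow$ edge; \cref{lem:zeno-bounds} confines each clock $c$ to $[\Low_c,\Hi_c)$ along every realization of $e$, with $[\Low_c,\Hi_c)=[0,1)$ for $c\in X_r$, and its proof also yields $X_{nr}\neq\emptyset$, so $v_0$ and $v_1$ --- which disagree exactly on the $X_{nr}$-coordinates --- are distinct.

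For the first bullet I would use two facts about the simplex $\overline R=\Ver(e)$. First, the coordinate-wise least point of any bounded region is a vertex of it, and by \cref{lem:zeno-bounds} that point is exactly $v_0$. Second --- the delicate step --- I would show that the fractional-part chain of $R$ lists the clocks of $X_r$ before those of $X_{nr}$, so that the point obtained by pushing up \emph{only} the $X_{nr}$-clocks, namely $v_1$, is again a vertex of $\overline R$. This follows from region-invariance: since $e=\tuple{q,\cdot,q}$, any run of a realization that starts in $R$ ends in $R$; the $\instant$ self-loop of $e$ gives a zero-time run of $\pi$ back to its start, along which every clock of $X_r$ is reset to $0$, and the $\slow$ edge (which must have duration exactly $1$, with source and target $X_{nr}$-coordinates $\Low_c$ resp.\ $\Hi_c$, since the non-reset clocks may not leave their width-$1$ intervals) constrains the fractional-part orderings of the intermediate regions; combining these with the invariance forbids a reset clock from carrying a strictly larger fractional part than a non-reset one in $R$.

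The remaining bullets are bookkeeping. A self-loop of $e$ at $v$ is $\instant$, i.e.\ a zero-time run $v\to v$; since no time elapses, clocks move only by reset-to-$0$, and returning to $v$ forces $v_c=0$ for every clock the run resets --- which, $\pi$ resetting all of $X_r$ and zero-time traversability propagating along $\pi$ by \cref{lem:zero-time-edges}, yields $v|_{X_r}=0$; conversely any such $v$ admits an all-zero-delay run back to itself, hence an $\instant$ self-loop. So a vertex with some $X_r$-coordinate equal to $1$ has no self-loop and, by idempotency together with \cref{lem:trivial-instant}, is not on a non-trivial SCC either, hence is transient; a further zero-time argument (it cannot be left without exceeding $\Hi_c$ for the offending clock unless that clock is reset at once, forcing a zero-time exit which by \cref{lem:zero-time-edges} and \cref{lem:zeno-bounds} lands on $v_1$) gives the third bullet. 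The duration bookkeeping of the previous paragraph shows every $\slow$ edge runs from $v_0$ to some $v\in V_h$, and conversely every $v\in V_h$ is a target of one --- the fourth bullet.

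Finally the closing clause. Suppose some realizable p-orbit $o=\gamma(\pi')$ has an edge from $v_1$ to $v_0$. Since $v_0,v_1\in\Ver(S(q))$, necessarily $o=\tuple{q,\cdot,q}$, so $\pi'$ is a cycle on $q$ with $L_{\overline{\pi'}}(v_1,v_0)\neq\emptyset$; hence $\gamma_d(\pi')[v_1,v_0]\neq\none$ and $\gamma_d(\pi')$ is realizable. By the second bullet, $\gamma_d(\pi)[v_0,v_0]=\gamma_d(\pi)[v_1,v_1]=\instant$ (all $X_r$-coordinates of $v_0$ and of $v_1$ equal $\Low_c=0$); by the fourth bullet $\gamma_d(\pi)[v_0,v_1]=\slow$ (as $v_1\in V_h$); and $v_0\neq v_1$. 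These are exactly the ingredients of Type~II structural obesity (\cref{fig:var:obesity}, right), with $u=v_0$, $v=v_1$ and ``the other realizable d-orbit'' equal to $\gamma_d(\pi')$, so the cycle $\pi$ --- hence the automaton --- would be structurally obese, contradicting the hypothesis. The main obstacle is the delicate second step above, i.e.\ pinning down the fractional-part order of $R$ so that $v_1$ really is a vertex; a recurring related subtlety is bridging the gap between the clocks reset by one particular realization of $e$ and the full set $X_r$, which is precisely what \cref{lem:zero-time-edges} is used for. Everything after that, including the closing clause, is routine.
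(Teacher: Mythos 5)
Your overall plan lines up with the paper for the second, third, and fourth bullets and for the closing clause (the paper also reads off Type~II obesity verbatim with $u=v_0$, $v=v_1$). However, there is a real gap at precisely the place you flag as the ``delicate step'', namely showing that $v_1\in\Ver(e)$. You propose to establish the fractional-part ordering of $R$ (all $X_r$-clocks before all $X_{nr}$-clocks) \emph{first}, as a consequence of ``region-invariance'' plus the $\slow$ and $\instant$ edges, and then deduce $v_1\in\Ver(e)$ from it. But the paragraph you give does not contain an actual argument for the ordering: the sentence ``combining these with the invariance forbids a reset clock from carrying a strictly larger fractional part than a non-reset one'' asserts the conclusion rather than deriving it. If one tries to push it through, e.g.\ by running a zero-time realization from an interior point $x\in R$ and observing that the image would violate the (closed) ordering constraints of $\bar R$, one first has to justify that a zero-time realization from an \emph{arbitrary interior} $x$ exists --- and \cref{lem:zero-time-edges} only propagates zero-time traversability \emph{upward} from the vertex of lowest coordinates, which you have not shown can take the loop in zero time; the one vertex you \emph{do} know has a zero-time self-loop need not be that lowest vertex. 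So as written, the step is circular or unjustified.

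The paper avoids this by inverting the order of deduction: it never tries to establish the fractional-part ordering up front. Instead it first shows that in the induced p-orbit $e'$, $\instant$ edges preserve $X_{nr}$-coordinates and $\slow$ edges strictly increase all of them; since $V_h$ is non-empty (the target of the $\slow$ edge lies there) and is maximal in this preorder, $V_h$ must contain a terminal SCC of $e'$, which by idempotency and \cref{lem:trivial-instant} is a non-transient singleton, and a non-transient vertex has all $X_r$-coordinates equal to $0$ (your own zero-time self-loop argument, which the paper runs with the \emph{same realization} that gives the $\instant$ loop, so no extra existence claim is needed). This pins down the singleton as $v_1$, proving $v_1\in\Ver(e)$. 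Only \emph{then} does the paper read off the fractional ordering as an immediate consequence of $v_1$ being an integer point of $\bar R$ with $X_r$-coordinates $0$ and $X_{nr}$-coordinates $\Hi$, and uses it to conclude $V_l=\{v_0\}$. A secondary but related weak point in your write-up is the ``conversely'' direction of the second bullet --- that a vertex with all $X_r$-coordinates $0$ \emph{has} a self-loop --- which again implicitly needs a zero-time run from such a vertex; the paper gets this for free from the combinatorics of classes-by-$X_{nr}$-coordinates in the idempotent $e'$ rather than from \cref{lem:zero-time-edges}.
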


\begin{proof}
Let us define $e'\triangleq \gamma_{np}(e)$ (the p-orbit common to all paths of $\Paths(e)$,  it is idempotent).

Observe that  clocks that are positive for at least one vertex with a self-loop (necessarily $\instant$) must be in $X_{nr}$ (otherwise the outgoing zero-time edge would have to target a vertex having $0$ as coordinate for this clock, hence this edge would not be a self-loop).

Also, edges in $e'$ can only go from a vertex to another one with equal coordinates on $X_{nr}$ (for an $\instant$ edge in $e$) or one with all coordinates on $X_{nr}$ strictly higher (for a $\slow$ edge in $e$), this means that for a vertex to have an outgoing $\slow$ edge in $e$, it needs to have all coordinates in $X_{nr}$ set to $\Low$ (let us call the set of such vertices $V_l$), otherwise a delay of 1 time unit would go outside $R$, the support region of $e$. Also, this edge must go to a vertex where all clocks in $X_{nr}$ are $\Hi$ (i.e.~a vertex of $V_h$). Remark that at least one $\slow$ edge actually exists from $V_l$ to $V_h$, otherwise $e$ would not be structurally Zeno. Moreover, due to this structure, $V_h$ contains a terminal SCC of $e'$, hence at least one non-transient vertex (idempotency), but such a vertex can only be the vertex $v_1$ having all clocks in $X_r$ set to $0$. Since this is the only vertex in $V_h$ that is terminal for $e'$, it is reachable by all vertices in $V_h$. By idempotency, it means there is a $\slow$ edge from $V_l$ to $v_1$. Dually, $V_l$ contains an initial SCC of $e'$, hence a non-transient vertex, which is the unique vertex $v_0$ such that all clocks in $X_r$ have value $0$. $v_0$ has an outgoing $\slow$ edge into $V_h$, and repeating the argument of idem-potency, it has a $\slow$ edge to $v_1$.

Moreover, since $v_1$ has all clocks in $X_r$ equal to $0$ and those in $X_{nr}$ equal to $\Hi$, inside the region, clocks in $X_r$ must have a smaller fractional part than those in $X_{nr}$. Therefore the component of edges having coordinates in $X_{nr}$ set to $\Low$, must also have the resetting ones set to $\Low=0$, hence $V_l$ actually is the singleton $\{v_0\}$.

To finish, concerning non-transient vertices, vertices neither in $V_l$ nor $V_h$ are only connected to other vertices having the same coordinates in $X_{nr}$. Hence, each class of common coordinates contains at least one initial and one terminal SCC, which have to be non-transient singletons, for which all clocks in $X_r$ are set to $0$. Therefore each such class actually is a non-transient, isolated, singleton (so all transient vertices are in $V_h$).

Concerning transient vertices, let us consider any given region path $\pi$ realizing $e$. For any $i=0..|\pi|$, every vertex $v$ of $\gamma_d(\pi[i])$ such that, for at least one $c\in X_{nr}$, $v_c=\Hi_c$, has exactly one
successor: its successor by reset. Notice this is actually an $\instant$ edge. This successor, hence having the same values as $v$ for every clock of $X_{nr}$. So, by induction on $i$, we obtain that, in $e=\gamma_d(\pi)$, any vertex of $V_h$  has exactly one 
outgoing edge, going to another vertex with the same coordinates on $X_{nr}$ and this edge is $\instant$. This is, in particular, true for transient vertices, as we already know they belong to $V_h$. But as they need to have $v_1$ as a successor (transitivity), $v_1$ actually is their only 
successor and it is reachable by an $\instant$ edge.

Then, we have to prove that there is a $\slow$ edge from $v_0$ to every vertex in $V_h$ other than $v_1$. Consider $v$ such a vertex. Being transient, it should have an incoming edge from elsewhere (by \cref{cor:orbit-incoming-outgoing}). This can either be an $\instant$ edge from another vertex of $V_h$ (which should have its own incoming edge) or a $\slow$ edge coming from $v_0$ (as the only element in $V_l$). As elements in $V_h$ cannot have self-loops, and as the number of vertices is finite (if a vertex from $V_h$ is repeated, by idempotency this will mean a self-loop), eventually the second option will have to be taken, and by idempotency, this means a $\slow$ edge from $v_0$ to $v$. 

Lastly, concerning realizable p-orbits, 
just remark that going from $v_1$ to $v_0$ would create an obesity pattern of type II (with $v_0$ in the role of $u$ and $v_1$ in the role of $v$).
\end{proof}

Now we established a sufficiently precise characterization of structurally Zeno orbits and thus, thanks to \cref{lem:puri}, that of the stochastic matrices that may transform any clock vector, by executing any realization of a given structurally Zeno orbit. We use this knowledge to prove the next lemma: 
\begin{lemma}\label{lem:st-non-obese&st-zeno=>zeno}
In an \nsoA, any structurally Zeno d-orbit is Zeno.
 \end{lemma}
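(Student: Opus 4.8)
The plan is to distil from \cref{lem:zeno-shape} a linear ``Zeno progress'' functional $\mu$ on the region cycled by $e$ such that the duration of $\rho_i$ equals the increase of $\mu$ along $\rho_i$, $\mu$ is nondecreasing along $\eta_i$, and --- this is the crux --- $\mu$ cannot be decreased by the intermediate segment $\zeta_i$; telescoping over $i=1,\dots,n$ then gives $\sum_i\tau(\rho_i)\le 1$. Concretely: fix a structurally Zeno d-orbit $e$ in an \nsoA\ and let $R=S(\src(\pi))$ be the region cycled by every $\pi\in\Paths(e)$. \cref{lem:zeno-shape} supplies the clock sets $X_{nr}\neq\emptyset$ and $X_r$, the vertices $v_0$ (all clocks at their low value) and $v_1\in V_h$, and the orbit-graph structure of $e$: all self-loops are $\instant$, $V_h=\{v_1\}\cup\{\text{transient vertices}\}$, the only edge of $e$ into $v_0$ is its self-loop, the only edges of $e$ into a transient vertex come from $v_0$, every remaining vertex is an isolated singleton (only an $\instant$ self-loop), and --- crucially --- no realizable p-orbit on $R$ has an edge from $v_1$ to $v_0$. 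Fix some $c\in X_{nr}$: by \cref{lem:zeno-bounds} we have $\Hi_c=\Low_c+1$, every vertex of $R$ has $c$-coordinate $\Low_c$ or $\Hi_c$, and $c$ is never reset along a realization of $e$. Take a run factorized as $\zeta_0\rho_1\eta_1\zeta_1\cdots\rho_n\eta_n\zeta_n$ with $\gamma_d(\rho_i)=\gamma_d(\eta_i)=e$, and write $x_i,x_i',x_i'',y_i\in R$ for the clock vectors at the start of $\rho_i$, the start of $\eta_i$, the start of $\zeta_i$ and the end of $\zeta_i$ (so $y_i=x_{i+1}$), and $\lambda_v(\cdot)$ for barycentric coordinates relative to $\Ver(R)$.

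\emph{The functional.} Put $\mu(x)=\sum_{v\in V_h}\lambda_v(x)\in[0,1]$. Since $c$ is not reset along $\rho_i$ it advances by $\tau(\rho_i)$, so $\tau(\rho_i)=(x_i')_c-(x_i)_c$; writing $(x)_c-\Low_c$ as the total weight on the vertices with $c$-coordinate $\Hi_c$ --- these are $V_h$ together with the isolated singletons at level $\Hi_c$, whose weight is preserved along $\rho_i$ by \cref{lem:puri} --- we get $\tau(\rho_i)=\mu(x_i')-\mu(x_i)\ge 0$, and likewise $\tau(\eta_i)=\mu(x_i'')-\mu(x_i')\ge 0$. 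Hence $\mu$ is nondecreasing along each $\rho_i$ and each $\eta_i$, and $\tau(\rho_i)$ is exactly its increase along $\rho_i$.

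\emph{The key step and the conclusion.} The heart of the argument is the inequality $\mu(y_i)\ge\mu(x_i')$ for $i<n$. Applying \cref{lem:puri} to $\eta_i$ (so $\lambda(x_i'')=\lambda(x_i')Q$ with $Q$ stochastic and dominated by the adjacency matrix of the orbit of $e$), and using that in $e$ both $v_0$ and every transient vertex are fed only from $v_0$, one bounds the weight at $x_i''$ carried by $\{v_0\}\cup\{\text{transient vertices}\}$ by $\lambda_{v_0}(x_i')$; thus apart from the inert isolated-singleton mass, almost all of the mass at $x_i''$ sits on $v_1$. Applying \cref{lem:puri} to $\zeta_i$ (so $\lambda(y_i)=\lambda(x_i'')P$ with $P$ stochastic and dominated by the adjacency of the realizable p-orbit $\gamma(\Path(\zeta_i))$, which by \cref{lem:zeno-shape} has no $v_1\to v_0$ edge), the $v_1$-mass is redistributed entirely away from $v_0$, so $\mu$ cannot drop and $\mu(y_i)\ge\mu(x_i')$ follows (tracking the isolated-singleton components, which are inert under $e$). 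Granting this,
\[
\sum_{i=1}^{n}\tau(\rho_i)=\sum_{i=1}^{n-1}\bigl(\mu(x_i')-\mu(x_i)\bigr)+\bigl(\mu(x_n')-\mu(x_n)\bigr)\le\sum_{i=1}^{n-1}\bigl(\mu(x_{i+1})-\mu(x_i)\bigr)+\bigl(1-\mu(x_n)\bigr)=1-\mu(x_1)\le 1,
\]
which is precisely the Zeno property.

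\emph{Main obstacle.} The whole weight of the proof rests on the inequality $\mu(y_i)\ge\mu(x_i')$, and this is exactly where the hypothesis is used: non-structural-obesity is what forbids a ``resetting'' edge $v_1\to v_0$ in any realizable p-orbit, and without such an edge the junk segment $\zeta_i$ cannot undo the Zeno progress accumulated by the two preceding $e$-cycles. Making this rigorous requires the full vertex classification of \cref{lem:zeno-shape} --- in particular that transient vertices receive weight in $e$ only from $v_0$ --- together with a careful accounting of the isolated-singleton components along $\zeta_i$; everything else is routine bookkeeping.
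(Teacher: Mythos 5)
The functional is the wrong one, and this is a genuine gap. You set $\mu(x)=\sum_{v\in V_h}\lambda_v(x)$ where $V_h$ is the \emph{geometrically} defined set of vertices with all $X_{nr}$-coordinates at $\Hi$. That choice gives the clean progress identity $\tau(\rho_i)=\mu(x_i')-\mu(x_i)$ (since $c\in X_{nr}$ is not reset along $e$, and the $\Hi_c$-but-not-$V_h$ singletons are inert under $e$), but it fails the Lyapunov property along $\zeta_i$. The hypothesis of non-structural-obesity, via \cref{lem:zeno-shape}, forbids only the single p-orbit edge $v_1\to v_0$. It says nothing about edges $v_1\to v^j$ (or transient $\to v^j$) where $v^j$ is an isolated singleton with some $X_{nr}$-coordinate at $\Low$ and hence $v^j\notin V_h$. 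Such edges are perfectly possible, because clocks of $X_{nr}$ are only guaranteed non-reset along realizations of $e$ --- the junk segments $\zeta_i$ may reset them freely and push the state back towards a $\Low$ face of $R$. Whenever $\Path(\zeta_i)$ has an edge of this kind, $\mu(y_i)<\mu(x_i'')$ can occur, your claimed inequality $\mu(y_i)\ge\mu(x_i')$ fails, and the telescoping collapses. The parenthetical ``tracking the isolated-singleton components, which are inert under $e$'' is precisely the point at which the argument breaks, because they are not inert under $\zeta_i$, which is where you need them to be.

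The paper sidesteps this by using a \emph{reachability-defined} set rather than a geometric one: it sets $V_1=\{v\mid\exists\pi:\ (v_1,v)\text{ is an edge of }\gamma(\pi)\}$, the cone reachable from $v_1$ through any realizable cycle, and $V_0=\Ver(R)\setminus V_1$, then takes $\eta=1-\ell_{V_0}=\sum_{v\in V_1}\lambda_v$. The complement $V_0$ is automatically closed under incoming cycle edges (an initial set in the sense of \cref{con:one-lyap}), so $\eta$ is non-decreasing along \emph{every} realizable run, in particular along every $\zeta_i$ --- this is exactly the monotonicity you need and cannot get for $\mu$. The cost is that the exact duration identity is lost ($V_1$ need not contain all transient vertices of $V_h$, so a single $e$-cycle may increase $\eta$ by less than $\tau(\rho_i)$); the paper recovers a usable bound by the explicit stochastic-matrix computation showing $\tau(\rho_1)\le\eta(x_2)-\eta(x_0)$ across a \emph{pair} of consecutive $e$-cycles, which is why the Zeno property excludes the last $\eta_i$ of each pack. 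To fix your argument you would have to replace $V_h$ by the paper's $V_1$ and then redo the progress step as an inequality over $\rho_i\eta_i$ rather than an identity over $\rho_i$ alone.
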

\begin{proof}
Let us fix a structurally Zeno idempotent $e$.

To prove the lemma, first, we describe a ranking function $\eta$ from the support region of $e$ to the real interval $[0,1]$ such that its value cannot decrease along a run; then we show that in any sequence of $k+1$ repetitions of $e$ (for some positive integer $k$), the accumulated duration of the first $k$ consecutive realizations of $e$ is smaller or equal to the increase of $\eta$ over the $k+1$ realizations. So, on a full run, the total duration spent in all realizations of $e$ (except the last of each repetitive sequence) is a fortiriori smaller than the total increase of $\eta$ over the run, which is smaller than $1$.

In the developments below, we abundantly use the definitions and characterizations from \cref{lem:zeno-shape}.

\proofsubparagraph{Construction of the ranking function $\eta$.}

We partition the vertices of the region into the sets 
$V_1\triangleq  \{ v \mid \exists \pi \text{ s.t. } (v_1, v) \text{ is an edge of } \gamma(\pi) \}$,
and $V_0\triangleq \Ver(R) \setminus V_1$. Since there is no return path from $v_1$ to $v_0$ (no structural obesity), this means $v_0\in V_0$. Note that this partitioning  
only depends on $R$, and not on $e$, knowing $R$.

We consider the Lyapunov function $\ell_{V_0}$, as described in \cref{con:one-lyap}.
As $V_0$ is an initial set of vertices of $\gamma(\pi)$ for any cycle $\pi$ starting from $R$ 
(its complement, $V_1$, is defined as a reachability cone),
for any clock vectors $x$ and $x'$ in $R$, such that $x'\in\Reach_\pi(x)$, it holds that $\ell_{V_0}(x)\geq \ell_{V_0}(x')$ (still according to the proof of the same corollary).

Hence, defining $\eta$ as the function $x\mapsto 1 - \ell_{V_0}$, we obtain a non-decreasing ranking function as required.

\proofsubparagraph{Proof of the link of $\eta$ with the accumulated duration of $e$.}

For $x_0$, $x_1$,  $x_2$, clock vectors and $\rho_1$ and $\rho_2$, realizations of the structurally Zeno d-orbit $e$ such that $x_1$ is reached from $x_0$ by reading $\rho_1$ and $x_2$ is reached from $x_1$ by reading $\rho_2$ , we will prove that $\tau(\rho_1)  \leq \eta(x_2)-\eta(x_0)$ (which is enough to deduce the Zeno property of $e$).

According to \cref{lem:puri}, there exist two stochastic matrices $P_1$ and $P_2$ such that $\lambda(x_1)=\lambda(x_0)P_1$, $\lambda(x_2)=\lambda(x_1)P_2$, where for any clock vector $x$ in the region $R$, $\lambda(x)$ is the vector of its barycentric coordinates with respect to vertices of $R$, sorted in clock order. Furthermore, according to the characterization from \cref{lem:zeno-shape}, $P_i$ must have the following form, for $\mu_j^i\geq 0$ and $\sum_j \mu_j^i = 1$:
\[P_i=\left(\begin{matrix}
\mu_0^i&0&0&\dots&0&0&\mu_1^i&\mu_2^i&\dots &\mu_{\#X_{r}}^i\\
0&1&0&\dots&0&0&0&0&\dots&0\\
0&0&1&\ddots&0&0&0&0&\dots&0\\
\vdots&\vdots&\ddots&\ddots&\ddots&\vdots&\vdots&\vdots&\ddots&\vdots\\
0&0&0&\ddots&1&0&0&0&\dots&0\\
0&0&0&\dots&0&1&0&0&\dots&0\\
0&0&0&\dots&0&0&1&0&\dots&0\\
0&0&0&\dots&0&0&1&0&\dots&0\\
\vdots&\vdots&\vdots&\vdots&\ddots&\vdots&\vdots&\vdots&\ddots&\vdots\\
0&0&0&\dots&0&0&1&0&\dots&0
\end{matrix}\right)
=
\left(\begin{matrix}
\mu_0^i&0&\mu_1^i&\mon_i\\
0&Id_{\#X_{nr} - 1}&0&0\\
0&0&1&0\\
0&0&1_{\#X_r-1}&0
\end{matrix}\right),
\]
where $Id_{\#X_{nr} - 1}$ is the identity matrix of size $\#X_{nr} - 1$, $\mon_i$ is the line vector $(\mu_2^i,\dots,\mu_{\#X_r}^i)$ and $1_{\#X_r-1}$ is the column vector $\left(\begin{matrix}1\\\vdots\\1\end{matrix}\right)$ of height $\#X_r-1$.

Since the matrix $P_i$ describes a barycentric coordinates transform where vertices are sorted in clock order, the first line and column concern $v_0$, the next $\#X_{nr}-1$ (5 here) lines and columns concern the isolated vertices with a self-loop, the $\#X_{nr}+1$-th line and column concerns $v_1$, and last, the last $\#X_r-1$ (3 here) lines and columns concern the transient vertices.

Remark that running $e$ in $t$ units of time must increase all non-reset clocks by $t$, in particular $c$ the one with the smallest fractional part. Hence going from $x_0$ to $x_1$, if we decompose $x_i$ in the form $\lambda_i v_0 + (1-\lambda_i) v$, with $v\in\Conv(\Ver(R)\setminus\{v_0\})$ and $\lambda_i$ stands for $\lambda(x_i)_{v_0}$, then projecting on coordinate $c$, we get $\tau(\rho_1)=x_c^1-x_c^0=(1-\lambda_1)-(1-\lambda_0)=\lambda_0-\lambda_1$.
Using the notations from the matrix $P_1$, we get $\tau(\rho_1)=\lambda_0 -\lambda_0\mu_0^1$. 

Now, observe that $\eta(x_2)-\eta(x_0)=\lambda(x_2)\one_{V_1} - \lambda(x_0)\one_{V_1}=\lambda(x_0)(P_1P_2-Id)\one_{V_1}$ (where $\one_{V_1}$ is the column vector having $1$ for all entries corresponding to the index of a vertex of $V_1$ and having $0$ everywhere else).
Moreover 
\[P_1P_2-Id = \left(\begin{matrix}
\mu_0^1\mu_0^2-1&0&\kappa &\mu_0^1\mon_2\\
0&0&0&0\\
0&0&0&0\\
0&0&\one_{\#X_{r}-1}&-Id_{\#X_{r}-1}
\end{matrix}\right)
\]
and thus 
\[(P_1P_2-Id)\one_{V_1} = \left(\begin{matrix}
1-\mu_0^1 + \mu_0^1\sum_{i\in V_1}\mu_i^2\\
0\\
0\\
1 \text{ for } V_0, 0 \text{ for } V_1
\end{matrix}\right),
\]
where $\kappa= \mu_0^1\mu_1^2+\sum_{i=1}^{\# X_{r}}\mu_i^1= \mu_0^1\mu_1^2+1-\mu_0^1$.
Hence \[\eta(x_2)-\eta(x_0)=\lambda(x_0)(P_1P_2-Id)\one_{V_1}=\lambda_0(1-\mu_0^1+\sum_{i\in V_1}\mu_0^1\mu_i^2)+\sum_{i\not\in V_1}\lambda_i(x) \geq \lambda_0(1-\mu_0^1)= \tau(\rho_1).\]
\end{proof}

Now we proceed to the \textbf{second stage} of the upper bound proof. Using Simon's theorem, we factorize words in the language into three types of subwords, where type $F$ factors correspond to Zeno subwords. 
\begin{lemma}\label{lem:factorize}
For any \nsoA\ $\aut$, there exist constants $a,d$, such that any timed word  $w\in L(A)$ admits a factorization $w=u_1u_2\cdots u_N$ and a labeling $\ell_1\ell_2\dots\ell_N\in \{B,I,F\}^*$  with $N\leq a\tau(w)+a$ satisfying
\begin{itemize}
     \item whenever $\ell_i=B$, the corresponding word $u_i$ has a form $(t,a)$ with $t\geq 0, a\in\Sigma$;
    \item whenever $\ell_i=I$, the word is instant: $\tau(u_i)=0$;
    \item time spent with label $F$ is bounded: $\sum_{i:\ \ell_i=F}\tau(u_i)\leq d$.
\end{itemize}
\end{lemma}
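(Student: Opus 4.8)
The plan is to invoke Simon's factorization forest theorem (\cref{thm:simon}) for the morphism $\gamma_d\colon\Delta^*\to\mon_d$ and to read the desired factorization off the forest. Fix $w\in L(\aut)$ together with a witnessing run $\rho$ along a path $\pi$, and take a factorization forest of $\pi$ with respect to $\gamma_d$ of height at most $K:=3\#\mon_d$, a constant. Recall the partition of the idempotents of $\mon_d$ into $E_0,E_I,E_F,E_S$. Since every contiguous sub-run of $\rho$ is itself a run, no idempotent occurring in the forest lies in $E_0$; by \cref{lem:st-non-obese&fast=>st-zeno,lem:st-non-obese&st-zeno=>zeno} every $e\in E_F$ is Zeno; and by \cref{lem:zeno-bounds} every realization of such an $e$ has duration strictly below $1$.

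I then read off the factorization by a single top-down pass over the forest: a leaf becomes a $B$-factor (its unique event $(t,a)$); a binary node is replaced by the concatenation of the factorizations of its two children; an idempotent node whose idempotent lies in $E_I$ becomes one $I$-factor, namely its whole subword, which has duration $0$ by definition of $E_I$; an idempotent node whose idempotent lies in $E_F$ becomes one $F$-factor, its whole subword; and an idempotent node whose idempotent lies in $E_S$ is replaced by the concatenation of the factorizations of all of its children. The conditions on $B$- and $I$-factors then hold by construction. For the count, I show by induction on subtree height that the number of factors produced from a node $v$ is at most $2^{h(v)}\bigl(\tau(w_v)+1\bigr)$: for a binary node the durations of the two subwords add up; for an $E_S$ idempotent node each child is a realization of an idempotent in $E_S$, hence of duration $\ge 1$, so the node has at most $\tau(w_v)$ children; and a collapsed $E_I$ or $E_F$ node contributes a single factor. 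As the root has height at most $K$, this gives $N\le 2^{K}\bigl(\tau(w)+1\bigr)$, so $a:=2^{K}$ works.

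It remains to bound $\sum_{i\colon\ell_i=F}\tau(u_i)$. I group the $F$-factors by their idempotent $e\in E_F$. For a fixed such $e$, the $E_F$ nodes that were collapsed and carry $e$ are pairwise disjoint in $w$ and occur in a left-to-right order $v^1,\dots,v^p$; each $w_{v^j}$ is a realization of $e$, hence $\tau(w_{v^j})<1$ by \cref{lem:zeno-bounds}. To bound $\sum_j\tau(w_{v^j})$ I re-use the ranking-function argument from the proof of \cref{lem:st-non-obese&st-zeno=>zeno}: the ``initial set'' $V_0$ used there is an initial set for \emph{every} cycle on the region $S(q)$ on which $e$ loops (it is the complement of a global reachability cone from $v_1$), so the non-decreasing $\eta\in[0,1]$ does not decrease along the material that separates two consecutive $w_{v^j}$'s either, and the inequality bounding the duration of one occurrence by an increment of $\eta$ remains valid for two consecutive occurrences $w_{v^j},w_{v^{j+1}}$ even when they are not adjacent. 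Telescoping over odd $j$ and over even $j$ gives $\sum_{j<p}\tau(w_{v^j})\le 2$, and together with $\tau(w_{v^p})<1$ this yields $\sum_{j=1}^{p}\tau(w_{v^j})<3$; summing over the at most $\#\mon_d$ idempotents in $E_F$ gives the claim with $d:=3\#\mon_d$.

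I expect this last step to be the \textbf{main obstacle}: one must verify that the monovariant $\eta$ built for a single Zeno idempotent keeps working across the arbitrary, possibly positive-duration detours that separate the collapsed $E_F$ nodes, which is precisely where the structural analysis of Zeno orbits from \cref{sec:obese} — in particular the absence of a return edge from $v_1$ to $v_0$ granted by non-structural-obesity (\cref{lem:zeno-shape}) — has to be combined with the shape of Simon's forest. By comparison, the $N$-bound and the handling of $B$- and $I$-factors are routine bookkeeping.
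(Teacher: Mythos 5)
Your plan diverges from the paper's proof in one crucial place, and that is exactly where the argument breaks down. You collapse an entire $E_F$ idempotent node of the factorization forest into a single $F$-factor covering the whole subword, including the last child. The paper does not: in its modified recurrence $P'_{h+1}=P'_h\cdot P'_h\cup\bigcup_{e\in E_F}P_e\cdot(P_e\cap P'_h)\cup\dots$, only the prefix $P_e$ (everything except the \emph{last} child) receives the $F$-label, while the last child is decomposed recursively. This distinction is not cosmetic. The Zeno property is stated as the bound $\sum_i\tau(\rho_i)\le 1$ for a run factored as $\zeta_0\rho_1\eta_1\zeta_1\rho_2\eta_2\dots$ in which each pair $\rho_i\eta_i$ is contiguous and the $\eta_i$ (the last repetitions) are deliberately excluded from the sum — indeed the paper remarks ``the last iteration needs to be excluded for purely technical reasons''. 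Your bound on $\sum_j\tau(w_{v^j})$ over full collapsed nodes is therefore strictly stronger than what the Zeno property, as proved, delivers.

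The gap is precisely where you flagged the main obstacle, and it is real. The inequality $\tau(\rho_1)\le\eta(x_2)-\eta(x_0)$ in the proof of \cref{lem:st-non-obese&st-zeno=>zeno} uses that the second realization $\rho_2$ \emph{immediately} follows $\rho_1$: the mass that $\rho_1$ pushes out of $v_0$ lands partly on $v_1\in V_1$ and partly on transient vertices of $V_h$, and it is the subsequent $\rho_2$ that funnels all of that transient mass into $v_1\in V_1$, so that the whole quantity $\tau(\rho_1)=\lambda_{v_0}(x_0)(1-\mu_0^1)$ ends up counted by $\eta(x_2)$. Once you insert an arbitrary detour $\zeta$ between your two collapsed nodes, $\zeta$ may carry mass from a transient vertex $t\in V_h\cap V_0$ back to $v_0$. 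Nothing in non-structural-obesity forbids a realizable p-orbit with an edge $t\to v_0$: Type~II obesity requires an $\instant$ self-loop at the source of the $\slow$ edge's target, and a transient $t$ has no self-loop at all, so a $t\to v_0$ edge creates no obesity pattern. If such a detour exists, the leaked mass is recycled into $v_0$, $\lambda_{v_0}$ is restored, and $\tau(w_{v^j})\le\eta(\dst(v^{j+1}))-\eta(\src(v^j))$ fails — and with it the telescoping, so that $\sum_j\tau(w_{v^j})$ can grow with the number of collapsed nodes. Monotonicity of $\eta$ across the detour only gives $\eta(\src(v^{j+1}))\ge\eta(\dst(v^j))$; it does not reinstate the ``funneling'' lower bound. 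Your sentence claiming the inequality ``remains valid for two consecutive occurrences even when they are not adjacent'' is an assertion without a proof, and the mechanism just described is exactly what a proof would have to exclude.

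By contrast, the rest of your proposal is sound: invoking Simon's theorem with a bounded height, the per-node accounting for $B$-, $I$- and $E_S$-nodes, and the inductive bound $N\le 2^{h}(\tau(w)+1)$ are all correct and essentially parallel the paper's counting. To repair the $F$-duration bound you should keep the last $e$-realization inside each $E_F$ node out of the $F$-label and recurse on it, as the paper does; then the $F$-labelled material and the next $e$-realization are contiguous and the Zeno property applies directly.
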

\begin{proof}
Consider the set $P$ of all paths   in $\aut$, we have that $P\subset \gamma^{-1}(\mon_d\backslash \{0\})$. To alleviate notation, given $e\in \mon_d$ we write $P_e$ for $\gamma^{-1}(e)$. 

Thus, by Simon's  theorem, $P\subset  P_K$ for some level $K$ of the following recurrence: 
\begin{equation}\label{eq:paths}
    P_0 =P_B= \Delta_{\aut}\cup\{\epsilon\};\,
    P_{h+1} =P_h\cdot P_h \cup \bigcup_{e\in E_I} (P_e\cap P_h)^*\cup 
    \bigcup_{e\in E_F}  (P_e\cap P_h)^*\cup
    \bigcup_{e\in E_S}  (P_e\cap P_h)^*.
\end{equation}
We will slightly modify this inductive construction: 
consider the recurrence
\[
    P'_0 =P_B\cup P_I \text{ (with $P_I=\bigcup_{e\in E_I}P_e$ )};\quad
    P'_{h+1} =P'_h\cdot P'_h \cup \bigcup_{e\in E_F} P_e\cdot (P_e\cap P'_h)\cup \bigcup_{e\in E_S} (P_e\cap P'_h)^*.
\]
By induction at each level $P_h\subset P'_h$, and thus $P\subset P'_K$.  In the union over $E_F$ we overapproximate all the iterations except the last one by $P_e$ and use the fact that $P_e^+=P_e$ for idempotent $e$.

The previous recurrence concerned paths in $\aut$, we need its version for time-bounded runs.  
For any $T$, let $R_T$ be the set of all runs of $\aut$ with duration $\leq T$. We also define run counterparts of all sets of paths defined above, so $R'_{hT}=\Runs( P'_h)\cap R_T$ etc.

First, we consider short runs (of duration <1). Applying $\Runs(\cdot)\cap R_{<1}$ to \eqref{eq:paths} we obtain
\[
    R'_{0,<1} =R_{B,<1}\cup R_I;\qquad
    R'_{h+1,<1} =((R'_{h,<1}\circ R'_{h ,<1})\cap R_{<1})\cup \bigcup_{e\in E_F} R_{e}\circ (R_{e}\cap R'_{h,<1})
\]
(we have omitted the last term of \eqref{eq:paths} since it does not produce short runs). The level $K$ of the induction generates all the short runs: $R_{<1}\subset R'_{K,<1}$.

For any short run $\rho\in R_{<1}$, by induction on $h$ we produce a factorization and labeling required in the  statement of the lemma as follows :
\begin{itemize}
    \item for the base case  $\rho\in R'_{0,<1}$ we take only one factor, and label it $I$ if $\rho\in R_I$ or $B$ otherwise;
    \item for the inductive case $\rho\in R'_{h+1,<1}$, whenever $\rho =\rho_1\circ\rho_2$ (with $\rho_i\in R'_{h,<1}$) we factorize $\rho_i$ by inductive hypothesis and merge the factorizations and labelings;
    \item finally, for the inductive case of $\rho=\rho_1\circ \rho_2$ with $\rho_1\in R_{e}$ and $\rho_2\in R_{e}\cap R'_{h,<1}$ with $e\in E_F$, we label the whole $\rho_1$ by $F$. We factorize and label $\rho_2$ using the inductive hypothesis.  
\end{itemize}
We obtain at level $h$ at most $2^h$ factors, hence at most  $2^K$ factors for any short run $\rho\in R_{<1}$.  

The labeling of the short runs has the following Zeno-like property: in any run\linebreak $\zeta_0\rho_1\zeta_1\rho_2\zeta_2\dots\rho_n\zeta_n$ with all $R'_{h,<1}$ short, let us label factors of $\rho_i$ as described above. Then the accumulated duration of $F$-labeled ones  is bounded by $h\#E_F$. This property can be proved by induction on $h$, and implies the global bound of $d=K\cdot \#E_F$.

For each $e$, on level $h+1$, the accumulated duration of new $F$-factors is $\leq 1$ due to Zeno property  (\cref{lem:st-non-obese&fast=>st-zeno,lem:st-non-obese&st-zeno=>zeno}); summing over all possible $e$ this gives $\#E_F$. The accumulated duration  of old factors (inherited from level $h$) is at most $h\#E_F$. This implies the announced bound.

Consider now the set of runs $R_T$ for an arbitrary $T$. Applying $\Runs(\cdot)\cap R_T$ to the recurrence \eqref{eq:paths} we obtain a recurrence for runs:
\[
    R'_{0T} =R_{BT}\cup R_{I};\qquad
    R'_{h+1,T} =\left(R'_{hT}\circ R'_{hT} \cup \bigcup_{e\in E_F} (R_e\cap R'_{hT})^*\cup \bigcup_{e\in E_S} (R_e\cap R'_{hT})^*\right)\cap R_T,
\]
 and $R_T\subset R'_{KT}$. 
We overapproximate it as follows
\[
    R''_{0T} =R_{BT}\cup R_{<1};\qquad
    R''_{h+1,T} =R_{hT}^{''\leq T+2}\cap R_T.
\]
We omitted the union over $E_F$ because it is included in $R_{<1}$ already injected at level 0. The star in the union over $E_S$ has been truncated up to power $T$ since each run  $R_e$ with $e\in E_S$ has a duration of at least $1$.

Again we have $R_T\subset R''_{KT}$. Now we can produce the factorization and labeling of each run in $R_T$:
\begin{itemize}
    \item for the base case  $\rho\in R''_{0,<1}$  such that $\rho\in R_{BT}$ we take one factor, and label it $B$;
    \item otherwise, in the base case such that $\rho\in R_{<1}$ we apply the factorization for short runs described above;
    \item for the inductive case 
     $\rho\in (R''_{hT})^{\leq T}\cap R_T$, in this case $\rho=\rho_1\rho_2\dots \rho_n$ with $n\leq T+2$ with $\rho_i\in R''_{hT}$ satisfying the inductive hypothesis. We factorize and label  each of $\rho_i$ and obtain a factorization of $\rho$.  
\end{itemize}

We will now prove  by induction over $h$ that the size of factorization obtained $N\leq 2^{h+K}T+2^{h+K}$.

Indeed, for the base level, the statement is evident. 
For level $h+1$, whenever  $\rho\in R''_{h+1,T}$, it can be factorized as $\rho=\rho_1\rho_2\dots \rho_n$ with $n\leq T$, and $\rho_i$ of some durations $T_i$ with $\sum_{i=1}^n T_i \leq T$ and satisfying the inductive hypothesis. Combining factorizations of $\rho_i$ we get a factorization of $\rho$ of size 
\[
\sum_{i=1}^n(2^{h+K}T_i+2^{h+K})\leq 2^{h+K}T+2^{h+K}n\leq 2^{h+K}T+2^{h+K}(T+2)=2^{h+1+K}T+2^{h+1+K},
\]
 which concludes the induction.
Hence each run  $\rho\in R_{T}\subset R''_{KT}$ has a factorization of size $aT+a$ with $a=2^{2K}$. Factors labeled by $I$ have duration $0$, and factors labeled by $B$ correspond to one transition. As observed above, the sum  of durations of factors labeled by $F$ does not exceed $d$.
Each word $w\in L_T(\aut)$ corresponds to some run $\rho\in R_T$, projecting its factorization yields a required factorization of $w$.
\end{proof}
We can now construct an $\varepsilon$-net for $L_T$.

\begin{con}\label{def:net:nonobese} 
The set $\mathcal{N}(\Sigma,T,\varepsilon,a,d)$ of timed words with duration $\leq T$ is constructed as follows:
\begin{itemize}
    \item at some $aT+a$ positions multiple of $\varepsilon$ we put a set of letters in $\Sigma$;
    \item at some $d/\varepsilon$ positions multiple of $\varepsilon$ we put an element of $\mathcal{N}_{U\varepsilon}$,
\end{itemize}
where $\mathcal{N}_{U\varepsilon}$ is a finite $\varepsilon$-net for the set of all  timed words of duration $\leq 1$ as can be found in \cite[Thm 2]{distance}, its size is $2^{\#\Sigma/\varepsilon}$.
\end{con}

\begin{lemma}\label{lem:net:nonobese}
The set $\mathcal{N}(\cdot)$ is an $\varepsilon$-net for $L_T(\aut)$.
\end{lemma}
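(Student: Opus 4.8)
The plan is to start from an arbitrary $w\in L_T(\aut)$, apply \cref{lem:factorize} to write $w=u_1\cdots u_N$ with a labeling in $\{B,I,F\}^*$ and $N\le aT+a$, and then ``snap'' this factorization onto the $\varepsilon$-grid so as to produce an element $s$ of $\mathcal{N}(\Sigma,T,\varepsilon,a,d)$ with $d(w,s)\le\varepsilon$. Recall that a $B$-factor is a single timed letter, an $I$-factor is a set of letters occurring at one single instant (duration $0$), and an $F$-factor is a timed word of duration $<1$, since $F$-labels are only introduced inside the short runs in the proof of \cref{lem:factorize}.

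First I would split the $F$-factors into \emph{large} ones (duration $\ge\varepsilon$) and \emph{small} ones (duration $<\varepsilon$), and record two counting facts that match exactly the two ingredients of \cref{def:net:nonobese}: the $B$-factors, the $I$-factors and the small $F$-factors are altogether at most $N\le aT+a$ in number; and, since $\sum_{i:\ \ell_i=F}\tau(u_i)\le d$, there are at most $d/\varepsilon$ large $F$-factors. Then $s$ is built as follows. For every $B$-, $I$-, or small $F$-factor, round to the nearest multiple of $\varepsilon$ the common date of its events (for $B$ and $I$) or its starting date (for small $F$), and put at that grid point the set of letters occurring in the factor; this consumes at most $aT+a$ of the ``set of letters'' positions. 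For every large $F$-factor $u_i$, round its starting date to a multiple of $\varepsilon$, pick $\nu\in\mathcal{N}_{U\varepsilon}$ with $d(u_i,\nu)\le\varepsilon$ (possible since $\tau(u_i)<1$ and $\mathcal{N}_{U\varepsilon}$ is an $\varepsilon$-net of the duration-$\le 1$ words), and insert $\nu$ shifted so as to start at that grid point; this consumes at most $d/\varepsilon$ of the $\mathcal{N}_{U\varepsilon}$ positions. Hence $s\in\mathcal{N}(\cdot)$.

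It then remains to bound $d(w,s)=\max(\dr(w,s),\dr(s,w))$ by a matching argument. For $\dr(w,s)$: any event of $w$ sits in some factor $u_i$; if $u_i$ is a $B$-, $I$-, or small $F$-factor, then $s$ contains the same letter at the rounded date of $u_i$, within $\tau(u_i)+\varepsilon/2\le 3\varepsilon/2$ of the event; if $u_i$ is a large $F$-factor, then $\nu$ has a matching letter within $\varepsilon$ of the event's local date, and after the $\le\varepsilon/2$ date-shift within $3\varepsilon/2$. Since $\dr$ minimizes over all events of $s$, these particular matches suffice. The bound on $\dr(s,w)$ is symmetric: every letter we inserted is copied (possibly via $\dr(\nu,u_i)\le\varepsilon$) from a genuine event of the corresponding factor and is therefore within the same distance of it. So $d(w,s)=O(\varepsilon)$; taking the grid step and the precision of $\mathcal{N}_{U\varepsilon}$ a fixed constant factor finer — which preserves the shape of $\mathcal{N}(\cdot)$ up to harmless adjustments of $a,d$ and does not affect the $O(\log 1/\varepsilon)$ size estimate it is used for — gives $d(w,s)\le\varepsilon$, as required.

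\textbf{Main obstacle.} I expect the only real difficulty to be the handling of the $F$-factors: one must notice the small/large dichotomy so that the restrictive $d/\varepsilon$ budget of \cref{def:net:nonobese} is honoured for arbitrarily large $T$, and one must check that $\varepsilon$-approximating each fast factor \emph{locally} by $\mathcal{N}_{U\varepsilon}$ (viewed as a standalone timed word) reassembles, after date-shifting, into a \emph{global} $O(\varepsilon)$-approximation of $w$ — which is exactly where the fact that $\dr$ is a maximum of per-event minima, so that an intra-factor match is always among the admissible candidates, is used.
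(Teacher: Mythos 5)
Your proof is correct and follows essentially the same route as the paper's (very terse) argument: factorize via \cref{lem:factorize}, split $F$-factors into small ($<\varepsilon$) and large ($\geq\varepsilon$) to match the two ingredients of \cref{def:net:nonobese}, and approximate each type of factor on the $\varepsilon$-grid. You supply details the paper elides — the explicit event-matching bound via $\dr$ in both directions and the observation that the resulting constant ($3\varepsilon/2$) is harmlessly absorbed by rescaling — so the only difference is level of detail, not substance.
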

\begin{proof}
    By virtue of \cref{lem:factorize}, every word $w\in L_T$ can  be factorized in $\leq aT+a$ factors with labels $B,I,F$.
The total duration of factors with label  $F$ is bounded by $d$, thus only $d/\varepsilon$ among them can have a ``non-negligible'' duration $\geq \varepsilon$.

The former bullet of  \cref{def:net:nonobese} allows approximating  letters labeled by  $B$, instant words labeled by $I$, and factors  with label $F$ shorter than $\varepsilon$; the latter ---  factors with label $F$ longer than $\varepsilon$. 
\end{proof}

\begin{lemma}\label{lem:net:size}
The cardinality $S(T,\varepsilon)$ of the set $\mathcal{N}(\cdot)$ satisfies the estimate
\[
\lim_{T\to\infty}\log S(T,\varepsilon)/T= O(\log(1/\varepsilon)).
\]
\end{lemma}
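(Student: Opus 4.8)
The plan is to bound the cardinality $S(T,\varepsilon)$ of $\mathcal{N}(\Sigma,T,\varepsilon,a,d)$ directly from \cref{def:net:nonobese}, counting the number of ways to make each of the choices involved in building an element of the net. Recall the net is built by choosing, among positions that are multiples of $\varepsilon$ (of which there are $\leq \lceil T/\varepsilon\rceil$ many), some $aT+a$ positions each carrying a subset of $\Sigma$, and some $d/\varepsilon$ positions each carrying an element of the fixed net $\mathcal{N}_{U\varepsilon}$ of the unit-duration timed words, whose size is $2^{\#\Sigma/\varepsilon}$.

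First I would write $S(T,\varepsilon)$ as a product of three factors: (i) the number of ways to choose which positions are used, crudely bounded by $\binom{\lceil T/\varepsilon\rceil}{aT+a}\cdot\binom{\lceil T/\varepsilon\rceil}{d/\varepsilon} \leq 2^{\lceil T/\varepsilon\rceil}\cdot 2^{\lceil T/\varepsilon\rceil}$, or if one wants a sharper estimate, by $(T/\varepsilon)^{aT+a}\cdot (T/\varepsilon)^{d/\varepsilon}$; (ii) the number of labelings of the $B$-type positions, namely $\left(2^{\#\Sigma}\right)^{aT+a}$, since each carries one of $2^{\#\Sigma}$ subsets of $\Sigma$; and (iii) the labelings of the $F$-type positions, namely $\left(2^{\#\Sigma/\varepsilon}\right)^{d/\varepsilon}=2^{\#\Sigma d/\varepsilon^2}$. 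Taking logarithms, $\log S(T,\varepsilon) \leq O(T/\varepsilon) + (aT+a)\#\Sigma + (\#\Sigma d)/\varepsilon^2 + O\!\left((T/\varepsilon)\log(T/\varepsilon)\right)$ depending on which bound for (i) one uses.

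The key observation is then that after dividing by $T$ and letting $T\to\infty$, only the terms that grow \emph{linearly} in $T$ survive: the $(\#\Sigma d)/\varepsilon^2$ term is constant in $T$ and vanishes, and if one uses the crude bound $2^{\lceil T/\varepsilon\rceil}$ for the position choices one gets $\lim_{T\to\infty}\log S(T,\varepsilon)/T \leq O(1/\varepsilon) + a\#\Sigma$. To reach the claimed $O(\log(1/\varepsilon))$ rather than $O(1/\varepsilon)$ one must be more careful about (i): the correct way to count position choices is to encode, for each label in the factorization, the \emph{gap} (in units of $\varepsilon$) before the next labeled position; since there are $N\leq aT+a$ labels total and the gaps sum to at most $T/\varepsilon$, the number of such gap sequences is $\binom{T/\varepsilon + aT + a}{aT+a} \le 2^{O(T\log(1/\varepsilon))}$ by the standard bound $\binom{n+k}{k}\le (e(n+k)/k)^k$ with $k=aT+a$ and $n=T/\varepsilon$, giving $\log\binom{\cdot}{\cdot} = O(aT\log(1/\varepsilon))$. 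This is the main obstacle: one has to avoid the naive $2^{T/\varepsilon}$ bound and instead exploit that there are only $\Theta(T)$ labeled positions, so the ``address'' of each costs only $O(\log(1/\varepsilon))$ bits, not $\Theta(1/\varepsilon)$ bits.

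Finally I would assemble: $\log S(T,\varepsilon) \leq O(aT\log(1/\varepsilon)) + (aT+a)\#\Sigma + (\#\Sigma d)/\varepsilon^2$, so
\[
\lim_{T\to\infty}\frac{\log S(T,\varepsilon)}{T} \leq O(a\log(1/\varepsilon)) + a\#\Sigma = O(\log(1/\varepsilon)),
\]
since $a,d,\#\Sigma$ are constants independent of $T$ and $\varepsilon$ (they depend only on $\aut$), and the constant term $\#\Sigma d/\varepsilon^2$ disappears in the limit. This is exactly the asserted estimate. The only care needed beyond routine counting is, as noted, choosing the gap-encoding argument for the positions so that the bound is $O(\log(1/\varepsilon))$ per label rather than polynomial in $1/\varepsilon$.
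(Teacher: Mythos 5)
Your proof is correct and follows essentially the same route as the paper: bound $S(T,\varepsilon)$ by a product of two binomial coefficients and two powers of two, then apply a Stirling-type bound so that $\log\binom{T/\varepsilon}{aT+a}=O(T\log(1/\varepsilon))$ (your gap-encoding reformulation as $\binom{T/\varepsilon+aT+a}{aT+a}$ yields the same order and is the same insight in different clothing). The only small slip is that your final assembled bound silently drops the $\log\binom{T/\varepsilon}{d/\varepsilon}$ contribution, but since that term is $O\bigl((d/\varepsilon)\log(T/\varepsilon)\bigr)$ it vanishes after dividing by $T$ and letting $T\to\infty$, so the conclusion stands.
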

\begin{proof}
The following bound is immediate from  \cref{def:net:nonobese}:
\[
S(T,\varepsilon)\leq\binom{T/\varepsilon}{aT+a}2^{\#\Sigma(aT+a)}\cdot \binom{T/\varepsilon}{d/\varepsilon}2^{(\#\Sigma/\varepsilon)(d/\varepsilon)}.
\]
This gives an upper bound for the bandwidth:
\begin{multline}\label{eq:size}
\lim_{T\to\infty}\frac{\log S(T,\varepsilon)}{
T}\leq
\lim_{T\to \infty}
\frac{\log \binom{T/\varepsilon}{aT+a}+\#\Sigma(aT+a)+ \log\binom{T/\varepsilon}{d/\varepsilon}+\log\left( \frac{\#\Sigma}{\varepsilon}\cdot\frac{d}{\varepsilon}\right)}{T}=
\\
  =\lim_{T\to \infty}\frac{\log \binom{T/\varepsilon}{aT+a}}{T}+\lim_{T\to \infty}\frac{\log \binom{T/\varepsilon}{d/\varepsilon}}{T}+\#\Sigma\cdot a.
\end{multline}
We notice that whenever $A\gg B\gg 1$, then due to Stirling's formula
$
\binom{A}{B}\leq (Ae/B)^B$.
Thus the first term of the sum in  \eqref{eq:size} does not exceed
\[
 \lim_{T\to \infty}\frac{(aT+a)\log\left(\frac{Te}{\varepsilon}\cdot\frac{1}{aT+a}\right)}{T}=a \log\frac{1}{\varepsilon}+a\log\frac{e}{a},   
\]
and the latter term gives 
\[
\lim_{T\to \infty}\frac{\frac{d}{\varepsilon}\cdot\log\left(\frac{Te}{\varepsilon}\cdot\frac{\varepsilon}{d}\right)}{T}=0.
\]
Gathering all the estimates we obtain the required bound:
\[
\lim_{T\to\infty}\frac{\log S(T,\varepsilon)}{
T}\leq a\log\frac{1}{\varepsilon}+a\log\frac{e}{a}+\#\Sigma\cdot a=O\left(\log\frac{1}{\varepsilon}\right). \qedhere
\]
\end{proof}
From \cref{lem:net:nonobese,lem:net:size} follows the upper bound of \cref{prop:obese:upper}.
\subsection{Lower bound}
\begin{proposition}\label{prop:obese-cycle}%
If $\pi$ is a structurally obese cycle whose obesity pattern involves vertices $u$ and $v$, then  the language $L_{\bar{\pi}^*}(u,\cdot)$ is obese.
\end{proposition}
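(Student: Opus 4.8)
The plan is to prove that $\bandh_\varepsilon(L_{\bar\pi^*}(u,\cdot))=\Theta(1/\varepsilon)$. The bound $\bandh_\varepsilon\le O(1/\varepsilon)$ holds for \emph{every} timed language: slicing $[0,T]$ into $\lceil T\rceil$ unit intervals and approximating the restriction of a word to each slice by a member of the $\varepsilon$-net $\mathcal N_{U\varepsilon}$ of \cite[Thm.~2]{distance} (cf.\ \cref{def:net:nonobese}, of size $2^{\#\Sigma/\varepsilon}$) gives $\ent_\varepsilon(L_T)\le\lceil T\rceil\,\#\Sigma/\varepsilon$. So everything is in the matching lower bound, and by \eqref{eq:inequality} it suffices, for each small $\varepsilon$ and large $T$, to produce a family of $2^{\Omega(T/\varepsilon)}$ pairwise $\varepsilon$-separated words in $L_{\bar\pi^*,T}(u,\cdot)$; this forces $\bandc_\varepsilon\ge\Omega(1/\varepsilon)$, hence $\bandh_\varepsilon\ge\bandc_{2\varepsilon}\ge\Omega(1/\varepsilon)$. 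I treat the two obesity patterns separately.

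\emph{Type I} ($\gamma_d(\pi)[u,u]=\fast$). First, $u=\mathbf 0$: by the remark following the definition of $\gamma_d$, $\bar\pi$ has a run from $u$ to $u$ of each duration in $[0,1]$; a positive-duration one forces every clock to be reset along $\bar\pi$ (an un-reset clock keeps a fixed final value equal to its initial one), hence in a duration-$0$ run every clock ends at $0$, i.e.\ $u=\mathbf 0$. The key point is a \emph{bang--bang base run}: there is a run of $\bar\pi$ from $\mathbf 0$ to $\mathbf 0$ of duration exactly $1$ all of whose events occur at time $0$ or $1$. To see this, describe the timings of $L_{\bar\pi}(\mathbf 0,\mathbf 0)$ by the \DBM\ of \cref{prop:timingDBM}: $\fast$-ness forbids any guard of $\bar\pi$ with lower bound $\ge 1$ (such a guard would keep all runs of duration $\ge 1$, contradicting that durations cover $[0,1)$), so the only "lower" constraints are $0\le t_1\le\dots\le t_k$ together with the reset equalities $t_k=t_{i_c}$ produced by $u=\mathbf 0$; firing every edge at time $0$ except the final-reset tail (the steps $i_c$ and everything forced equal to them, which includes $k$), fired at time $1$, respects all constraints, because each upper bound bridging a time-$0$ step and a time-$1$ step must be $\ge 1$ (if it were $0$ it would force the two steps equal, contradicting that the tail is closed under forced equalities). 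Call this timing $p^\star$ (so $p^\star_j\in\{0,1\}$ and $p^\star_k=1$). Since the timing polytope is convex and contains $\mathbf 0$, for every $\lambda\in[0,1]$ the timing $\lambda p^\star$ is legal; write $w_\lambda$ for the corresponding run of $\bar\pi$ from $\mathbf 0$ to $\mathbf 0$: it has duration $\lambda$ and its $j$-th event (label $\lbl_{\delta_j}$) at time $0$ if $p^\star_j=0$ and at time $\lambda$ if $p^\star_j=1$.

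Now fix $\varepsilon\le 1/10$, set $N=\lfloor T/(10\varepsilon)\rfloor$, and to $\vec b\in\{0,1\}^{N-1}$ associate $w_{\vec b}:=w_{\lambda_1}\circ\dots\circ w_{\lambda_N}$ started at $u$, where $\lambda_i=5\varepsilon$ if $b_i=0$, $\lambda_i=10\varepsilon$ if $b_i=1$, and $\lambda_N=5\varepsilon$. Its duration is $\sum_i\lambda_i\le 10\varepsilon N\le T$, so $w_{\vec b}\in L_{\bar\pi^*,T}(u,\cdot)$. With $\sigma_i:=\sum_{i'<i}\lambda_{i'}$, every event of $w_{\vec b}$ lies on a block boundary $\sigma_1,\dots,\sigma_{N+1}$, all $k$ labels occur at each interior boundary $\sigma_2,\dots,\sigma_N$, and $\sigma_i$ depends only on $b_1,\dots,b_{i-1}$. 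If $\vec b\ne\vec b'$ with first difference at $m\le N-1$, say $b_m=0$, $b'_m=1$, then $\sigma_i=\sigma'_i$ for $i\le m$ while $\sigma_{m+1}=\sigma_m+5\varepsilon$ and $\sigma'_{m+1}=\sigma_m+10\varepsilon$; the event $(\lbl_{\delta_k},\sigma_{m+1})$ of $w_{\vec b}$ is at distance $\ge 5\varepsilon>\varepsilon$ from every $\lbl_{\delta_k}$-event of $w_{\vec b'}$ (all of which lie on $\{\sigma'_1,\dots,\sigma'_{N+1}\}$, the two nearest being $\sigma'_m=\sigma_m$ and $\sigma'_{m+1}=\sigma_m+10\varepsilon$). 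Hence the $2^{N-1}$ words $w_{\vec b}$ are pairwise $\varepsilon$-separated, so $\capa_\varepsilon(L_{\bar\pi^*,T}(u,\cdot))\ge N-1=\Omega(T/\varepsilon)$, settling Type I. For \emph{Type II} ($\instant$ loops at $u$ and $v$, a $\slow$ edge $u\to v$, and a realizable d-orbit with a $v\to u$ edge — the "resetting" cycle $\pi'$), take $\gamma_d(\pi)$ idempotent WLOG, so $u$ and $v$ are singleton SCCs (\cref{lem:trivial-instant}); from $u$ one may iterate $\bar\pi$, drifting freely through the region while spending less than one time unit, then apply $\pi'$ to restore the clocks, and repeat. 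Working with runs of the form $\bar\pi^{*}\pi'\bar\pi^{*}\pi'\cdots$ from $u$, one codes $\approx 1/(5\varepsilon)$ bits in each $\bar\pi$-burst by a $2.5\varepsilon$-shift of each of the $\approx 1/(5\varepsilon)$ events it places in its $\le 1$-unit window; a burst together with its $\pi'$ costs bounded time, so over duration $T$ one gets $\Omega(T)$ bursts and $2^{\Omega(T/\varepsilon)}$ pairwise $\varepsilon$-separated words.

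The main obstacle is the bang--bang base run of Type I: one has to use $\fast$-ness to exclude any guard of the cycle that forces two of its events more than one unit apart or pins an event to a fixed positive time, so that the timing polytope of $L_{\bar\pi}(\mathbf 0,\mathbf 0)$ really contains a $\{0,1\}$-valued point of duration $1$. Without this, an arbitrary base run spreads the repeated labels across the block, and an event of a neighbouring block can fall within $\varepsilon$ of the "signal" event, breaking the separation. For Type II the extra care is in handling the resetting cycle $\pi'$, since the fast behaviour near $u$ on its own has only bounded total duration.
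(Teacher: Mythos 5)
Your Type~I argument is a genuinely different, and as far as I can see correct, route. The paper encodes a bit by choosing between a ``sparse'' block (one iteration of $\bar\pi$ stretched over the block's span, with only $|\pi|$ events) and a ``dense'' block (many fast iterations, an event in every $\varepsilon$-window), then separates the two by density. You instead first extract a \emph{bang--bang} base run of $\bar\pi$ of duration $1$ from $\mathbf 0$ to $\mathbf 0$ with every event at time $0$ or $1$, scale it to duration $\lambda$, and encode a bit as the block length ($5\varepsilon$ vs.\ $10\varepsilon$); separation is then read directly off the position of the last-edge event at the block boundary. Your DBM derivation of the bang--bang run is sound: $\fast$-ness rules out positive lower guard bounds and forces $u=\mathbf 0$, the ``tail'' (the equality class of $t_n$) is a contiguous suffix by monotonicity, and any upper bound bridging a $0$-step and a tail-step being $0$ would absorb the former into the tail; so all bridging bounds are $\geq 1$ and the $\{0,1\}$-valued timing lies in the polytope. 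This costs the parametric-DBM analysis up front, but buys a crisper, more explicit separation computation than the paper's pigeonhole argument.

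The Type~II half, however, is only a plan, and that is a real gap. You describe the intended shape of the runs ($\bar\pi^*\pi'\bar\pi^*\pi'\cdots$ starting at $u$) but never build the $\varepsilon$-separated family: you do not show that from each intermediate point $x(\lambda)=(1-\lambda)u+\lambda v$ a ``short'' and a ``long'' block can both be realized and both advance along the segment towards $v$, you do not verify that the block boundaries depend only on earlier bits, and you do not splice in a realization of $\pi'$ with bounded time cost that returns the clocks to $u$ so the encoding can restart. That is exactly the work the paper does, by parametrizing base runs $\sigma(x(\lambda))$, $\rho_b(x(\lambda))$ over the whole segment and inserting a fixed resetting run $\rho_r$ from $v$ to $u$ every $\frac{1}{3\varepsilon|\pi|}$ blocks. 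Your ``WLOG $\gamma_d(\pi)$ idempotent'' step also needs an argument: raising $\pi$ to a power $p$ can turn the $\slow$ edge $u\to v$ into $\fast$ (this happens as soon as some vertex $w$ has $\gamma_d(\pi)[u,w]=\gamma_d(\pi)[w,v]=\instant$), so the Type~II pattern does not automatically survive. As written, the proposition remains unproved for Type~II cycles.
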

\begin{proof}
We exhibit a construction for the two subcases of structural obesity. The second construction is actually the same as the first, with some runs through a foreign cycle inserted regularly along the runs of the $\varepsilon$-separated set.

\proofsubparagraph{Case of Type~I structural obesity.}

For a cycle $\pi$ exhibiting Type~I structural obesity, vertex $u$  can be the starting point of some realizations of duration 0, going back to $u$, and of some realizations of duration $\geq 1$, also going back to $u$. Hence by convex combination, it is possible to go from $u$ to $u$ with runs of any duration $\lambda$ with $0\leq\lambda\leq 1$.

In particular, for any $\varepsilon$, $0<\varepsilon<\frac{1}{3|\pi|}$, there exists at least a run $\sigma(x)$ of $\bar\pi$ of length $\varepsilon$ and a run $\rho_0(x)$ of length $3\varepsilon|\pi|$, both going back to $u$. 
Let us define $\rho_1(x)=\sigma(x)^{3|\pi|}$ (so that $\rho_0(x)$ and $\rho_1(x)$ have the same duration) and construct a function $f_x: \{0,1\}^*\to \Runs_{\bar{\pi}^*}$, inductively: $f_x(\epsilon)=\epsilon$; for $b\in\{0,1\}$, $f_x(b)=\rho_b(x)$ and, for $b\in\{0,1\}$, $f_x(wb)=f_x(w)\rho_b(\dst(f_x(w))$. We show the language of the labels of runs in the image is $\varepsilon$-separated. Indeed, for 2 different words of this language, there are 2 different runs that are labeled by these words, and these runs are of the form $f_x(w)$ and $f_x(w')$ with $w\neq w'$; i.e.~there is $i\in\nat$ such that $w[i]=0$ and $w'[i]=1$. This means that the slices of $f_x(w)$ and $f_x(w')$ for the time interval $\left[3i\varepsilon|\pi|, 3(i+1)\varepsilon|\pi|\right]$ are $\rho_0(\dst(f_x(w[0..i-1]))$ and $\rho_1(\dst(f_x(w[0..i-1]))$.
But $\rho_0(\dots)$ necessarily has a gap of length $3\varepsilon$ without any event, while $\rho_1(\dots)$ has at least one event in every interval of length $\varepsilon$, so at least one event of $\rho_1(\dots)$ cannot be matched in $\rho_0(\dots)$ by an event with a date closer than $\varepsilon$. Therefore this is also true of $f_x(w')$ and $f_x(w)$, thus $d(\Word(f_x(w)),\Word(f_x(w'))>\varepsilon$. Hence $\{\Word(f_u(w))\mid w\in\{0,1\}^*\}_{\leq T}$ is an $\varepsilon$-separated subset of $L(\bar{\pi}^*)_{\leq T}$ and its size is at least $2^{\left\lfloor \frac{T}{3\varepsilon|\pi|}\right\rfloor}$.

\proofsubparagraph{Case of Type~II structural obesity.}

For a Type~II cycle, the construction is similar. For technical reasons, for a given $\varepsilon$, we will build an $\varepsilon'$-separated set with $\varepsilon'$ the next value (slightly) larger than $\varepsilon$ such that $\frac{1}{3\varepsilon'|\pi|}$ is integer. The difference with previous construction is that as long as only $\pi$ is executed, instead of going back to $u$, we go farther and farther from $u$ and closer and closer to $v$: a run of duration $t$ from $(1-\lambda)u+\lambda v$ goes to $(1-\lambda-t)u+(\lambda+t)v$, provided $\lambda+t\leq 1$. So the analogs of runs $\rho_0$ and $\rho_1$ can only be executed $\frac{1}{3\varepsilon'|\pi|}$ times at most before it is necessary to ``recharge'' the clocks through the execution of a resetting cycle.

More precisely, the vertex $u$ can be the starting point of a run of duration 0 through $\pi$, going back to $u$, as well of runs of duration $\geq 1$, going to $v$; while, through $\pi$, vertex $v$ can only be the starting point of realizations of duration $0$, all going back to $v$.
By convex combinations, from any $x(\lambda)=(1-\lambda) u + \lambda v$ it is thus possible to start a run of any duration $\mu\leq 1-\lambda$ and reach the point $x(\lambda+\mu)$.

In particular, with $\varepsilon'$ such that $0<\varepsilon'<\frac{1}{3|\pi|}$, for any point $x(\lambda)$ of the segment $[u,\tilde{v}]$ (where $\tilde{v}=x(1- 3\varepsilon'|\pi|)$), there exists at least a run $\sigma(x(\lambda))$ of $\bar\pi$ of duration $\varepsilon'$ and a run $\rho_0(x(\lambda))$ of duration $3\varepsilon'|\pi|$. Let us define $\rho_1(x)=\sigma(x)^{3|\pi|}$ (so that $\rho_0(x)$ and $\rho_1(x)$ have the same duration) and remark that both $\rho_0(x(\lambda))$ and $\rho_1(x(\lambda))$ reach $x(\lambda+3\varepsilon'|\pi|)$. Before we construct the new function $f_x: \{0,1\}^*\to \Runs_{\bar{\pi}^*}$, let us choose a resetting cycle $\pi_r$ such that its p-orbit has an edge from $v$ to $u$, a run $\rho_r$ that goes from $v$ to $u$ through $\pi_r$ and $\tau_r$ the duration of this run.
%

Now we construct $f_x$: $f_x(\epsilon)=\epsilon$; with $b\in\{0,1\}$, if $x\in [u,\tilde v]$,
$f_x(b)=\rho_b(x)$ and $f_x(bw)=\rho_b(x)f_{\dst(\rho_b(x)}(w)$; 
otherwise $f_v(w)=\rho_r(v)f_u(w)$ (note that values of $x$ in $(\tilde v, v)$ cannot be reached with this strategy; hence $f$ is left undefined for this interval).

Here again, the language of the labels of the image is $\varepsilon'$-separated for the same reason as in the previous case. 
So $S_{\varepsilon}(T)\triangleq \{\Word(f_u(w))\mid w\in\{0,1\}^*\}_{\leq T}$ is an $\varepsilon'$-separated (thus $\varepsilon$-separated) subset of $L((\bar{\pi}^{\frac{1}{3\varepsilon' |\bar{\pi}|}}\bar{\pi_r})^*)_{\leq T}$ equal to the product of its slices corresponding to intervals $[ip,i(p+1)]$, $i\in\nat$, with $p=\left\lfloor\frac{1}{3\varepsilon'|\pi|}\right\rfloor3\varepsilon'|\pi|+\tau_r=1+\tau_r$.
Each slice contains exactly $2^{\frac{1}{3\varepsilon'|\pi|}}$ elements, hence $\log\#S_\varepsilon(T) \geq {\frac{1}{3\varepsilon'|\pi|} \left\lfloor\frac{T}{p}\right\rfloor} \geq {\frac{1}{3\varepsilon'|\pi|} \left\lfloor\frac{T}{1+\tau_r}\right\rfloor} \geq {(\frac{1}{3\varepsilon|\pi|} - 1) \left\lfloor\frac{T}{1+\tau_r}\right\rfloor}$. Hence $\mathcal{BH}_\varepsilon(L(\bar{\pi}^*)_{\leq T})=\Theta(\frac{1}{\varepsilon})$ (the upper bound is given by comparison with the universal language, having a bandwidth  $\Theta(\frac{1}{\varepsilon})$).

\end{proof}

\thmcomplexityobese*
\begin{proof}[Proof sketch]
The  \RTA\ contains at most $n$ locations. Each orbit can be represented as a couple of locations and an $(m+1)\times(m+1)$-matrix with elements in $\Lambda_d$ (a semiring of 4 elements).
Thus
$
\# \mon_d \leq n^2\cdot 4^{(m+1)^2}=\poly(n)\myexp(m).
$
An automaton is structurally obese whenever  it contains a path $\pi$ with orbit $\gamma_d(\pi)$ having an obesity pattern. According to \cref{lem:simon-minimal-length}, only cycles up to length $\#\mon_d$ need to be checked. 
 We use now the reachability method   \cite[Sect.~7.3]{papa} to detect the obesity pattern in a  space-efficient way, see \cref{algo:obese}. 
An auxiliary function \lstinline{isAPathOrbit(e,h)}, true iff there is a path $\pi\in \gamma_d^{-1}(e)$ of length $\leq 2^h$ in  the \RTA, 
is computed by a divide-and-conquer algorithm. 
\begin{lstlisting}[mathescape=true,language=Java,float,caption=Deciding structural obesity,label=algo:obese]
Boolean function isAPathOrbit(e, h)
    if h = 0 return (e=$\1$ || $\exists \delta  : \gamma_d(\delta)=$e)
    for all orbits e1, e2 
        if e1$\cdot$e2 = e && isAPathOrbit(e1, h-1) && isAPathOrbit(e2, h-1)
            return true
    return false
 Boolean function isStrucObese()
    for all orbits e
        if (obeseType1(e) && isAPathOrbit(e, $\log \#\mon_d$)) return true
    for all states q, int i,j $\leq d_q$
        for all orbits e1,e2 from q to q
            if (obeseType2(e1,i,j) && resettingPath(e2,j,i)&& 
               isAPathOrbit(e1, $\log \#\mon_d$) && isAPathOrbit(e2, $\log \#\mon_d$))
                  return true
    return false
\end{lstlisting}
The algorithm requires a call stack of depth $\log\#\mon_d$, the size of each stack frame is $O(\log n)+\poly(m)$, thus the 
 computation takes  space   $\left(O(\log n)+\poly(m)\right)\cdot \log\#\mon_d=\log^2(n)\poly(m)$.
\end{proof}

\corstrobeseisobese*
\begin{proof}
    This is a direct corollary of \cref{prop:obese-cycle}. We obtain an $\varepsilon$-separated set $S'$ from $S$, the $\varepsilon$-separated set of $L_{\bar{\pi}^*,T}(u,\cdot)$ constructed in the previous proof, by prepending a common prefix $w$ to all its elements, such that $w$ is the labeling of a run $\rho$ going from an initial vertex of an initial location to the vertex $u$, through some path $p$.

    This shows that $L_{\bar{p}\bar{\pi}^*}$ is obese, but since for each $\varepsilon$-separated subset of $L_{\bar{p}\bar{\pi}^*,T}$, $L_{{p}{\pi}^*,T}$ admits a $\varepsilon'$-separated subset of the same size (according to \cref{lem:closed-semantics}), this actually also proves that $L_{{p}{\pi}^*}$ is  obese and hence its superset $L(\aut)$.
\end{proof}

\section{Proofs on meagerness}
\subsection{Monoid properties}
\propForbit*
\begin{proof}
Here again, we only need to prove, for any two edge sequences $\pi_1,\pi_2$, that \linebreak $\gamma_f(\pi_1\circ\pi_2)=\gamma_f(\pi_1)\gamma_f(\pi_2)$.
It is still trivially true if any of the two sequences is $\epsilon$, or if $\pi_1\circ\pi_2$ is not a path.
So assume again both $\pi_1$ and $\pi_2$ are paths of length $\geq 1$ such that $\pi_1\circ\pi_2$ also is a path and let us call $R$ the destination region of $\pi_1$.

Let us choose any source vertex $u$ and any destination vertex $v$ for $\pi_1\circ\pi_2$. Then $\left(\gamma_f(\pi_1)\gamma_f(\pi_2)\right)[u,v]$ has to take one of the three values of $\Lambda_d$:

\begin{itemize}
\item If it is $\none$: the proof is the same as the case $\none$ in \cref{prop:dorbit-morphism} (actually $\gamma_f(\pi)[u,v]=\none$ iff $\gamma_d(\pi)[u,v]=\none$ iff $\gamma(\pi)=0$).
\item If it is $\narrow$, there is a single vertex $w$ of $R$ such that $\gamma_f(\pi_1)[u,w]=\gamma_f(\pi_2)[w,v]\neq\none$, and this value has to be $\narrow$. But according to \cref{lem:region-reach}, a full region must be reachable from $u$ through $\pi_1$, and $v$ must be reachable from a full region through $\pi_2$; the intersection of these two regions must also be a region, but this region contains only the vertex $w$, hence it is equal to $\{w\}$. So the only run from $u$ to $v$ through $\pi_1\circ\pi_2$ is the concatenation of the only run from $u$ to $w$ and the only run from $w$ to $v$. Hence $\gamma_f(\pi_1\circ\pi_2)[u,v]=\narrow=\left(\gamma_f(\pi_1)\gamma_f(\pi_2)\right)[u,v]$.
\item If it is $\wide$, because of semiring rules \cref{fig:freedom-monoid-plus}, there are 2 possibilites. Either $\wide$ is obtained by virtue of f-orbit multiplication, meaning there is $w$ such that $\gamma_f(\pi_1)[u,w]=\wide$
and $\gamma_f(\pi_2)[w,v]\neq \none$; since, by definition, $L_{\overline{\pi_1}}(u,w)$ contains several words, and thus so is the case of $L_{\overline{\pi_1\circ\pi_2}}(u,v)$.
Or we obtained $\wide$ by virtue of f-orbit addition, meaning that there are $w_1$ and $w_2$, $w_1\neq w_2$, such that $\gamma_f(\pi_1)[u,w_1]=\narrow$,
$\gamma_f(\pi_2)[w_1,v]=\narrow$,
$\gamma_f(\pi_1)[u,w_2]=\narrow$ and
$\gamma_f(\pi_2)[w_2,v]=\narrow$;
this implies you can go from $u$ to $v$ by at least two different orbit paths, with different timings, hence $L_{\bar{\pi}(u,v)}$ contains at least two distinct words. Hence $\gamma_f(\pi_1\circ\pi_2)[u,v]=\wide=\left(\gamma_f(\pi_1)\gamma_f(\pi_2)\right)[u,v]$.\qedhere
\end{itemize}
\end{proof}

\subsection{Upper bound}

Throughout this section, we reason about a structurally meager region-split automaton, we will abbreviate it as \stA.
As we said, the case when the language $L_{\bar\pi}(x,x)$ is a singleton (typical for structurally meager automata) will play the key role. First, we prove that whenever $\pi$ is an ``idempotent cycle'', the language mentioned above is indeed a singleton.
\begin{lemma}\label{lem:meager-singleton}
In an \stA, for any  path $\pi$  with idempotent orbit and a clock vector $x$ in the closure of starting region of $\pi$,  the language $L_{\bar{\pi}}(x,x)$ is either empty or a singleton. 
\end{lemma}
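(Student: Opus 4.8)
The statement to prove is \cref{lem:meager-singleton}: in a structurally meager region-split automaton, for any path $\pi$ with idempotent p-orbit and any clock vector $x$ in the closure of the starting region of $\pi$, the language $L_{\bar\pi}(x,x)$ is either empty or a singleton. The plan is to combine Puri's reachability characterization (\cref{lem:puri}) with the structural meagerness hypothesis as transported to the freedom monoid $\mon_f$ via the morphism $\gamma_f$.

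First I would reduce from the concrete language $L_{\bar\pi}(x,x)$ at an arbitrary clock vector $x$ to the vertex-to-vertex languages $L_{\bar\pi}(v_i,v_j)$. By \cref{lem:puri}, a run from $x$ to $x$ along $\bar\pi$ exists iff there is a stochastic matrix $P \le A$ (where $\gamma(\pi) = \tuple{q,A,q}$) with $\lambda(x)P = \lambda(x)$; moreover the key identity $R(\lambda x + (1-\lambda)y) = \lambda R(x) + (1-\lambda)R(y)$ from the discussion after \cref{lem:puri} lets us express any run from $x$ to $x$ as a convex combination of runs between vertices reachable from each other. The crucial observation is that $\lambda(x)P = \lambda(x)$ forces $P$ to act as the identity on the support of $\lambda(x)$ restricted to the recurrent classes: a vertex $v_i$ with $\lambda_i(x) > 0$ that returns to itself must lie on a cycle of the orbit graph, i.e.\ in a nontrivial SCC, unless it has a self-loop. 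But since $\gamma(\pi)$ is idempotent, $\gamma(\pi) = \gamma(\pi^2) = \dots$, and the structural meagerness hypothesis applied to the cycle $\pi$ (or a suitable power) says $\gamma_f(\pi)$ has no $\wide$ entry on the diagonal. Since $\gamma_f$ refines $\gamma$, every vertex $v_i$ appearing with positive weight in a self-returning combination has $\gamma_f(\pi)[v_i,v_i] \in \{\narrow\}$ (it cannot be $\none$ as then $v_i$ would not contribute, and not $\wide$ by hypothesis); hence $L_{\bar\pi}(v_i,v_i)$ is a singleton for each such $i$.

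Next I would show that these singleton self-loop languages, together with idempotency, pin down a \emph{unique} $w$. The plan is: if $L_{\bar\pi}(x,x)$ is nonempty, pick $w \in L_{\bar\pi}(x,x)$. Decompose $x$ into its barycentric coordinates; the vertices carrying positive weight that are visited by the run realizing $w$ all lie in the set $S$ of vertices with $\narrow$ self-loops in $\gamma_f(\pi)$. One must argue (using \cref{cor:mut-reach-dim} or directly \cref{lem:independent-lyapunov}) that mutual reachability of $x$ with itself, combined with absence of $\wide$ self-loops on the diagonal, forces the reachable-set polytope $R(x) \cap \{x\}$ to be the singleton $\{x\}$ with a \emph{unique} witnessing run: any two runs $w, w' \in L_{\bar\pi}(x,x)$ would, via the convex-combination identity, decompose over the same vertices with singleton vertex-to-vertex languages, and the timing of a convex combination of singletons is determined by the combination coefficients, which in turn are determined by $\lambda(x)$ and the fixed-point condition $\lambda(x)P = \lambda(x)$. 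A short linear-algebra argument — that the restriction of any such $P$ to the recurrent support is forced and that off-diagonal $\narrow$ edges into recurrent vertices cannot be used in a fixed point — yields $w = w'$.

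The main obstacle I anticipate is the last step: ruling out "detours" through transient vertices or through $\slow$/off-diagonal edges that might give a genuinely different timing while still returning $x$ to $x$. One has to be careful that even though each vertex-to-vertex language $L_{\bar\pi}(v_i,v_j)$ between recurrent vertices of an idempotent orbit is a singleton, a run from $x$ back to $x$ is a \emph{convex combination} over a stochastic matrix $P$, and a priori there could be several admissible $P$. The resolution should be that the fixed-point equation $\lambda(x) P = \lambda(x)$, for $\lambda(x)$ supported on a set of vertices each carrying only a $\narrow$ self-loop (hence each being its own trivial recurrent class), forces $P$ to be the identity on that support — any mass leaving a recurrent vertex $v_i$ would have to come back, but $v_i$ being in a trivial SCC of the idempotent orbit means the only way back is the self-loop; so $P_{ii} = 1$ for all $i$ in the support. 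Then the timing is literally the unique vertex self-loop timing at each active vertex, combined with weights $\lambda_i(x)$, which is uniquely determined, giving the singleton. I would present this last argument carefully since it is where the idempotency, the structural meagerness, and Puri's lemma all have to interlock.
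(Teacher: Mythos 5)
Your proposal takes a genuinely different route from the paper, and the high-level structure is attractive, but there is a concrete gap in the final step. The paper proceeds by contradiction: it picks two words in $L_{\bar\pi}(x,x)$, isolates the \emph{first disagreeing edge} $\delta$ (writing $\pi=\pi_1\circ\delta\circ\pi_2$), and looks at the two clock vectors $y,y'$ reached after the prefix. If $y\neq y'$, these are mutually reachable along the shifted cycle $\pi'=\pi_2\circ\pi_1\circ\delta$, so \cref{cor:mut-reach-dim} (the Lyapunov-function corollary) forces a non-singleton SCC in $\gamma(\pi')$, hence a $\wide$ diagonal in the idempotent $\gamma_f(\pi')$; if $y=y'$, the paper argues $\delta$ must have a non-punctual guard resetting all unconstrained clocks, so $S(\dst_\delta)$ is the origin singleton whose $\gamma_f$-self-loop is $\wide$. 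Either way structural meagerness is violated. Your proposal instead goes through Puri's stochastic-matrix fixed point and an absorbing-state argument; the observation that a fixed point $\lambda(x)P=\lambda(x)$ with $P\le A$, for an idempotent orbit with $\narrow$-only diagonal, forces $P$ to be the identity on the support of $\lambda(x)$, is a nice and correct piece of linear algebra (any non-singleton recurrent class would produce two distinct $\narrow$ paths and hence a $\wide$ diagonal in $\gamma_f(\pi)^k=\gamma_f(\pi)$).

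The gap is the last step, where you conclude the \emph{timing} of the run is pinned down. You invoke the identity $R(\lambda x + (1-\lambda)y)=\lambda R(x)+(1-\lambda)R(y)$ to claim that ``any run from $x$ to $x$'' decomposes as a convex combination of vertex-to-vertex runs weighted by $\lambda(x)P$. But \cref{lem:puri} and that identity speak only about the reachable \emph{endpoint sets} $R(\cdot)$, not about the full run/timing polytope $\{(y,t,y')\}$. To make your argument rigorous you would need the stronger statement that the vertices of the run polytope project exactly onto vertex pairs of the regions --- i.e.\ that every point of $L_{\bar\pi}(x,x)$ is a genuine convex combination of runs between region vertices --- and this is not established anywhere in the paper, nor is it immediate: in general, vertices of a polytope do not project to vertices of a projection, so the fiber over an interior point can be strictly larger than the corresponding combination of vertex fibers. (The paper's own nearby argument in \cref{lem:singletons} uses only the easy direction of convexity --- a non-unique vertex run propagates to a non-unique run at $x$ --- and never needs the converse.) So either you need to prove the timing-decomposition lemma (plausible, e.g.\ via the DBM structure of \cref{prop:timingDBM}, but it is delicate because the never-reset-clock equality $a_{0n}=y_k-x_k$ is a genuine three-variable constraint), or you should fall back to the paper's Lyapunov route, which sidesteps it entirely.
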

\begin{proof}
First note that it is already known \cite{puri}, for p-orbits, that $L_{\bar{\pi}}(x,x)$ is non-empty iff $x$ is in the convex hull of some set of vertices with self-loops. So, it remains to prove that, in the non-empty case, the language is a singleton, which we do by using the fact that, in the f-orbit, all these self-loops are $\narrow$ when $\pi$ is structurally meager.

Before going on, we show it is possible to assume w.l.o.g. that, for all factorization $\pi=\pi_1\circ\pi_2$, the ``shifted'' cycle $\pi'=\pi_2\circ\pi_1$ always has an idempotent f-orbit.  If not, reason with $\pi^p$, where $p$ is sufficiently large, observe it is still structurally meager and then, prove its polytope is a singleton. It implies that $L_{\bar{\pi}}(x,x)$ also is a singleton: indeed, if it were some non-singleton polytope, then $L_{\bar{\pi}^p}(x,x)$ would  also include several words.  

So we assume that $\pi$ and all its ``shifted'' versions have idempotent p-orbits.

{
Suppose, towards a contradiction, that $L_{\bar{\pi}}(x,x)$ is neither empty nor a singleton. Then, this language has at least two words with different timings and, thus, with a first disagreeing timing, so there exist two paths $\pi_1$ and $\pi_2$ and an edge $\delta\in\Delta$ such that $\pi=\pi_1\circ \delta \circ\pi_2$ a word $w_1$ through $\pi_1$, two delays $t$ and $t'$ with $0<t<t'$ and two words $w_2$ and $w_2'$ through $\pi_2$ such that both $w_1\circ (\lbl_\delta,t) \circ w_2$ and $w_1\circ (\lbl_\delta,t') \circ w'_2$ belong to $L_{\bar\pi}(x,x)$. Let us call $y$ and $y'$ the two clock vectors reached after reading respectively $w_1\circ(\lbl_\delta,t)$ and $w_1\circ(\lbl_\delta,t)$ through $\pi_1\circ\delta$ from $x$.

If $y\neq y'$, then observe both vectors are reachable from each other through $\pi'=\pi_2\circ\pi_1\circ\delta$. The smallest affine set that contains the two vectors has dimension 1, hence, according to \cref{cor:mut-reach-dim}, the starting region of $\pi'$  has at least one non-singleton SCC $S$.
Since $\gamma_f(\pi')$ is idempotent, all self-loops $\gamma_f(\pi')[u,u]$ for $u\in S$ are $\wide$, which is impossible in a structurally meager automaton. 

If $y=y'$, this means the edge $\delta$ can reach the same vector with different delays from the same originating vector, and hence that $\guard_\delta$ is not punctual and $\reset_\delta$ contains at least all clocks such that their value is not constrained inside a singleton in $\guard_\delta$. In this case $S(\dst_\delta)$ also is a singleton, of the form $\{v\}$ with $v=(0,\dots,0)$. So $v$ is the only vertex of $\gamma_d(\pi_2\circ\pi_1\circ\delta)$. It necessarily has a self-loop, and since $L_{\overline{\pi_2\circ\pi_1\circ\delta}}(v,v)$ contains several words, this loop is $\wide$, which is the non-meagerness pattern.
}
\end{proof}

We establish now  a continuity result for the vicinity of a single-word path:
\begin{lemma}\label{lem:narrow-variations}
Given a  path $\pi$ in an \RTA\ and clock vectors $x^0,y^0$, such that $L_{\bar\pi}(x^0,y^0)=\{w^0\}$ (a singleton), for any clock vectors $x^1,y^1$ such that both $||x^0-x^1||_\infty<\varepsilon$, $||y^0-y^1||_\infty<\varepsilon$,  
 and for any $w^1\in L_{\bar\pi}(x^1,y^1)$ we have $d(w^0,w^1)<3\varepsilon$.
\end{lemma}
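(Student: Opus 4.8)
The statement is a Lipschitz-type stability result: a path $\pi$ whose closed semantics from $x^0$ to $y^0$ is a single word $w^0$ has the property that, when we perturb the endpoints by less than $\varepsilon$, every word realized along $\pi$ between the perturbed endpoints is within $3\varepsilon$ of $w^0$. The natural tool is the \DBM\ description of timings of runs along a fixed path, as provided by \cref{prop:timingDBM}: the set of timings of $L_{\bar\pi}(x,y)$ is exactly $[A(x,y)]$ for a \DBM\ $A(x,y)$ whose ``middle'' entries $a_{ij}$ ($1\le i,j\le n-1$) are integers independent of $x,y$, and whose ``border'' entries depend affinely (through a $\min$ with a constant) on the coordinates of $x$ and $y$. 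I would first observe that since $L_{\bar\pi}(x^0,y^0)=\{w^0\}$ is a singleton, $[A(x^0,y^0)]$ is a single point, namely $t^0=\timing(w^0)$, and in particular it is nonempty.

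**Key steps.** First, invoke \cref{prop:timingDBM} to write $A^0=A(x^0,y^0)$ and $A^1=A(x^1,y^1)$. Because the middle entries are the same integer-valued functions of the path alone, $A^0$ and $A^1$ are \emph{equal in the middle}. For the border entries: an entry like $a_{0k}$ has the form $\min(x_k+b,d)$ (or the analogous forms with $-x_k$, $y_k$, $-y_k$), so $|a^0_{0k}-a^1_{0k}|=|\min(x^0_k+b,d)-\min(x^1_k+b,d)|\le |x^0_k-x^1_k|<\varepsilon$ by the $1$-Lipschitz continuity of $\min(\cdot,d)$; similarly for the other border families. Hence $A^0$ and $A^1$ are \emph{$\varepsilon$-close on the border}, and also $a^1_{0n}=-a^1_{n0}$ remains of the same form (a coordinate difference $y_k-x_k$, perturbed by $<2\varepsilon$) or stays $\infty$ — a small case that may need a dedicated remark since $2\varepsilon$ rather than $\varepsilon$ appears there; one can absorb this by noting the relevant simple paths used in \cref{cor:Lipshitz} touch $a_{0n}$ at most once, or by treating $0n$ entries as an extra ``border'' family with the constant $3\varepsilon$ bound already allowed for. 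I would then apply \cref{cor:Lipshitz} coordinatewise: for every clock index $i$, $\proj_i[A^0]=[L^0_i,U^0_i]$ and $\proj_i[A^1]=[L^1_i,U^1_i]$ satisfy $|L^1_i-L^0_i|,|U^1_i-U^0_i|<3\varepsilon$. But $[A^0]$ is the single point $t^0$, so $L^0_i=U^0_i=t^0_i$ for each $i$; therefore $\proj_i[A^1]\subseteq(t^0_i-3\varepsilon,\,t^0_i+3\varepsilon)$. Since $w^1\in L_{\bar\pi}(x^1,y^1)$ means $\timing(w^1)=t^1\in[A^1]$, each coordinate satisfies $|t^1_i-t^0_i|<3\varepsilon$. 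Finally, $w^0$ and $w^1$ have the same untiming (the labels along $\pi$), so in the pseudo-distance $d$ each event of $w^1$ matches the like-labeled event of $w^0$ at the same position with time difference $<3\varepsilon$ (and symmetrically), whence $\dr(w^0,w^1),\dr(w^1,w^0)<3\varepsilon$ and $d(w^0,w^1)<3\varepsilon$.

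**Main obstacle.** The substantive point is reconciling the hypotheses of \cref{cor:Lipshitz} with the parametric shape of \cref{prop:timingDBM}: the corollary wants the \DBM s equal in the middle and $\varepsilon$-close on the border, and it is the entry $a_{0n}=-a_{n0}=y_k-x_k$ (for a never-reset clock) that is only $2\varepsilon$-close, not $\varepsilon$-close, because both endpoints move. I expect this to be the one place requiring care; the clean fix is to remark that in the proof of \cref{cor:Lipshitz} each simple path from $0$ to $i$ uses at most one border edge of each of the relevant types, the $3\varepsilon$ slack already covers ``one $0\bullet$, one $\bullet n$, one $n\bullet$'' perturbations, and an $a_{0n}$-type edge is of the same nature, so the bound $3\varepsilon$ still holds (or, alternatively, strengthen the hypothesis to allow $2\varepsilon$-closeness on that single family and track constants — the conclusion $3\varepsilon$ is comfortably loose). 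Everything else is routine: Lipschitz continuity of $\min$, the definition of the pseudo-distance $d$, and the fact that a singleton \DBM\ polytope pins down both endpoints of every coordinate projection.
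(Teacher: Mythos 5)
Your proof matches the paper's: both invoke \cref{prop:timingDBM} to obtain the parametric \DBM\ $A(x,y)$ whose middle entries are endpoint-independent integers, apply \cref{cor:Lipshitz} to the pair $A(x^0,y^0)$, $A(x^1,y^1)$, use the singleton hypothesis to pin $L_j(x^0,y^0)=U_j(x^0,y^0)=t^0_j$, and conclude coordinatewise via the shared untiming that $d(w^0,w^1)<3\varepsilon$. The one place you go beyond the paper is the $a_{0n}=-a_{n0}=y_k-x_k$ entry: you are right that when \emph{both} endpoints move by $<\varepsilon$ this entry can shift by up to $2\varepsilon$, so, read literally, the ``$\varepsilon$-close on the border'' hypothesis of \cref{cor:Lipshitz} is not met — the paper invokes the corollary without remarking on this. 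Your proposed repair is the correct one: in the corollary's proof, a simple path from $0$ to $i$ that uses the edge $0\to n$ consumes both the $a_{0\bullet}$ and the $a_{\bullet n}$ slots in the ``at most three weights, each $<\varepsilon$'' budget, so a single $<2\varepsilon$ perturbation on $a_{0n}$ plus at most one more $<\varepsilon$ perturbation on $a_{n\bullet}$ still totals $<3\varepsilon$; hence the conclusion $|L'_i-L_i|,|U'_i-U_i|<3\varepsilon$ survives, and so does the lemma. This is a genuine (if harmless) imprecision you caught; otherwise your argument and the paper's are the same.
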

\begin{proof}
We fix notations for start and end regions of the path: $S(\src_\pi)=R_s$, $S(\dst_\pi)=R_d$,  clearly $x^0,x^1\in \bar{R_s}$ and $y^0,y^1\in \bar{R_d}$ (otherwise the languages $L_{\bar\pi}(\cdot)$ would be empty.) 

For any $x\in \bar {R_s}, y\in \bar {R_d}$ all words in $L_{\bar\pi}(x,y)$ have the form $(a_1,t_1),\dots (a_n, t_n)$, where the untiming is always the same $a_1\dots a_n=\lbl_w$, and the set of timings is defined by a parametric \DBM{} $A(x,y)$, as described in \cref{prop:timingDBM}, and its projection to $t_j$ is an interval $[L_j(x,y);U_j(x,y)]$ as described in \cref{prop:proj-DBM}. As we know $L_{\bar\pi}(x^0,y^0)$ is a singleton $w^0$, thus for every $j$, the interval for $t^0_j$ is also a singleton: $L_j(x^0,y^0)=U_j(x^0,y^0)=t^0_j$. \DBM{}s $A(x^1,y^1)$ and $A(x^0,y^0)$ satisfy the hypotheses of \cref{cor:Lipshitz}: they are equal in the middle and $\varepsilon$-close on border, thus by this Corollary,
\[
t^0_j-3\varepsilon< L_j(x^1,y^1)\leq t^1_j \leq U_j(x^1,y^1)< t^0_j+3\varepsilon,
\]
hence $|t^0_j-t^1_j|< 3\varepsilon$ (for any $j$). We conclude that $d(w^0,w^1)< 3\varepsilon$.
\end{proof}

In order to characterize the duration of  singletons we will  naturally use the monoid $\mon_d$ again and define a new structural property. An idempotent $e$ in $\mon_d$ will be called  \emph{quick} whenever all its self-loops are $\instant$.

\begin{lemma}\label{lem:singletons}In an \stA, if for a cycle $\pi$ with idempotent $e=\gamma_d(\pi)$ and a clock vector  $x$ the language $L_{\bar\pi}(x,x)$ is a singleton $\{w\}$, then 
\begin{itemize}
    \item   if $e$ is quick then  $\tau(w)=0$;
    \item otherwise  $\tau(w)\geq 1$.
\end{itemize}
   \end{lemma}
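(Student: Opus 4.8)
The statement links an idempotent d-orbit $e=\gamma_d(\pi)$ of a cycle $\pi$ to the duration of a word $w$ when $L_{\bar\pi}(x,x)=\{w\}$ is a singleton. The plan is to reduce the question about the concrete run $w$ to a question about the vertices of the region, where the d-orbit labels directly give the duration information, and then transport this back to $x$ using Puri's barycentric characterization (\cref{lem:puri}) together with the fact that $w$ is a singleton (which forces the barycentric combination to be ``rigid'' in the relevant coordinate).

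First I would recall that, by the result of \cite{puri} already used in \cref{lem:meager-singleton}, $L_{\bar\pi}(x,x)\neq\emptyset$ means $x$ lies in the convex hull of some set $V_x$ of vertices of the starting region $R$ that carry self-loops in the p-orbit (equivalently in $e$, since $e$ is idempotent and $\gamma_d$ refines $\gamma$). Write $x=\sum_{v\in V_x}\lambda_v(x)\,v$ with all $\lambda_v(x)>0$. Now I would argue that, because $e$ is idempotent and $V_x$ is the support of a mutually-reachable set through $\bar\pi$, the self-loops $\gamma_d(\pi)[v,v]$ for $v\in V_x$ all lie in the same SCC; hence by \cref{lem:trivial-instant} they are either all $\instant$ (if that SCC is a singleton) or, if the SCC has more than one vertex, the diagonal labels are necessarily $\slow$ — and crucially, if $e$ is \emph{not} quick, at least one $v\in V_x$ has a non-$\instant$ (thus $\slow$) self-loop.

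For the first bullet (``$e$ quick $\Rightarrow\tau(w)=0$''): if every self-loop of $e$ is $\instant$, then for each $v\in V_x$ the language $L_{\bar\pi}(v,v)$ is exactly the zero-duration singleton. The run realizing $w$ from $x$ can be written, by the affinity of time-successor and reset (the key identity $R(\lambda x+(1-\lambda)y)=\lambda R(x)+(1-\lambda)R(y)$ invoked after \cref{lem:puri}), as a convex combination of vertex-to-vertex runs; projecting the timing onto the date $t_{|\pi|}=\tau(w)$, which is an affine function of the barycentric weights, we get $\tau(w)=\sum_v\lambda_v(x)\,\tau(w_{v})=0$ since each $\tau(w_v)=0$. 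For the second bullet, suppose $e$ is not quick; pick $v_0\in V_x$ with a $\slow$ self-loop. I would like to conclude $\tau(w)\ge 1$. The clean way: since $L_{\bar\pi}(x,x)$ is a singleton, the DBM describing the timings of $L_{\bar\pi}(x,x)$ projects to a point in the $t_{|\pi|}$-coordinate; but I can exhibit, using the $\slow$ self-loop at $v_0$ and the convex-combination machinery, a run from $x$ to $x$ whose duration is at least $\lambda_{v_0}(x)\cdot 1 + (\text{nonnegative terms})$... this is where the main obstacle lies, because naively the duration of a convex combination is a weighted average and could dip below $1$. The resolution is that singletonness of $L_{\bar\pi}(x,x)$ is very restrictive: by \cref{lem:meager-singleton} (structural meagerness) it forces the $\gamma_f$ self-loops on $V_x$ to be $\narrow$, which by \cref{lem:d-orbit-self-loops} and the SCC analysis forbids $V_x$ from containing both a zero-timed loop vertex and a $\slow$ loop vertex — so in fact $V_x$ is entirely within the ``slow'' SCC, every $v\in V_x$ has a $\slow$ self-loop, hence $\tau(w_v)\ge 1$ for every $v$, and the average $\tau(w)=\sum_v\lambda_v(x)\tau(w_v)\ge 1$.

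\textbf{Expected main obstacle.} The delicate point is showing that $V_x$ cannot be ``mixed'' (containing both an $\instant$-loop vertex and a $\slow$-loop vertex): this is exactly where structural meagerness must be used, via \cref{lem:meager-singleton} (singleton forces $\narrow$, not $\wide$) and \cref{lem:instant+slow=forgetful}/\cref{lem:d-orbit-self-loops} (a zero-timed plus positive-time self-loop on one cycle yields a $\wide$ f-orbit loop, i.e.\ structural obesity / non-meagerness). Once the mixedness is excluded, both bullets follow by the affine/barycentric argument; I would present that argument carefully since it is the formal backbone, and flag the exclusion-of-mixedness as the step that genuinely uses all the hypotheses.
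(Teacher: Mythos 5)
Your overall approach matches the paper's: both proofs reduce to the vertex set $V_x$ via Puri's barycentric characterisation, dispatch the quick case by convexity, and exclude a ``mixed'' self-loop pattern using \cref{lem:instant+slow=forgetful}. But there is a concrete logical gap in your handling of the non-quick case.

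You argue about mixedness only \emph{within} $V_x$: you claim the self-loops on $V_x$ all lie in one SCC (hence are all $\instant$ or all $\slow$), conclude $V_x$ is ``entirely within the slow SCC'', and take the barycentric average. There are two problems. First, the claim that the vertices of $V_x$ all belong to a single SCC of $\gamma_d(\pi)$ is not justified --- the support of a stationary distribution of the stochastic matrix $P$ from \cref{lem:puri} can be a union of several recurrent classes, each with positive weight --- so you cannot invoke \cref{lem:trivial-instant} this way. Second, and more decisively, even after excluding mixedness inside $V_x$ you have not ruled out the scenario where every $v\in V_x$ carries an $\instant$ self-loop while $e$ is nevertheless non-quick because a $\slow$ self-loop sits on a vertex \emph{outside} $V_x$; in that scenario your barycentric average gives $\tau(w)=0$, contradicting the claimed conclusion $\tau(w)\ge 1$. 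You open the argument with ``pick $v_0\in V_x$ with a $\slow$ self-loop'' before having established that any such $v_0$ exists.

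The fix (which is how the paper argues) is to do the case analysis on \emph{all} self-loops of $e$, not on $V_x$: a $\fast$ self-loop anywhere forces $\gamma_f(\pi)$ to have a $\wide$ diagonal entry, contradicting structural meagerness; if $e$ simultaneously has an $\instant$ self-loop and a $\slow$ self-loop \emph{anywhere}, \cref{lem:instant+slow=forgetful} yields a complete p-orbit and hence structural obesity (again contradicting \stA), so all $\wide$ diagonal entries reappear and $L_{\bar\pi}(x,x)$ cannot be a singleton. Consequently, in an \stA\ an idempotent $e$ either has exclusively $\instant$ self-loops (quick) or exclusively $\slow$ ones, and only then do the barycentric averages give $\tau(w)=0$, respectively $\tau(w)\ge 1$. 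Rephrase your exclusion step to act globally on $e$ and the remaining argument goes through.
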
   
\begin{proof}
Let $\pi$ be a cycle such that $\gamma_f(\pi)$ is idempotent and $x$ a clock vector such that $L_{\bar\pi}(x,x)$ is not empty.
Necessarily, $x$ is in a facet of the starting region of $\pi$, delimited by vertices all having self-loops in the $(p,f,d)$-orbits (the convex sums of triples $(x,t,x)$ that satisfy the linear system of constraints of $\pi$ also satisfy it; hence if $x$ is a convex combination of vertices $v$ such that at least one is such that $(v,t,v)$ satisfies the system for several $t$, then $x$ also accepts several $t$ such that $(x,t,x)$ satisfies the system). Let us call $V$ the set of vertices of this facet.

If $\gamma_d(\pi)$ is quick (only has $\instant$ self-loops), then any element $w$ of $L_{\bar\pi}(x,x)$ is such that $\tau(w)=0$.
If $\gamma_d(\pi)$ only has $\slow$ self-loops, then $\tau(w)\geq 1$.
If $\gamma_d(\pi)$ has a $\fast$ self-loop, then $P(x,x)$ is not a singleton.
If $\exists u,v\in V$ s.t. $\gamma_d(\pi)[u,u]=\instant$ and $\gamma_d(\pi)[v,v]=\slow$, then \cref{lem:instant+slow=forgetful} applies, and we deduce that $\gamma_p(\pi)$ is complete and that therefore all the self-loops of $\gamma_f(\pi)$ are $\wide$ and that therefore $L_{\bar\pi}(x, x)$ is not a singleton.
\end{proof}
We will now need a small technical lemma.
\begin{lemma}\label{lem:transient-to-transient}
    Let $e$ be a realizable idempotent of $\mon_f$. If $u$ and $v$ are transient vertices such that $e[u,v]=\narrow$, then there is a non-transient vertex $w$ such that $e[u,w]=\narrow$ and $e[w,v]=\narrow$.
\end{lemma}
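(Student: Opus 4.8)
The plan is to prove Lemma~\ref{lem:transient-to-transient} by exploiting idempotency together with the fact that a $\narrow$ label in $\mon_f$ corresponds to a \emph{unique} run (a singleton reachability polytope), so multiplicities cannot be created or lost along the way. First I would use $e=e\cdot e$ to factor the edge $e[u,v]=\narrow$ as a product: there must be some vertex $w$ with $e[u,w]\neq\none$ and $e[w,v]\neq\none$. By the semiring multiplication table of $\Lambda_f$ (Fig.~\ref{fig:freedom-monoid-plus}), if either factor were $\wide$ then the product $e[u,v]$ would be $\wide$; moreover if there were two \emph{distinct} such intermediate vertices $w_1\neq w_2$ each contributing $\narrow\times\narrow=\narrow$, the addition $\narrow+\narrow=\wide$ would again force $e[u,v]=\wide$. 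Hence there is exactly one intermediate vertex $w$, and $e[u,w]=e[w,v]=\narrow$.

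It remains to argue that this unique $w$ is non-transient, i.e.\ that $e[w,w]\neq\none$ (equivalently, by Lem.~\ref{lem:trivial-instant}/the structure of idempotents, $w$ has a self-loop). For this I would iterate idempotency once more: write $e=e\cdot e\cdot e$, so the run from $u$ to $v$ decomposes as $u\to w'\to w''\to v$; by the uniqueness argument of the previous paragraph applied to each sub-factorization (of $e[u,v]$ as $e[u,\cdot]\cdot e[\cdot,v]$, using that $e\cdot e=e$), the intermediate vertex is forced to be $w$ in each slot, so in particular $e[w,w]$ must be a non-$\none$ entry arising along this run. More carefully: since $e[u,w]=\narrow$ and $e=e\cdot e$, factor $e[u,w]=\sum_{z}e[u,z]\cdot e[z,w]$; the term $z=w$ contributes $e[u,w]\cdot e[w,w]$, and for the total to be exactly $\narrow$ (not $\none$, not $\wide$) with the $z=w$ term present we need $e[w,w]=\narrow$ and all other terms $\none$. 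This gives $e[w,w]=\narrow\neq\none$, so $w$ is non-transient, completing the proof.

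The main obstacle I anticipate is the bookkeeping around \emph{which} intermediate vertices can occur: one has to be careful that Corollary~\ref{cor:orbit-incoming-outgoing} (every row and column of a realizable orbit has a non-$\none$ entry) is what guarantees the $z=w$ term is genuinely realizable and not vacuous, and that the argument ``the only way to write $\narrow$ as a sum of products over $\Lambda_f$ is a single $\narrow\times\narrow$ (or $\narrow\times$ nothing else)'' is applied correctly in both the row-factorization and the column-factorization. Once the uniqueness-of-$\narrow$-witness principle is cleanly stated as an auxiliary observation about the semiring $\Lambda_f$ and idempotents of $\mon_f$, the rest is a short computation; I would state that principle explicitly first and then apply it twice (once to locate $w$ between $u$ and $v$, once to force $e[w,w]=\narrow$).
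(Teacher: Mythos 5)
Your proof is correct, but it takes a genuinely different route from the paper's. The paper argues by contradiction and a length bound: if every intermediate vertex on a path from $u$ to $v$ were transient, such paths could not revisit a vertex (a revisit would create a cycle among transient vertices, which by idempotency would induce a forbidden self-loop on one of them), so all such paths would have length bounded by some $n$; but $e=e^{n+1}$ forces a path of length $n+1$ from $u$ to $v$, a contradiction. Your argument is algebraic: the constraint $e[u,v]=\narrow$, together with $\narrow+\narrow=\wide$ and $\none$ being absorbing for $\times$ and neutral for $+$, forces a \emph{unique} intermediate $w$ with $e[u,w]=e[w,v]=\narrow$; a second application of idempotency, grouping the $e=e^3$ factorization two ways, then pins both intermediate slots to that same $w$ and yields $e[w,w]=\narrow$. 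The paper's argument is shorter; yours is more explicit and even gives the slightly stronger conclusion $e[w,w]=\narrow$ rather than merely $e[w,w]\neq\none$. One clarification: Corollary~\ref{cor:orbit-incoming-outgoing} is not actually needed and is not where the work lies. The step you gesture at is fully justified by observing that if $e[u,w']e[w',w'']e[w'',v]\neq\none$, then $e[w',v]\geq e[w',w'']e[w'',v]\neq\none$ and $e[u,w'']\geq e[u,w']e[w',w'']\neq\none$ (using $e=e^2$ and monotonicity of addition), after which the uniqueness from the first step alone forces $w'=w''=w$; stating that explicitly closes the only loose end.
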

\begin{proof}
    Assume there is no such vertex $w$. Then all paths from $u$ to $v$ only pass through transient vertices. Paths through transient vertices cannot go back or stay in place, thus by taking $n$ the length of the longest such path, $e^{n+1}$ does not contain a path from $u$ to $v$, which contradicts $e$ being idempotent.
\end{proof}
The following lemma states that in  any singleton run along a structurally meager idempotent cycle,  all iterations except the first and the last one  loop on  a same clock configuration.
\begin{lemma}\label{lem:xzzy} In an \stA,  let $e\neq 0$ be an idempotent, let $\pi=\pi_1\dots \pi_k$ such that all $\gamma(\pi_i)=e$ and $L_{\bar\pi}(x,y)=\{w\}$. Then there exists a factorization $w=w_1\dots w_k$  and a clock vector $z$ such that:
$L_{\bar\pi_1}(x,z)=\{w_1\}$,  for $i=2..k-1$ it holds that $L_{\bar\pi_i}(z,z)=\{w_i\}$, and finally  $L_{\bar\pi_k}(z,y)=\{w_k\}$.
\end{lemma}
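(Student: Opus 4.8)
The plan is to pin down the (forced) factorization of $w$, to observe that every contiguous block of the run produces a singleton language, and then to prove that all the ``interior'' clock vectors reached along the run coincide, by induction on $k$.

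\textbf{Forced factorization.} Since the untiming and the discrete length of any word of $L_{\bar\pi_i}(\cdot,\cdot)$ are fixed by $\pi_i$, the unique $w\in L_{\bar\pi}(x,y)$ splits uniquely as $w=w_1\dots w_k$ with $|w_i|=|\pi_i|$. Put $z_0=x$ and, for $1\le i\le k$, let $z_i$ be the clock vector reached after reading $w_1\dots w_i$ from $x$ along $\bar\pi_1\dots\bar\pi_i$ (well defined: once the timing is fixed the run is deterministic); then $z_k=y$ and $w_i\in L_{\bar\pi_i}(z_{i-1},z_i)$. I will be done once I show $z_1=z_2=\dots=z_{k-1}$: taking $z:=z_1$, the next step gives $L_{\bar\pi_1}(x,z)=\{w_1\}$ and $L_{\bar\pi_k}(z,y)=\{w_k\}$, and $L_{\bar\pi_i}(z,z)=\{w_i\}$ for $2\le i\le k-1$ because $z_{i-1}=z_i=z$.

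\textbf{Singleton decomposition.} For $0\le a\le b\le k$ one has
\[
\{w\}=L_{\bar\pi}(x,y)\ \supseteq\ L_{\bar\pi_1\cdots\bar\pi_a}(x,z_a)\circ L_{\bar\pi_{a+1}\cdots\bar\pi_b}(z_a,z_b)\circ L_{\bar\pi_{b+1}\cdots\bar\pi_k}(z_b,y)\ \ni\ w ,
\]
and a concatenation of non-empty word sets is a singleton iff each factor is; hence each of these three sets is a singleton. In particular every $L_{\bar\pi_i}(z_{i-1},z_i)$ is a singleton, and so is $L_{\bar\pi_{a+1}\cdots\bar\pi_b}(z_a,z_b)$, whose underlying p-orbit is $e^{b-a}=e$ whenever $b>a$ (as $e$ is idempotent).

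\textbf{The interior vectors coincide.} I argue by induction on $k$. For $k\le 2$ there is nothing to prove ($z:=z_1$ works, the index range $2..k-1$ being empty). For $k\ge 4$, the induction hypothesis applied to $\pi_1\dots\pi_{k-1}$ (whose language from $x$ to $z_{k-1}$ is the singleton $\{w_1\dots w_{k-1}\}$ by the previous step) gives $z_1=\dots=z_{k-2}$, and applied to $\pi_2\dots\pi_k$ gives $z_2=\dots=z_{k-1}$; these two chains overlap (they share $z_2,\dots,z_{k-2}\neq\emptyset$), so together $z_1=\dots=z_{k-1}$. Everything therefore reduces to the base case $k=3$: given $L_{\bar\pi_1\bar\pi_2\bar\pi_3}(x,y)=\{w\}$ with $\gamma(\pi_i)=e$ idempotent, show $z_1=z_2$. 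Here I would proceed as follows. First, in a structurally meager automaton an idempotent p-orbit $e$ has only singleton SCCs: if $u\neq v$ lay in a common SCC of $e$ then $e[u,v]=e[v,u]=e[u,u]=e[v,v]=1$, each corresponding (by \cref{lem:meager-singleton}) to a $\narrow$ edge of $\gamma_f(\pi_i)$, so $\gamma_f(\pi_i^2)[u,u]\ge\narrow+\narrow=\wide$ in $\Lambda_f$ — a $\wide$ self-loop on a cycle, contradicting structural meagerness; consequently $\#\SCC(e)=\dim S(q)+1$ and, by \cref{cor:mut-reach-dim}, any set of clock vectors mutually reachable through $\bar\pi_i$ (or through $\bar\pi$) is a single point, while by \cref{lem:lyapunov-derivatives} the family $\Lambda$ of Lyapunov functions of $e$ is injective on $\overline{S(q)}$. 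Second, I would show, using \cref{lem:puri} (the stochastic-matrix characterization of $\Reach_{\bar\pi_i}$, together with idempotency $e=e^2$), that the barycentric support of $z_1$ — and that of $z_2$ — sits on self-looping vertices of $e$: otherwise the unique run realizing $w_1$ from $x$ to $z_1$, resp.\ $w_1w_2$ from $x$ to $z_2$, would reach or traverse a transient vertex, which by \cref{cor:orbit-incoming-outgoing} and \cref{lem:transient-to-transient} would again force two distinct $\narrow$ routes and hence a $\wide$ self-loop of some cycle on $S(q)$. With $z_1,z_2$ both in this self-looping core, and with $z_1$ reaching $z_2$ along $\bar\pi_2$, idempotency lets one close up a cycle on $S(q)$ witnessing the mutual reachability of $z_1$ and $z_2$, so $z_1=z_2$ (equivalently, all Lyapunov functions of $e$ are non-increasing along the mutual reachability, hence $\Lambda(z_1)=\Lambda(z_2)$, and $\Lambda$ is injective).

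\textbf{Main obstacle.} The factorization and the singleton decomposition are routine, and the reduction to $k=3$ is a clean double application of the induction hypothesis; the whole difficulty is the base case, i.e.\ upgrading the bare ``singleton language'' hypothesis into the rigidity ``$z_1=z_2$''. The subtle point is that a single-word block from $z_1$ to $z_2$ does \emph{not} by itself force $z_1=z_2$ (it fails for a transient destination); one must genuinely use that the block sits between two further orbit-$e$ blocks, so that its destination is fed back into the idempotent, and combine this with the $\Lambda_f$ identity $\narrow+\narrow=\wide$ together with structural meagerness to exclude both branching and the traversal of transient vertices in the part of $e$ actually visited. Making this last exclusion precise — in particular ruling out that an interior vector is reached, uniquely, at a transient vertex — is the step I expect to be the hard one.
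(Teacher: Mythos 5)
Your overall architecture is a genuinely different route from the paper: you reduce the general $k$ to the base case $k=3$ by a clean double application of the inductive hypothesis, whereas the paper handles all interior vectors $x_1,\dots,x_{k-1}$ at once. This reduction is correct and tidy, and the forced-factorization and singleton-decomposition steps are fine. Your observation that an idempotent $\gamma_f(\pi)$ in a structurally meager automaton must have singleton SCCs (via $\narrow+\narrow=\wide$) is also correct, though you should notice that \cref{lem:meager-singleton} as stated only gives $L_{\bar\pi_i}(u,u)$ a singleton — to upgrade $e[u,v]=1$ (for $u\neq v$ in a common SCC) to ``$\gamma_f(\pi_i)[u,v]=\narrow$'' you need the small extra step that $L_{\bar\pi_i}(u,v)\circ L_{\bar\pi_i}(v,u)\subseteq L_{\bar\pi_i^2}(u,u)$ forces each factor to be a singleton.

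The genuine gap is in the last step of the base case $k=3$. Having excluded transient vertices from the barycentric supports of $z_1$ and $z_2$, you conclude via ``idempotency lets one close up a cycle\dots witnessing the mutual reachability of $z_1$ and $z_2$,'' and the parenthetical Lyapunov argument also presupposes that $z_2$ reaches $z_1$; neither is justified. There is no cycle from $z_2$ back to $z_1$ on offer here (the next block goes $z_2\to y$, not $z_2\to z_1$), and one-directional reachability only gives $\Lambda(z_1)\geq\Lambda(z_2)$. The argument that actually closes the case (and is what the paper does, phrased via the stochastic matrix of \cref{lem:puri}) is direct: once the supports of $z_1$ and $z_2$ sit on self-looping vertices, observe that for two distinct self-looping vertices $u\neq u'$ the same idempotency trick gives $e[u,u']\in\{\none,\wide\}$ — a $\narrow$ edge is impossible, since $e^2[u,u']$ would collect $u\to u\to u'$ and $u\to u'\to u'$ and become $\wide$. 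Uniqueness of $w_2$ forces the stochastic matrix $P$ with $\lambda(z_1)P=\lambda(z_2)$ to have $P_{uv}>0$ only where $\gamma_f(\pi_2)[u,v]=\narrow$; combined with the exclusion of transient targets, $P$ must be the identity on the support of $z_1$, hence $z_1=z_2$. A second, smaller imprecision: in your transient-vertex exclusion you say the two $\narrow$ routes yield ``a $\wide$ self-loop,'' but they actually yield $e[u,w]=\wide$ for the composite block, which contradicts the \emph{singleton hypothesis} (several runs from $x$ to $z_2$), not structural meagerness — $u$ (in the support of $x$) need not equal $w$, and $u$ may itself be transient.
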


\begin{proof}
    Assume such a path $\pi$ and call $\rho = \rho_1\dots\rho_k$ its unique run such that $\Runs_{\bar{\pi}}(x,y)=\{\rho\}$ and $\rho_i\in \Runs_{\bar{\pi_i}}$.

    Necessarily there is $x_0=x, x_1, x_2, \dots x_k=y$ such that $\Runs_{\bar{\pi}}(x_i,x_{i+1})=\{\rho_{i+1}\}$.

    Moreover, the timing vector of $\rho_i$ necessarily is a convex combination of timings of runs from a vertex $u$ to another $v$ in the support of $e$. Note that if there were several such runs from $u$ to $v$, there would be several runs from $x_i$ to $x_{i+1}$ through $\pi_{i+1}$, hence several from $x$ to $y$ through $\pi$, which would contradict the hypothesis. Hence the convex combination has positive coefficients only for runs from $u$ to $v$ such that $e[u,v]=\narrow$.

    Additionally, we show that, for $i<k$, $\rho_i$ cannot be a convex combination including runs from vertex $u$ to vertex $v$ when $v$ is transient.

    Indeed, $\narrow$ edges can only go from $u$ to $v$ such that either $u=v$ or at least one of the two vertices is transient (otherwise it would violate the idempotency of $e$).

    Assume $\rho_i$ can be written as a linear combination of vertex-to-vertex runs such that one goes from $u$ to $v$ with $v$ transient. Hence $x_i$ has a positive barycentric coordinate in $v$. So $\rho_{i+1}$ is a convex combination of vertex-to-vertex runs which must include a run from $v$ to another vertex $w$. Remark that $\rho_i\rho_{i+1}$ is also a realization of $\pi$ and so it can be written as a linear combination of vertex-to-vertex runs too, this one such that one run goes from $u$ to $w$ (connecting the run from $u$ to $v$ and the one from $v$ to $w$). By  \cref{lem:transient-to-transient}, there are non-transient vertices in the path from $u$ to $v$ and from $v$ to $w$. But in this case, $e[u,w]=\wide$, so it means other runs than $\rho_i\rho_{i+1}$ can go from $x_{i-1}$ to $x_{i+1}$ which contradicts the hypothesis.

    This means that the barycentric coordinate of $x_i$, $0<i<k$, for any transient vertex is $0$. So for $1<i<k$, $\rho_{i}$ is a convex combination of vertex-to-vertex runs starting from each non-transient vertex $u$ and going to themselves (going to another non-transient is impossible because the edge is $\wide$, going to a transient is forbidden, as we just proved), and the coefficient of these runs is necessarily equal to the weight of $u$ in the barycentric coordinates of $x_{i-1}$, hence $x_{i}=x_{i-1}$.
\end{proof}

Given a timed word $w = (a_1,t_1)\dots (a_n,t_n)$ we denote $\vvvert w\vvvert=\#\{i\mid t_{i-1}<t_i\}$, i.e.~the number of non-instant events in $w$.
We will now estimate the number of non-instant delays in singletons w.r.t.~their duration. 
\begin{lemma}\label{lem:freq}
    For any path $\pi$ and  clock vectors $x,y$ in an \stA, if the language $L_{\bar\pi}(x,y)=\{w\}$  (a singleton), then   $\vvvert w\vvvert\leq  a\tau(w)+a$, with the constant $a$ depending on the automaton.
\end{lemma}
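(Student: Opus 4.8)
The plan is to feed Simon's factorization forest theorem (\cref{thm:simon}) to the product morphism $\beta=\gamma\times\gamma_f\times\gamma_d\colon\Delta^*\to\mon\times\mon_f\times\mon_d$ onto a finite monoid $\mon'$, obtaining a sequence $X_0\subseteq X_1\subseteq\cdots\subseteq X_K=\Delta^*$ with $K\leq 3\#\mon'$, and then to prove by induction on the least level $h$ with $\pi\in X_h$ the bound $\vvvert w\vvvert\leq a_h(\tau(w)+1)$ with $a_h=2^h$. Since $K$ depends only on the automaton, this yields the statement with $a=2^K$.

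Before the induction I would record two routine facts. First, $\vvvert\cdot\vvvert$ and $\tau$ are both additive along a factorization coming from splitting a run: if the unique run of $w$ is $\rho=\rho_1\circ\cdots\circ\rho_k$ and $w_i=\Word(\rho_i)$, then $\vvvert w\vvvert=\sum_i\vvvert w_i\vvvert$ and $\tau(w)=\sum_i\tau(w_i)$. The additivity of $\vvvert\cdot\vvvert$ is because the global date of the first event of $w_i$ equals the global date of the last event of $w_{i-1}$, so the bit ``first event of $w_i$ is non-instant'' is the same whether read globally or with the local convention $t_0=0$. Second, if $L_{\bar\pi}(x,y)$ is a singleton then so is $L_{\bar\sigma}(x',y')$ for every sub-path $\sigma$ of $\pi$ with the corresponding intermediate clock vectors (a second run through $\sigma$ would be grafted into a second run through $\pi$). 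For the base case $\pi\in X_0=\Delta\cup\{\epsilon\}$: $w$ has at most one event, so $\vvvert w\vvvert\leq 1\leq\tau(w)+1$, giving $a_0=1$. For a binary node $\pi=\pi_1\pi_2$ with $\pi_1,\pi_2\in X_h$: split the unique run, apply the induction hypothesis to each (their languages are singletons by the remark above), and add, getting $\vvvert w\vvvert\leq a_h\tau(w)+2a_h\leq 2a_h(\tau(w)+1)$.

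The interesting case is an idempotent node $\pi=\pi_1\cdots\pi_k$ with all $\pi_i\in X_h\cap\beta^{-1}(\mathbf{e})$ for some idempotent $\mathbf{e}=(e,e_f,e_d)$ (without loss of generality $k\geq 2$). Here $e=\gamma(\pi_i)$ is a non-zero $\gamma$-idempotent for every $i$, so \cref{lem:xzzy} provides a single intermediate clock vector $z$ with $L_{\bar\pi_1}(x,z)=\{w_1\}$, $L_{\bar\pi_i}(z,z)=\{w_i\}$ for $2\leq i\leq k-1$, and $L_{\bar\pi_k}(z,y)=\{w_k\}$. Each $\pi_i$ with $2\leq i\leq k-1$ is a cyclic path with $\gamma_d(\pi_i)=e_d$ idempotent, so \cref{lem:singletons} applies: if $e_d$ is quick then $\tau(w_i)=0$, hence $\vvvert w_i\vvvert=0$, for all middle $i$, so $\vvvert w\vvvert=\vvvert w_1\vvvert+\vvvert w_k\vvvert$ while $\tau(w)=\tau(w_1)+\tau(w_k)$; if $e_d$ is not quick then $\tau(w_i)\geq 1$ for all middle $i$, whence $k\leq\tau(w)+2$, and $\vvvert w\vvvert=\sum_{i=1}^k\vvvert w_i\vvvert$. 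In both subcases the induction hypothesis at level $h$, applied to the $\leq k$ factors (all in $X_h$, all with singleton languages), together with the count bounds above, gives $\vvvert w\vvvert\leq 2a_h(\tau(w)+1)$, so $a_{h+1}=2a_h$ works and the induction closes.

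The main obstacle is precisely this idempotent node: $k$ is a priori unbounded, so a naive summation of the inductive estimates diverges. The resolution is the synergy of \cref{lem:xzzy}, which collapses all middle iterations to loops on one and the same clock vector $z$, with \cref{lem:singletons}, which then forces each such loop to be either truly instantaneous (contributing $0$ to $\vvvert\cdot\vvvert$) or to consume at least one time unit (so at most $\tau(w)+2$ of them occur). The only other care needed is the bookkeeping that $\vvvert\cdot\vvvert$ is additive across factor boundaries and that every sub-language arising in the factorization tree stays a singleton, so the induction hypothesis is usable at each node.
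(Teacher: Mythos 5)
Your proof is correct and follows essentially the same route as the paper: Simon factorization with respect to (a variant of) $\gamma_d$, induction on the factorization height, splitting binary nodes additively, and handling the idempotent nodes via \cref{lem:xzzy} to collapse middle iterations onto a common clock vector $z$ and then \cref{lem:singletons} to bound either the duration from below (non-quick) or the event count from above (quick). Using the product morphism $\gamma\times\gamma_f\times\gamma_d$ rather than $\gamma_d$ alone is a harmless safety measure (the paper tacitly uses that $\gamma$ factors through $\gamma_d$), and the rest is the same argument with the same constants up to renaming.
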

 \begin{proof}
 We will use Simon's theorem for monoid $\mon_d$ and prove by induction, that whenever $\pi\in X_h$ (with $X_h$ as in \eqref{eq:simon}) satisfies the Lemma's hypothesis for some $x,y,w$, then $\vvvert w\vvvert\leq 2^h\tau(w)+2^h$. Since $h\leq 3\#\mon_d$, this would imply the required result with $a=2^{3\#\mon_d}$.

\begin{itemize}
    \item The base case $\pi\in X_0$ is evident, $\vvvert w\vvvert\leq 1$.
    \item The first inductive case is $\pi=\pi_1\pi_2$ with $\pi_1,\pi_2\in X_h$ satisfying the inductive hypothesis. On the unique run along $\pi$ from $x$ to $y$, let $z$ be the vector of clock values at the end of the prefix $\pi_1$. It is easy to see that languages $L_{\bar\pi_1}(x,z)$ and $L_{\bar\pi_2}(z,y)$ should be some singletons $w_1$ and $w_2$. Thus by inductive hypothesis $\vvvert w_1\vvvert\leq 2^h \tau(w_1)+2^h$, similarly for $w_2$. As required
    \[\vvvert w\vvvert =\vvvert w_1\vvvert +\vvvert w_2\vvvert \leq (2^h \tau(w_1)+2^h)+ (2^h \tau(w_2)+2^h)=2^h \tau(w)+2^{h+1}.\]
    \item The second inductive case is $\pi=\pi_1\dots \pi_k$ with all $\pi_i\in X_h$ and $\gamma(\pi_i)=e$ for some  non-quick idempotent $e$.
    By \cref{lem:xzzy} we find a clock vector $z$, and timed words $w_1,\dots w_k$ such that $w=w_1\dots w_k$ and  $L_{\bar{\pi}_1}(x,z) =\{w_1\}$, $L_{\bar{\pi}_k}(z,y) =\{w_k\}$, and 
    $L_{\bar{\pi}_i}(z,z) =\{w_i\}$ for each $i=2..k-1$. $\vvvert w_i\vvvert \leq 2^h\tau(w_i)+2^h$. By  \cref{lem:singletons}, $\tau(w_i)\geq 1$ for all $i=2..k-1$, hence $k-2\leq \tau(w)$. By inductive hypothesis, $\vvvert w_i\vvvert \leq 2^h\tau(w_i)+2^h$ for any $i$. Summarizing, we get the required estimate:
    \[
    \vvvert w\vvvert =\sum_{i=1}^k\vvvert w_i\vvvert \leq \sum_{i=1}^k(2^h\tau(w_i)+2^h)
    \leq 2^h\tau(w)+2^h(\tau(w)+2)=2^{h+1}\tau(w)+2^{h+1}.
\]
    \item The last inductive case is $\pi=\pi_1\dots \pi_k$ with all $\pi_i\in X_h$ and $\gamma(\pi)=\gamma(\pi_i)=e$ for a quick idempotent $e$. Let us factorize $w=w_1\dots w_k$ according to \cref{lem:xzzy}. 
    
    By \cref{lem:singletons}, $\tau(w_i)=0$ for $i=2..k-1$, thus $\vvvert w\vvvert =\vvvert w_1\vvvert +\vvvert w_k\vvvert $. Applying the inductive hypothesis to $w_1$ and $w_k$, we obtain again
    \[\vvvert w\vvvert =\vvvert w_1\vvvert +\vvvert w_2\vvvert \leq (2^h \tau(w_1)+2^h)+ (2^h \tau(w_2)+2^h)=2^h \tau(w)+2^{h+1}.\qedhere \]     
\end{itemize}
 \end{proof}
 \begin{definition}
      Given a run $\rho$ (on $w$) we define its \emph{signature} $\sigma(\rho)=I_0\delta_0I_1\delta_1\dots \delta_kI_k$ with $k=\vvvert w\vvvert $, $\forall j\in \{0.. k\}\, I_j \subset \Sigma$, $\delta_j\in\Delta$ as follows:
\begin{itemize}
    \item for every (maximal) sequence of instant  (0 time) transitions in $\rho$ we keep only the set  of labels $I\subset\Sigma$;
    \item for every non-instant transition we keep only the edge $\delta\in \Delta$.
\end{itemize}
 \end{definition}
 There are at most $2^{(k+1)\#\Sigma}\cdot \#\Delta^k$ different signatures of size $k$. Consequently, for some constant $c$ the number of signatures of size $\leq k$ can be bounded by $2^{ck}$.

 \begin{lemma}\label{lem:sig}
 In an  \stA, for an idempotent $e\in \mon_d$, the following holds.
 If we have two runs $\rho_1,\rho_2$, with $\gamma(\Path(\rho_i))=e$, $\src_{\rho_i}=\dst_{\rho_i}=x$, and with the same signature $\sigma(\rho_1)=\sigma(\rho_2)$, then $d(\Word(\rho_1),\Word(\rho_2))=0$. 
 \end{lemma}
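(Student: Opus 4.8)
The plan is to exploit the structural description obtained earlier in this subsection, namely \cref{lem:meager-singleton} and \cref{lem:xzzy}, together with \cref{lem:narrow-variations}. First I would observe that since $e$ is idempotent and both runs loop from $x$ back to $x$, the language $L_{\bar\pi}(x,x)$ (where $\pi=\Path(\rho_1)$, and similarly for $\rho_2$) is a singleton by \cref{lem:meager-singleton}; hence each of $\rho_1,\rho_2$ is in fact \emph{the} unique run realizing its path from $x$ to $x$. Because the two paths may differ, I cannot conclude immediately that the two words coincide; instead I will argue that having the same signature forces the two words to be at distance $0$ in the pseudo-metric $d$, i.e.\ they agree up to reordering and multiplicity of simultaneous letters.

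The key step is to split each run at its non-instant transitions. Write $\sigma(\rho_1)=\sigma(\rho_2)=I_0\delta_0 I_1\delta_1\dots\delta_k I_k$. The $\delta_j$ are the non-instant edges, occurring in the same order in both runs, and between consecutive non-instant edges each run performs a block of instant (zero-delay) transitions, all happening at one single date. The dates at which $\delta_0,\dots,\delta_{k}$ fire, and the dates of the instant blocks, are completely determined: between two successive non-instant firings the clock configuration is frozen except for the common elapsed delay, and by \cref{lem:xzzy} (applied to the relevant sub-factorization of each run, whose orbit is still $e$ by the idempotency) the intermediate clock vectors are forced to be a single common vector $z$. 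Consequently the timing of $\delta_j$ in $\rho_1$ equals the timing of $\delta_j$ in $\rho_2$ for every $j$: both are obtained by starting from $x$, reading the same prefix signature, and the duration of the corresponding sub-run is the $\tau$ of a singleton language, hence a fixed real number. Thus the non-instant letters of $\Word(\rho_1)$ and $\Word(\rho_2)$ coincide (same edges, hence same labels $\lbl_{\delta_j}$, at the same dates), and the instant blocks of each run carry exactly the label sets $I_j$ at exactly the same dates — the only possible difference is the internal order and repetitions of letters inside an instant block, which is invisible to $d$. Therefore $\dr(\Word(\rho_1),\Word(\rho_2))=0$ and symmetrically, so $d(\Word(\rho_1),\Word(\rho_2))=0$.

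The main obstacle I anticipate is making the "dates are forced to coincide" argument fully rigorous when the two paths are genuinely different edge sequences (they share only the signature, not the path). The clean way around this is: for each run separately, apply \cref{lem:xzzy} to factor it through the intermediate vector $z_i$ at the $i$-th non-instant step, then apply \cref{lem:singletons} / \cref{lem:narrow-variations} to pin down that the duration of the $i$-th singleton factor depends only on the endpoints $(z_{i-1},z_i)$ and the orbit, not on the specific path; a short induction on $k$ using that $e$ is idempotent (so each factor again has orbit $e$, or more precisely lies in $\gamma_d^{-1}$ of a sub-idempotent controlled by $e$) then gives $z_i=z'_i$ for all $i$ and equal factor durations. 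Once the date sequences agree, the comparison of the two words under $d$ is routine from the definition of the pseudo-distance and the fact that $d$ ignores simultaneous-letter order and multiplicity.
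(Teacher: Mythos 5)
Your proposal takes a genuinely different route from the paper, but it has a gap that cannot be closed with the lemmas you cite.

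The paper's own proof is a \emph{substitution} argument. Its key observation is that, in an \RTA, an instant sub-run (a block of zero-delay transitions) between $\dst(\delta_{j-1})$ and $\src(\delta_j)$ has a deterministic effect on clock values: it can only reset clocks, and it must reset exactly those clocks that are nonzero in $S(\dst(\delta_{j-1}))$ but zero in $S(\src(\delta_j))$, leaving the values of all other clocks unchanged. Consequently the instant block of $\rho_1$ between $\delta_{j-1}$ and $\delta_j$ can be replaced by the corresponding instant block of $\rho_2$ \emph{without altering the clock trajectory}. Doing this for every $j$ produces a run $\rho_1'$ with $\Path(\rho_1')=\Path(\rho_2)$, $\src(\rho_1')=\dst(\rho_1')=x$, and $d(\Word(\rho_1),\Word(\rho_1'))=0$. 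Since $L_{\Path(\rho_2)}(x,x)$ is a singleton (\cref{lem:meager-singleton}), necessarily $\rho_1'=\rho_2$, and the conclusion follows.

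Your plan tries instead to show directly that the firing dates of the $\delta_j$ coincide, by forcing the intermediate clock vectors $z_i$ (and $z'_i$) to agree. This is where it breaks. The step ``by \cref{lem:xzzy}\ldots the intermediate clock vectors are forced to be a single common vector $z$'' does not follow: splitting a run at its non-instant transitions does \emph{not} yield factors whose orbit is $e$, so \cref{lem:xzzy} does not apply to that decomposition, and even when it does apply, it pins down the intermediate vectors of a \emph{single} run along a \emph{fixed} path, not the relation between intermediate vectors of two runs along two different paths. Your fallback (``the duration of the $i$-th singleton factor depends only on the endpoints, not on the specific path'') is circular: you would first need to know that the endpoints $z_i$ and $z'_i$ coincide, which is precisely what is at stake. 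What is missing is exactly the observation the paper makes about the deterministic clock effect of instant blocks in a region-split automaton; once you have that, the substitution-and-uniqueness argument is both shorter and cleaner than trying to compare dates directly.
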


 \begin{proof}
    The factors of the runs $\rho_i$ corresponding to the $I_j$ have exactly the same effects on clocks: they reset all clocks that are not 0 in the ending region of $\delta_{j-1}$ and become 0 in the starting region of $\delta_j$ and leave all other clocks unchanged (an instant run cannot change a clock it does not reset).

    So, after the prefix of $\rho_1$ up to $\delta_{j-1}$, it is possible to run the sequence of instant transitions corresponding to $I_j$ in $\rho_2$, instead of the instant transitions corresponding to $I_j$ in $\rho_1$, and still reach the same state (and thus continue with the rest of the suffix of $\rho_1$ and still go back to $x$). By induction it is thus possible to build a run $\rho_1'$ from $x$ to $x$ such that $d(\Word(\rho_1),\Word(\rho_1'))=0$ and $\Path(\rho_1')=\Path(\rho_2)$. But since $L_{\Path(\rho_2)}(x,x)$ is a singleton (structural thinness), it is necessary that $\rho_1'=\rho_2$. Hence  $d(\Word(\rho_1),\Word(\rho_2))=0$.
 \end{proof}

We are ready to upper bound information in singletons.
\begin{lemma}\label{lem:net1}
For any \stA, there exists  a constant $b$ such that, given any idempotent non-zero $e \in \mon_d$ and clock vector $x$, the language $L_{eT}(x)\triangleq \bigcup_{\pi\in\gamma^{-1}(e)}L_{\pi T}(x,x)$ admits a $0$-net $\mathcal{N}_1(e,x,T)$ of cardinality no more than $2^{bT+b}$.
\end{lemma}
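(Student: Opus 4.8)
The plan is to exploit the rigidity of cycles in a structurally meager automaton: any word looping along an idempotent cycle from $x$ back to $x$ is the unique word of its path (by \cref{lem:meager-singleton}), has few non-instant events (by \cref{lem:freq}), and is determined up to pseudo-distance $0$ by its signature alone (by \cref{lem:sig}). Counting signatures of bounded size then yields the stated bound.

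First I would take an arbitrary $w\in L_{eT}(x)$. By definition there is a run $\rho$ with $\Path(\rho)=\pi$, $\gamma(\pi)=e$, $\src(\rho)=\dst(\rho)$ of the form $(\cdot,x)$, $\Word(\rho)=w$, and $\tau(w)\leq T$. Since $e$ is idempotent, \cref{lem:meager-singleton} says $L_{\bar\pi}(x,x)$ is a singleton; as $L_\pi(x,x)\subseteq L_{\bar\pi}(x,x)$, the word $w$ is the unique element of $L_{\bar\pi}(x,x)$, so \cref{lem:freq} applies and gives $\vvvert w\vvvert\leq a\tau(w)+a\leq aT+a$ for the automaton-dependent constant $a$ of that lemma. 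Hence the signature $\sigma(\rho)$ has size at most $aT+a$.

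Next I would count and choose representatives. By the remark preceding \cref{lem:sig}, the number of signatures of size at most $k$ is bounded by $2^{ck}$ for a constant $c$ depending only on $\#\Sigma$ and $\#\Delta$; with $k=aT+a$ this is $2^{c(aT+a)}$. For each such signature $s$ that is the signature of at least one run from $(\cdot,x)$ to $(\cdot,x)$ along some path in $\gamma^{-1}(e)$, I fix one such run $\rho_s$ and put $\Word(\rho_s)$ into $\mathcal{N}_1(e,x,T)$. This set has cardinality at most $2^{c(aT+a)}$.

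Finally, to see that $\mathcal{N}_1(e,x,T)$ is a $0$-net for $L_{eT}(x)$: given $w\in L_{eT}(x)$ realized by $\rho$ as above, its signature $s=\sigma(\rho)$ has size $\leq aT+a$ and is realized, so $\rho_s$ exists; both $\rho$ and $\rho_s$ are runs along paths in $\gamma^{-1}(e)$ from $(\cdot,x)$ to $(\cdot,x)$ with the same signature, so \cref{lem:sig} gives $d(w,\Word(\rho_s))=0$ with $\Word(\rho_s)\in\mathcal{N}_1(e,x,T)$. Setting $b=ca$ (so that $2^{c(aT+a)}=2^{bT+b}$) finishes the proof. The only point requiring care is that the singleton hypothesis needed to invoke \cref{lem:freq} is legitimately available — this is exactly the content of \cref{lem:meager-singleton} for idempotent orbits — together with the fact that \cref{lem:sig} is robust to the path $\pi$ ranging over all of $\gamma^{-1}(e)$ rather than being fixed; the heavy lifting (the linear bound on $\vvvert w\vvvert$ and the signature collapse) has already been done in those lemmas, so the present argument is essentially bookkeeping.
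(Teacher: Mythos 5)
Your proof is correct and takes essentially the same route as the paper: invoke \cref{lem:meager-singleton} to get the singleton property (needed for \cref{lem:freq}), bound $\vvvert w\vvvert$ linearly in $T$, count signatures of bounded size, and use \cref{lem:sig} to collapse all runs with the same signature to pseudo-distance $0$. The only thing you supply that the paper leaves implicit is the observation that $L_\pi(x,x)\subseteq L_{\bar\pi}(x,x)$ is needed to pass from the open-guard language to the singleton hypothesis of \cref{lem:freq} — a useful clarification, but the overall argument is identical.
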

\noindent By definition, a $0$-net of a set $S$ has an element at distance $0$ from each element of $S$.
\begin{proof} 
By virtue of \cref{lem:meager-singleton}, $L_{\pi T}(x,x)$ is empty or singleton, thus 
 for any timed word  $w\in L_{eT}(x)$,  \cref{lem:freq} can be applied and $\vvvert w\vvvert \leq {aT+a}$. Thus for any run $\rho$ on $w$, the size of the signature $\sigma(\rho)$ is also bounded by ${aT+a}$.

Now, by \cref{lem:sig}, to build the $0-$net $\mathcal{N}_1(e,x,T)$, it suffices to take the trace of one representative run for each possible signature, there are $2^{c(aT+a)}$ possibilities.
\end{proof}

\begin{con}[Partition of regions] Given an \stA,  an $\varepsilon$, and an idempotent non-zero $e \in \mon_d$, take a path $\pi\in\gamma^{-1}(e)$ and its initial  region $R=S(\src_\pi)$. According to \cref{con:lyapu,lem:independent-lyapunov}, we select a family $\Lambda$ of $\dim R$  independent Lyapunov functions on $R$.  We partition the closed region $\bar R$ into cells of clock vectors having the same $\varepsilon$-integer part for each Lyapunov function (i.e.~the same integer vector $\left\lfloor \Lambda(x)/\varepsilon\right\rfloor$). We notice now that for an idempotent $e$ the partition does not depend  on the choice of the path $\pi$. 
\end{con}
Note that each cell can be characterized by a tuple of integers $\varkappa = (\varkappa_1,\ldots,\varkappa_d) \in \nat^d$ with $d = dim(R)$ 
such that for each $x \in c$ and $1\leq i\leq d$, $\varkappa_{i}=\left\lfloor {\ell_i(x) }/{\varepsilon} \right\rfloor -\left\lfloor {\ell_{i-1}(x) }/{\varepsilon} \right\rfloor$ where by convention we put $\ell_{0}(x) = 0$.

Given a vertex $v$, we denote by  $\ell_{v}$ the smallest function of $\Lambda$ such that $\ell_t(v_t)=1$, and $I_v$ its support.  Abusively, we denote $\ell_{v-1}$ the element of vector $\Lambda$ preceding $\ell_{v}$ (in the topological order of \cref{con:lyapu}), we have that 
 (for an \stA) $\lambda_v=\ell_v-\ell_{v-1}$. Finally, we denote the corresponding component of $\varkappa$ by $\varkappa_v$.

\begin{lemma}
With the hypotheses and notations from the construction above, if $t$ is a transient vertex,  and if $x$ and $y$ are clock vectors such that $P_{\bar{\pi}}(x,y)\neq \emptyset$ then $\ell_t(y)\leq \ell_t(x) - \lambda_{t}(x)$.
\end{lemma}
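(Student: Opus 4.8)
The plan is to derive the inequality from Puri's reachability characterization (\cref{lem:puri}) combined with the fact that $\ell_t$ is the sum of the barycentric coordinates over an \emph{initial} set of vertices that contains $t$ only as its ``last'' element. First I would unwind the notation. Write $t\in S_i$; since the automaton is an \stA, every SCC of an idempotent orbit is a singleton (a non-trivial SCC would force a $\wide$ self-loop, contradicting structural meagerness), so $S_i=\{t\}$, the support of $\ell_t$ is $I_t=S_0\cup\dots\cup S_i$, and $\lambda_t=\ell_t-\ell_{t-1}$ is exactly the barycentric weight carried by $t$. Two structural facts about the p-orbit $\tuple{q,A,q}$ of $\pi$ then follow straight from the topological ordering of SCCs in \cref{con:lyapu}: (i) $I_t$ is an initial set, so $A_{wv}=0$ whenever $w\notin I_t$ and $v\in I_t$ (this is precisely why $\ell_{I_t}$ is a Lyapunov function); and (ii) since $t$ is transient it has no self-loop, $A_{tt}=0$, and every vertex reachable from $t$ sits in a strictly later SCC, hence $A_{tv}=0$ for \emph{all} $v\in I_t$.

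Next I would invoke \cref{lem:puri} (in the closed-region version justified by the remarks following that lemma, which use only affinity and the boundedness of the start/end regions): from $P_{\bar\pi}(x,y)\neq\emptyset$ we obtain a stochastic matrix $P\leq A$ with $\lambda(y)=\lambda(x)P$. Then
\[
\ell_t(y)=\sum_{v\in I_t}\lambda_v(y)=\sum_{w}\lambda_w(x)\sum_{v\in I_t}P_{wv}.
\]
Using (i) and (ii), the inner sum vanishes for $w\notin I_t$ and for $w=t$ (because $P\leq A$ and those $A$-entries are $0$), while for $w\in I_t\setminus\{t\}$ it is at most $1$ by stochasticity. Hence the outer sum collapses to $\sum_{w\in I_t\setminus\{t\}}\lambda_w(x)=\ell_t(x)-\lambda_t(x)$, which is the claim.

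The only genuinely delicate point is making sure Puri's vertex-to-vertex, barycentric characterization transfers verbatim to $\bar\pi$ and to clock vectors in the closure $\bar R$ rather than the open region; this is exactly the content of the discussion following \cref{lem:puri} (the identity $R(\lambda x+(1-\lambda)y)=\lambda R(x)+(1-\lambda)R(y)$ survives, and what we actually use is boundedness of the start and end regions), so it can be cited. Everything else is routine bookkeeping: checking that $I_t$ is an initial set directly from the SCC ordering, and checking that a transient vertex has no outgoing orbit edge into its own SCC (it is a singleton without self-loop) or into any earlier one.
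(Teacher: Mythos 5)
Your proof is correct and follows essentially the same route as the paper: invoke Puri's characterization to obtain a stochastic matrix $P \leq A$, and exploit two structural facts about the orbit graph (no edges into the initial set $I_t$ from outside, and no edges out of $t$ into $I_t$, since $t$ is transient and sits at the top of the topological order inside $I_t$). The only cosmetic difference is that you bound $\ell_t(y)=\sum_{w}\lambda_w(x)\sum_{v\in I_t}P_{wv}$ from above directly, while the paper bounds $1-\ell_t(y)=\sum_{w}\lambda_w(x)\sum_{u\notin I_t}P_{wu}$ from below and then rearranges; both expansions collapse the same three cases ($w\notin I_t$, $w=t$, $w\in I_t\setminus\{t\}$) using the same zeros of $A$ and row-stochasticity of $P$, so the two computations are mirror images of each other.
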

\begin{proof}
By virtue of \cref{lem:puri}, there exists a stochastic matrix $P$  such that $\lambda(y)=\lambda(x)P$, with $P_{uv}$ being positive only if there is an edge $(u,v)$ in $\gamma(\pi)$, where $\lambda(x)$ is the vector of barycentric coordinates of the clock vector $x$ relative to the vertices of $S(\src_\pi)$.
\begin{multline*}
    1-\ell_t(y)=\sum_{u\not\in I_t} \lambda_u(y)
        =\sum_{u\not\in I_t}\sum_{v} \lambda_v(x)P_{vu}
        =\sum_{v}\lambda_v(x)\left(\sum_{u\not\in I_t} P_{vu}\right)\\
        =\sum_{v\in I_t\setminus \{t\}}\lambda_v(x)\left(\sum_{u\not\in I_t} P_{vu}\right)
        +\lambda_t(x)\left(\sum_{u\not\in I_t} P_{tu}\right)
        + \sum_{v\not\in I_t}\lambda_v(x)\left(\sum_{u\not\in I_t} P_{vu}\right)\\
        =\sum_{v\in I_t\setminus \{t\}}\lambda_v(x)\left(\sum_{u\not\in I_t} P_{vu}\right)
        +\lambda_t(x)
        + \sum_{v\not\in I_t}\lambda_v(x)
        \geq \lambda_t(x) + \sum_{v\not\in I_t}\lambda_v(x)\\ = \lambda_t(x) + 1 - \sum_{v\in I_t}\lambda_v(x).
\end{multline*}
We conclude.
\end{proof}

\begin{lemma}\label{lem:tra-rec}
For any cell $\ck$ the following alternative holds:
\begin{itemize}
    \item if for some transient  vertex $t$  we have  $\varkappa_t \geq 1$, then for any $x,y \in \ck$ it holds that $L_{\pi}(x,y) =\emptyset$. We call such cell $\ck$ \emph{transient}.
    \item Otherwise (if for all transient vertices $v_t$ we have $\varkappa_t =0$), then there exists an  $x\in \ck$ having $L_{\bar\pi}(x,x) \neq \emptyset$ (and thus  a singleton). We denote it $\xk$ and  call such cells \emph{recurrent}.  
\end{itemize}
\end{lemma}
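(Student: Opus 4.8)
The plan is to extract both halves of the dichotomy from two facts already at hand: the lemma immediately above it — a transient vertex loses at least its own barycentric weight along $\bar\pi$, i.e.\ $\ell_t(y)\le\ell_t(x)-\lambda_t(x)$ whenever $P_{\bar\pi}(x,y)\ne\emptyset$ — and the Puri characterization recalled inside the proof of \cref{lem:meager-singleton}, namely that $L_{\bar\pi}(x,x)\ne\emptyset$ iff $x$ lies in the convex hull $\Conv(S)$ of the set $S$ of self-looping (recurrent) vertices of $\gamma(\pi)$; together with \cref{lem:meager-singleton} this also makes that language a singleton. First I would record the geometry. Since the automaton is an \stA, the cycle $\pi^2$ has no $\wide$ diagonal entry; but if $\gamma(\pi)$ had a nontrivial SCC containing distinct $u,v$, then in the idempotent $\gamma(\pi)$ the vertex $u$ carries a self-loop and $u\leftrightarrow v$, so (using that $\gamma_f$ is a morphism) $\gamma_f(\pi^2)[u,u]=(\gamma_f(\pi)\gamma_f(\pi))[u,u]$ is a sum of at least two non-$\none$ products (the $w=u$ and $w=v$ summands, each $\ge\narrow$), hence $\ge\narrow+\narrow=\wide$ — contradiction. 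Thus all SCCs of $\gamma(\pi)$ are singletons, $\#\SCC(\gamma(\pi))=\dim R+1$, \cref{lem:lyapunov-derivatives} applies so $\Lambda$ is a bijection of $\bar R$ onto the simplex $\{y:0\le y_1\le\dots\le y_d\le 1\}$, and for the $i$-th vertex $v_i$ of $R$ in the topological order of \cref{con:lyapu} one has $\lambda_{v_i}(x)=\ell_i(x)-\ell_{i-1}(x)$ for $1\le i\le d$ (with $\ell_0\equiv 0$) and $\lambda_{v_{d+1}}(x)=1-\ell_d(x)$; moreover, by \cref{cor:orbit-incoming-outgoing} the unique vertices of the source and sink SCCs carry self-loops, so $v_1$ and $v_{d+1}$ are recurrent and every transient vertex has index in $\{2,\dots,d\}$.

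For the transient case I would argue by contradiction. Fix $x,y\in\ck$ with $\varkappa_t\ge 1$ for some transient vertex $t$ and suppose $L_\pi(x,y)\ne\emptyset$, so $P_{\bar\pi}(x,y)\ne\emptyset$. The lemma above gives $\ell_t(y)\le\ell_t(x)-\lambda_t(x)=\ell_{t-1}(x)$, hence $\lfloor\ell_t(y)/\varepsilon\rfloor\le\lfloor\ell_{t-1}(x)/\varepsilon\rfloor$. But $x$ and $y$ lie in the same cell, so $\lfloor\ell_t(y)/\varepsilon\rfloor=\lfloor\ell_t(x)/\varepsilon\rfloor$, while $\varkappa_t\ge 1$ says exactly $\lfloor\ell_t(x)/\varepsilon\rfloor\ge\lfloor\ell_{t-1}(x)/\varepsilon\rfloor+1$. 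Chaining these, $\lfloor\ell_{t-1}(x)/\varepsilon\rfloor\ge\lfloor\ell_{t-1}(x)/\varepsilon\rfloor+1$, which is absurd; therefore $L_\pi(x,y)=\emptyset$, and $\ck$ is transient.

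For the recurrent case, assume $\varkappa_t=0$ for every transient vertex $t$. I would pick any $x_0\in\ck$, set $m_i=\lfloor\ell_i(x_0)/\varepsilon\rfloor$ for $1\le i\le d$ and $m_0=0$, and note $0=m_0\le m_1\le\dots\le m_d\le 1/\varepsilon$ because $\ell_0\le\ell_1\le\dots\le\ell_d\le 1$ pointwise; thus $y^\star:=\varepsilon\,(m_1,\dots,m_d)$ lies in the simplex, and I would take $\xk:=\Lambda^{-1}(y^\star)\in\bar R$. Then $\lfloor\ell_i(\xk)/\varepsilon\rfloor=m_i$ for every $i$, so $\xk\in\ck$; and for each transient vertex $t$, of some index $i\in\{2,\dots,d\}$, its weight is $\lambda_t(\xk)=\ell_i(\xk)-\ell_{i-1}(\xk)=\varepsilon(m_i-m_{i-1})=\varepsilon\,\varkappa_t=0$. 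Since the non-recurrent vertices are precisely the transient ones, $\xk$ has zero weight on every non-recurrent vertex, i.e.\ $\xk\in\Conv(S)$; by Puri's characterization $L_{\bar\pi}(\xk,\xk)\ne\emptyset$, and by \cref{lem:meager-singleton} it is a singleton. So $\ck$ is recurrent, and we may name this witness $\xk$.

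The steps that need genuine care, rather than being mechanical floor arithmetic, are two. One is confirming that the \stA\ hypothesis really forces all SCCs of $\gamma(\pi)$ to be trivial, so that the bijection of \cref{lem:lyapunov-derivatives} and the identity $\lambda_{v_i}=\ell_i-\ell_{i-1}$ are legitimately available. The other is checking that the ``lower corner'' $y^\star=\varepsilon\,(m_1,\dots,m_d)$ lands simultaneously inside the cell $\ck$ and inside the closed simplex, which is exactly what makes precisely the transient coordinates vanish while leaving the recurrent ones — in particular $\lambda_{v_{d+1}}=1-m_d\varepsilon\ge 0$ — free. I expect these verifications, not the two short case analyses, to absorb most of the writing.
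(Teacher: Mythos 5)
Your proof is correct and follows essentially the same route as the paper's: the previous lemma's inequality $\ell_t(y)\le\ell_{t-1}(x)$ combined with floor arithmetic for the transient case, and a point of the cell with vanishing barycentric weight on all transient vertices (found via \cref{lem:lyapunov-derivatives}) for the recurrent case, finished by Puri's characterization and \cref{lem:meager-singleton}. You add two helpful details the paper leaves implicit — the argument that the \stA\ hypothesis forces all SCCs of $\gamma(\pi)$ to be singletons, and the explicit choice $\xk=\Lambda^{-1}\bigl(\varepsilon(m_1,\dots,m_d)\bigr)$ at the ``lower corner'' of the cell — but the substance is the same.
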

\begin{proof}
In the former case, following the previous lemma, we have that $\ell_t(y)  \leq \ell_t(x)- \lambda_{t}(x) = \ell_{t-1}(x)$. Then 
\[
\left\lfloor \frac{\ell_t(y)}{\varepsilon} \right\rfloor \leq  \left\lfloor \frac{\ell_{t-1}(x)}{\varepsilon} \right \rfloor < \left\lfloor \frac{\ell_{t-1}(x)}{\varepsilon} \right\rfloor + 1 \leq \left\lfloor \frac{\ell_{t-1}(x)}{\varepsilon} \right\rfloor + \varkappa_t.
\]
But in order for $y$ to be in $\ck$ , we must have, among other properties, that $\left\lfloor \frac{\ell_{t-1}(y)}{\varepsilon} \right\rfloor = \left\lfloor \frac{\ell_{t-1}(x)}{\varepsilon} \right\rfloor  $
and $\left\lfloor \frac{\ell_t(y)}{\varepsilon} \right\rfloor = \left\lfloor \frac{\ell_{t-1}(y)}{\varepsilon} \right\rfloor + \varkappa_t $, which, as per the above inequality, does not hold. 

In the latter case,  we can choose values of $\Lambda(\cdot)$ corresponding to the cell $\ck$ with $\ell_t(\cdot)=\ell_{t-1}(\cdot)$ (and thus $\lambda_t(\cdot)=0$) for all transient vertices $t$. By \cref{lem:lyapunov-derivatives} there exists  a point $x$ in the cell corresponding to these values of $\Lambda$. Such an $x$ is a combination of vertices with self-loops, which gives $L_{\bar\pi}(x,x) \neq \emptyset$, as explained in the proof of \cref{lem:meager-singleton}.
\end{proof}

\begin{lemma}\label{lem:prop:cells}
    For fixed $e$ and $\varepsilon$, cells $\ck$  have the following properties:
    \begin{itemize}
        \item the number of cells is at most $f_1(\varepsilon)\triangleq\left\lceil \frac{1}{\varepsilon}\right\rceil^{\dim R}$;
        \item the diameter of each cell is at most $K\varepsilon$, with $K$ from  \cref{lem:lyapunov-derivatives};
        \item there is a partial order over cells: $c^1\succeq c^2$, whenever some $x^1\in c^1$, $x^2\in c^2$ satisfy $\Lambda(x^1)\geq \Lambda(x^2)$. The language $L_{\bar\pi}(c^1,c^2)$ can be non-empty only if $c^1\succeq c^2$.
    \end{itemize}   
\end{lemma}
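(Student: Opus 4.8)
The three items are largely book-keeping on the construction of cells, so my plan is to dispatch them in order, relying on the independence and dilatation properties of the Lyapunov family $\Lambda$ and on Puri's reachability characterization. For the \textbf{bounded number of cells}, I would argue that on $\bar R$ each Lyapunov function $\ell_i\in\Lambda$ takes values in $[0,1]$, hence $\lfloor \ell_i(x)/\varepsilon\rfloor\in\{0,1,\dots,\lceil 1/\varepsilon\rceil\}$; since a cell is determined by the integer vector $(\lfloor\ell_1(x)/\varepsilon\rfloor,\dots,\lfloor\ell_{\dim R}(x)/\varepsilon\rfloor)$ (equivalently by the tuple $\varkappa$), there are at most $\lceil 1/\varepsilon\rceil^{\dim R}$ of them, which is exactly $f_1(\varepsilon)$. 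Here I use that $\#\Lambda=\dim R$ because the automaton is structurally meager, so by \cref{lem:independent-lyapunov} all SCCs of $\gamma(\pi)$ are singletons and $\#\SCC(\gamma(\pi))=\dim R+1$.

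For the \textbf{diameter bound}, I would invoke \cref{lem:lyapunov-derivatives}: since all SCCs are singletons, $\Lambda$ is a dilatation, i.e.\ $\|x-y\|_\infty\le K\|\Lambda(x)-\Lambda(y)\|_\infty$ for a constant $K$ depending only on $\gamma(\pi)$. Two clock vectors $x,y$ in the same cell $\ck$ satisfy $\lfloor\ell_i(x)/\varepsilon\rfloor=\lfloor\ell_i(y)/\varepsilon\rfloor$ for every $i$, hence $|\ell_i(x)-\ell_i(y)|<\varepsilon$ for every $i$, so $\|\Lambda(x)-\Lambda(y)\|_\infty<\varepsilon$ and therefore $\|x-y\|_\infty<K\varepsilon$; this gives a diameter at most $K\varepsilon$ (tightening the strict inequality to a non-strict one on the closure of the cell). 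For the \textbf{partial order}, I would first check that $\succeq$ as defined is indeed a partial order on cells — reflexivity is clear, and antisymmetry/transitivity follow because the relation "$\exists x^1\in c^1,x^2\in c^2:\Lambda(x^1)\ge\Lambda(x^2)$" on cells is equivalent to the componentwise order on the representative $\varkappa$-tuples once one notes that each $\ell_i$ ranges over an interval of length $<\varepsilon$ inside a cell, so cells are order-isomorphic to their $\varkappa$ images under $\le$. Then, to see that $L_{\bar\pi}(c^1,c^2)\ne\emptyset$ forces $c^1\succeq c^2$: if some run along $\bar\pi$ takes $x^1\in c^1$ to $x^2\in c^2$, then by the monotonicity of Lyapunov functions (\cref{def:Lyap}) we have $\ell(x^1)\ge\ell(x^2)$ for every $\ell\in\Lambda$, i.e.\ $\Lambda(x^1)\ge\Lambda(x^2)$, which is exactly the defining condition of $c^1\succeq c^2$.

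The only genuinely delicate point — and the one I would spend the most care on — is making precise that the cell order $\succeq$ is well-defined and really is a partial order, i.e.\ that "some witnessing pair $x^1,x^2$ with $\Lambda(x^1)\ge\Lambda(x^2)$" does not accidentally make two incomparable $\varkappa$-tuples comparable. The clean way is to observe that within a single cell each coordinate $\ell_i$ varies by strictly less than $\varepsilon$, so if $\varkappa^1$ and $\varkappa^2$ are the tuples of $c^1$ and $c^2$, then $\exists x^1\in c^1,x^2\in c^2:\ell_i(x^1)\ge\ell_i(x^2)$ for all $i$ is equivalent to $\lfloor\ell_i(x)/\varepsilon\rfloor$ of $c^1$ being $\ge$ that of $c^2$ coordinatewise; once phrased this way, $\succeq$ is just the product order pulled back along $\varkappa$, hence a genuine partial order, and the reachability-implies-$\succeq$ direction follows immediately from Lyapunov monotonicity as above. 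Everything else is routine.
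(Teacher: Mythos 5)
Your proof is correct and follows the same route as the paper's (very terse) proof: the first item is counting, the second invokes \cref{lem:lyapunov-derivatives}, the third is Lyapunov monotonicity via \cref{def:Lyap}. You flesh out each point carefully, including the observation (implicit in the paper) that structural meagerness forces all SCCs to be singletons so that $\#\Lambda=\dim R$ and the dilatation lemma applies. The one place where you are rightly cautious but do not fully close the gap is the claim that the existentially-defined $\succeq$ is ``order-isomorphic'' to the product order on $\varkappa$-tuples: the direction (cell $\succeq$) $\Rightarrow$ (product order on floors) is immediate, and it already suffices to derive reflexivity and antisymmetry of $\succeq$, but the converse direction (product order $\Rightarrow$ existence of a simultaneous witness pair $x^1,x^2$) is not as automatic as you suggest, because you must find one $x^1\in c^1$ and one $x^2\in c^2$ that dominate coordinatewise at once, subject to the simplex constraints $0\le\ell_1\le\cdots\le\ell_{\dim R}\le 1$; consequently transitivity of $\succeq$ as literally defined would need an extra line. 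That said, the paper does not address this point either, and the only property actually used downstream (in \cref{lem:factor:meager}) is the reachability-implies-$\succeq$ direction together with the finite cell count, both of which you prove correctly.
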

\begin{proof}
The first statement is trivial, the second is a consequence of \cref{lem:lyapunov-derivatives}, and the third follows from  \cref{def:Lyap}. 
\end{proof}

\begin{definition}\label{def:still}
A non-empty run $\rho$ is called $\varepsilon$-\emph{still} if 
\begin{itemize}
\item $e=\gamma_f(\Path(\rho))$  is an idempotent; 
\item and  $\rho$ goes from a recurrent cell $\ck$  to the same cell.
\end{itemize}
\end{definition}
We denote the set $\Word(\rho)$ over all $\varepsilon$-still runs by $\Still_\varepsilon$, and its restriction to words of duration $\leq T$  by $\Still_{\varepsilon T}$.
\begin{con}[Net for $Still$] The net $\mathcal{N}_2(\varepsilon,T)$ is built as follows. For each recurrent cell $\ck$  let $\xk$ be as in \cref{lem:prop:cells}.
    Take 
    \[\mathcal{N}_2\triangleq \bigcup_{\substack{e=ee\\\ck \text{recurrent}}}
    \mathcal{N}_1(e,\xk,T+1).\]
\end{con}
\begin{lemma} Suppose $3K\varepsilon<1$.
    The set $\mathcal{N}_2(\varepsilon,T)$ is  a $3K\varepsilon$-net  for the language  $\Still_{\varepsilon T}$ of size $f_2(\varepsilon)2^{bT}$ (for some function $f_2$).
\end{lemma}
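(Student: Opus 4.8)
I would prove the two assertions separately, both essentially a matter of gluing together results already established. For the \textbf{cardinality bound}, I would read off from the construction that $\mathcal{N}_2(\varepsilon,T)=\bigcup_{e=ee,\ \ck\text{ recurrent}}\mathcal{N}_1(e,\xk,T+1)$ is a union over at most $\#\mon_d$ idempotents $e$ (a constant of the automaton), for each of which \cref{lem:prop:cells} bounds the number of cells, hence of recurrent cells, by $f_1(\varepsilon)=\lceil 1/\varepsilon\rceil^{\dim R}\le\lceil 1/\varepsilon\rceil^{m}$, while \cref{lem:net1} bounds each $\#\mathcal{N}_1(e,\xk,T+1)$ by $2^{b(T+1)+b}=2^{2b}\cdot 2^{bT}$. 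Multiplying these gives $\#\mathcal{N}_2(\varepsilon,T)\le(\#\mon_d\cdot\lceil 1/\varepsilon\rceil^{m}\cdot 2^{2b})\cdot 2^{bT}$, so I would take $f_2(\varepsilon)=\#\mon_d\cdot\lceil 1/\varepsilon\rceil^{m}\cdot 2^{2b}$.

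For the \textbf{net property}, I would take an arbitrary $w\in\Still_{\varepsilon T}$, witnessed by an $\varepsilon$-still run $\rho$ along a path $\pi$ from a clock vector $x$ to a clock vector $y$, where $\gamma_f(\pi)=e$ is idempotent, $x,y$ lie in a single recurrent cell $\ck$, and $\tau(w)\le T$; let $\xk$ be the representative of $\ck$. Since $\ck$ is recurrent, \cref{lem:tra-rec} gives $L_{\bar\pi}(\xk,\xk)\neq\emptyset$, and since $\pi$ has idempotent orbit and the automaton is an \stA, \cref{lem:meager-singleton} makes $L_{\bar\pi}(\xk,\xk)=\{w_0\}$ a singleton; note $w$ and $w_0$ are both realised along $\pi$, hence carry the same untiming $\Let(\pi)$. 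Now I would invoke continuity: by \cref{lem:prop:cells} the cell $\ck$ has diameter $\le K\varepsilon$, so $\|x-\xk\|_\infty\le K\varepsilon$ and $\|y-\xk\|_\infty\le K\varepsilon$; applying \cref{lem:narrow-variations} (and the coordinatewise estimate in its proof, \cref{cor:Lipshitz}) with $x^0=y^0=\xk$, $x^1=x$, $y^1=y$, $w^1=w$ yields $d(w_0,w)\le 3K\varepsilon$ and, moreover, $|t^0_j-t^1_j|\le 3K\varepsilon$ for every timing coordinate $j$. In particular $|\tau(w_0)-\tau(w)|\le 3K\varepsilon<1$ by the hypothesis $3K\varepsilon<1$, so $\tau(w_0)\le T+1$ and thus $w_0$ lies in the loop language $L_{e,T+1}(\xk)$; by \cref{lem:net1} there is then some $s\in\mathcal{N}_1(e,\xk,T+1)\subseteq\mathcal{N}_2(\varepsilon,T)$ with $d(w_0,s)=0$. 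The triangle inequality for the pseudo-distance gives $d(w,s)\le d(w,w_0)+d(w_0,s)\le 3K\varepsilon$, so $\mathcal{N}_2(\varepsilon,T)$ is a $3K\varepsilon$-net for $\Still_{\varepsilon T}$.

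The whole argument is essentially bookkeeping; the one step that needs genuine care is that the cell's representative word $w_0$ is actually captured by $\mathcal{N}_1(e,\xk,T+1)$ — this simultaneously uses recurrence of the cell (\cref{lem:tra-rec}, giving non-emptiness of the loop at $\xk$), structural meagerness (\cref{lem:meager-singleton}, giving singleton-ness), and the Lipschitz estimate of \cref{cor:Lipshitz}, which under $3K\varepsilon<1$ both keeps $w$ within $3K\varepsilon$ of $w_0$ and keeps $\tau(w_0)$ below $T+1$ (this is exactly why the net is indexed by $T+1$ rather than $T$). A minor technicality, to be dispatched as elsewhere in this section, is the mismatch between the open guards used by the run $\rho$ and the closures appearing in $L_{\bar\pi}$: since $L_\pi(x,y)\subseteq L_{\bar\pi}(x,y)$ and since $\mathcal{N}_1$ may equally be built as a $0$-net of the closed-guard loop language at no cost in size, one may pass freely to closed guards throughout.
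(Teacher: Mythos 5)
Your proof is correct and follows essentially the same route as the paper's: use \cref{lem:tra-rec} (recurrence) and \cref{lem:meager-singleton} (structural meagerness) to get a singleton representative word $w_0$ at $\xk$, apply \cref{lem:narrow-variations} via the $K\varepsilon$ cell-diameter bound of \cref{lem:prop:cells} to deduce $d(w,w_0)\leq 3K\varepsilon$ and (coordinatewise) $\tau(w_0)<T+1$, appeal to \cref{lem:net1} for a $0$-close element of $\mathcal{N}_1(e,\xk,T+1)$, and finish with the triangle inequality; the size bound is the same multiplication. You are slightly more explicit on two bookkeeping points the paper glosses over --- that $w$ and $w_0$ share the same untiming (needed for the distance to be finite and controlled) and the open-versus-closed-guard passage --- and you correctly conclude $d(w,w'')\leq 3K\varepsilon$, whereas the paper's final line says $2K\varepsilon$, which is a small arithmetic slip (nothing in the chain of inequalities produces a $2$).
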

\begin{proof} 
Let $w\in \Still_{\varepsilon T}$, by definition $w=\Word(\rho)$ with $\rho$ from some $x\in\ck$ to some $y\in \ck$  (in the same recurrent cell). Let  $\pi=\Path(\rho)$. By \cref{lem:tra-rec}, $L(\pi,\xk,\xk)$ is some singleton $\{w'\}$. By \cref{lem:prop:cells},  $||x-\xk||_\infty\leq K\varepsilon$ and $||y-\xk||_\infty\leq K\varepsilon$, thence by \cref{lem:narrow-variations} $d(w,w')\leq 3K\varepsilon$. This implies that $\tau(w')\leq T+3K\varepsilon<T+1$,  and by \cref{lem:net1} there exists $w''\in \mathcal{N}_1(e,\xk,T+1)$, such that $d(w',w'')=0$. Hence $d(w,w'')\leq 2K\varepsilon$ and by construction $w''\in \mathcal{N}_2$. Thus $\mathcal{N}_2$ is indeed a  $3K\varepsilon$-net  for the language  $\Still_{\varepsilon T}$. The size estimation is obtained by multiplying the size of $\mathcal{N}_1$ by the number of cells.
\end{proof}
We proceed now with factorizing words in $\Still_\varepsilon$ fragments and transitions between them.
\begin{lemma}\label{lem:factor:meager}
For an \stA\ $\aut$ there exists a function $f_3(\varepsilon)$, such that for any $\varepsilon$ every word $w\in L(\aut)$ admits a factorization $w=w_1\dots w_N$ with $N\leq f_3(\varepsilon)$, with each $w_i$ either having  one letter or belonging to $\Still_\varepsilon$.  
\end{lemma}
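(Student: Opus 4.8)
The plan is to apply Simon's factorization forest theorem (\cref{thm:simon}) to the morphism $\gamma_{f}\colon\Delta^{*}\to\mon_{f}$, and to prove by induction on the Simon level $h$ the following strengthening: every run $\rho$ of $\aut$ whose path $\Path(\rho)$ belongs to the set $X_{h}$ of \eqref{eq:simon} admits a factorization $\Word(\rho)=w_{1}\cdots w_{N}$ with $N\le g(h,\varepsilon)$, each $w_{i}$ being a single letter or a word of $\Still_{\varepsilon}$. Since $\mon_{f}$ is a fixed finite monoid of the automaton (its cardinality depends only on $\#Q$ and $\#X$, not on $\varepsilon$) and Simon's sequence stabilises at a level $K\le 3\,\#\mon_{f}$ with $X_{K}=\Delta^{*}$, every run realising a word of $L(\aut)$ has its path in $X_{K}$, so the lemma follows with $f_{3}(\varepsilon)=g(K,\varepsilon)$, provided $g(h,\varepsilon)$ stays a function of $\varepsilon$ alone at every step --- which is the whole point.

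The base case ($\pi\in X_{0}$: a single edge or $\epsilon$) yields one factor, a letter or the empty word. The concatenation step $\pi=\pi_{1}\pi_{2}$ with $\pi_{i}\in X_{h}$ is handled by splitting $\rho=\rho_{1}\circ\rho_{2}$ at the intermediate clock vector, factorizing each $\rho_{i}$ by the induction hypothesis and merging; this costs a factor $2$. The core is the iteration step $\pi=\pi_{1}\cdots\pi_{k}$ with $\gamma_{f}(\pi_{i})=e$ for all $i$ and $e$ idempotent; if $e=\1$ this is trivial, otherwise $e=\tuple{q,B,q}$, all $\pi_{i}$ are cycles on the region $R=S(q)$, and all $\pi_{i}$ and all their products have the same, idempotent, p-orbit $\bar e:=\gamma(\pi_{i})$ (the image of $e$ under the natural projection $\mon_{f}\to\mon$). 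Split $\rho=\rho_{1}\circ\cdots\circ\rho_{k}$ with $\rho_{i}$ running from $x_{i-1}$ to $x_{i}$, and let $c_{i}$ be the $\varepsilon$-cell of $\bar R$, in the partition of $\bar R$ associated with $\bar e$ (which depends only on $\bar e$), containing $x_{i}$. \textbf{The crucial observation}: since $\Word(\rho_{i})\in L_{\pi_{i}}(x_{i-1},x_{i})\subseteq L_{\bar\pi_{i}}(c_{i-1},c_{i})$, the partial-order property of \cref{lem:prop:cells} gives $c_{i-1}\succeq c_{i}$ for every $i$, so $c_{0}\succeq c_{1}\succeq\cdots\succeq c_{k}$ is a $\succeq$-descending chain of cells.

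Because $\succeq$ is a partial order on the cells of $\bar R$, whose number is at most $C(\varepsilon):=\lceil 1/\varepsilon\rceil^{\#X}$, this chain has at most $C(\varepsilon)-1$ strict descents; their positions $j_{1}<\cdots<j_{\ell}$ ($\ell\le C(\varepsilon)-1$) split $\rho$ into the $\ell$ descent sub-runs $\rho_{j_{1}},\dots,\rho_{j_{\ell}}$ and at most $\ell+1$ interleaved stretches $\rho_{a}\circ\cdots\circ\rho_{b}$ over index ranges where the cell stays equal to some $c$. A non-empty such stretch is a run from $c$ back to $c$ whose path has idempotent $\gamma_{f}$-image $e^{\,b-a+1}=e$; moreover $c$ cannot be transient, since it carries the run $\rho_{a}$ between two of its points, which \cref{lem:tra-rec} forbids for transient cells; hence $c$ is recurrent and the stretch is an $\varepsilon$-still run (\cref{def:still}), a single admissible factor in $\Still_{\varepsilon}$. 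Each of the $\ell$ descent sub-runs has its path in $X_{h}$ and is factorized recursively into $\le g(h,\varepsilon)$ factors, so $g(h+1,\varepsilon)\le(C(\varepsilon)+1)(g(h,\varepsilon)+1)$, whence $g(h,\varepsilon)\le(h+1)(C(\varepsilon)+1)^{h}$ and $f_{3}(\varepsilon)=g(K,\varepsilon)$ is a function of $\varepsilon$ alone. \textbf{The main obstacle} --- and the reason $N$ is independent of $\tau(w)$ --- is precisely this descending-chain argument: one must be sure that iterating a structurally meager idempotent cycle can never bring a clock vector into a strictly higher cell, which is what $\succeq$ together with \cref{lem:prop:cells} encodes, while the transient/recurrent dichotomy of \cref{lem:tra-rec} is what upgrades each constant-cell stretch to a genuine $\Still_{\varepsilon}$ word; one also has to check that the cell partition and this dichotomy depend only on the p-orbit $\bar e$, so they stay coherent along the whole pack of iterations. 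The rest is routine bookkeeping.
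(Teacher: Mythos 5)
Your argument is correct and follows essentially the same route as the paper's proof: Simon's factorization theorem, the Lyapunov-based cell partition of the starting region with its partial order $\succeq$, bounding the number of strict $\succeq$-descents by a quantity depending on $\varepsilon$ alone, and identifying the non-empty constant-cell stretches as $\Still_\varepsilon$ factors via the transient/recurrent dichotomy of \cref{lem:tra-rec}, with the descent sub-runs factorized recursively. The one noteworthy deviation is the choice of monoid for Simon's theorem: you run it over $\gamma_f\colon\Delta^*\to\mon_f$, whereas the paper runs it over $\gamma_d\colon\Delta^*\to\mon_d$. Your choice is in fact better adapted to this lemma, since $\Still_\varepsilon$ (\cref{def:still}) is defined by $\gamma_f$-idempotency of the stretch's path: with $\mon_f$ this is immediate for a constant-cell stretch $\rho_a\circ\cdots\circ\rho_b$ (its path maps to $e^{b-a+1}=e$), whereas with $\mon_d$ one would still need to check separately that the stretch's $\gamma_f$-image is idempotent, a step the paper's write-up leaves implicit; the cell partition and the order $\succeq$ depend only on the underlying p-orbit, so both monoid choices yield the same combinatorics there.
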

\begin{proof}
    We will use Simon's  theorem (for monoid $\mon_d$) and prove by induction that for every path $\pi \in X_h$, every $w\in L_\pi$ admits a factorization with $N\leq (2f_1(\varepsilon))^h$. Since the height $h$ is bounded by $K$, this yields the required bound $N<f_3(\varepsilon)$ with $f_3(\varepsilon)=(2f_1(\varepsilon))^K$
\begin{itemize}
    \item In the base case $\pi\in X_0$, the word $w$ has one letter and its factorization is trivial ($N=1$).
    \item In the first inductive case $\pi=\pi_1\pi_2$, with $\pi_1,\pi_2\in X_h$. Consequently $w=w_1w_2$ with $w_i\in L_{\pi_i}$, the words $w_1,w_2$ admitting factorizations with $(2f_1(\varepsilon))^h$ factors. Concatenating those we obtain a factorization for $w$ with  $2(2f_1(\varepsilon))^h<(2f_1(\varepsilon))^{h+1}$ factors.
    \item In the other inductive case, there exists an idempotent $e$ and $\pi=\pi_1\dots \pi_J$, with $\gamma_d(\pi_i)=e$ and $\pi_i\in X_h$. Passing to runs, we get $\rho=\rho_1\dots\rho_J$, with $w=\Word(\rho)$, and $\Path(\rho_i)=\pi_i$. The factorization in $\rho_i$ is too long and too precise, we will partition it into fewer clusters. Let $z_i$ be clock vectors (traversed by the run $\rho$), such that $\src_{\rho_i}=z_{i-1}$ and $\dst_{\rho_i}=z_{i}$. Let $c_i$ be the cell of $z_i$, that is the unique $\ck$  containing $z_i$. By \cref{lem:prop:cells} $c_0\succeq c_1 \succeq \cdots \succeq c_J$, and there are at most $f_1(\varepsilon)$ different cells in this sequence, let us denote them $c^i$. It means that the sequence can be written as 
    \[c^0=c^0=\dots c^0 \succ c^1=\dots c^1 \succ c^2 \dots c^M\]  with $M<f_1(\varepsilon)$. Notice that only recurrent cells can be repeated. We will concatenate together all runs $\rho_j$ from $c^i$ to itself, and call the result of concatenation $\zeta^i$; we also denote $\rho_j$ from $c^i$ to $c^{i+1}$ by $\rho^i$. This yields a new factorization:
    \[
    \rho=\zeta^0\rho^0\zeta^1\rho^1\dots \zeta^M, \quad M<f_1(\varepsilon)
    \]
    with all $\zeta^i$ from a cell to the same cell, and all $\Path(\rho^i)\in X_h$.  Applying the projection $\Word(\cdot)$ we get a factorization
    $w=u^0v^0u^1v^1\dots u^M$ with all $u^i\in \Still_\varepsilon$ (or equal to the empty word)  and all $v^i$ satisfying the inductive hypothesis. Factorizing each $v^i$ according to the inductive hypothesis, we get  a required factorization of $w$ into $f_1(\varepsilon)+f_1(\varepsilon)(2f_1(\varepsilon))^h<(2f_1(\varepsilon))^{h+1}$ factors. This concludes the proof. \qedhere
\end{itemize}
\end{proof}

At the last step we build an $O(\varepsilon)$-net for all the words $w\in L_T(\aut)$:
\begin{con}[Net for $L_T$]
The elements of the  net  $\mathcal{N}_3(\varepsilon,T)$ are generated as follows
\begin{itemize}
    \item choose $N \leq f_3(\varepsilon)$ and $N$ dates $0 \leq t_1\leq t_2\leq\cdots t_N \leq T$ which are multiples of $\varepsilon$;
    \item at some of those dates put a letter in $\Sigma$;
    \item at all other $t_i$ insert an element of   $\mathcal{N}_2(\varepsilon,t_{i+1}-t_{i}+\varepsilon)$.
\end{itemize}
\end{con}
We remark that elements of the second kind can slightly overlap with the rest of the word, i.e.,~go beyond $t_{i+1}$.

\begin{lemma}
    $\mathcal{N}_3(\varepsilon,T)$ is a $(3K+1/2)\varepsilon$-net  for $L_T$ (for $\varepsilon$ small enough).
\end{lemma}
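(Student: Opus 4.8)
The plan is to verify that $\mathcal{N}_3(\varepsilon,T)$ indeed covers every $w\in L_T(\aut)$ up to distance $(3K+1/2)\varepsilon$, by tracking how the approximation errors introduced at each stage add up. First I would take an arbitrary $w\in L_T(\aut)$ and apply \cref{lem:factor:meager} to obtain a factorization $w=w_1\cdots w_N$ with $N\le f_3(\varepsilon)$, where each $w_i$ is either a single letter or a word of $\Still_\varepsilon$. Let $s_i=\tau(w_1\cdots w_{i-1})$ be the starting date of $w_i$ (with $s_1=0$). Then I would round each $s_i$ down (or to the nearest multiple of $\varepsilon$), obtaining dates $t_i$ which are multiples of $\varepsilon$ with $|t_i-s_i|\le\varepsilon/2$; since the $s_i$ are non-decreasing and we can choose the rounding consistently, the $t_i$ are non-decreasing and all lie in $[0,T]$ (the last one may need a minor adjustment if $s_N$ is very close to $T$, but $w_N$ has small or zero duration in the relevant cases, or we simply clamp to $T$).

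Next I would handle the two kinds of factors. If $w_i$ is a single letter $(a,s_i)$, the net $\mathcal{N}_3$ is allowed to place the letter $a$ at date $t_i$; this incurs an error of at most $|t_i-s_i|\le\varepsilon/2$ in the $\dr$ sense between this letter and its approximation. If $w_i\in\Still_\varepsilon$, then $w_i$ has duration $\tau(w_i)=s_{i+1}-s_i\le (t_{i+1}-t_i)+\varepsilon$, so $w_i\in\Still_{\varepsilon,\,t_{i+1}-t_i+\varepsilon}$; hence by the previous lemma there is an element $\tilde w_i\in\mathcal{N}_2(\varepsilon,\,t_{i+1}-t_i+\varepsilon)$ with $d(w_i,\tilde w_i)\le 3K\varepsilon$. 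Shifting $\tilde w_i$ so that it starts at $t_i$ rather than at $s_i$ adds a further displacement of at most $|t_i-s_i|\le\varepsilon/2$ to each event date. Concatenating all these approximations (letters at their rounded dates, shifted still-words inserted at the $t_i$) produces a word $\tilde w$ which is by construction an element of $\mathcal{N}_3(\varepsilon,T)$.

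It then remains to bound $d(w,\tilde w)$. The key observation is that the pseudo-distance $d$ is defined eventwise via matching, so the global error is the maximum over all events of the displacement of that event (or of the matching gap to a same-labelled event). Every event of $w$ lies inside some factor $w_i$; if $w_i$ is a letter, its displacement is $\le\varepsilon/2$; if $w_i$ is a still-word, its event is matched within $w_i$ to an event of $\tilde w_i$ at distance $\le 3K\varepsilon$, and $\tilde w_i$ is further shifted by $\le\varepsilon/2$, giving $\le 3K\varepsilon+\varepsilon/2=(3K+1/2)\varepsilon$. The symmetric direction $\dr(\tilde w,w)$ is bounded the same way since each event of $\tilde w$ comes from an event of some $w_i$ or $\tilde w_i$. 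Therefore $d(w,\tilde w)\le (3K+1/2)\varepsilon$, as required. One subtlety to address carefully is the ``slight overlap'' remarked after the construction: an inserted still-word may extend past $t_{i+1}$, but since its total duration is at most $t_{i+1}-t_i+\varepsilon$ this overlap is only $O(\varepsilon)$ and does not disturb the matching of later events beyond the stated bound; I would also note that the factors $w_i$ equal to the empty word are simply skipped. The main obstacle, such as it is, is purely bookkeeping: keeping the rounding of the $s_i$ consistent and monotone, and making sure the duration parameter fed to $\mathcal{N}_2$ genuinely dominates $\tau(w_i)$ so that the cited lemma applies — once that is set up, the triangle inequality does the rest.
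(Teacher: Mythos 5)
Your proof follows essentially the same route as the paper's, but there is one step that does not go through as written. You set $s_i=\tau(w_1\cdots w_{i-1})$ as the starting date of every factor, round these to get the $t_i$, and then, for a one-letter factor, write $w_i=(a,s_i)$ and claim the letter can be placed at $t_i$ with error at most $\varepsilon/2$. But a one-letter factor consists of a delay followed by a letter --- the base case of \cref{lem:factor:meager} is a single edge of the automaton, which carries a delay --- so the event $a$ actually occurs at date $\tau(w_1\cdots w_i)=s_{i+1}$, not at $s_i$. Placing $a$ in $\tilde w$ at the $\varepsilon$-rounding of $s_i$ may therefore miss the true event date by the full delay of that factor, which is not bounded by $\varepsilon$, so the claimed $\dr$-error of $\varepsilon/2$ does not follow.

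The paper avoids exactly this by anchoring the two kinds of factors at different dates: for first-kind (letter) factors it takes $t_i=\tau(w_1\cdots w_i)$, the date at which the letter actually happens, and only for $\Still_\varepsilon$-factors does it take the starting date $t_i=\tau(w_1\cdots w_{i-1})$. With that one change your argument is correct and coincides with the paper's proof: the $3K\varepsilon$ approximation of still-factors via $\mathcal{N}_2$, the extra $\varepsilon/2$ from rounding, the $O(\varepsilon)$ overlap of inserted still-words, and the eventwise triangle-inequality bookkeeping all match. Indeed, your treatment of the overlap and of empty factors is spelled out somewhat more carefully than in the paper.
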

\begin{proof}
Let $w\in L_T(\aut)$,  we will explain how to find a world in $\mathcal{N}_3(\varepsilon,T)$ at a small distance from $w$.

To that aim, consider the factorization  $w=w_1\dots w_N$ provided by  \cref{lem:factor:meager}. 
For each factor $w_i$ of the first kind (containing a delay and one letter $a_i$) let $t_i=\tau (w_1\dots w_i)$, i.e.~the date  when this letter happens in the word $w$. For each factor of the second kind ($w_i\in\Still_\varepsilon$) let $t_i=\tau (w_1\dots w_{i-1})$ the starting date of factor $w_i$. We notice that $0\leq t_1\leq t_2\leq t_N$, and that for any factor $w_i$ of the second kind $\tau(w_i)\leq t_{i+1}-t_i$.

For every $t_i$, let $t'_i$ be the closest $\varepsilon$-integer. We have that $0\leq t'_1\leq t'_2\leq t'_N$, and for any factor $w_i$ of the second kind $\tau(w_i)\leq t'_{i+1}-t'_i+\varepsilon$. For any such $w_i$ of the second kind let $w'_i$ be its $2K\varepsilon$-approximation from the net $\mathcal{N}_2(\varepsilon,t'_{i+1}-t'_{i}+\varepsilon)$.

We are ready to build an approximated word $w'$. For each factor $w_i$ of the first kind, the word $w'$ contains $a_i$ at date $t'_i$. For each factor $w_i$ of the second kind, the word $w'$ contains the approximation $w'_i$ starting at $t'_i$.

By construction, $w'\in \mathcal{N}_3(\varepsilon,T)$. On the other hand, $d(w,w')\leq \varepsilon$, indeed each factor of the first kind in $w$ is shifted by at most $\varepsilon/2$, and each factor of the second kind in $w$ is $3K\varepsilon$-approximated and shifted by at most $\varepsilon/2.$ 
\end{proof}

\begin{lemma}
The cardinality of $\mathcal{N}_3(\varepsilon,T)$ is bounded by $T^{f_3(\varepsilon)}f_4(\varepsilon) 2^{bT}$.
\end{lemma}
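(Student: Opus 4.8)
The plan is to bound $\card{\mathcal{N}_3(\varepsilon,T)}$ by simply enumerating the independent choices made when generating an element of $\mathcal{N}_3(\varepsilon,T)$, and to track which choices can depend on $T$ and which only on $\varepsilon$. First I would count the outer choices: the number $N$ of factors ranges over $\{1,\dots,f_3(\varepsilon)\}$, contributing a factor $f_3(\varepsilon)$; the $N$ dates $0\le t_1\le\cdots\le t_N\le T$ are each an $\varepsilon$-multiple in $[0,T]$, of which there are at most $T/\varepsilon+1$, so an ordered $N$-tuple of them costs at most $(T/\varepsilon+1)^N\le (T/\varepsilon+1)^{f_3(\varepsilon)}$; finally, at each date one chooses whether to place a letter or a $\Still$-fragment (at most $2^N$ ways) and, where a letter is placed, which one (at most $\#\Sigma^{\,N}$ ways), both bounded by a function of $\varepsilon$ since $N\le f_3(\varepsilon)$.

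The only genuinely $T$-dependent ingredient left is, at each date $t_i$ carrying a $\Still$-fragment, the choice of an element of $\mathcal{N}_2(\varepsilon,t_{i+1}-t_i+\varepsilon)$, of cardinality $f_2(\varepsilon)\,2^{b(t_{i+1}-t_i+\varepsilon)}$. The key step is that these exponents telescope: with the convention $t_{N+1}=T$ one has $\sum_i (t_{i+1}-t_i)\le T$, so summing over the (at most $N$) relevant indices gives a total exponent at most $b\bigl(T+N\varepsilon\bigr)\le b\bigl(T+f_3(\varepsilon)\varepsilon\bigr)$, whence the product of all the $\mathcal{N}_2$-sizes is bounded by $f_2(\varepsilon)^{f_3(\varepsilon)}\,2^{bT}\,2^{b f_3(\varepsilon)\varepsilon}$. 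This telescoping is the one point that needs care: a naive per-date bound $\bigl(\max_i f_2(\varepsilon)2^{bT}\bigr)^{f_3(\varepsilon)}$ would be exponential in $f_3(\varepsilon)T$, whereas the telescoping keeps the $2^{bT}$ factor with exponent linear in $T$.

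It then remains to multiply everything together. Using $(T/\varepsilon+1)^{f_3(\varepsilon)}\le (2T/\varepsilon)^{f_3(\varepsilon)}\le T^{f_3(\varepsilon)}(2/\varepsilon)^{f_3(\varepsilon)}$ for $T\ge 2$, every factor above is of the form ``at most $T^{f_3(\varepsilon)}$'' times ``at most $2^{bT}$'' times ``a quantity depending only on $\varepsilon$''. Collecting all purely $\varepsilon$-dependent quantities into a single function, e.g.
\[
 f_4(\varepsilon)\triangleq f_3(\varepsilon)\cdot (2/\varepsilon)^{f_3(\varepsilon)}\cdot 2^{f_3(\varepsilon)}\cdot \#\Sigma^{\,f_3(\varepsilon)}\cdot f_2(\varepsilon)^{f_3(\varepsilon)}\cdot 2^{b f_3(\varepsilon)\varepsilon},
\]
one obtains exactly the claimed bound $\card{\mathcal{N}_3(\varepsilon,T)}\le T^{f_3(\varepsilon)}\,f_4(\varepsilon)\,2^{bT}$. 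No substantial obstacle arises beyond the bookkeeping; the whole point is the telescoping of the $\mathcal{N}_2$-size exponents, everything else being routine counting.
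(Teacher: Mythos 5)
Your proof is correct and takes essentially the same approach as the paper: bound the date tuples by $(T/\varepsilon)^{f_3(\varepsilon)}$, bound the per-date discrete choices by an $\varepsilon$-only quantity, and telescope the $\mathcal{N}_2$-size exponents via $\sum_j (t_{j+1}-t_j)\leq T$ so that only a single $2^{bT}$ factor survives. The only difference from the paper is cosmetic bookkeeping of the per-date choices (you split $N$, letter-vs-fragment, and which letter; the paper bundles them into $(\#\Sigma+2)^{f_3(\varepsilon)}$), which changes only the definition of $f_4$.
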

\begin{proof}
There are $\binom{T/\varepsilon}{f_3(\varepsilon)}$ ways of choosing $f_3(\varepsilon)$ dates, which can be  bounded by $(T/\varepsilon)^{f_3(\varepsilon)}$. For each of those one can choose a letter in $\Sigma$, or decide to put an element of $\mathcal{N}_2$, or put nothing (if there are less than $f_3(\varepsilon)$ factors), this gives $(\#\Sigma+2)^{f_3(\varepsilon)}$ possibilities. For elements of $\mathcal{N}_2$ the number of possibilities is
\[
\prod_{\text{some }j<f_3(\varepsilon)}f_2(\varepsilon/2)2^{b(t_{j+1}-t_j+\varepsilon)}\leq 
2^{bT} 2^{b\varepsilon f_3(\varepsilon)}f_2(\varepsilon/2)^{f_3(\varepsilon)}.
\]
Multiplying everything together we get an expression of the required form $T^{f_3(\varepsilon)}f_4(\varepsilon) 2^{bT}$ for an appropriate $f_4$.
\end{proof}

\propmeagerupper*
\begin{proof}  Let $K'=3K+1/2$. By the previous lemma, the bandwidth is bounded by 
\[ 
\lim_{T\to \infty}\frac{\log \left(T^{f_3(\varepsilon/K')}f_4(\varepsilon/K') 2^{bT}\right) }{T}= b. \qedhere
\]
\end{proof}
\subsection{Proof details for lower bound}
\begin{lemma}
    If a cyclic path $\pi$ in an \RTA\ is   not structurally meager, then $\bandc_\varepsilon(L_{\pi^*})=\Omega(\log(1/\varepsilon))$, and hence $L_{\pi^*}$ is not a meager language.
\end{lemma}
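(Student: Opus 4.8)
The plan is to turn the structural non-meagerness of $\pi$ into a large $\varepsilon$-separated set of timed words, all lying in $L_{\pi^*}$ and with duration bounded by a fixed multiple of the ``cycle length''. First I would replace $\pi$ by a suitable power $\pi^p$ whose f-orbit is idempotent and still carries a $\wide$ self-loop (such a power exists because, as in the proof of \cref{lem:meager-singleton}, some power of the f-orbit is idempotent, and a $\wide$ on the diagonal of $\gamma_f(\pi)$ propagates to a $\wide$ on the diagonal of any idempotent power). So without loss of generality $e=\gamma_f(\pi)$ is idempotent, and there is a vertex $v$ of the starting region $R=S(\src_\pi)$ with $e[v,v]=\wide$, i.e.\ $L_{\bar\pi}(v,v)$ contains at least two words $w^{(0)}\neq w^{(1)}$ with different timings. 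By the idempotency of $e$ and \cref{lem:puri}, the vertex $v$ has a self-loop in the p-orbit, so looping on $v$ is possible; and by convexity of the reachability relation (the key identity $R(\lambda x+(1-\lambda)y)=\lambda R(x)+(1-\lambda)R(y)$ recalled after \cref{lem:puri}), every convex combination $(1-\lambda)w^{(0)}+\lambda w^{(1)}$ is realizable as a run of $\bar\pi$ from $v$ back to $v$.

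Next I would build the separated set. Because $w^{(0)}$ and $w^{(1)}$ disagree in timing, there is an index $i$ and a gap $\theta>0$ such that the $i$-th delay differs by at least $\theta$ between the two words; fix the letter $a$ carried by this event. For $n\in\nat$, consider runs that iterate $\bar\pi$ exactly $n$ times starting and ending at $v$, where at the $k$-th iteration we use either the realization $w^{(0)}$ or $w^{(1)}$; this is legitimate since after each iteration we are back at $v$. A binary string $b\in\{0,1\}^n$ thus yields a run $f(b)$ of duration $n\cdot\tau_\pi$ for a fixed $\tau_\pi$ (we may, by an affine rescaling as in \cref{lem:closed-semantics}, arrange that $w^{(0)}$ and $w^{(1)}$ have the same duration, or simply pad with the identity realization). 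Rather than separating all $2^n$ words at once, I would discretize: partition $[0,\theta]$ into $\lfloor \theta/(2\varepsilon)\rfloor$ values and, at the distinguished event of each iteration, use a convex combination $(1-\lambda)w^{(0)}+\lambda w^{(1)}$ with $\lambda$ chosen among these values. Then two runs differing in the choice at some iteration produce timed words whose distinguished events at that iteration are at least $2\varepsilon$ apart in date, and — crucially — no other event of the two words can be closer than $2\varepsilon$ to fool the Hausdorff-style match, because the discrepancy is localized inside one cycle iteration whose temporal window has length $\tau_\pi$, bounded away from $0$. One has to be slightly careful here: the definition of $d$ matches each event of one word to the closest event of the other with the same letter, so I would, if needed, re-rescale so that the distinguished event is temporally isolated from all other $a$-labelled events within its iteration window, which is possible since $w^{(0)}, w^{(1)}$ are fixed words and the perturbation is small. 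This gives, for $T=n\tau_\pi$, an $\varepsilon$-separated subset of $L_{\bar\pi^*,T}(v,v)$ of size $\bigl(\lfloor\theta/(2\varepsilon)\rfloor\bigr)^{n}$, hence $\capa_\varepsilon(L_{\bar\pi^*,T})\geq n\log\lfloor\theta/(2\varepsilon)\rfloor$ and $\bandc_\varepsilon(L_{\bar\pi^*})\geq \frac{1}{\tau_\pi}\log\lfloor\theta/(2\varepsilon)\rfloor=\Omega(\log(1/\varepsilon))$.

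Finally I would pass back from the closed cycle $\bar\pi$ to $\pi$ itself and from a single vertex to the whole language: by \cref{lem:closed-semantics}, for every $\varepsilon'<\varepsilon$ the open-guard language $L_{\pi^*,T}$ admits an $\varepsilon'$-separated set of the same cardinality (the homothety argument there maps the separated set inside the open polytope while only mildly shrinking separation), so $\bandc_{\varepsilon'}(L_{\pi^*})=\Omega(\log(1/\varepsilon))$ as well, and replacing $\varepsilon$ by $2\varepsilon$ absorbs the constant. Prepending a fixed run reaching $v$ from an initial vertex (as in the proof of \cref{prop:strobese}) embeds all these words into $L(\aut)$, so the language of the automaton — and already $L_{\pi^*}$ — is non-meager by the very definition of meagerness and inequality \eqref{eq:inequality}. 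The main obstacle, and the step I would spend the most care on, is the separation estimate: verifying that perturbing a single event inside one cycle iteration really yields pseudo-distance $\geq\varepsilon$ under the $\min$–$\max$ matching of $d$, i.e.\ ruling out that some distant $a$-event accidentally matches the perturbed one; the fix is the temporal-isolation rescaling sketched above, but it needs to be done cleanly, and one must also confirm that the affine rescalings used (to equalize durations and to isolate the distinguished event) stay within the reachability polytope, which again follows from its convexity.
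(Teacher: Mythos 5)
Your plan follows the same high-level route as the paper's proof of this lemma: fix a vertex $v$ carrying a $\wide$ self-loop in an idempotent f-orbit, take two distinct realizations of $\bar\pi$ from $v$ back to $v$, use the convexity of the reachability polytope to manufacture $\sim\theta/\varepsilon$ mutually separated realizations per cycle iteration, iterate over time, and transfer from $L_{\bar\pi^*}$ to $L_{\pi^*}$ via the closed-semantics lemma. That part matches the paper.

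There are however two concrete holes in your write-up that the paper closes cleanly and that you should fix. First, duration equalization. You propose either ``an affine rescaling as in the closed-semantics lemma'' or ``padding with the identity realization''. Neither works: the homothety in the closed-semantics lemma shrinks all timings towards a fixed word, scaling every duration by the same factor rather than making two unequal durations equal; and a zero-time realization from $v$ to $v$ need not exist at all (indeed it cannot exist when $\gamma_d(\pi)[v,v]=\slow$, which is the generic non-obese case here). The paper's fix is cheap and robust: replace $\pi$ by $\pi^2$ and use the two runs $\rho_1\rho_2$ and $\rho_2\rho_1$, which automatically share the total duration $\tau(\rho_1)+\tau(\rho_2)$ and both return to $v$. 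Second, you divide by $\tau_\pi$ in the bandwidth estimate but never argue that $\tau_\pi$ is bounded away from $0$, nor do you notice that the duration structure is controlled by $\gamma_d(\pi)[v,v]$. The paper dispatches this by a case split: the $\wide$ label rules out $\instant$ and $\none$; if the d-label is $\fast$ the cycle is already a Type~I obesity pattern, so the bandwidth is $\Omega(1/\varepsilon)$ and there is nothing to prove; and in the remaining $\slow$ case every realization lasts $\geq 1$, so you can take $\tau_\pi\geq 1$. Without this step the final division is unjustified. As for your separation worry, it is a genuine subtlety of the pseudo-distance (the $\min$ can match the perturbed event to another same-letter event), and the paper's own argument at that spot is also terse; but the working ingredient is exactly the duration-$\geq 1$ structure you get from the $\slow$ case, which temporally isolates successive iterations without needing any ad hoc rescaling — so repair items (1) and (2) first and the isolation you were hoping for falls out.
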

\begin{proof} 
Without loss of generality, we assume $\gamma_f(\pi)$ is idempotent. It is not structurally meager, hence it has a $\wide$ self-loop on some vertex $v$, implying $\dim P_\pi(v,v)\geq 1$. Now consider $\gamma_d(\pi)[v,v]$. It cannot be $\instant$
or $\none$, since $\gamma_f[v,v]=\wide$. If it is $\fast$, then $\pi$ is structurally obese, thus its language is obese (and not meager).
So we consider the remaining case: $\gamma_d(\pi)[v,v]=\slow$: i.e.~the cycle cannot be executed in less than one time unit. So not only the set of possible timings has a dimension of at least 1, but also all timings have a total duration of at least 1. Without loss of generality, we can assume we can select a segment of realizations such that all its timing vectors have the same total duration (if not, take any segment: take 2 runs $\rho_1$ and $\rho_2$ in this segment, going from $v$ to itself, with different total duration; from now on instead of $\pi$, we consider $\pi^2$ and the segment delimited by the timings of the runs $\rho_1\rho_2$ and $\rho_2\rho_1$). 

We consider the set of runs $\sigma_k=\frac{k}{n} \rho_1 + \frac{n-k}{n}\rho_2$ and look at the first transition occurring at a different date in $\rho_1$ and $\rho_2$ (call its index $m$), assuming wlog it occurs earlier in $\rho_1$ and let us call $\tau$ the difference between the date of $m$-th event in $\rho_2$ and in $\rho_1$. If $k>j$, in $\sigma_j$, the $m$th event cannot be matched in $\sigma_k$ with a time difference smaller than $\tau/n$, implying that $d(\sigma_j,\sigma_k)\geq\vec{d}(\sigma_j,\sigma_k)\geq \tau/n$. Now, considering arbitrary repetitions, we define the language $S_n\triangleq\left(\left(\Word(\sigma_k)\right)_{k=1..n}\right)^*$). We remark there is also a distance of at least $\tau/n$ between any two words of $S_n$: indeed the factors $\sigma_k$ having all the same duration, the $i|\pi|+m$th letter of a word cannot be matched with any letter of index smaller than $i|\pi|$ in another word. Also, no letter of index greater than $(i+1)|\pi|$ can be matched with this letter. Hence $S_n$ is $\tau/n$ separated. So, for any arbitrary $\varepsilon$ (smaller than $\tau$), there is an $\varepsilon$-separated subset of $L(\bar{\pi}^*)_T$ of size $\left\lfloor\frac{\tau}{\varepsilon}\right\rfloor^{\lfloor \frac{T}{\tau(\rho_1)}\rfloor}$.

Using \cref{lem:closed-semantics}, it follows this is also true of $L(\pi^*)_T$.
\end{proof}
\cref{prop:meager:lower} follows immediately.
\section{\PSPACE-hardness is not so hard}
\thmclassif*
We prove here only the \PSPACE-hardness of the three problems, denoted \Meager, \Normal\, and \Obese. \PSPACE\ membership is proved in \cref{sec:classif}.
\begin{proof} We proceed by reduction from the following \PSPACE-complete problem \cite{AD}:
\begin{quote}
\myreach: given a \DTA\ $\aut$ and two locations $p,q$, does there exist a run from state $(p,0)$ to location $q$?
\end{quote}
From an instance of \myreach\ we build two \DTA\ $\mathcal B$ and $\mathcal C$ as follows:
\begin{itemize}
    \item $\mathcal B$ is obtained from $\aut$ by adding one more location $f$, and two edges $q\to f$ and $f\to f$ with a new label $d$, no reset and the guard \textbf{true}. The initial condition $I(p)$ is $x=0$, the final condition $F(f)$ is \textbf{true}, and all the other initial and final conditions are \textbf{false}.
    \item $\mathcal C$ is obtained from $\mathcal B$ by adding  one more location $s$, and  two edges $(s,s,d,x_1>1,\{x_1\})$, $(s,p,d,x_1=1,X)$ where $X$ is the set of clocks in $\mathcal{A}$. The initial state is  $I(s)=(x=0)$, the final ones are $F(f)=F(s)=\text{\textbf{true}}$. 
\end{itemize}
By construction 
\begin{itemize}
    \item whenever $\myreach(\aut, p,q)$, both automata $\mathcal B$ and $\mathcal C$ are obese: indeed,  accepting runs can traverse $\aut$ from $p$ to $q$ and then go to the obese loop on location $f$;
    \item otherwise, $\mathcal B$ has empty language, and $\mathcal C$ is normal: indeed, the former automaton has no accepting run, and the latter can loop on $s$ producing a normal language. 
\end{itemize}
Thanks to the polynomial reduction $(\aut, p,q)\mapsto \mathcal B$  we have $\myreach\preceq \Obese$ and $(\neg \myreach)\preceq \Meager$. The other reduction $(\aut, p,q)\mapsto \mathcal C$ yields $(\neg \myreach)\preceq \Normal$. Thus the three problems \Obese, \Meager\, and \Normal\ are \PSPACE-hard.
\end{proof}
\section{Extending to more general automata}
\corextend*
\begin{proof}[Proof sketch]
Given a non-deterministic timed automaton $\mathcal B$ (without $\epsilon$-transitions) with alphabet $\Sigma$,
it is always possible to relabel its transitions with a larger alphabet $\Gamma$ to make it deterministic (for example choosing distinct letters for each transition), let us call this deterministic automaton $\aut$.  It holds that $L(\mathcal{B}) = \mu(L(\aut))$, where $\mu$ is a letter-to-letter renaming from $\Gamma$ to $\Sigma$. \cref{thm:classif} applies to  $\aut$, and 
we claim that $L(\mathcal{B})$ belongs to the same class as $L(\aut)$.

To prove the claim we notice first that if $\mathcal{N}$ is an $\varepsilon$-net for $L_T(\aut)$, then $\mu(\mathcal{N})$ is a smaller $\varepsilon$-net for $L_T(\mathcal{B})$, hence
\begin{equation}\label{eq:nondet}
    \bandh(L(\mathcal{B}))\leq \bandh(L(\aut)).
\end{equation}
\begin{description}
\item[if  $\aut$ is meager,] then  $\mathcal{B}$ is also meager by virtue of \eqref{eq:nondet};
\item[if  $\aut$ is obese,] then it is structurally obese by \cref{thm:obese}, and thus the proof of \cref{prop:strobese} can be applied, yielding one cycle $\pi$ in $\aut$  and a big $\varepsilon$-separated subset $\mathcal{S}\subset L_{T,\pi^*}(\aut)$. Since all elements of $S$ follow the same cycle, the restriction of $\mu$ to $S$ is injective, moreover  $\mu(S)$ is also a big $\varepsilon$-separated subset in $L_T(\mathcal{B})$, thus $\mathcal B$ is obese.
\item[if  $\aut$ is normal,] then by  \eqref{eq:nondet} $\bandh(\mathcal{B})= O(\log 1/\varepsilon)$. On the other hand, $\aut$ is not structurally meager, and the proof of \cref{prop:meager:lower} yields a big $\varepsilon$-separated subset along one cycle. Similarly to the previous case this implies that $L_T(\mathcal{B})$ is big enough: $\Omega(\log 1/\varepsilon)$. \qedhere
\end{description}  
\end{proof}

\section{Completing the proof of the last theorem}
\thmthinvsmeager*
The only missing element in the proof sketch in \cref{sec:comparing} concerned the incompatibility of obesity Type I with thinness. 
\begin{proof}
    Consider a thick automaton, with its forgetful cycle $\pi$. If its orbit has a single vertex, since $\pi$ has a non-punctual transition it can be realized by several runs, hence the only edge of $\gamma_f(\pi)$ is $\wide$. 
    Otherwise, $\gamma(\pi)$ has a complete SCC of size at least 2, and then all its vertices have $\wide$ self-loops in $\gamma_f(\pi^2)$.
    
    Now consider an automaton with structurally obese cycle $\pi$ of Type I. By definition, $\gamma_d(\pi)$  has both an $\instant$ and a $\slow$ self-loop. 
  According to \cref{lem:instant+slow=forgetful}, $\pi$ must be forgetful and the automaton cannot be thin.
\end{proof}

\end{document}